\def\note#1{}
\title{Cutting Planarians: Planar Emulators for String Graphs}
\date{}
\author{%
Hsien-Chih Chang%
\thanks{Department of Computer Science, Dartmouth College. Email: {\tt hsien-chih.chang@dartmouth.edu}. Supported by the U.S. National Science Foundation Grant No.\ CCF-2443017.}
\and 
Jonathan Conroy%
\thanks{Department of Computer Science, Dartmouth College. Email: {\tt jonathan.conroy.gr@dartmouth.edu}. Supported by the U.S. National Science Foundation Grant No.\ CCF-2443017.}
\and
Zihan Tan%
\thanks{Department of Computer Science and Engineering, University of Minnesota. Email: {\tt ztan@umn.edu}}
\and
Da Wei Zheng%
\thanks{Institute of Science and Technology Austria. Email: {\tt dzheng@ista.ac.at}. This project has received funding from the Austrian Science Fund (FWF) grant  \href{https://www.doi.org/10.55776/I5982}{DOI 10.55776/I5982}. 
For open access purposes, the author has applied a CC BY public copyright license to any author-accepted manuscript version arising from this submission.}
}
\newtheorem{theorem}{Theorem}[section]
\newtheorem{lemma}[theorem]{Lemma}
\newtheorem{claim}[theorem]{Claim}
\newtheorem{observation}[theorem]{Observation}
\newtheorem{remark}[theorem]{Remark}
\newtheorem{corollary}[theorem]{Corollary}
\newtheorem{definition}[theorem]{Definition}
\newcommand{\cR}{\mathcal{R}}
\newcommand{\cN}{\mathcal{N}}
\newcommand{\cS}{\mathcal{S}}
\newcommand{\cC}{\mathcal{C}}
\newcommand{\cT}{\mathcal{T}}
\newcommand{\cO}{\mathcal{O}}
\newcommand{\cF}{\mathcal{F}}
\newcommand{\cG}{\mathcal{G}}
\newcommand{\cP}{\mathcal{P}}
\newcommand{\out}{\textrm{out}}
\newcommand{\len}[1]{\operatorname{len}(#1)}
\newcommand{\lenR}[2]{\operatorname{len}_{#1}(#2)}
\newcommand{\canon}[2]{\ensuremath{#1{\rightarrow}#2}}
\newcommand{\cpath}[2]{\ensuremath{#1{\rightsquigarrow}#2}}
\newcommand{\mThreat}{t}
\def\norm#1{\len {#1}}
\newcommand{\dom}{\operatorname{dom}}
\newcommand{\dist}{\delta}
\newcommand{\bdry}{\partial\!}
\newcommand{\shatter}[2]{\textsc{Shatter}({#1 - \bdry #2})}
\newcommand{\act}[1]{\operatorname{Active}(#1)}
\begin{document}
\maketitle

\begin{abstract}
    In this paper we construct distance sketches for intersection graphs of \emph{arbitrary} path-connected regions in the plane (known as the \emph{string graphs}) in the constant and $1+\varepsilon$ distortion regimes.  Furthermore, the distance sketches themselves are \emph{planar graphs}. 
    First, we show that every unweighted string graph $G$ has an $O(1)$-distortion \emph{planar  emulator}: that is, there exists an edge-weighted planar graph $H$ containing every vertex in $G$, such that every pair of vertices $(u,v)$ satisfies $\delta_G(u,v) \le \delta_H(u,v) \le O(1) \cdot \delta_G(u,v)$.
    Furthermore, we show that for any constant $\varepsilon > 0$, 
    there is an edge-weighted planar graph $H'$ such that
    every pair of vertices $(u,v)$ satisfies $\delta_G(u,v) \le \delta_{H'}(u,v) \le (1+\varepsilon) \cdot \delta_G(u,v) + O(\varepsilon^{-4}\textrm{poly}\log n)$.
    No previous constructions of sparse distance sketches were known even for intersection graphs of simple shapes like axis-parallel rectangles or fat convex polygons.
    
    As applications, we construct the first $(1+\varepsilon, +O(1))$ mixed-distortion tree cover and distance oracle for arbitrary string graphs, as well as the first additive $+(\varepsilon\Delta+O(1))$-distortion embedding of string graphs $G$ with diameter $\Delta$ into graphs of constant treewidth $O(\varepsilon^{-4})$.
\end{abstract}

\newpage
\tableofcontents
\newpage

\section{Introduction}
Given a set of geometric objects $\cS$, the  \EMPH{intersection graph} of $\cS$, denoted \EMPH{$G_{\cS}$}, is a graph with vertex set~$\cS$, such that an edge exists between two objects if and only if they intersect.
The most general form of geometric intersection graph in the plane is the class of \emph{string graphs}, where vertices are arbitrary path-connected regions.   One can assume without loss of generality that the regions are connected arcs, thus the name ``string'' graphs.
One may also study intersection graphs of more restricted class of objects, such as disks, fat objects, or axis-aligned rectangles. 

Our goal is to study \emph{distances} between objects on a geometric intersection graph $G$. 
The \EMPH{distance} between two vertices $u$ and $v$ in $G$, denoted \EMPH{$\dist_{G}(u,v)$}, is the number of edges in the shortest path between $u$ and $v$. 
(Note that we assume edges are unweighted; if we allowed arbitrary edge weights then all metrics are realizable as string graph metrics, as the complete graph is a string graph.\footnote{For some classes of geometric intersection graphs, one can instead impose a natural weight on the edges---for example, a unit-disk graph may have edge weights set to be the Euclidean distance between the disk centers. We are primarily concerned with unweighted string graphs.})
Distance problems on geometric intersection graphs have attracted a great deal of attention, including diameter computation \cite{CS19,CS19_apsp,BKKNP22,CGL24,CCGKLZ25}, clustering \cite{BFI23,friggstad2025qptas}, spanners \cite{YXD12, catusse2010planar, biniaz2020plane, CT23,CH23}, low-diameter decomposition and its applications to LP rounding \cite{lee2017separators, kumar2021constant, kumar2022point, lokshtanov20241}, and much more.
However, few results have been obtained for the most general string graphs beyond restricted classes of shapes, such as unit-disk graphs (UDGs) and axis-parallel unit-square graphs.

In partial explanation of the success enjoyed on unit-disk graphs, several results on UDGs can be traced back to the existence of an $O(1)$-distortion \emph{planar spanner}: every unit-disk graph $G$ has a subgraph $H$ which is planar, such that for every pair of vertices $(u, v)$, we have $\dist_G(u,v) \le \dist_H(u,v) \le O(1) \cdot \dist_G(u,v)$.
In other words, the techniques that enables the results on UDGs relies on the similarity between unit-disk metrics and planar metrics.
Planar metrics are well studied, and tools from planar graphs (in particular, shortest-path separators) can be combined with the planar spanner to design algorithms for UDGs. 
The $O(1)$-distortion planar spanner for UDGs is helpful for distance problems even in the $1+\e$ regime; for example, it is a crucial ingredient for a near-linear time $(1+\e)$-approximate diameter algorithm and a compact $(1+\e)$-approximate distance oracle \cite{CS19}, and in the design of $(1+\e)$-approximate coresets for clustering \cite{BFI23}. (Note that one can also obtain shortest-path separators on UDG directly, not using a planar spanner \cite{YXD12, HZ24}; this is another way in which the metric structure of UDGs is similar to that of planar metrics.) 

The use of planar techniques for geometric intersection graphs so far has seemed somewhat limited to UDGs and closely related classes: for example, the only classes of intersection graphs with planar spanners are Euclidean weighted UDGs \cite{li2002distributed}, unweighted UDGs \cite{biniaz2020plane}, and Euclidean weighted unit-square graphs \cite{BFI23}.
General string graphs could have much more complex interactions than UDGs. 
Unlike disks and squares, objects like segments and rectangles may be \emph{piercing} and thus permit non-local interactions; moreover,
an $n$-vertex string graph may even require a representation (i.e.\ a drawing on the plane) where the number of crossings is exponential in $n$~\cite{KM91}.\footnote{We note a subtlety about the representation: although the number of crossings in the drawing may be exponential, there exists a compressed representation of the drawing via straight-line programs that has polynomial size~\cite{SSS03,SSS07,SSS11}. In this paper, we do not work with the compressed representation.} 

\paragraph{Main results.}
It might come as a surprise then, that in this paper we demonstrate how to construct a planar distance sketching structure for \emph{general} string graphs, even when the objects are arbitrarily complex and piercing.
Our first main result vastly generalizes the previous spanner constructions on UDGs: every string graph admits an $O(1)$-distortion \EMPH{planar emulator}.
No similar construction was known before even for intersection graphs of very restricted shapes, like axis-parallel rectangles or fat convex polygons.%
\footnote{After the conference version of this paper was accepted to STOC'26, a paper by Kisfaludi-Bak and Marx~\cite{KM26} appearing in the same conference announced a construction of an $O(1)$-distortion planar emulator for intersection graphs of similarly-sized fat objects. They constructed this emulator en route to designing a PTAS for Steiner Tree and Subset TSP in such graphs.}
Philosophically speaking, our result suggests that metrics on string graphs are not so different from planar metrics in the $O(1)$-distortion regime.

\begin{theorem}
\label{thm:main}
    For an absolute constant $c$, the following is true:
    For any unweighted string graph $G$, there is a weighted
    planar graph $H$ with $V(G) \subseteq V(H)$, such that for every pair of vertices $(u,v)$ in $G$,
    \[
    \dist_G(u,v) \le \dist_H(u,v) \le c \cdot \dist_G(u,v).
    \]
    Moreover, if one is given a representation of $G$, the graph $H$ can be computed in time polynomial in $|V(G)|$ and  the number of crossings in the representation.%
    \footnote{A careful reading of the proofs shows that the constant $c$ is at most $(126+1)\cdot (2\cdot 753+1) = 191389$. } 
    
\end{theorem}

Note that \Cref{thm:main} allows for the number of vertices in $H$ to be much larger than the number of vertices in $G$. 
Indeed, the number of vertices in $H$ scales linearly in the complexity of the representation of $G$, which is perhaps unsatisfying because some string graphs require exponential-size representations \cite{KM91}.
However, this is simple to fix: we can reduce the number of vertices in $H$ by
applying $O(1)$-distortion Steiner point removal \cite{CCLMST24} on the set of terminals $V(G) \subseteq V(H)$. 
This produces an edge-weighted planar graph $H'$ with $V(H') = V(G)$, which is an $O(1)$-distortion planar emulator of $G$; see \Cref{cor:small-emulator} for details.
Note that our algorithm to \emph{construct} such an $H'$ still requires time polynomial in the size of the representation of $G$.

Our second main result might come as an even bigger surprise.  Not only do $O(1)$-distortion planar emulator exists for arbitrary $n$-vertex string graphs, 
by changing the weight of the edges of the emulator, it is possible to preserve distances with a $(1+\e)$ multiplicative factor, with only an extra $\poly\log n$ additive distortion.
Equivalently, this is a $(1+\e)$-distortion planar emulator that preserves distances between pairs of vertices that are at least polylogarithmically away.
We say that arbitrary string graph admits an \EMPH{$(\alpha,+\beta)$-mixed-distortion} planar emulator for $\alpha = 1+\e$ and $\beta = \poly\log n$.
We focus on string graphs,
but a similar result actually holds more generally for every graph class that admits $O(1)$-distortion planar emulators; see \Cref{thm:multToAdd}.

\begin{theorem}
\label{thm:main2}
    Let $G$ be a $n$-vertex string graph.
    For any small constant $\e > 0$,
    there is a weighted planar graph $H$ with $O(n)$ vertices and $V(G) \subseteq V(H)$, such that for every pair of vertices $(u,v)$ in $G$,
    \[
    \dist_G(u,v) \le \dist_H(u,v) \le (1+\e) \cdot \dist_G(u,v) + O(\e^{-4}\cdot \log^{17} n).
    \]
    Moreover, if one is given a representation of $G$, the graph $H$ can be computed in time polynomial in $|V(G)|$ and the number of crossings in the representation.
\end{theorem}

This result provides the first compelling evidence towards a conjecture of Agelos Georgakopoulos (as described by Nguyen, Scott, and Seymour~\cite{nguyen2025asymptotic}) in coarse graph theory:  
Every unweighted graph that can be embedded into a planar graph with $(\alpha, +\beta)$-distortion can also be embedded into a planar graph with pure additive distortion $+\beta'$, for some (possibly bigger) constant $\beta'$.
(The paper \cite{nguyen2025asymptotic} actually gives a more general conjecture for arbitrary class of graphs instead of just planar graphs, and in terms of \emph{quasi-isometry}. A formal definition of quasi-isometry can be found in Section~\ref{SS:approx-reduction} when we prove the existence of $(1+\e,
+\poly\log n)$-mixed-distortion emulator.)
The conjecture may seem bold, but %
there are no known counter-examples.
It is known to hold 
in other very restricted classes of graphs, such as trees~\cite{cdn+-caaeg-2012,kerr23} and bounded-pathwidth graphs~\cite{nguyen2025asymptotic}.
While we remain neutral on the validity of the original conjecture, we 
leave as an open question whether a $(1+\e,+O(1))$ or a $(1,+\poly\log n)$-distortion planar emulator exists.

\paragraph{Applications.}
In addition to the $(1+\e, +\poly\log n)$-mixed-distortion planar emulators, we have two other applications on constructing distance-sketching structures for any graph class $\cG$ (closed under induced subgraphs) that admits $O(1)$-distortion planar emulators.
First, we can construct $(1+\e, +O(1))$-tree covers of size $O(\e^{-3} \log (\e^{-1}))$ for every graph in $\cG$.
A tree $T$ is a \EMPH{dominating tree} of a graph $G$ if $V(G) \subseteq V(T)$, and $\dist_G(u,v) \le \dist_T(u,v)$ for every pair of vertices $(u,v)$ in $G$. An \EMPH{$(\alpha, \beta)$-tree cover} for a graph $G$ is a family of dominating trees $\cT$, such that for every pair of vertices $(u,v)$ in $G$, there exists some tree $T \in \cT$ with 
\[\dist_T(u,v) \le \alpha \cdot \dist_G(u,v) + \beta.\]
The \EMPH{size} of the tree cover is the cardinality $|\cT|$.
Tree covers are widely studied for both general metrics and restricted families such as Euclidean/doubling metrics, planar metrics, and minor-free metrics \cite{awerbuch1992routing, arya1995euclidean,gupta2005traveling,bartal2019covering,CCLMST23, CCLMST24}, as well as on
unit-disk graphs~\cite{weng2025thesis}.
In particular, if $G$ is planar, then for any constant $\e > 0$, there is a $(1+\e, +0)$-tree cover of $G$ with constant size $O(\e^{-3} \log (\e^{-1}))$ \cite{CCLMST23}.
We can use our $O(1)$-distortion planar emulator for string graphs to recover a similar result.
(Again we focus on string graphs, but the proofs do apply more generally to any suitable class of graphs $\mathcal{G}$ that admits $O(1)$-distortion planar emulators.)
Notice that the additive distortion needed for the tree cover result in an absolute constant $O(1)$, instead of $\poly\log n$ as in the result of $(1+\e)$-mixed-distortion planar emulator.

\begin{theorem}
\label{thm:tree-cover}
    Let $\mathcal{G}$ be a class of unweighted graphs that is closed under induced subgraphs, such that every $G \in \mathcal{G}$ has an $O(1)$-distortion planar emulator. Then for any $\e > 0$, $G$ has a $(1+\e, +O(1))$-tree cover of size $O(\e^{-3} \log (\e^{-1}))$.
\end{theorem}

One key application of tree covers is an easy construction of \emph{compact distance oracle}: here one aims to construct a data structure that can quickly return an (approximate) distance when queried on a pair of vertices. If a graph has small tree cover, constructing a distance oracle reduces to lowest-common-ancestor data structures in trees \cite[Theorem 1.6]{CCLMST23}.
\begin{corollary}
Let $\mathcal{G}$ be class of unweighted graphs that is closed under induced subgraphs, such that every $G \in \mathcal{G}$ has an $O(1)$-distortion planar emulator.
    Let $G$ be an $n$-vertex graph in $\mathcal{G}$. For any $\e >0$, there is a data structure with size $O_\e(n)$ which, when queried on a pair of vertices $(u,v)$ in $G$, returns an estimate $\tilde\dist(u,v)$ such that $\dist_G(u,v) \le \tilde\dist(u,v) \le (1+\e) \cdot \dist_G(u,v) + O(1)$.
    \begin{itemize}
        \item In the word RAM model with word size $\Omega(\log n)$, queries can be answered in $O_\e(1)$ time.
        \item In the pointer machine model, queries can be answered in $O_\e(\log \log n)$ time.
    \end{itemize}
\end{corollary}

For our next application, an active line of work attempts to embed planar (or minor-free) graphs $G$ into graphs of small \emph{treewidth}, while incurring a small additive distortion of $+\e \cdot \diam(G)$, for any constant $\e$ \cite{fox2019embedding,CFKL20,filtser2022low,CCLMST23, CCLMST24}. 
In our language, we seek a $(1, +\e \diam(G))$-distortion planar emulator $H$ such that the treewidth ${\rm tw}$ of $H$ is minimized. In planar graphs, one can take ${\rm tw} = O(\e^{-4})$ \cite{CCLMST23}. The proof of \cite{CCLMST23} uses the small-size tree cover; we show that their construction can be extended to string graphs, with the help of the $O(1)$-distortion planar emulator.

\begin{theorem}
\label{thm:tw-embedding}
    Let $G$ be an unweighted graph with diameter $\Delta$, such that $G$ has a $O(1)$-distortion planar emulator. For any $\e > 0$, there is a $(1, +(\e \Delta + O(1)))$-distortion emulator $G'$ of $G$ with treewidth $O(\e^{-4})$.
\end{theorem}

\paragraph{Parallel work.}
An equivalent result to our first main theorem
(\Cref{thm:main}) has recently been announced independently
by Davies~\cite{davies2025string}: every unweighted string graph is \emph{quasi-isometric} to an unweighted planar graph.
A quasi-isometry implies the existence of an $O(1)$-distortion planar emulator, and one can readily check that our emulator indeed gives a quasi-isometry.%
\footnote{Specifically, \cite{davies2025string} constructs an unweighted planar graph $H$ with $V(H) = V(G)$ such that $\dist_G(u,v)/O(1) \le \dist_H(u,v) \le O(1) \cdot \dist_G(u,v)$. To obtain this result from our \Cref{thm:main}, we first apply Steiner point removal to our emulator to produce a weighted graph $H$ with $V(H) = V(G)$. Without loss of generality, every edge has weight at most $O(1)$ (see \Cref{obs:bounded-edges}), so viewing $H$ as an unweighted graph (ie, ignoring the weights) yields a similar distortion guarantee to \cite{davies2025string}.}
The proof of Davies was announced publicly on arXiv in October 2025, and our proof of \Cref{thm:main} was posted on arXiv two days later. Subsequently, we learned through personal communication with Davies that he had obtained his proof earlier, though the result was not published until 2025. 
There was no communication between the authors and Davies before both works appeared on arXiv.

Our core technique for proving \Cref{thm:main} is an adaptation of the \emph{shortcut partition} construction using gridtrees developed in \cite{CCLMST23}, whereas the work by Davies builds the quasi-isometry from first principles. %
Davies proves several applications and refinements of \Cref{thm:main} in his paper \cite{davies2025string}.
This includes settling the Assouad-Nagata dimension of string graphs, an important notion in structural graph theory and geometry group theory related to sparse covers; connections to central conjectures in coarse graph theory via Georgakopoulos-Papasoglu;
and establishing the quasi-isometry between complete Riemannian planes and locally finite planar graphs.
To the best of our knowledge, our second main result (\Cref{thm:main2}) and the applications on tree covers and embedding into bounded-treewidth graphs in the $1+\e$ regime are completely new.

\section{Technical Overview}
\subsection{O(1)-Distortion Planar Emulator}
\label{SS:O(1)-emulator}
\begin{wrapfigure}{r}{0.35\textwidth}
\centering
\vspace{-1pt}  
\includegraphics[width=0.35\textwidth]
{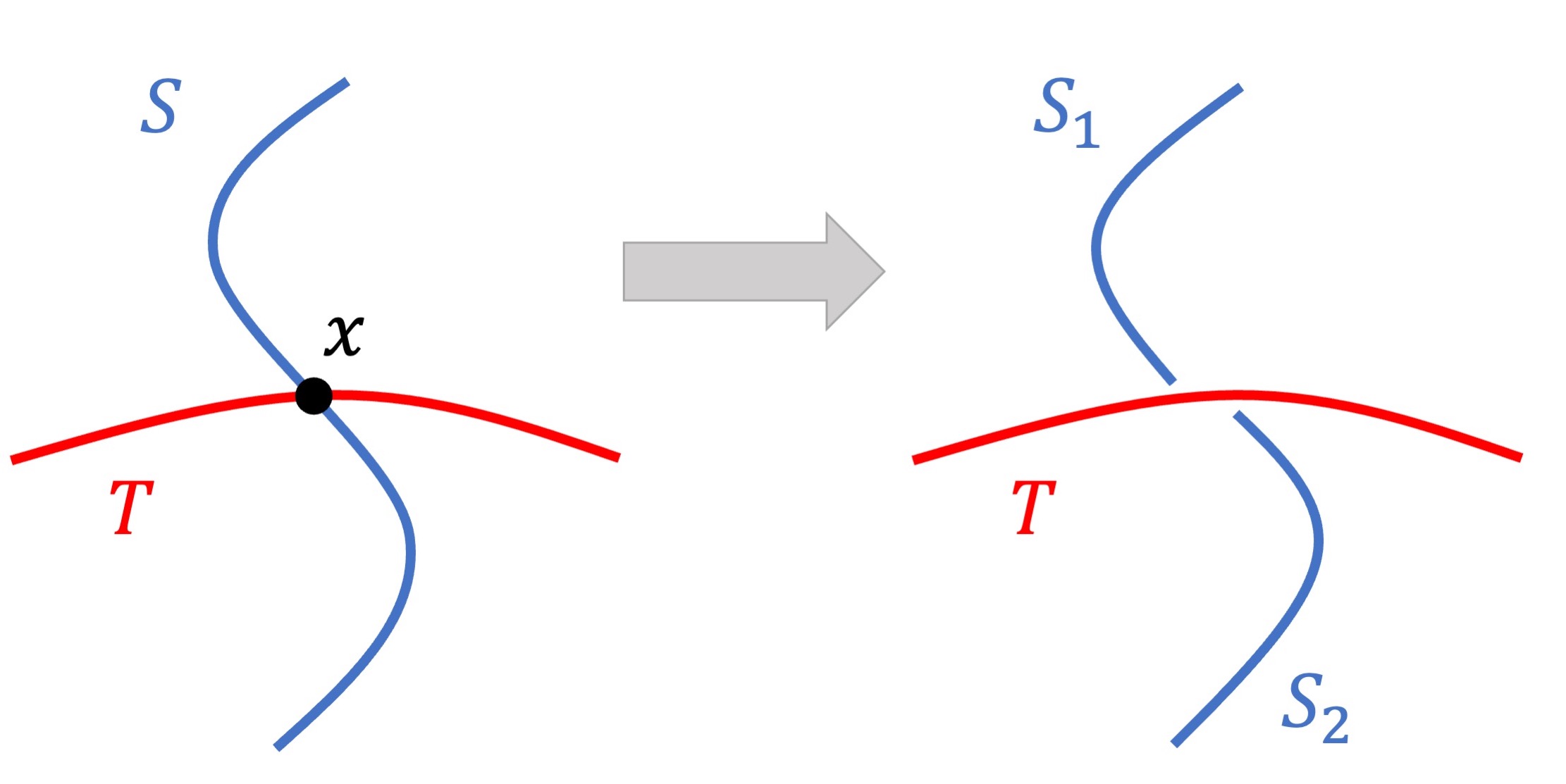}
\caption{Shattering $S$ at $x$}
\vspace{-10pt}
\end{wrapfigure}
We begin by describing a natural, but \emph{flawed} approach at creating a planar emulator for a string graph $G_\cS$ on a set of strings $\cS$.
Let $S$ and $T$ be two strings that cross at some point $x \in \R^2$.
The operation of \EMPH{shattering} of $S$ at $x$ can be thought of as cutting apart $S$ at $x$ to produce two new strings $S_1$ and $S_2$, both of which \emph{touch} $T$ but neither of which \emph{crosses} $T$ at $x$.
Given a set of strings $\cS$, we could transform their intersection graph $G_\cS$ into a planar graph by the following procedure: for every pair of strings that cross, shatter one of them to remove the crossing.\footnote{When a string $S$ is shattered into $S_1$ and $S_2$, one of the strings $S_1$ or $S_2$ is chosen arbitrarily to represent $S$, and the other string is treated as a Steiner vertex.}
Each shattering procedure removes a crossing at the cost of changing distances by an O(1) amount --- the single string $S$ is turned into two strings $S_1$ and $S_2$ which are within distance 2 of each other in the new intersection graph (as witnessed by the path $[S_1, T, S_2]$).
After all crossings are removed by the shattering procedure, it is not difficult to see that the resulting intersection graph on the shattered strings $\cS'$ is planar. However, distances on the shattered graph $G_{\cS'}$ could look very different than on the original $G_{\cS}$. 
When the intersection graph $G$ is a clique (for example), some string $S$ might cross every other string, so $S$ could be shattered into $\Theta(n)$ pieces, and these pieces could be at distance $\Theta(n)$ from each other. In this case, shattering does not seem helpful in producing a planar emulator.

On the other hand, it is easy to find an $O(1)$-planar emulator $H$ when $G$ is a clique: just take $H$ to be a star graph. 
This gives us the following hope: we could try to remove some crossings using the shattering procedure (only shattering each string $O(1)$ times), and hope that the remaining intersections happen in local ``clusters'' that can be replaced with stars. 

\newpage
\begin{quote}
    \textbf{Goal.} Given strings $\cS$, find a partition
    of $\R^2$ into connected clusters and shatter all strings in $\cS$ by the boundary of the clusters such that (1) each string is only shattered by the boundary of $O(1)$ clusters, and (2) for each cluster $C$, the (shattered) strings in $C$ are all within $O(1)$ distance of each other. %
\end{quote}
Given such a partition, it is fairly straightforward to find a planar emulator, 
by replacing each cluster with a star; see \Cref{clm:constant-emulator} below for the details, and \Cref{fig:partition} for an example.
Now, a key insight is that our stated goal is reminiscent of the notion of \EMPH{scattering partition} introduced by Filtser~ \cite{filtser2024scattering}.
\begin{quote}
    \textbf{Scattering partition.} A graph $G$ has an $O(1)$-scattering partition if, for any distance $\Delta > 0$, there is a partition of $V(G)$ into connected clusters such that: (1) any \emph{path} of length $\Delta$ intersects only $O(1)$ clusters, and (2) each cluster has diameter $O(\Delta)$.
\end{quote}
Our intuition is that a scattering partition with $\Delta = O(1)$ is similar to the partition in our goal. Unfortunately, scattering partitions are only known for some very restricted families of graphs, such as trees, cactus graphs~\cite{filtser2024scattering}, and series-parallel graphs~\cite{hershkowitz20221}. 
Recently, Chang~\etal~\cite{CCLMST23} constructed a very similar (albeit slightly weaker) object---the \EMPH{shortcut partition}---for \emph{planar graphs} (and later for minor-free graphs \cite{CCLMST24}). 
Our main technical contribution is to delicately adapt the planar construction of \cite{CCLMST23} to (almost) achieve our stated goal. 
Below, we formally state the partition we aim to construct (\Cref{lem:good-clustering}) and describe how to construct a planar emulator from this partition. We then sketch some intuition for the proof of \Cref{lem:good-clustering}.

\begin{figure}[t]
	\centering
	\subfigure[A region intersection graph $G$.]
	{\scalebox{0.065}{\includegraphics{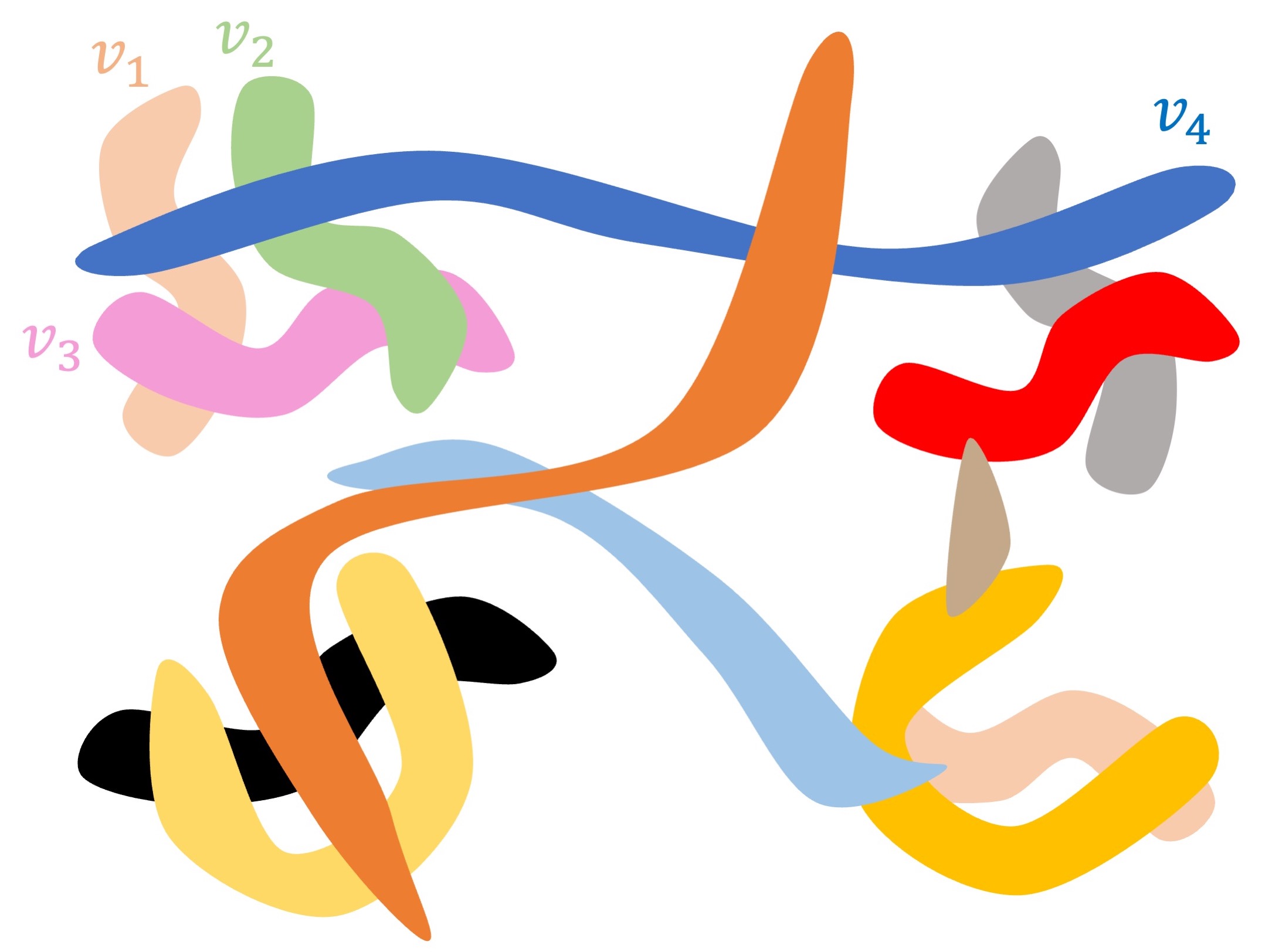}}}
	\hspace{0.3cm}
	\subfigure[A partition of $\R^2$ into clusters.]
	{\scalebox{0.062}{\includegraphics{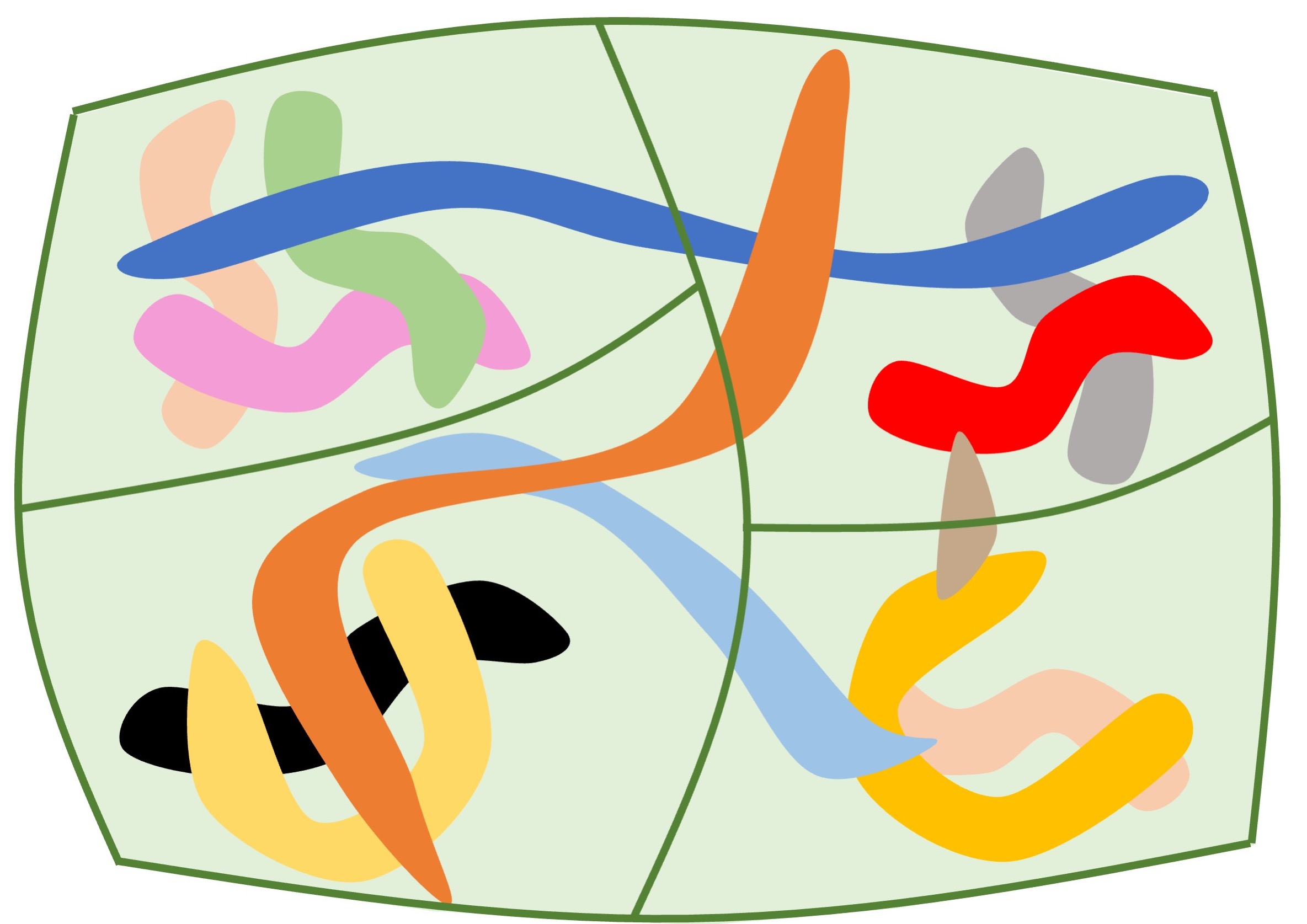}}}
	\hspace{0.3cm}
	\subfigure[A planar emulator for $G$ by contracting all clusters and adding stars.]
	{\scalebox{0.08}{\includegraphics{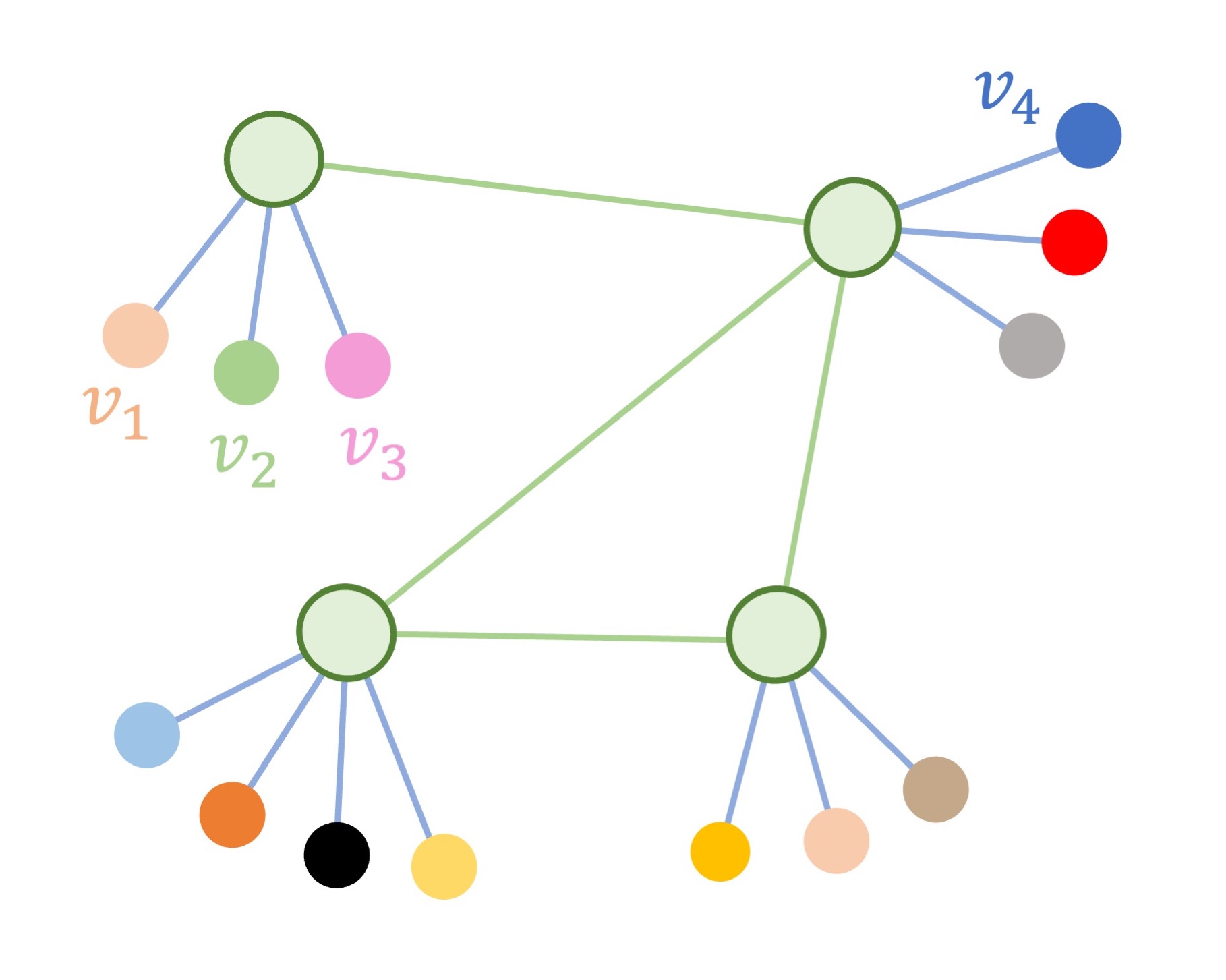}}}
	\caption{An illustration of planar emulator construction by contracting clusters.\label{fig:partition}}
\end{figure}

\subsubsection{Constructing an emulator from a clustering}
\label{SSS:constant-emulator-construction}
Rather than working directly with string graphs drawn in $\R^2$, we work with region intersection graph on planar graphs, as defined by Lee~\cite{lee2017separators}. Let \EMPH{$G$} be an arbitrary planar graph, which we call the \EMPH{base graph}. %
A \EMPH{region} is a connected subset of vertices in $G$; these are the analog of strings. %
Let \EMPH{$\cR$} be a set of regions on $G$. 
The \EMPH{region intersection graph} of $\cR$ is the graph with vertex set $\cR$, and an edge is added between every two regions $R_1$ and $R_2$ that share at least one vertex in $G$ (that is, there exists vertices $v_1 \in R_1$ and $v_2 \in R_2$ with $v_1 = v_2$). 
Clearly every string graph is a region intersection graph on a planar graph, and vice versa  \cite[Lemma 1.4]{lee2017separators}.
For technical reasons, described below, we actually will work with something we call the \EMPH{region contact graph}, denoted \EMPH{$G_\cR$}: this is the graph with vertex set $\cR$, where there is an edge is added regions $R_1$ and $R_2$ if and only if there are vertices $v_1 \in R_1$ and $v_2 \in R_2$ such that either $v_1 = v_2$ or \ul{$v_1$ is adjacent to $v_2$ in $G$}. It is not to hard to see that every string graph is a region contact graph on a planar graph, and vice versa; see \Cref{clm:string-to-contact} for the proof of the forward direction.
A \EMPH{cluster} of a graph $G$ is a set of vertices $C \subseteq V(G)$ such that $G[C]$ is connected. 
\begin{lemma}
\label{lem:good-clustering}
    There are absolute constants $\alpha_{\rm diam} = 126$ and $\alpha_{\rm hop} = 753$
    such that the following holds. 
    For any planar (base) graph $G$ and any set of regions $\cR$ on $G$,  there is a partition of $V(G)$ into disjoint clusters $\cC = \{C_1, C_2, \dots, C_k\}$ such that
    \begin{enumerate}
        \item \textnormal{[Diameter.]} For every cluster $C\in \cC$, for every pair of regions $R_1, R_2 \in \cR$ that intersects $C$, we have $\dist_{G_\cR}(R_1, R_2) \le \alpha_{\rm diam}$.
        \item \textnormal{[Scattering.]} For every region $R \in \cR$ and every two vertices $v_1, v_2 \in R$, there is a path in $G$ between $v_1$ and $v_2$ that intersects at most $\alpha_{\rm hop}$ clusters.
    \end{enumerate}
\end{lemma}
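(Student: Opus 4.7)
The plan is to adapt the shortcut partition construction of Chang et al.~\cite{CCLMST23} by replacing shortest-path distances in the planar graph $G$ with hop distances in $G_\cR$ throughout the BFS-based construction. First I would fix an arbitrary root region $R_0 \in \cR$; for each $R \in \cR$, let $h(R) := \dist_\cR(R_0, R)$, and for each vertex $v \in V(G)$, let $h(v) := \min_{R \ni v} h(R)$. A key observation is that $|h(v) - h(R)| \le 1$ for any $v \in R$: trivially $h(v) \le h(R)$, and if $h(v) = h(R')$ for some $R' \ni v$, then $R \cap R' \ne \emptyset$ implies $R$ and $R'$ are adjacent in $G_\cR$, giving $h(R) \le h(R') + 1$. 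Thus every region has its vertices concentrated within an $h$-range of width at most one, which is the property that lets us treat regions as essentially living on a single BFS layer.

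Next I would bucket vertices into super-layers $V_j := \{v : h(v) \in [jk, (j+1)k)\}$ for a constant $k = \Theta(\alpha_{\rm diam})$, so each region lies entirely in at most two consecutive super-layers. Within each super-layer I would apply a gridtree-style subdivision using \emph{radial cuts} derived from a BFS tree $T_\cR$ on $\cR$ rooted at $R_0$: for each non-root $R$, choose a parent $\pi(R) \in \cR$ with $h(\pi(R)) = h(R) - 1$ sharing a vertex with $R$, and lift ancestor chains in $T_\cR$ to radial paths in $G$ by concatenating shared-vertex transitions. The resulting clusters are grid-like cells, each bounded vertically by their $h$-range and horizontally by their position under $T_\cR$. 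The diameter bound then follows because any two regions touching a common cluster have $h$-values within $O(k)$ and share a common nearby radial ancestor, so concatenating their ancestor chains in $T_\cR$ yields a path of $O(k)$ regions in $G_\cR$ between them.

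The main obstacle is establishing the scattering property. We need that, for any region $R$ and any two of its vertices $v_1, v_2$, there is a path in $G$ between them visiting only $O(1)$ clusters. Since $R$ is connected in $G$, some path exists in $G[R]$, and by the $h$-concentration observation it stays within two super-layers; but within these super-layers an arbitrary $G[R]$-path could cross many radial cuts. The delicate part, following the spirit of \cite{CCLMST23}, is to exploit the planarity of $G$ together with the combinatorial structure of BFS-derived radial paths to choose a \emph{specific} connecting path---potentially detouring through ancestor regions of $R$ rather than staying in $G[R]$---that crosses only $O(1)$ cells per level of the gridtree recursion. After composing over the $O(1)$ recursion levels (and additional bookkeeping needed to handle transitions between super-layers), this would yield the stated bound $\alpha_{\rm hop} = 7884864$; the large gap between $\alpha_{\rm hop}$ and $\alpha_{\rm diam}$ is itself a hint that the scattering analysis is where the bulk of the technical work and the constant blow-up reside.
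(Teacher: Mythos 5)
Your proposal takes a genuinely different route from the paper's. You propose a BFS-from-a-root layering of the contact graph $G_\cR$, bucketed into super-layers, followed by a gridtree-style radial subdivision within each super-layer. The paper instead peels the plane graph from the \emph{outer face} (à la \cite{BLT07,CCLMST23}): a procedure \textsc{ClusterOuter} selects shortest paths between critical vertices on the current outer boundary, carves supernodes (path-plus-neighborhood) in $G_\cR$, \emph{shatters} the regions along supernode boundaries, and recurses on the connected components of unassigned vertices, interleaving an expansion-and-assignment step to form bounded-diameter clusters. Your $h$-concentration observation ($|h(v) - h(R)| \le 1$) is correct and is a useful parallel to the paper's implicit use of the fact that a region sits near a single supernode-level; the super-layer bucketing is a plausible alternative to outer-face peeling.

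The proposal has a genuine gap precisely where you acknowledge the hard part lies. Two concrete issues. First, the ``radial cuts derived from a BFS tree on $\cR$'' are never made precise enough to analyze: a lift of an ancestor chain in $T_\cR$ to $G$ is a path through many vertices of many regions, and distinct such radial paths are not vertex-disjoint in $G$ --- they can weave through each other arbitrarily because regions overlap. The planar gridtree of \cite{CCLMST23} depends on near-disjointness of the cutting paths along the outer face; in a region intersection graph one must \emph{shatter} regions against the cuts and then account for how the shattered pieces behave, which your proposal does not do. Second, your final accounting ``composing over the $O(1)$ recursion levels'' presumes bounded recursion depth, which is not true in either construction and is not how the paper obtains its $\alpha_{\rm hop} = \beta^2$ bound. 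In the paper, the scattering bound comes from a two-level argument (\Cref{lem:hierarchy-scattering}): once a region has any vertex assigned by some call to \textsc{ClusterOuter}, \emph{all} of its remaining vertices become \emph{outer regions} in the next recursive level and are fully assigned there, so each region is only ``active'' over two consecutive levels of recursion. Without an analogous argument showing that the pieces of a shattered region are confined to $O(1)$ cells, the scattering bound does not follow from your setup. Establishing this (or a substitute) is the bulk of the technical work the paper does in Sections~\ref{SS:select-paths}--\ref{SS:expand}, and your proposal currently waves past it.
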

Notice the minor but crucial change in the [scattering] condition, compared to our goal %
stated earlier in the technical overview: 
instead of asking each string/region $R$ to intersect only $O(1)$ many clusters, we instead guarantee the existence of \emph{some path} between every pair of vertices in $R$ that intersects $O(1)$ many clusters.
(This change is similar to the difference between shortcut partition and scattering partition.)
Equipped with \Cref{lem:good-clustering}, we can prove \Cref{thm:main} by constructing an emulator $H$ as~follows.

Given a string graph, we find a representation of the string graph as a region contact graph \EMPH{$G_\cR$} on some planar graph $G$ (see \Cref{clm:string-to-contact}). Assume without loss of generality that the regions $\cR$ collectively cover all vertices in $G$ (otherwise, uncovered vertices can be deleted without changing the contact graph $G_\cR$). Apply \Cref{lem:good-clustering} on the base graph $G$ and set of regions $\cR$, and let $\cC$ be the resulting partition of $V(G)$ into clusters.
Let \EMPH{$\hat H$} be the \EMPH{cluster graph} obtained from $G$ by contracting every (connected) cluster in $\cC$ into a single vertex. More precisely, $\hat H$ is the graph with vertex set $\cC$, where there is an edge between two clusters $C_1, C_2 \in \cC$ if and only if some vertex in $C_1$ is adjacent to some vertex in $C_2$ in $G$.
Initialize $\EMPH{$H$} \gets \hat H$. For every region $R \in \cR$, create a new vertex in $H$ representing $R$, and add an edge between this vertex and an arbitrary cluster $C \in V(\hat H)$ that intersects $R$; this cluster $C$ is called the \EMPH{representative cluster} of $R$. See \Cref{fig:partition}.
We will consider $H$ to be a weighted graph where each edge has weight $\alpha_{\rm diam}+1$.
Clearly $H$ is planar: the contracted graph $\hat H$ is planar as it is obtained by contractions of the planar base graph $G$, and attaching new vertices by a single edge to $\hat H$ maintains planarity. We claim $H$ is an $O(1)$-distortion planar emulator for $G_\cR$; we defer the proof to \Cref{SS:const-emu}. 

\begin{restatable}{claim}{constantEmulator}
\label{clm:constant-emulator}
$H$ is a $(2\alpha_{\rm hop}+1)\cdot (\alpha_{\rm diam}+1)$-distortion planar emulator of the region contact graph $G_\cR$.
\end{restatable}

\subsubsection{Constructing the clustering: A comparison with \cite{CCLMST23} and \cite{BLT14}}

How do we prove \Cref{lem:good-clustering}? As discussed above, we delicately adapt (and at some point completely deviate from) the shortcut partition construction of \cite{CCLMST23} in planar graphs.

\paragraph{Shortcut partition of \cite{CCLMST23}.} 
We begin by summarizing the partition of \cite{CCLMST23}, which itself builds on the construction of \cite{BLT14} for sparse covers. 
Let $\Delta > 0$, and let $G$ be a plane graph (i.e. a planar graph $G$ given along with an embedding of $G$ in the plane).
We say that the pair $(u,v)$ in $G$ with $\dist_G(u,v) \le \Delta$ is \EMPH{$\tau$-scattered} by a set of clusters $\cC$ if there is a path of length $O(\Delta)$ between $u$ and $v$ in $G$ that intersects at most $\tau$ clusters. 
A \EMPH{shortcut partition} is (roughly speaking\footnote{Actually, we are stating the closely-related definition of \emph{approximate scattering partition} \cite{CCLMST24}, which is slightly weaker than shortcut partition; this definition suffices for our exposition.}) a partition of $V(G)$ into clusters $\cC$ each of diameter $O(\Delta)$, such that any pair of vertices $(u,v)$ in $G$ with $d_G(u,v) \le \Delta$ is $O(1)$-scattered by $\cC$.

If $G$ consists of a shortest path $\pi$ (called the \EMPH{spine}) together with some vertices within $\Delta$ distance of $\pi$, we call $G$ a \emph{spined supernode} (or just \EMPH{supernode}, for short). Observe that it is easy to construct a scattering partition in the special case where $G$ is a spined supernode.
Indeed, one can first 
pick a maximal set of vertices $\cP$ on $\pi$ that are $\Theta(\Delta)$ apart,
and then partition $G$ into clusters according to a Voronoi partition with respect to $\cP$
where each vertex is assigned to the cluster of its closest point in $\cP$.
It is clear that every cluster has diameter $O(\Delta)$, and (because $\pi$ is a shortest path) any
pair of vertices at distance $O(\Delta)$ in $G$ are $O(1)$-scattered.
If $G$ is a general plane graph, \cite{CCLMST23} first partitions $G$ into subgraphs, each of which is a spined supernode, such that every pair of vertices at distance $O(\Delta)$ is $O(1)$-scattered by the set of spined supernodes.
A shortcut partition then follows from partitioning each spined supernode into clusters as described above.

The set of supernodes is constructed recursively: in each iteration \cite{CCLMST23} assign some of the vertices of $G$ to spined supernodes, then recurse on the subgraph induced by the unassigned vertices. They guarantee that, in each iteration, every pair of vertices at distance $O(\Delta)$ is $O(1)$-scattered \emph{by the supernodes created at the current iteration}.
Moreover, they guarantee that in each iteration, all vertices within distance $O(\Delta)$ are assigned to some supernode; using this fact, they prove that any path of length $\Delta$ only intersects supernodes created in two iterations, and thus any pair of vertices at distance $O(\Delta)$ is $O(1)$-scattered by the final set of supernodes.

We now describe the \cite{CCLMST23} construction of supernodes in each iteration.
Intuitively, they sweep across the plane graph $G$ from left to right while maintaining a \EMPH{separating supernode} $S$; the spine of $S$ has endpoints on the outer face of $G$, and it separates the left side of $G$ (which has already been processed) from the right side (which has yet to be processed).
In a bit more detail: the supernode $S$ is initialized to be the $O(\Delta)$ neighborhood around an arbitrary vertex on the outer face. Then the following process is repeated: add $S$ to the set of spined clusters $\cC$; then delete all vertices in $S$ from $G$; then move $S$ one step forward on the outer face (roughly, let $v_1'$ and $v_2'$ be two vertices adjacent to $S$ on the outer face, and update $S$ to be the supernode comprising the shortest path from $v_1'$ to $v_2'$ in $G-S$ plus an $O(\Delta)$-neighborhood around the path).
In this way, $S$ is a ``thick path'' that acts as a separator as it sweeps across all vertices in the outer face.
The separation property of $S$ is used to show that any pair of vertices at distance $O(\Delta)$ is $O(1)$-scattered by the supernodes.
Observe that the construction guarantees all vertices on the outer face are assigned to a supernode.
After sweeping across the entire graph $G$, \cite{CCLMST23} perform an \EMPH{expansion} step: every vertex within distance $O(\Delta)$ of the outer face is assigned to its closest supernode (that is, each supernode grows, in a Voronoi manner, by distance $O(\Delta)$.

To summarize, the construction of \cite{CCLMST23} involves four steps\footnote{A historical note: The first and fourth steps appeared in \cite{BLT14} in the context of sparse cover, and the second and third were new to \cite{CCLMST23}.}: (1) sweep along the outer face to select spined supernodes; (2) expand the supernodes by $O(\Delta)$ distance; (3) recurse on the subgraph induced by unassigned vertices; (4) chop each spined supernode into $O(\Delta)$-diameter clusters. 

\paragraph{Adaptation to string graphs.}
Our approach to proving \Cref{lem:good-clustering}, in the setting of region contact graphs, also involves the four major steps of \cite{CCLMST23}. 
We would like to sweep across the contact graph $G_\cR$ (computing shortest paths in $G_\cR$, and maintaining a separating supernode $S$ which is a subset of regions $\cR$), then expand the supernodes and partition them into clusters, and then recurse on the regions not assigned to any clusters.
However, there are significant technical obstacles to this plan, and we must deviate from \cite{CCLMST23}. Here we highlight three differences, with increasing complexity.
\begin{enumerate}
    \item When \cite{CCLMST23} sweep across the graph and maintain a separating supernode $S$, they essentially rely on the \EMPH{Jordan curve theorem} to prove that $S$ separates the already-processed part of the graph from the yet-to-be-processed part: roughly, if $S$ contains a cycle in $G$, then no vertex inside the cycle is adjacent to a vertex outside the cycle.
    In region intersection graphs, the Jordan curve theorem does not immediately give a separator: even if $S$ is set of regions in $G_\cR$ that contains a cycle in $G$, there may be a region that both contains a vertex of $G$ inside the cycle and outside the cycle. However, any such region must contain at least one vertex of $G$ on the cycle. Thus the \emph{1-neighborhood} of $S$ is a separator for the intersection graph $G_\cR$. We use this fact to prove that we can maintain a separating supernode while sweeping across $G_\cR$.
    \item When sweeping across $G$, \cite{CCLMST23} delete the vertices in $S$ and then continue sweeping across the remaining part of the graph. This guarantees that the supernodes are disjoint, and so (later on) each one can be independently partitioned into clusters. However, when we sweep across $G_\cR$, it is \emph{not} sufficient to just delete the regions in $S$ from $\cR$ and continue. This is because we eventually want a partition of $V(G)$ into clusters, not a partition of $\cR$. Thus, if $S_1$ and $S_2$ are supernodes, it is not enough to guarantee that $S_1 \cap S_2 = \varnothing$, that is, there is no region of $\cR$ contained in both $S_1$ and $S_2$; rather, we need to guarantee that $S_1$ and $S_2$ are \EMPH{vertex disjoint}, no vertex of $V(G)$ is contained in both a region of $S_1$ and a region of $S_2$. To achieve this stronger requirement, we \EMPH{shatter}\footnote{as described at the beginning of the technical overview; see \Cref{sec: O(1)} for a precise definition} 
    all regions in $\cR$ along the boundary of supernode $S$, before deleting $S$ and continuing to sweep. 
    We show that (roughly speaking) each region only gets shattered $O(1)$ times.
    The region contact graph (unlike the region intersection graph) behaves nicely under the shatter operation, and we use this to bound the distortion incurred by shattering.
    \item Finally, the last issue arises during the expansion step of \cite{CCLMST23}. This procedures requires computing a \EMPH{Voronoi partition}: we assign each unassigned vertex to its closest supernode, producing a set of larger, vertex-disjoint, supernodes. The same Voronoi procedure is also used when partitioning each supernode into clusters. 
    Unfortunately, in region contact graphs, we have no comparable way to produce \emph{vertex-disjoint} clusters: performing a Voronoi partition on the regions $\cR$ in $G_\cR$ produces clusters that do not share regions, but again these clusters may not be vertex-disjoint. To get around this, we need to combine the step of supernode expansion with the step of partitioning each supernode. The combined expand-and-cluster procedure carves out clusters in a delicate order to simultaneously guarantee the bounded diameter and expansion properties.
\end{enumerate}

\subsection{\boldmath$(1+\e, +\poly\log n)$-Distortion Planar Emulator}

\paragraph{A toy example.}
Next we describe our plan to turn an $O(1)$-distortion planar emulator into another planar emulator with $(1+\e, +\polylog n)$-mixed-distortion guarantee.
Like in the previous section, we begin with a toy example to see why such a reduction might even be possible. Let $C = O(1)$ be the constant so that every string graph has a $C$-distortion planar emulator.
Let us imagine a string graph%
\footnote{For ease of exposition, in this section we focus on string graphs. But there is nothing special about the choice of $G$ being a string graph: as mentioned, our reduction from multiplicative distortion to $(1+\e)$-mixed-distortion works for any graph class that admits $O(1)$-distortion planar emulators. Our formal lemmas are all stated in this more general language.} $G$ where every vertex of the graph is in the $O(C)$-neighborhood of a single shortest path $\pi$; that is, $G$ is a \emph{spined supernode}. In this case, the \EMPH{path-straightening} approach of Nguyen, Scott, and Seymour~\cite{nguyen2025asymptotic} shows that there is a planar emulator for $G$ with purely additive $+O(1)$ distortion.%
\footnote{Actually, \cite{nguyen2025asymptotic} get something slightly different than emulator: they allow distances to \emph{contract} by a small amount, but note that it is easy to remove this contraction by adding some Steiner vertices.}
Indeed, let $H$ be a $C$-distortion planar emulator of $G$ on the same set of vertices. Consider the image of $\pi$ in $H$: every edge in $\pi$ is mapped to a path of length at most $C$ in $H$. The resulting walk $\pi'$ may be longer than $\len \pi$ by a factor $C$,
but \cite{nguyen2025asymptotic} show that one can ``straighten'' $\pi'$ into a shortest path with length exactly equal to $\norm{\pi}$ by assigning edge weights appropriately.
Very loosely speaking:%
\footnote{There are technical difficulties because $\pi'$ may be a walk, not a path. In actuality, not all vertices on the walk $\pi'$ remain in the straightened path; 
see \Cref{thm:NSS25-2.2} for a formal statement.}
for every edge in $\pi$, the image of the edge $e$ in $H$ is a walk with length at most $C$; we set one edge in this walk to have weight 1 and the rest to have weight 0. Every edge not in the image of $\pi$ is assigned to have some large weight $C^2$.

After this reweighting, \cite{nguyen2025asymptotic} show that the resulting graph $(H, w)$ has \emph{purely additive} distortion $+ \poly(C)$, rather than multiplicative distortion.
We sketch the proof:
Let $u$ and $v$ be a pair of vertices in~$G$.
First, find the vertex $u'$ on $\pi$ that is closest to $u$ in $G$.
The distance between $u$ and $u'$ is at most $O(C)$ in $G$, and at most $O(C^2)$ in $H$ because $H$ is a $C$-distortion planar emulator, and therefore at most $O(C^4)$ in the reweighted graph $(H,w)$.
Similarly we find the vertex $v'$ on $\pi$ that is closest to $v$ in $G$, so $v$ and $v'$ also has distance at most $O(C^4)$ in $(H,w)$.
Finally, consider the shortest path between $u'$ and $v'$ in $G$.  Because the path is effectively straightened in $(H,w)$, there is no multiplicative distortion accumulated when walking along $\pi'$ in the weighted graph $(H,w)$.
This proves $\dist_{(H,w)}(u,v) \le \dist_G(u,v) + \poly(C)$.

\paragraph{Detours through \boldmath{$C$}-disjoint shortest paths.}
The argument above shows that it is easy to find additive emulators when $G$ is a spined supernode (i.e., an $O(C)$-neighborhood around some shortest path). What about general string graphs? 
One idea is to draw inspiration from \Cref{SS:O(1)-emulator} and compute a partition of $V(G)$ into disjoint spined supernodes.
Suppose we somehow could find a partition such that any shortest path between two vertices $u$ and $v$ intersects only $O(1)$ spined supernodes. 
In this case we could obtain an emulator with purely additive distortion by straightening each spined supernode independently:
walking past each supernode only incurs a $+\poly(C)$ additive distortion, so in total only a constant additive distortion accumulates on the path between $u$ and $v$.
However, requiring that \emph{any} shortest path intersects only $O(1)$ spined supernodes is an incredibly strong requirement. In contrast, the shortcut partition from \Cref{SS:O(1)-emulator} essentially says that there is a partition of $G$ into spined supernodes such that a shortest path \ul{of length $O(C)$} intersects $O(1)$ spined supernodes; but for constructing an additive emulator, we would need such a guarantee to apply to \ul{arbitrarily long} paths.
\cite{nguyen2025asymptotic} show (in different language) that such a partition exists in the special case when $G$ has an $O(1)$-distortion emulator with bounded pathwidth, but in general it is easy to see that an arbitrary string graph $G$ may not admit such a partition, for example when $G$ is a grid graph.%
\footnote{In this case, every column is a shortest path, so it may intersect only $O(1)$ spined supernodes. But every row is also a shortest path --- and if every column intersects only $O(1)$ supernodes, it is not too difficult to see that some row must intersect polynomially many supernodes.}

Fortunately, it is often the case that planar graphs (or close-to-planar graphs) have extremely compact distance-sketching tools in the
$(1+\e)$-multiplicative distortion regime, even when there are lower bounds against exact distances sketches. We could still hope for a partition of $V(G)$ into spined supernodes such that, for any pair of vertices $u$ and $v$, there is a $(1+\e)$-approximate shortest path between $u$ and $v$ that intersects few supernodes.
We draw inspiration from a recent distance-approximating minor construction of Chang and Conroy~\cite{chang2025distance} for planar graphs: we interpret their result as constructing a collection of paths $\cO$ in a planar graph $G$ that $(1+\e)$-approximate all pairwise distances in $G$, such that the paths have few intersections with each other (in some amortized sense). Specifically:
for every pair of vertices $(u,v)$ in $G$, there is a path $P$ that is the concatenation of $\poly \log n$ subpaths $P_j$ in $\cO$, such that the sum of lengths of all $P_j$ is at most 
a $(1+\e)$ multiplicative factor longer than the original distance $\dist_G(u,v)$. %
We aim to generalize the techniques of \cite{chang2025distance} to find our desired partition of a string graph $G$ into spined supernodes. There are two key challenges we need to overcome.
\begin{enumerate}
    \item Can we adapt the techniques of \cite{chang2025distance} to find a suitable collection of paths $\cO$ when $G$ is a \emph{string graph}, not a planar graph?
    \item Can we obtain a collection of paths $\cO$ that are all (pairwise) at distance $\Omega(C)$ from each other, rather than just sparsely intersecting? If the paths intersect, even with few intersections as in~\cite{chang2025distance}, we cannot straighten them independently.
\end{enumerate}
As we describe below, we overcome these two challenges to prove the following key lemma.
\begin{restatable}{lemma}{brokenpaths}
\label{lem:string}
Let $\mathcal{G}$ be a class of unweighted graphs that is closed under induced subgraphs, such that every graph $G \in \mathcal{G}$ has a $C$-distortion planar emulator $H$ with $V(H) = V(G)$. 
    For any $G \in \mathcal{G}$ and any small constant $\e_0 > 0$, there is a set of paths $\cO$ in $G$ such~that:
    \begin{itemize}
    \item \textnormal{[$\Theta(C)$-disjointness.]} For any paths $P_1, P_2 \in \cO$, we have $\dist_G(P_1, P_2) > 2C$.
    \item \textnormal{[Shortest path.]} Every path $P \in \cO$ is a shortest path in the subgraph induced by $\cN_G(P, 2C)$, that is, the $2C$-neighborhood of $P$ in $G$.
    \item \textnormal{[Low-hop.]} For any vertices $u,v \in V(G)$, 
    there is a path $P$ between $u$ and $v$ (not necessarily in $\cO$):
    either $\norm{P} < O(C\e_0^{-4}\cdot \log^{17} n)$ (that is, $P$ is short), 
    or $\norm{P} \le (1+\e_0) \cdot \dist_G(u,v)$, such that 
    $P$ can be written as the concatenation
    \[
    P = Q_1 \circ P_1 \circ Q_2 \circ P_2 \circ \dots \circ Q_\ell
    \]
    where each $P_j$ is a subpath of some path in $\cO$, the sum of lengths of all $Q_j$ is at most $\e_0\cdot \dist_G(u,v)$,
    and $\ell \le O(\e_0^{-3} \cdot \log^{15} n)$. 
    \end{itemize}
\end{restatable}
\noindent Equipped with this lemma, we apply the path-straightening technique of \cite{nguyen2025asymptotic} to prove our \Cref{thm:main2}.

\paragraph{Challenge 1: Shortest-path separators on string graphs.}
The construction of \cite{chang2025distance} relies heavily on shortest-path separators in planar graphs, and we will also need this tool.
However, we have an immediate problem:
there are no known shortest-path separators construction for string graphs, even if we allow separators to be consist of the 1-neighborhood of shortest paths.
Such separators are only known for unit-disk graphs \cite{YXD12,HZ24}.
Our first challenge is to construct a shortest-path separator for arbitrary string graphs.

\begin{restatable}{lemma}{stringSeparator}
\label{lem:string-path-separator}
    Let $G$ be an $n$-vertex graph such that there is a $C$-distortion planar emulator $H$ for $G$ with $V(H) = V(G)$. For any set of non-negative vertex weights on $G$, there is a set $S \subseteq V(H)$ that comprises $O(\log n)$ shortest paths in $G$ and the $2C$-neighborhood of these paths in $G$, such that every connected component of $G \setminus S$ has at most half the weight of $G$.
\end{restatable}

Our proof combines the standard path separator construction on planar graphs by Lipton-Tarjan~\cite{lipton1979separator} with our $O(1)$-distortion planar emulator from \Cref{thm:main}.
First we compute a single-source shortest-path tree $T$ on string graph $G$, and consider the planar emulator $H$ of $G$.  
Map the tree $T$ into $H$ to obtain a subgraph $T^H$ 
(which is not necessarily a tree).
We then make use of a decycling trick to extract a spanning tree from $T^H$.
The following combinatorial lemma, to our best knowledge, appeared first in \cite{hathcock2025steiner} in which they dubbed it the \EMPH{Steiner path aggregation problem}.
Consider the set of $n$ paths $\cP$ from every vertex to root in $T$ and their corresponding images $\cP'$ in $H$.  
Given such paths in $H$  (where the paths are allows to self-intersect) that together covers $T^H$, 
there is a tree $T'$ that is a spanning subgraph of $T^H$, 
such that any leaf-to-root path in $T'$ is a concatenation of $O(\log n)$ paths from the set of paths $\cP'$.
By \cite{lipton1979separator}, some set of $O(1)$ paths in $T'$ form a balanced separator $S$ of $H$. The corresponding  preimage of $S$ is a set of $O(\log n)$ shortest paths in $G$, and one can show that the $C$-neighborhood of these $O(\log n)$ paths is a balanced separator of $G$.

With the separator hierarchy for the string graph $G$, we are able to carry out the first steps of the construction of \cite{chang2025distance} for computing a sparsely-intersecting set of paths $\cO$. 
First, for every piece $R$ in the separator hierarchy and every scale $i \in [\log n]$, we compute \EMPH{scale-$i$ portals} on the internal separator $S$ of $R$, with the property that consecutive scale-$i$ portals along $S$ are at distance $\Theta(\e\cdot 2^i)$.
The portals are chosen so that any shortest path between $(u,v)$ passing through $S$ can be rerouted through some portal on $S$ 
at a cost of a multiplicative $1+\e$ distortion.
Using the portals,
we then define a set of \EMPH{canonical paths}~\cite{abraham2012fully,chang2025distance} between the portals as follows. A pair of portals $p$ and $p'$ are \emph{relevant} to a piece $R$ in the separator hierarchy if $p$ is on the internal separator of $R$ and $p'$ is on the internal separator of some ancestor piece $R'$ of $R$. A \emph{canonical path} at scale $i$ and region $R$
is shortest path between a pair of scale-$i$ portals that are relevant to $R$ and are within distance $O(2^i)$ of each other; see \Cref{SS:canonical-pairs} for a formal definition.
One important property of canonical paths is that, between any pair of vertices $u$ and $v$, there exists a path $P$ with length at most $(1+\e) \cdot \dist_G(u,v)$, which is the concatenation of $\poly \log n$ canonical paths.
Moreover, \cite{chang2025distance} show that one can slightly ``wiggle'' the canonical paths so that they intersect sparsely with each other.

\paragraph{Challenge 2: How to ensure the spined supernodes are \boldmath{$O(C)$}-disjoint.}
With the analog of separator hierarchy for string graphs, we could obtain something similar to the set $\cO$ of sparsely-intersecting paths from \cite{chang2025distance}. However, here we come to another, more difficult, obstacle:  Recall that we have the ability to straighten \emph{one} path with its $O(C)$-neighborhood. 
However if two paths intersect (or even have overlapping $O(C)$-neighborhoods) %
the corresponding images of the two paths in $H$ could intersect, 
and as a result we may have two different values we want to reweight an edge of $H$ to.
This would be problematic.

Our main technical contribution in the second part of the paper
is to devise a novel detour strategy 
to resolve this issue and obtain a set $\cO$ of paths that are at least distance $\Omega(C)$ apart, proving \Cref{lem:string}.
The construction of $\cO$ is a simple \emph{carving algorithm}.
First compute the separator hierarchy and canonical paths, as described above. Order the canonical paths according to a natural ordering  $\preceq$ based on the region and scale of the canonical paths%
\footnote{We order the paths by region containment---that is, if $\pi$ is a canonical path between portals in region $R$, and $\pi'$ is a canonical path between portals in some ancestor region $R'$, we say $\pi \preceq \pi'$. Within a region, we order paths by length---that is, if $\pi$ is a path between two scale-$i$ portals and $\pi'$ is a path between two scale-$i'$ portals with $i \le i'$, we say $\pi \preceq \pi'$. The same ordering appears in \cite{chang2025distance}.}, and iterate through each path in $\cO$ based on such ordering. 
Every time before a new path $\pi$ is inserted in $\cO$, we ``erase'' an $\Theta(C)$-neighborhood around $\pi$ from the existing paths in $\cO$.  
If some path $\pi'$ breaks into multiple parts because of the erasure, remove $\pi'$ from $\cO$ and insert the components of the broken paths back into $\cO$.  
This procedure guarantees that subpaths of different paths $\pi$ and $\pi'$ are at distance $\Theta(C)$ from each other. However, subpaths coming from the same path $\pi$ could be close to each other --- to deal with this, we expand all the ``erased segments'' on path $\pi$ by distance $C$ along $\pi$.
The resulting collection of subpaths in $\cO$ are at distance more than $\Omega(C)$ away from each other.\footnote{In actuality, the procedure is slightly different: the $O(C)$-neighborhood is measured with respect to distances in a subgraph of $G$ that depends on the path $\pi$ chosen. Nevertheless we can still argue that the distance between subpaths in $\cO$ is more than $\Omega(C)$ away \ul{in $G$}. See \Cref{S:broken-paths}.}  See \Cref{fig:erased_paths} for an illustration.

\begin{figure}[th]
    \centering
    \includegraphics[width=0.7\linewidth]{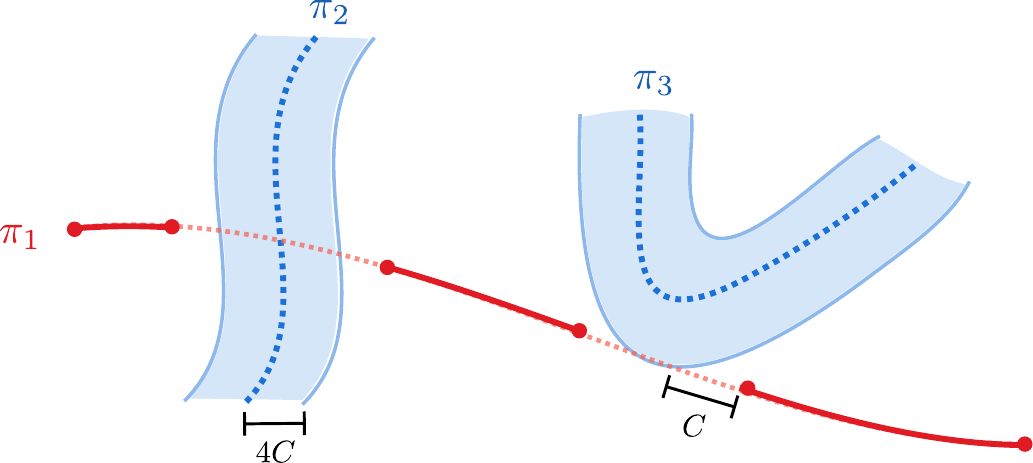}
    \caption{An illustration of the carving algorithm $\textsc{ErasedPath}$ construct a set of paths $\cO$. Paths $\pi_1\preceq \pi_2\preceq \pi_3$ are inserted one by one; the resulting subpaths of $\pi_1$ are shown in solid red lines.}
    \label{fig:erased_paths}
\end{figure}

Moreover, one can show that every $\pi \in \cO$ is a shortest path in the graph induced by the $\Theta(C)$-neighborhood of $\pi$ (which is a technical property we need in order to apply the path-straightening approach of \cite{nguyen2025asymptotic}).
To construct our mixed-distortion emulator, we reweight the emulator $H$ to straighten each path in $\cO$. It remains to prove the distortion bound.

\paragraph{Bound on distortion via detouring.}
The simplicity of the carving algorithm comes with a cost:
the analysis of the distortion bound is the most complicated part of the proof.
We outline some of the technical issues together with a sketch of 
how we overcome them.
We want argue that, between any pair of vertices $u$ and $v$, there is a $(1+\e)$-approximate shortest path in $G$ that walks along only $\tilde O(1)$ paths in $\cO$. (This is because, whenever we walk along paths in $\cO$ we incur no distortion in $H$, and whenever we switch between paths in $\cO$ we incur some additive distortion $+\poly(C)$ to ``jump'' the gap of width $O(C)$ between paths.
Thus it is crucial to bound the number of jumps we make.) By construction of canonical paths, it suffices to consider the case when $u$ and $v$ are the endpoints of some canonical path $\pi$.

\medskip \noindent \emph{Idea 1: Detouring $\pi$.} We would like to claim that $\pi$ gets broken into $\tilde O(1)$ paths in $\cO$, during the carving algorithm; then we could easily find a path between $u$ and $v$ that makes only $\tilde O(1)$ jumps between paths in $\cO$. Unfortunately, this claim is simply not true. Even if $\pi$ is only close (ie, within distance $C$) to a single other canonical path $\pi'$ with $\pi \prec \pi'$, it may be the case that $\pi$ is shattered into arbitrarily many subpaths; see \Cref{fig:detour}. However, in this case we can find a short path $P$ between $u$ and $v$ that makes few jumps with the following \EMPH{detour} procedure: walk along $\pi$ until the first point where $\pi$ and $\pi'$ are within distance $C$, then jump onto $\pi'$ and walk until the last point where $\pi$ and $\pi'$ are within distance $C$, and then jump back onto $\pi$. Assuming the entire path $\pi'$ appears in $\cO$, we have found a path $P$ between $u$ and $v$ that touches only two paths in $\cO$, and $P$ is only longer than $\dist_G(u,v)$ by an additive term $+2C$.

\begin{figure}[h!]
    \centering
    \includegraphics[width=0.6\linewidth]{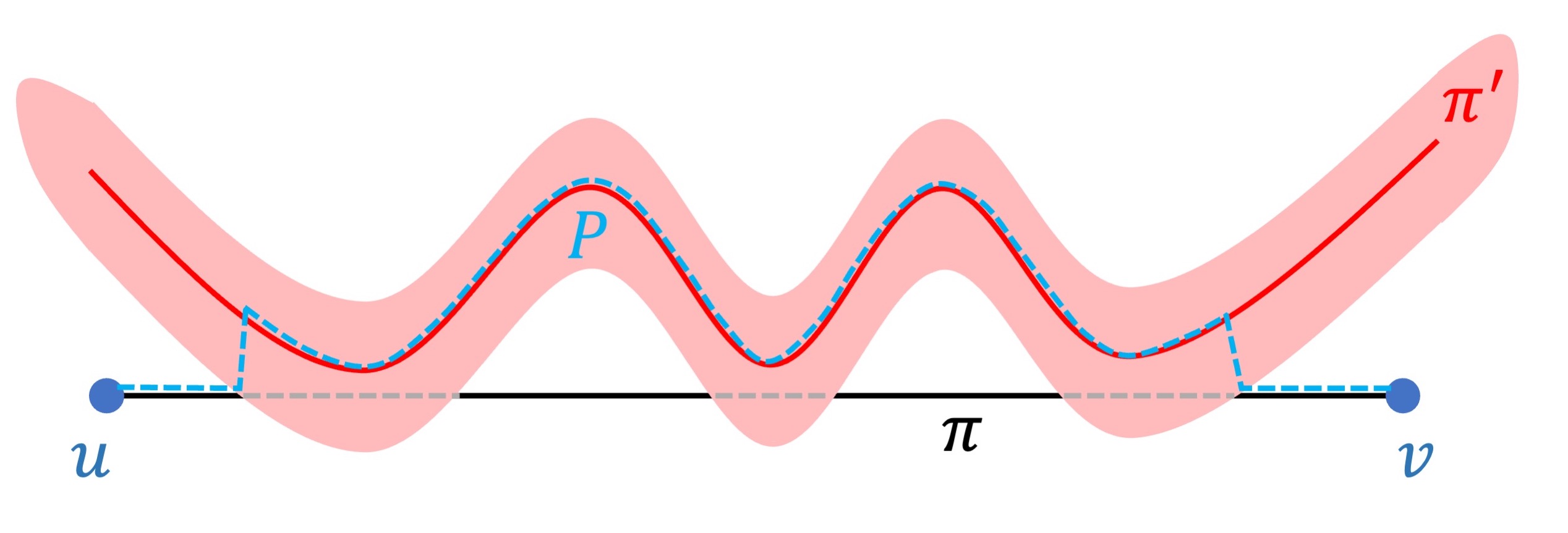}
    \caption{Canonical paths $\pi$ and $\pi'$, with $\pi \prec \pi'$. Erasing the $C$-neighborhood of $\pi'$ shatters $\pi$ into many subpaths. The ``detoured'' path $P$ jumps from $\pi$ to $\pi'$ and back to $\pi'$.}
    \label{fig:detour}
\end{figure}

\medskip \noindent \emph{Idea 2: Controlling the number of detours.} The argument above suggests that we would want to prove that there are only $\tilde O(1)$ paths $\pi'$ with $\pi \prec \pi'$ that are close to $\pi$. We say that $\pi'$ \EMPH{threatens} $\pi$ if $\pi \prec \pi'$ and $\pi'$ is within distance $C$ of $\pi$. If we could prove that $\pi$ is threatened by $\tilde O(1)$ canonical paths, we could hope to detour $\pi$ along each such threatening $\pi'$, ending up with a short path $P$ between $u$ and $v$ that makes $\tilde O(1)$ jumps. There are two issues here. 
First, it is just not true that $\pi$ is threatened by $\tilde O(1)$ canonical paths.
Second, even if we had a bound on the number of threateners of $\pi$, there is the issue that $\pi'$ itself could be threatened by some other canonical pair: that is, even if we detour along $\pi'$, the path $\pi'$ may not belong to $\cO$.  We now describe how we circumvent these two issues.

\begin{itemize}
    \item \emph{Issue: There are many threateners of $\pi$.}  On one hand, it is relatively straightforward to prove that there are only $\tilde O(1)$ paths $\pi'$ that threaten $\pi$ and satisfy $\norm{\pi'} > \e \cdot \norm{\pi}$; that is, there aren't many \emph{long} paths $\pi'$ that threaten $\pi$. (This will follow from the definition of canonical pair and the fact that scale-$i$ portals are spaced out from each other.) On the other hand, however, there could be many \emph{very short} paths $\pi'$ that threaten $\pi$.

    We overcome this issue as follows. To begin with, we compute a path $P$ between the endpoints of $\pi$ by performing detours along sufficiently long threatening paths $\pi'$, where $\len {\pi'} > \e \cdot \len \pi$. This means that we perform only $\tilde O(1)$ detours. Of course, at this point there may still be short paths $\pi'$ that threaten $\pi$, meaning $P$ could still make arbitrarily many jumps between paths in $\cO$.
However, in this case we argue that $P$ is within distance $\e \cdot \len \pi$ of a separator path of some piece $R'$ in the separator hierarchy, where $R'$ is a proper ancestor of the piece $R$ that contains $\pi$. 
Whenever our path $P$ is within distance $O(\e \cdot \len \pi)$ of a separator path $\pi_{R'}$ of some ancestor $R'$, we immediately ``jump'' to $\pi_{R'}$ and make a detour along $\pi_{R'}$; one can show that $\pi_{R'}$ is itself a canonical path that is not threatened by any short path $\pi'$ with an endpoint on $\pi_{R'}$. 
Thus, detouring along separator paths effectively allows us to ignore the short paths that threaten $\pi$, while only increasing the length of $P$ by a small amount.

\item \emph{Issue: Recursive threateners of $\pi'$.} 
Our next issue is that, even after we detour our path $P$ along a threatening path $\pi'$, we have no guarantee that $\pi'$ itself appears in $\cO$. We may have to recursively detour $\pi'$ along some other path $\pi''$ that threatens $\pi'$.
To overcome this issue, we would like to claim that if $\pi''$  threatens $\pi'$, then $\pi''$ also threatens $\pi$, under some relaxed notation of ``threatening''. 
We say a canonical path $\pi'$ \EMPH{$\beta$-threatens} a path $\pi$ if, roughly speaking\footnote{see \Cref{def:threatener} for the precise definition}, some endpoint of $\pi'$ is within distance $O(\len {\pi'}) + \beta$ of some endpoint of $\pi$. Clearly, if $\pi$ is threatened by $\pi'$, then $\pi$ is $0$-threatened by $\pi'$. As discussed in the previous section, there are only $\tilde O(1)$ long paths $\pi'$ that $0$-threaten $\pi$, and in fact a similar argument shows that there are only $\tilde O(1)$ long paths $\pi'$ that $\beta$-threaten $\pi$ for any constant $\beta = O(1)$. Moreover, the $\beta$-threatening relationship is almost transitive: if $\pi$ is $\beta$-threatened by $\pi'$, and $\pi'$ is threatened by another path $\pi''$, then $\pi$ is $(\beta+C)$-threatened by $\pi''$. That is, if we recursively detour $\pi$ along $\theta$ different paths $\pi'$, all these paths $\pi'$ are $(\theta \cdot C)$-threateners of $\pi$.

To bound the number of recursive detours we make, we want to bound the number of $\beta$-threateners of $\pi$. But, for what value of $\beta$ should we bound the number of $\beta$-threateners?
If we find that there are $f(\beta)$ many $\beta$-threateners of $\pi$, then we could make as many $f(\beta)$ detours, meaning that subsequent paths we detour against are $f(\beta) \cdot C$-threateners of $\pi$.
This ``self-referential'' problem between the number of detours and the radius we can claim $\pi$ threatens seems circular and  may escalate into an uncontrollable number of segments in the final detour path $P$.
Our one final trick is to establish the existence of a magic $\theta$ in \Cref{lem:self-referential-threat}, whose value is well-defined and non-circular, such that the number of (long) canonical paths that $\theta$-threaten $\pi$ is at most $\theta$.  The value of $\theta$ ends ups to be around $O(\e^{-3}\log^3 n)$, which is sufficient for an $(1+\e, +\poly\log n)$ approximation.
\end{itemize}
We remark that our detour procedure is somewhat reminiscent of the detours performed by \cite{chang2025distance}. However, there are substantial differences. The detours in \cite{chang2025distance} were directly performed to construct $\cO$; the detours were only performed against separator paths; and a main challenge was to detour in a way that deals with $O(\log n)$ scales. In contrast, our detours are performed only in the analysis (not in the construction of $\cO$); we detour along other canonical paths; and a key challenge is to cope with recursively detouring along canonical paths via the existence of the seemingly self-referential $\theta$.

\section{Preliminaries}
In this paper, we consider unweighted graph $G$. 
We let $E(G)$ denote the edges of $G$ and $V(G)$ denote the vertices of $G$. For a set of vertices $S\subseteq V(G)$, we will frequently consider the induced subgraph $G[S]$. The notation $\dist_G(u,v)$ denotes the shortest-path distance between two vertices $u$ and $v$ in $G$. We use $\diam(G)$ to denote the diameter of $G$.
For any set of vertices $X$ and any real number $r > 0$, the set $\cN_G(X,r)$ denotes $\set{v \in V(G) : \dist_G(X,v) \le r}$, called the $r$-neighborhood of $X$ in $G$.
In addition, we use the following notation for paths in graph $G$. For any two vertices $a$ and $b$ in a path $P$, we will use $P[a : b]$ to denote the subpath of $P$ between vertices $a$ and $b$. 
We will sometimes concatenate two paths $P_1 = [p_0, \dots, p_k]$ and $P_2 = [q_0, \dots, q_\ell]$ when either $p_k = q_0$ or $p_k$ is adjacent to $q_0$,
by joining the sequence of vertices in $P_1$ with those in $P_2$, suppressing the duplicate vertex when $p_k=q_0$,
and denote the resulting concatenated walk by $P_1 \circ P_2$. Note that $P_1 \circ P_2$ may not be a path even when $P_1$ and $P_2$ are.

The logarithms in this paper are all of base $2$.
For any positive integer $z$, we write $[z]$ to mean $\set{x \in \mathbb{Z}: 1 \le x \le z}$.

\section{O(1)-Distortion Planar Emulator}
\label{sec: O(1)}

\subsection{Proof of $O(1)$-Distortion Emulator}
\label{SS:const-emu}

Here we prove \Cref{thm:main}, assuming \Cref{lem:good-clustering} as stated above. We first recall the notion of region contact graph.
\begin{definition}
    Let $G$ be a graph, and let $\cR$ be a set of regions on $G$.
    The \EMPH{region contact graph} of $\cR$ on $G$, denoted \EMPH{$G_\cR$}, is the graph with vertex set $\cR$, where there is an edge between regions $R_1$ and $R_2$ if and only if there are vertices $v_1 \in R_1$ and $v_2 \in R_2$ such that either $v_1 = v_2$ or \ul{$v_1$ is adjacent to $v_2$ in $G$}.
\end{definition}
\begin{claim}
\label{clm:string-to-contact}
    Every $n$-vertex string graph $G^{\rm string}$ can be represented as a region contact graph $G_{\cR}$ on some planar graph $G$. Moreover, if there is a drawing of $G^{\rm string}$ with $N$ crossing points between strings, then $G$ has $O(N + n)$ vertices and can be computed from the drawing in $\poly(N + n)$ time.
\end{claim}
\begin{proof}
    Let $G^{\rm string}$ be a string graph, and fix some drawing of $G^{\rm string}$ with $N$ crossing points. Assume WLOG that every string intersects at least one other string, so $N \ge n$.
    As argued in \cite{lee2017separators}, the graph $G^{\rm string}$ is a region intersection graph of an $N$-vertex planar graph. Indeed,
    let $G_0$ be the planar graph with a vertex at every point where two strings intersect, and with an edge between two vertices that are adjacent on a string. Define a set of connected regions $\cR_0$ in one-to-one correspondence with the strings (specifically, the region associated with string $S$ contains the vertices of $G_0$ that correspond to intersection points involving $S$). The intersection graph of $\cR_0$ is precisely the string graph.
    
    We now obtain a region contact graph from the region intersection graph. Let $G$ be the $O(N)$-vertex graph obtained from $G_0$ by \emph{subdividing} every edge. We massage the regions $\cR_0$ into a set of regions $\cR$ on $G$ in the natural way: a subdivided vertex is in a region if and only if both end points were in that region. Clearly the contact graph $G_\cR$ is the same as the intersection graph of $\cR_0$.
\end{proof}

We remark that the reverse direction of \Cref{clm:string-to-contact} also holds (though we do not use this fact): one can convert a region contact graph on $G$ into an equivalent region intersection graph by subdividing every edge of $G$ and then expanding every region outward by one vertex; by \cite[Lemma 1.4]{lee2017separators}, this region intersection graph is a string graph.

\medskip
We now prove \Cref{clm:constant-emulator}: the planar emulator $H$ constructed in \Cref{SSS:constant-emulator-construction} has $O(1)$-distortion.

\constantEmulator*
\begin{proof}
Recall that the graph $H$ is defined by contracting every connected cluster $C\in \cC$ from \Cref{lem:good-clustering}, connecting adjacent clusters together, and adding edges from vertices representing regions to arbitrary representative clusters. The edges in $H$ have weight $\alpha_{\rm diam} + 1$. See \Cref{fig:partition} for an illustration.

First we prove an upper bound on distances in $H$. 
It suffices to show that for every edge $(R_a,R_b)$ in the region contact graph $G_\cR$, we have $\dist_H(R_a, R_b) \le (2\alpha_{\rm hop}+1) \cdot (\alpha_{\rm diam}+1)$.
To this end, fix one such edge $(R_a,R_b)$ in $G_\cR$. 
There are vertices $v_a \in R_a$ and $v_b \in R_b$ such that either $v_a = v_b$ or $v_a$ and $v_b$ are adjacent in $G$. 
Let $C_a \subseteq V(G)$ (resp.\ $C_b \subseteq V(G)$) be the representative cluster of $R_a$ (resp.\ $R_b$), as defined in the construction of $H$ in \Cref{SSS:constant-emulator-construction}. 
Let $x_a\in C_a\cap R_a$ and $x_b\in C_b\cap R_b$; these vertices exist because $C_a,C_b$ are the representative clusters of $R_a,R_b$.  Since $x_a,v_a\in R_a$, The [scattering] property of \Cref{lem:good-clustering} gives a path $Q_a$ in $G$ from $x_a$ to $v_a$ intersecting at most $\alpha_{\rm hop}$ clusters.  Similarly, since $v_b,x_b\in R_b$, there is a path $Q_b$ from $v_b$ to $x_b$ intersecting at most $\alpha_{\rm hop}$ clusters.  If $v_a=v_b$, concatenate $Q_a$ and $Q_b$; otherwise concatenate $Q_a$, the edge $v_av_b$, and $Q_b$.  The resulting walk has vertices in at most $2\alpha_{\rm hop}$ clusters, so it induces a walk from $C_a$ to $C_b$ in the cluster graph with at most $2\alpha_{\rm hop}-1$ cluster-cluster edges.  Adding the two terminal-cluster edges gives at most $2\alpha_{\rm hop}+1$ edges of $H$.
As each edge has weight $\alpha_{\rm diam}+1$, we have that $\dist_H(R_a, R_b) \le (2\alpha_{\rm hop}+1) \cdot (\alpha_{\rm diam}+1)$ as desired. 

Now we prove a lower bound on distances in $H$: for any $R_a,R_b \in \cR$, we claim $\dist_{G_\cR}(R_a,R_b) \le \dist_{H}(R_a,R_b)$. Let us consider two fixed regions $R_a,R_b \in \cR$ with $P_H \coloneqq [R_a, C_1, C_2, \ldots, C_\ell, R_b]$ as a shortest path in $H$ between $R_a$ and $R_b$; note that every vertex other than the endpoints represents some cluster $C_i \in \cC$, and cluster $C_1$ (resp.\ $C_\ell$) is the representative cluster of the region $R_a$ (resp.\ $R_b$). 
Observe that $P_H$ consists of $\ell+1$ edges of weight $\alpha_{\rm diam}+1$, so $P_H$ is a path of length $(\ell+1) \cdot (\alpha_{\rm diam}+1)$.
We aim to show that $\dist_{G_\cR}(R_a,R_b) \le (\ell+1) \cdot (\alpha_{\rm diam}+1)$, which we do by constructing a walk between $R_a$ and $R_b$ in $G_\cR$ with the desired length.
To this end, first observe that for every $i \in [\ell-1]$, the existence of the edge $(C_i, C_{i+1}) \in E(H)$ implies that there exist vertices $v_i' \in C_i$ and $v_{i+1} \in C_{i+1}$ such that $v_i'$ and $v_{i+1}$ are adjacent in $G$. We define region $\EMPH{$R_i'$} \in \cR$ to be some region that contains $v_i'$, and likewise define $\EMPH{$R_{i+1}$} \in \cR$ to be some region contains $v_{i+1}$; such regions exist because we assumed that $\cR$ collectively covers all vertices of $G$. The regions $R_i'$ and $R_{i+1}$ are adjacent in the contact graph $G_\cR$.
To simplify terminology, we additionally define $R_1 \coloneqq R_a$ and $R_\ell' \coloneqq R_b$, so that $R_i$ and $R_i'$ are defined for all $i \in [\ell]$. Observe that for every $i \in [\ell]$, the regions $R_i$ and $R_i'$ intersect the cluster $C_i$, and so the [diameter] property of \Cref{lem:good-clustering} implies that $\dist_{G_\cR}(R_i, R_i') \le \alpha_{\diam}$.
Consider the walk \EMPH{$P_{\mathrm {contact}}$} in $G_\cR$ between $R_a$ and $R_b$, formed by taking the concatenation of the shortest paths from $R_i$ to $R_i'$ for all $i \in [\ell]$, together with the connecting edges $(R_i', R_{i+1})$ for all $i \in [\ell-1]$.
The walk $P_{\rm contact}$ consists of $\ell$ shortest paths between regions intersecting the same cluster, plus $\ell-1$ edges between clusters; thus $P_{\rm contact}$ has length at most $\ell\cdot \alpha_{\rm diam} + (\ell - 1)$. 
As the walk $P_{\rm contact}$ is at least as long as the shortest path between $R_a$ and $R_b$ in $G_\cR$, we have that
\[  
\dist_{G_\cR}(R_a, R_b) 
\le \ell \cdot \alpha_{\rm diam} + (\ell-1)
< (\ell+1) \cdot (\alpha_{\rm diam} + 1) \le \dist_{H}(R_a, R_b), 
\]
which proves the lower bound on distances as desired.
\end{proof}

Assuming \Cref{lem:good-clustering}, our \Cref{thm:main} follows by combining our reduction to region contact graphs (\Cref{clm:string-to-contact}) with our planar emulator for region contact graphs (\Cref{SSS:constant-emulator-construction} and \Cref{clm:constant-emulator}).
Note that the emulator $H$ could have many more vertices than $G_\cR$. To fix this issue, we apply Steiner point removal \cite{CCLMST24}.
\begin{corollary}
\label{cor:small-emulator}
    There is an planar graph $H'$ such that $H'$ is a $O(1)$-distortion emulator for the graph $G_\cR$, and $V(H') = V(G_\cR)$. The emulator $H'$ has non-negative integer edge weights.
\end{corollary}
\begin{proof}
    Let $H$ be (edge-weighted) $O(1)$-distortion planar emulator of the contact graph $G_\cR$, provided by \Cref{thm:main}. It is immediate from the construction that every edge of $H$ has some non-negative integer edge weight. We define the \EMPH{terminals} to be the vertices $V(G_\cR) = \cR$, and define the \EMPH{Steiner vertices} to be $V(H) \setminus V(G_\cR) = \cC$; recall that the terminals represent regions in $\cR$, and the Steiner vertices represent clusters from the clustering $\cC$ of \Cref{lem:good-clustering}.
    Note that $H$ could contain many Steiner vertices, so we now apply the Steiner point removal procedure of \cite{CCLMST24} --- their theorem says that for any planar graph $H$ and set of terminals $T$, there is a minor $H'$ of $H$ with $V(H') = T$ and non-negative edge weights, such that for any two terminals $t_1, t_2$ we have $\dist_{H}(t_1, t_2) \le \dist_{H'}(t_1, t_2) \le O(1) \cdot \dist_H(t_1, t_2)$.
    Applying Steiner point removal on $H$ with terminal set $\cR$ produces a $O(1)$-distortion planar emulator for $G_\cR$. The construction of \cite{CCLMST24} guarantees that every \emph{edge} $(t_1,t_2)$ in $H'$ has weight $\dist_H(t_1,t_2)$, so the weight of every edge in $H'$ is a non-negative integer.
\end{proof}

\subsection{Towards Proving \Cref{lem:good-clustering}}
The remainder of \Cref{sec: O(1)} is dedicated to the proof of \Cref{lem:good-clustering}. We fix a plane graph \EMPH{$G$}.  All regions are subsets of $G$.

\paragraph{Shattering.} 
We will frequently require the operation of ``shattering'' a region; i.e.\ breaking it apart into smaller components. If $R$ is a region in $G$ and $S\subseteq V(G)$ is a set of vertices of $G$, then the process of \EMPH{shattering $R$ by the boundary of $S$}---denoted \EMPH{$\shatter R S$}%
---returns the set of connected components of $G[R \setminus S]$ together with the connected components of $G[R \cap S]$; see \Cref{fig:shatter}. 
In other words, shattering $R$ by $S$ returns a set of maximal connected regions which do not ``cross''  the boundary of $S$ and whose union is $R$. 
Suppose $R\in \cR$ is shattered by $S$ into connected components \EMPH{$R_1, R_2, \dots, R_\kappa$} $\in \shatter{R}{S}$, we say that each $R_i$ is a \EMPH{subregion} of $R$. The subregion property is transitive: if $R_i'$ is a subregion of $R_i$, we also say that $R_i'$ is a subregion of $R$.
If $\cR$ is a collection of regions, then \EMPH{$\shatter \cR S$} denotes $\bigcup\Set{\big. \shatter R S: R \in \cR}$. 
We say that a set of regions $\cR'$ is a \EMPH{shattering} of $\cR$ if it is a subset of some regions produced by an iterative sequence of \textsc{Shatter} operations run on $\cR$ (possibly with different $S$).

The ``shattering'' operation is the primary reason we work with the region contact graph $G_\cR$ instead of the region intersection graph. Suppose we shatter a region $R$ into multiple pieces, say $R = R_1 \sqcup R_2$. In the contact graph, $R_1$ and $R_2$ are adjacent, whereas in the intersection graph they are at infinite distance. In other words, distances in the contact graph behave more nicely under the shatter operation.

\begin{figure}[h!]
	\centering
	\subfigure[A set of regions $\cR$ on graph $G$, and the corresponding contact graph $G_\cR$.]
	{\scalebox{0.13}{\includegraphics{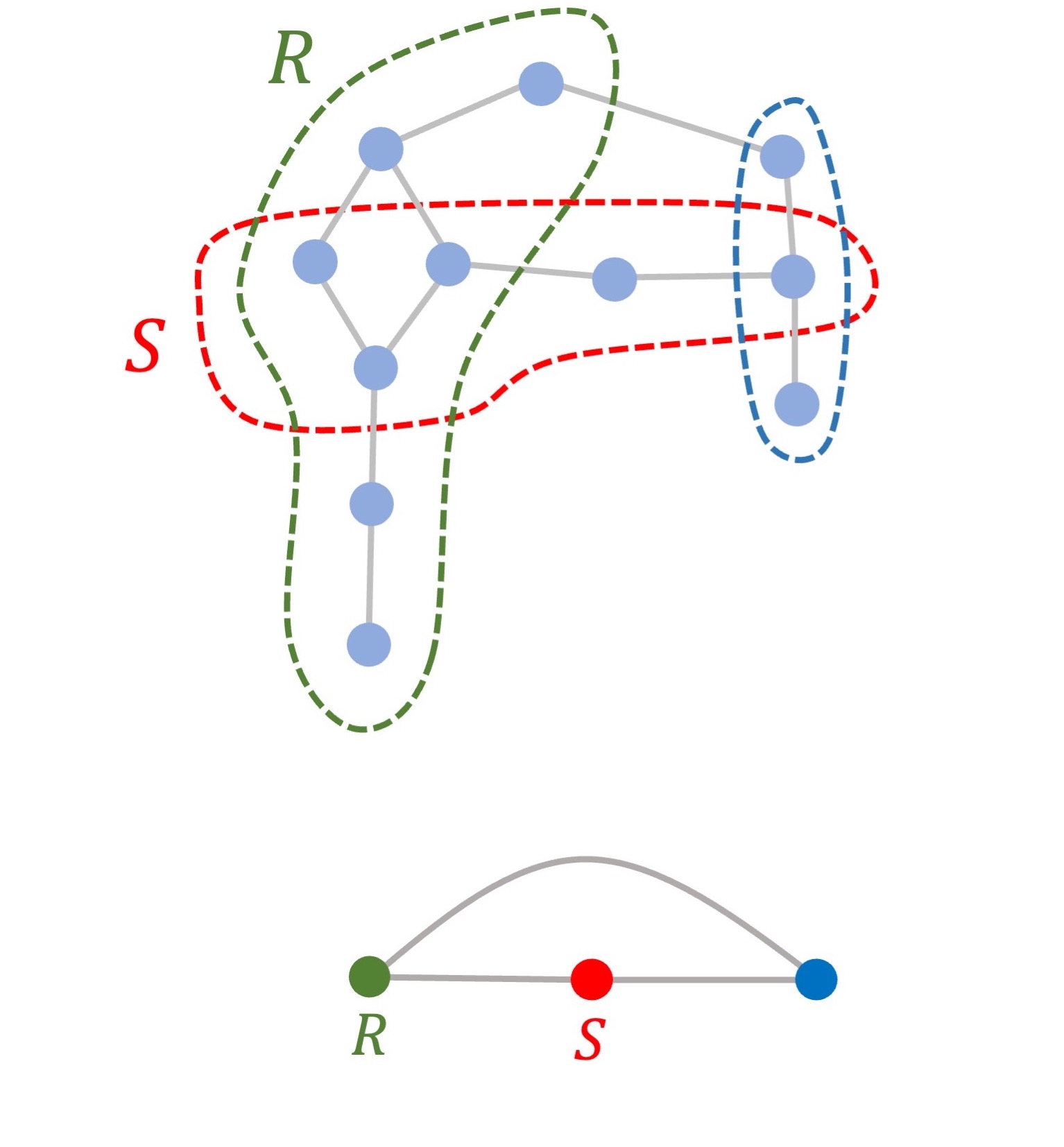}}}
	\hspace{1.0cm}
	\subfigure[The regions $\cR'$ produced by $\shatter{\cR}{S}$, and the resulting contact graph $G_{\cR'}$. The subregions of region $R$ are $R_1, R_2, R_3$.]
	{\scalebox{0.13}{\includegraphics{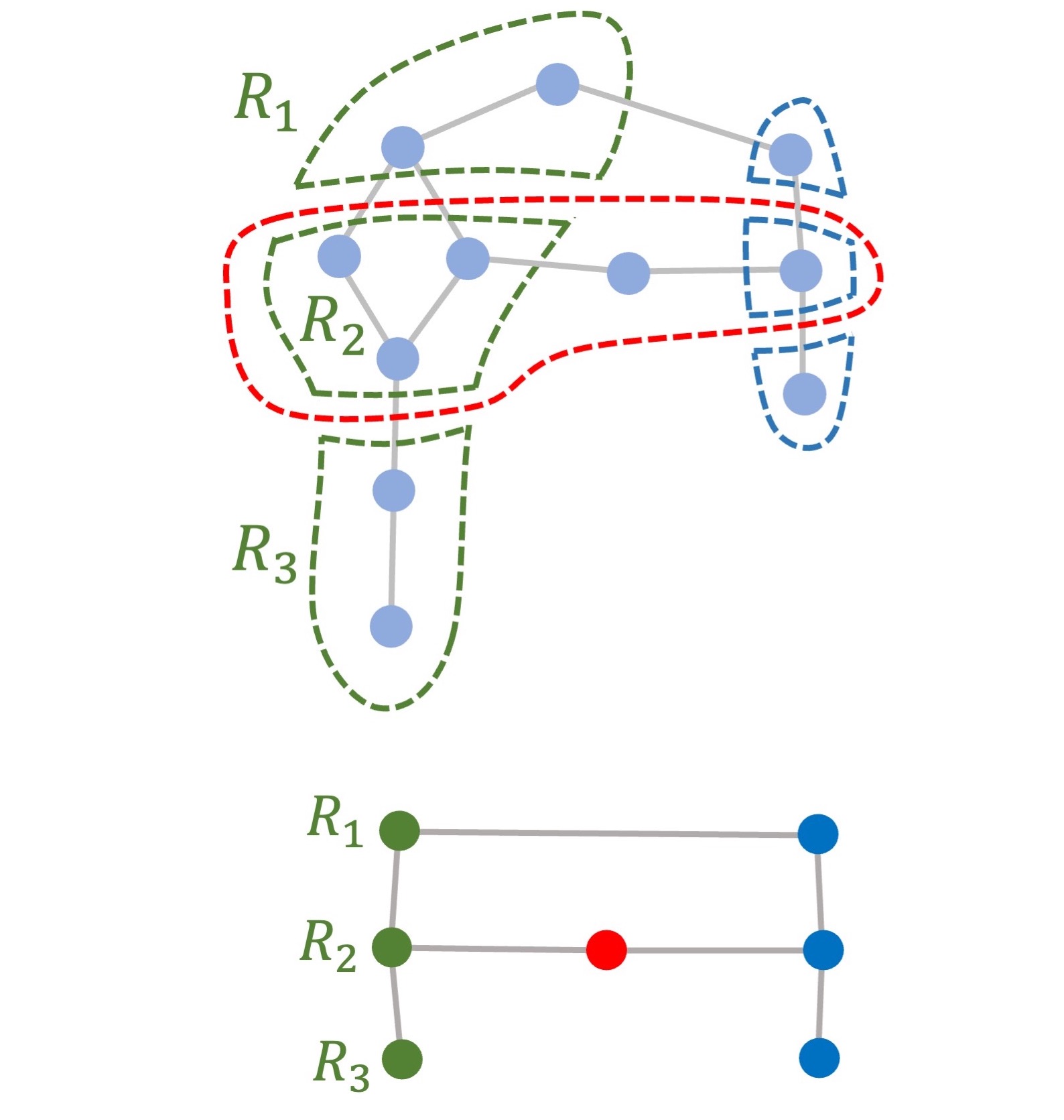}}}
	\caption{An illustration of regions and shattered regions.\label{fig:shatter}}
\end{figure}

\paragraph{Support and distances in the contact graph.} The \EMPH{support} of a set of regions $\cR$, denoted \EMPH{$V(\cR)$}, is the set of vertices in $G$ that are in some region in $\cR$.
As mentioned in \Cref{SSS:constant-emulator-construction}, given a contact graph $G_{\cR}$, we assume without loss of generality that $V(\cR) = V(G)$, since deleting vertices of $G$ that are not part of any region does not change the contact graph.

Fix any set of regions $\cR$. 
When the base graph $G$ is clear from context, we abuse notation and write \EMPH{$\dist_\cR(\cdot, \cdot)$} to be the distance function of the contact graph $G_{\cR}$; that is, for every pair of regions $R_1,R_2 \in \cR$, we define $\dist_\cR(R_1, R_2)$ to be the length of the shortest path between $R_1$ and $R_2$ in $G_{\cR}$.
In this section we frequently consider an induced subgraph $H$ of $G$, and a set of regions $\cR'$ with $V(\cR') = V(H)$, i.e. that the support of the regions of $\cR'$ is in $V(H)$.
Note that there is no difference between the graph $G_{\cR'}$ and $H_{\cR'}$, because $H$ is an induced subgraph of $G$. 
As such, the notation $\dist_{\cR}(\cdot, \cdot)$ is unambiguous, even if we do not explicitly specify whether $\cR$ should be understood to be a set of regions on $H$ or on $G$.

\paragraph{Clustering the outer regions.} 
Given a plane graph $H$ (a planar graph along with an embedding of $H$) and a set of regions $\cR$, we say a region $R \in \cR$ is an \EMPH{outer region} if $R$ contains at least one vertex on the outer face of $H$. 
We find the clustering of \Cref{lem:good-clustering} in stages: we first assign (some of) the vertices of $G$ to a (partial) set of clusters \EMPH{$\cC = \{C_1, ..., C_k\}$} which includes all vertices that are in outer regions, and then recurse on the subgraph induced by the unassigned vertices.
(A similar strategy was used by \cite{CCLMST23} to find a shortcut partition for planar graphs; they also create a partial set of clusters that cover the outer face and then recurse on the unassigned vertices.)
Here we state the lemma with the properties we need for our partial clustering. For notational convenience, we will denote the union of the vertices that are within our partial cluster as $\EMPH{$V(\cC)$} \coloneqq \bigcup_i C_i$.

\begin{lemma}
\label{lem:gridtree}
There are absolute constants \EMPH{$\alpha$} $\le 126$, \EMPH{$\beta$} $\le 57$, and \EMPH{$\gamma$} $\le 12$ such that the following holds. Let $G$ be a plane graph. There is a procedure \EMPH{$\textsc{ClusterOuter}(H, \cR)$} that takes as input a connected plane graph $H$ that is an induced subgraph of $G$ (and inherits the planar embedding 
of $G$) and a set of regions $\cR$ whose support is $V(H)$, and returns a partial partition of $V(H)$ into a set of clusters $\cC$ such that: 
\begin{enumerate}
    \item \textnormal{[Outer-clustered.]} For every outer region $R \in \cR$, every vertex in $R$ is assigned to some cluster of $\cC$. Moreover, the subgraph $H[V(\cC)]$, which contains all vertices assigned to clusters, is connected.
    \item \textnormal{[Diameter.]} Every cluster $C\in\cC$ has $\cR$-diameter at most $\alpha$.
    \item \textnormal{[Scattering.]} 
    For every region $R \in \cR$ and any two vertices $v_a, v_b \in R$,
    there is a path $\Pi$ in $G$ between $v_a$ and $v_b$ that has one of the following two forms: either $\Pi$ is the concatenation of at most $\beta$ paths
    \[\Pi = \Pi_1 \circ \Pi_2 \circ \ldots \circ \Pi_\beta\]
    where each subpath $\Pi_i$ is contained entirely in some cluster $C \in \cC$; or $\Pi$ is the concatenation of at most $2\gamma + 1$ paths
    \[
    \Pi = \Gamma_1 \circ \Pi_1 \circ \Gamma_2 \circ \Pi_2 \circ \ldots \circ \Pi_{\gamma} \circ \Gamma_{\gamma+1}
    \]
    where each $\Pi_i$ is contained entirely in some cluster $C\in \cC$,
    and each $\Gamma_i$ consists of unassigned vertices in $R$.
    Moreover, if $R$ is an outer region, then there is always a path $\Pi$ of the first type (ie, comprising $\beta$ subpaths contained in clusters).
\end{enumerate}
\end{lemma}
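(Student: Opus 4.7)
The plan is to adapt the \textsc{Gridtree} procedure of \cite{CCLMST23}---which partitions the outer BFS band of a planar graph into shortcut-friendly clusters---so that it operates on the region contact graph $H_\cR$ rather than on $H$ directly. First I would run BFS in $H_\cR$ starting from the outer regions, assigning each region $R$ a layer $\ell(R) \ge 0$ (outer regions at layer $0$) and each vertex $v$ the value $\ell(v) = \min\{\ell(R) : v \in R\}$. Fix a constant depth $D = O(1)$ (tuned so the eventual cluster $\cR$-diameter meets $\alpha = 170$) and call $W = \{v : \ell(v) \le D\}$ the \emph{outer band}; this is the set of vertices to be clustered. For each $w \in W$ I would fix a BFS-witnessing chain of at most $D$ regions from $w$ back to an outer region and splice it into a \emph{radial path} $\rho_w$ in $H$ from the outer face to $w$. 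Using the embedding of $H$ inherited from $G$, I would aggregate these radial paths into a planar rooted forest---the \emph{gridtree}---and chop it into subforests of bounded branching, exactly as in \cite{CCLMST23}. Each subforest, together with the slice of $W$ it delimits, forms one cluster $C \in \cC$.

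Properties [Outer-clustered] and [Diameter] follow quickly from the construction. Every outer-region vertex lies at layer $0$, hence in $W$, so all of its vertices are clustered; $H[V(\cC)] = H[W]$ is connected via the radial paths; and any two vertices in a single cluster are joined by at most two radial paths of region-length $\le D$ plus a bounded horizontal hop along the top of the gridtree cell, yielding $\cR$-diameter $O(D) \le \alpha$. The hard part, which I expect to account for essentially all of the difficulty and for the large constant $\beta \le 3744$, is the [Scattering] property. Given $R \in \cR$ and $v_a, v_b \in R$, I need a path $\Gamma$ in $G$ that decomposes into $\le \beta$ pieces, each entirely inside some single cluster or entirely inside the unassigned part of $R$. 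If $R$ lies entirely outside $W$, one piece within $R$ itself suffices. If $R$ lies entirely inside $W$, I would route ``up and over'': climb from $v_a$ along $\rho_{v_a}$ toward the outer face, hop horizontally to the top of $\rho_{v_b}$, and descend to $v_b$, crossing only $O(1)$ clusters by the bounded branching of the gridtree. The genuinely delicate case is when $R$ straddles the boundary $\partial W$: one takes a path within $R$ from $v_a$ to $v_b$, decomposes it at every $\partial W$-crossing into in-band and out-of-band arcs, routes each in-band arc via a gridtree detour as above, and leaves each out-of-band arc as an unassigned-in-$R$ piece.

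The main obstacle is bounding the number of such $\partial W$-crossings of a single connected region $R$ by $O(1)$. Unlike a shortest path in a planar graph, a region has no quasi-geodesic structure and can a priori zigzag across BFS layers arbitrarily many times, so the shortest-path-based arguments of \cite{CCLMST23} cannot be cited verbatim. I would need a structural lemma---in the spirit of \cite{CCLMST23} but genuinely new---that uses planarity of $G$ together with the forest structure of the gridtree to force the number of $\partial W$-crossings, and within $W$ the number of distinct gridtree cells touched, along some carefully chosen witness path of $R$ to be bounded by an absolute constant. Getting this bookkeeping right, especially the interplay between the two levels of the hierarchy (regions of $\cR$ versus cells of the gridtree) and the fact that $R$'s own connectivity in $G[R]$ is our only handle on it, is what I expect to drive the final constant $\beta$.
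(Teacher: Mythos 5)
Your high-level intuition is on track---the paper does adapt the \textsc{Gridtree}/\textsc{SelectPaths} construction of \cite{CCLMST23} to the contact-graph metric, and you correctly identify the [Scattering] property as the crux. But you also correctly flag that the crux is unresolved in your sketch, and the gap is genuine. You propose a global BFS band $W$ in $H_\cR$ of depth $D$ and hope for a ``structural lemma'' bounding the number of $\partial W$-crossings of a witness path inside a connected region $R$. There is no reason such a lemma should hold: a region adjacent to the boundary of $W$ can trace a long filament that zigzags in and out of $W$ arbitrarily often, and within $W$ it can also sweep laterally across many gridtree cells. Connectivity of $G[R]$ alone gives you no ``quasi-geodesic'' control, and planarity does not bound the number of crossings of a fixed curve (the boundary of $W$) by a connected subgraph. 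This is exactly why the CCLMST23 shortest-path arguments cannot be cited, and your proposal offers no replacement.

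The paper's resolution is to never try to bound crossings of an intact region. Instead, it introduces the \textsc{Shatter} operation and interleaves it with the spine-carving: each time a supernode $S$ is carved, every region is shattered along the boundary of $V(S)$, so that every resulting piece either lies \emph{inside} $S$ (hence within contact distance $2$ of a shortest-path spine, which by the usual geodesic net-point argument can touch only $O(1)$ clusters) or lies outside all supernodes (in which case \Cref{obs:adj-to-two} shows it is adjacent to at most two supernodes). The crossing-bound problem is thereby replaced by two tractable facts: (i) each shattered piece touches $O(1)$ clusters (\Cref{lem:constant-candidates,lem:single-scattering}), and (ii) the shattered pieces of one original region remain within contact distance $24$ of each other, via a walk through supernode regions (\Cref{lem:select-paths-scattering}, resting on the observation that a region is shattered at most twice in the recursion, \Cref{obs:shatter-two-level}). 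Multiplying these two $O(1)$ bounds gives $\beta$. Without the shattering idea---or some equally strong structural control on how a region interacts with the partition---your route does not close.

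A secondary point: your diameter bound is also lighter than what the construction demands. In the paper, the supernodes must be expanded by one hop ($S \mapsto S^+$) to cover every outer region after shattering (\Cref{lem:external-neighbor}), and this expansion is what makes the diameter analysis delicate (\Cref{lem:diam-4} tracks $t$-closeness through an interleaved assignment schedule, and a vertex in $S^+$ may end up assigned to the \emph{parent} supernode's cluster). Your ``two radial paths plus a horizontal hop'' estimate assumes cells are cleanly delimited, which is not what happens after expansion.
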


In the next section, \Cref{SS:hierarchy}, we explain how to apply \textsc{ClusterOuter} recursively to find a clustering of the entire graph $G$, proving \Cref{lem:good-clustering}.
The description of the procedure \textsc{ClusterOuter} and proof of \Cref{lem:gridtree} occupies \Cref{SS:select-paths,SS:expand}.
Namely, \Cref{SS:select-paths} describes a helper procedure \textsc{SelectPaths} that selects a set of \emph{supernodes} by sweeping across the outer face of $G$; \Cref{SS:expand} describes the procedure \textsc{ClusterOuter} which starts with the supernodes of \textsc{SelectPaths}, then (simultaneously) expands the supernodes and partitions them into clusters.

\subsection{\textsc{Cluster}: Recursively applying \textsc{ClusterOuter}}
\label{SS:hierarchy}

We describe the procedure \textsc{Cluster}.

\begin{tcolorbox}
\ul{$\textsc{Cluster}(H, \cR)$}:

\textbf{Input:} plane induced subgraph $H$, set of regions $\cR$ whose support is $V(H)$

\textbf{Output:} partition of $V(H)$ into clusters\\

    Let $\EMPH{$\cC^*$} \gets \textsc{ClusterOuter}(H, \cR)$. Let $\EMPH{$U$} \subseteq V(H)$ denote the set of vertices that are unassigned to any cluster in $\cC^*$.
    Let $\EMPH{$\cR^*$} \gets \shatter{\cR}{U}$.
    
We now recurse on $U$: for each connected component $H_1, H_2, \dots H_\ell$ of $H[U]$, let \EMPH{$\cR_i$} denote the subset of regions in $\cR^*$ that intersect $H_i$, and let $\EMPH{$\cC_i$} \gets \textsc{Cluster}(H_i, \cR_i)$.
    
Return $\cC \gets \cC^* \cup \cC_1 \cup \cC_2 \cup \cdots \cup \cC_\ell$, the union of all computed clusters.
\end{tcolorbox}

To prove \Cref{lem:good-clustering}, we run $\textsc{Cluster}(G, \cR)$ and claim that the output set of clusters $\cC$ satisfies the [diameter] and [scattering] properties.
We begin with a simple observation that helps us prove the [diameter] property.
\begin{observation}
\label{obs:shattered-diam}
    Let $\cR$ be a set of regions on the planar graph $G$, let $H$ be a subgraph of $G$, and let $\cR'$ be a shattering of $\cR$ whose support contains $V(H)$. Let $C$ be a cluster in $H$. If $C$ has $\cR'$-diameter $\le \alpha$, then $C$ has $\cR$-diameter $\le \alpha$.
\end{observation}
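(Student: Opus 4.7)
The plan is to prove a stronger monotonicity claim: shattering can only increase distances in the region contact graph. Concretely, I would define a \emph{parent} map $\pi \colon \cR' \to \cR$ sending each shattered region to its unique root ancestor in $\cR$; this is well-defined because each \textsc{Shatter} call acts on a single region and produces subregions of that region, and transitivity of the subregion relation yields $R' \subseteq \pi(R')$ as vertex subsets of $G$. I will show that $\dist_\cR(\pi(R'_1), \pi(R'_2)) \le \dist_{\cR'}(R'_1, R'_2)$ for all $R'_1, R'_2 \in \cR'$, and then use this to lift $\cR'$-diameter to $\cR$-diameter.

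The key base case is: if $R'_1, R'_2 \in \cR'$ are adjacent in $G_{\cR'}$---witnessed by vertices $v_i \in R'_i$ with $v_1 = v_2$ or $v_1 v_2 \in E(G)$---then since $v_i \in R'_i \subseteq \pi(R'_i)$, the same vertices witness $\pi(R'_1) = \pi(R'_2)$ or $\pi(R'_1)$ adjacent to $\pi(R'_2)$ in $G_\cR$. Concatenating this edge-by-edge along a shortest $\cR'$-path gives the desired inequality.

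To finish, for each $R \in \cR$ intersecting $C$ I pick a vertex $v \in R \cap C$. The main obstacle is to exhibit $R' \in \cR'$ with $v \in R'$ \emph{and} $\pi(R') = R$: this uses the covering property that at every step $\shatter{R}{S}$ produces subregions whose union is $R$, so iteratively $v$ lies in some descendant of $R$; in the contexts where the observation is invoked---namely the recursion in \textsc{Cluster}, where $\cR^* = \shatter{\cR}{U}$ retains every subregion---this descendant survives in $\cR'$. Given such $R'_1, R'_2$, both intersect $C$, so $\dist_{\cR'}(R'_1, R'_2) \le \alpha$ by hypothesis, and monotonicity of $\pi$ yields $\dist_\cR(R_1, R_2) \le \alpha$.
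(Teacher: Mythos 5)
Your proposal is correct and matches the paper's approach: lift a shortest $\cR'$-path to a walk in $G_\cR$ by replacing each region with its unique parent region in $\cR$, with the endpoints anchored at the original regions $R_a, R_b$. You are, if anything, more explicit than the paper about choosing $R' \in \cR'$ to be a \emph{subregion} of $R$ containing $v$ (rather than an arbitrary region of $\cR'$ containing $v$), a detail the paper's proof glosses over but which is implicitly needed for the lifted walk to begin and end at $R_a$ and $R_b$ without any extra hops.
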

\begin{proof}
Let $R_a, R_b \in \cR$ be two arbitrary regions that intersect $C$; we want to show $\dist_{\cR}(R_a, R_b) \le \alpha$.
Let $r_a \in R_a$ and $r_b \in R_b$ be two vertices in $C$.
Let $R'_a \in \cR'$ and $R'_b \in \cR'$ be two regions that contain $r_a$ and $r_b$, respectively; these regions exist because the support of $\cR'$ contains $V(H)$, and $C \subseteq V(H)$.
As $C$ has $\cR'$-diameter at most $\alpha$, there is a path $\Gamma = [R_0', R_1', \ldots, R_{\ell}']$ between $R_a'$ and $R_b'$ in the contact graph of $\cR'$of length $\ell \le \alpha$.
Now, for each $R_i'$ in $\Gamma$ (for $1 \le i \le \ell - 1$), let $R_i \in \cR$ be an arbitrary region in $\cR$ that contains $R_i'$; such a region exists because $\cR'$ is a shattering of $\cR$.  
Then, $[R_a = R_0, R_1, \ldots, R_{\ell} = R_b]$ is a walk in the contact graph of $\cR$ of length $\ell$. Thus $\dist_{\cR}(R_a, R_b) \le \ell \le \alpha$.
\end{proof}

Consider the output $\cC$ of $\textsc{Cluster}(G, \cR)$.
Every cluster $C \in \cC$ is created, at some point in the recursive execution, by some call to $\textsc{ClusterOuter}(H_i, \cR_i)$: in this case $\cR_i$ is a shattering of $\cR$ that covers $H_i$, and $C \subseteq V(H)$. Thus, \Cref{lem:gridtree}(2) and \Cref{obs:shattered-diam} implies:
\begin{observation}
\label{obs:hierarchy-diam}
    Let $\cR$ be a set of regions covering the planar graph $G$.
    Every cluster output by the procedure $\textsc{Cluster}(G, \cR)$ has $\cR$-diameter at most $\alpha$.
\end{observation}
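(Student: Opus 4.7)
The plan is a short induction on the recursion depth of $\textsc{Cluster}$. The key is to formalize the structural claim already sketched in the paragraph preceding the observation: that every cluster $C$ returned by $\textsc{Cluster}(G,\cR)$ is produced inside some recursive call $\textsc{ClusterOuter}(H',\cR')$ in which $\cR'$ is a shattering of the original $\cR$ whose support contains $V(H') \supseteq C$. Once this is in hand, \Cref{lem:gridtree}(2) gives $\cR'$-diameter at most $\alpha$, and \Cref{obs:shattered-diam} immediately transfers this to the desired $\cR$-diameter bound.

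To make the induction precise, I would maintain the following invariant at every recursive call $\textsc{Cluster}(H',\cR')$ reached during $\textsc{Cluster}(G,\cR)$: the set $\cR'$ is a shattering of the original $\cR$, and the support $V(\cR')$ contains $V(H')$. The base case is the top-level call $\textsc{Cluster}(G,\cR)$, where $H' = G$ and $\cR' = \cR$ trivially satisfy the invariant. For the inductive step, assume the invariant holds for a call $\textsc{Cluster}(H',\cR')$. This call recurses on each connected component $H_i$ of $H'[U]$ with region set $\cR_i$ consisting of those elements of $\shatter{\cR'}{U}$ that meet $H_i$. Because shatterings compose (a shattering of a shattering of $\cR$ is again a shattering of $\cR$, and a subset of a shattering is a shattering), $\cR_i$ is itself a shattering of $\cR$. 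For the support condition, any vertex $v \in V(H_i) \subseteq U$ lies in some $R \in \cR'$ by the inductive hypothesis; the connected component of $G[R \cap U]$ containing $v$ lies in $\shatter{\cR'}{U}$ and meets $H_i$ (it contains $v$), so it is retained in $\cR_i$, showing $v \in V(\cR_i)$.

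With the invariant established, consider an arbitrary cluster $C$ output by $\textsc{Cluster}(G,\cR)$. By construction, $C$ belongs to $\cC^*$ for some call $\textsc{ClusterOuter}(H',\cR')$ in the recursion tree. \Cref{lem:gridtree}(2) then guarantees that $C$ has $\cR'$-diameter at most $\alpha$. Since the invariant gives that $\cR'$ is a shattering of $\cR$ whose support contains $V(H') \supseteq C$, \Cref{obs:shattered-diam} upgrades this to the conclusion that $C$ has $\cR$-diameter at most $\alpha$, which is exactly the claim of the observation.

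I do not anticipate any real obstacle; this is a bookkeeping proof. The one step I would be careful to write out cleanly is the support-preservation half of the invariant, because $\cR_i$ is obtained from $\shatter{\cR'}{U}$ by \emph{discarding} all regions that miss $H_i$, and one should check that nothing relevant is lost. As argued above, every vertex of $H_i$ is contained in a surviving subregion, so the invariant indeed propagates through the recursion and the proof goes through.
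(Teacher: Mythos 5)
Your proof is correct and takes essentially the same route as the paper: the paper also identifies that each cluster is produced by some $\textsc{ClusterOuter}(H_i,\cR_i)$ call where $\cR_i$ is a shattering of $\cR$ covering $H_i$, and then combines \Cref{lem:gridtree}(2) with \Cref{obs:shattered-diam}. Your write-up simply makes the recursive invariant more explicit than the paper's one-sentence justification.
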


It remains to prove the [scattering] property of \Cref{lem:good-clustering}. The key insight is that, if a region $R$ gets some of its vertices assigned by some call to $\textsc{Cluster}$, all of its remaining vertices will be assigned during the next round of recursive calls to $\textsc{Cluster}$.

\begin{figure}[th]
    \centering
    \includegraphics[width=0.7\linewidth]{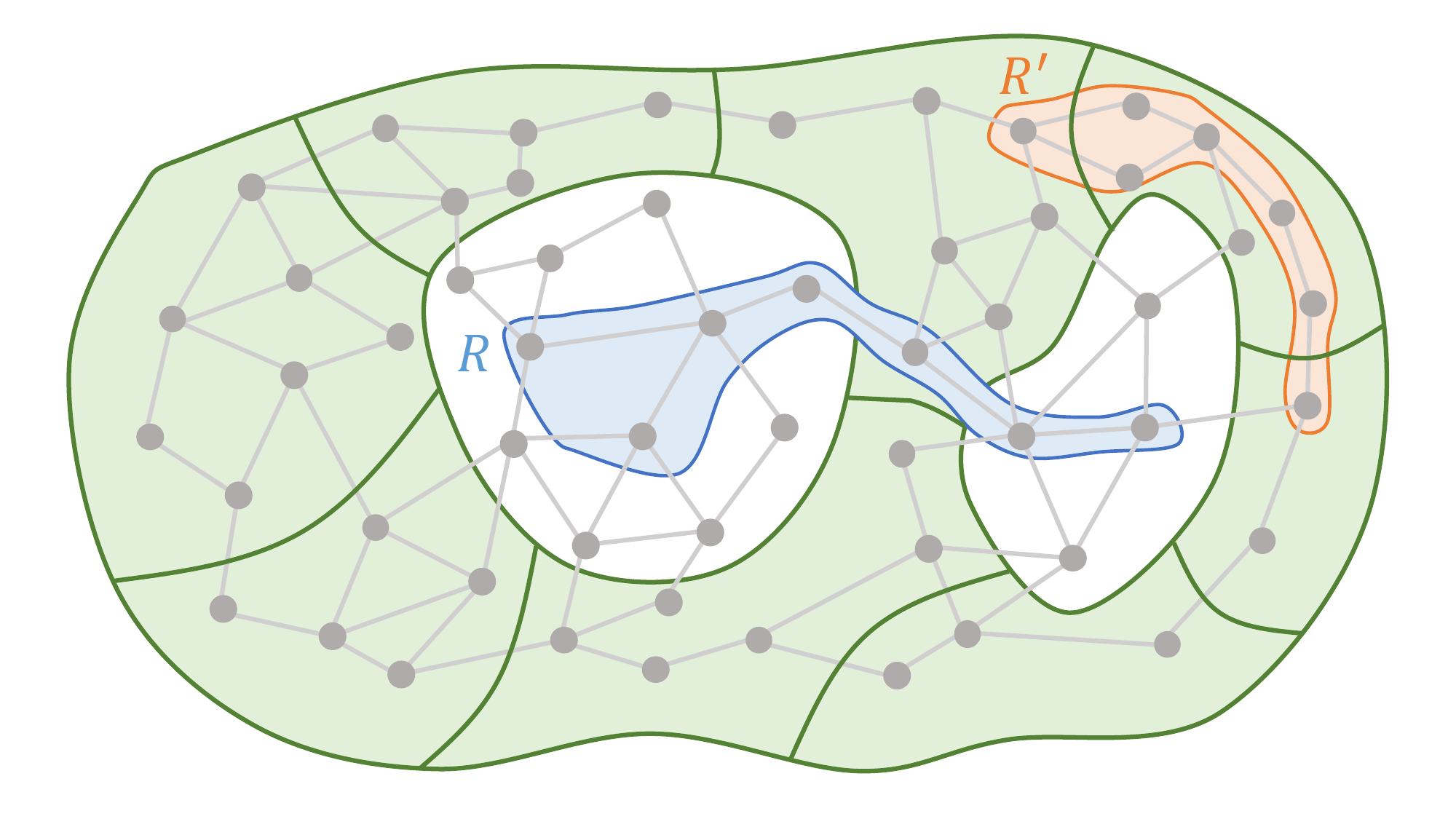}
    \caption{A depiction of the proof of \Cref{lem:hierarchy-scattering}. There is a plane graph $G$, and a set of clusters $\cC$ (from \Cref{lem:gridtree}) drawn in green. The orange region $R'$ is an outer region in $G$ and is contained entirely in $V(\cC)$. The blue region $R$ is \emph{not} an outer region and is \emph{not} contained entirely in $V(\cC)$; observe that after shattering $\cR^* \gets \shatter{R}{(V(\cC))}$, every subregion of $R$ in $\cR^*$ \emph{is} an outer region in the subgraph induced by the vertices not in $V(\cC)$.}
    \label{fig:clusterOuter}
\end{figure}
\vspace{-9pt}

\begin{lemma}
\label{lem:hierarchy-scattering}
Let $\cR$ be a set of regions on the planar graph $G$. Let $\cC$ be the set of clusters output by $\textsc{Cluster}(G, \cR)$.
Let $\beta$ and $\gamma$ be the constants from \Cref{lem:gridtree}.
For any region $R \in \cR$ and any $v_s, v_t \in R$, there is a path in $G$ between $v_s$ and $v_t$ that intersects at most $\gamma+(\gamma+1)\beta$ different clusters of $\cC$.
\end{lemma}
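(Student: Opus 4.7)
The plan is to locate a single ``critical'' level in the recursive procedure $\textsc{Cluster}$ at which $R$ first begins to be clustered, and then to apply \Cref{lem:gridtree} twice---once at that level and once one level deeper. The driving intuition is the remark immediately preceding the lemma: once any vertex of $R$ gets assigned at some level, the remaining unassigned pieces of $R$ are forced to be outer regions at the next level, and so can be handled entirely by one additional invocation of the [outer-clustered] property.

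Concretely, I would trace $R$ down the recursion tree. For every level $\ell$ at which $R\subseteq U$, the shattering $\shatter{R}{U}$ leaves $R$ intact (every component of $R\setminus U$ is empty), so $R$ descends as a single region into the unique component of $H[U]$ containing both $v_s$ and $v_t$. Since every vertex of $G$ is eventually assigned to some cluster of $\cC$, there is a first level $\ell^*$ at which the call to $\textsc{ClusterOuter}$ produces a cluster in its $\cC^*$ that intersects $R$; let $H$ denote the subgraph passed into that call. Applying \Cref{lem:gridtree} to $R$ at level $\ell^*$ with endpoints $v_s,v_t$ yields a path
\[\Gamma = \Gamma_1\circ\Gamma_2\circ\cdots\circ\Gamma_\beta\]
in $G$ from $v_s$ to $v_t$, where each $\Gamma_i$ is either contained in some cluster of $\cC^*$ (contributing at most one cluster to the final count) or consists of unassigned vertices of $R$.

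For each type-(2) subpath $\Gamma_i$, its vertices lie in a single connected component $R_i'$ of $R\cap U$, which is a subregion in $\cR^*$ sitting inside some component $H_j$ of $H[U]$. The topological crux is to show that $R_i'$ is an outer region of $H_j$: by the choice of $\ell^*$ we have $R\cap V(\cC^*)\neq\emptyset$, while $R_i'\subsetneq R$ is a maximal connected piece of $R$ avoiding $V(\cC^*)$, so some vertex $v\in R_i'$ must be adjacent in $G$ to a vertex of $V(\cC^*)$; since $V(\cC^*)$ is connected and contains every outer-face vertex of $H$ (by [outer-clustered]), this adjacency forces $v$ onto the outer face of $H_j$. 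The [outer-clustered] property of the recursive call at level $\ell^*+1$ then assigns all of $R_i'$ into clusters of the next-level $\cC^*$, so a second application of \Cref{lem:gridtree} to $R_i'$ using the endpoints of $\Gamma_i$ yields $\beta$ subpaths---all of type (1), because $R_i'$ has no unassigned vertices at the deeper level---and therefore the refinement of $\Gamma_i$ touches at most $\beta$ clusters of $\cC$.

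Concatenating the refined subpaths will give a $v_s$-to-$v_t$ path in $G$ intersecting at most $\beta\cdot\beta=\beta^2$ distinct clusters of $\cC$, as required. The main obstacle is the topological step above: verifying that every type-(2) remainder $R_i'$ really is an outer region of its component $H_j$, which hinges delicately on $V(\cC^*)$ being connected and enclosing the interior components in the planar embedding inherited from $G$. A minor bookkeeping issue---that adjacent type-(1) and type-(2) subpaths in \Cref{lem:gridtree}'s decomposition may share a vertex---is harmless for the argument, because we only need the endpoints of each type-(2) $\Gamma_i$ to lie in $R_i'$, which is guaranteed directly by the type-(2) condition.
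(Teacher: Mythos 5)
Your proposal is correct and takes essentially the same approach as the paper. The paper phrases the argument as an induction on recursion depth with a strengthened claim ($\beta$ clusters for outer regions, $\beta^2$ for non-outer regions), while you instead locate the first level $\ell^*$ at which $R$ intersects $V(\cC^*)$ and then apply \Cref{lem:gridtree} twice in a nested fashion; both hinge on the identical key observations that (i) $R$ descends intact until some vertex is first assigned, and (ii) each type-(2) subregion $R_i'$ must be an outer region of its component $H_j$ at the next level because it inherits an adjacency to the connected, outer-face-touching set $V(\cC^*)$, so one more invocation of the [outer-clustered] and [scattering] properties finishes the job.
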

\begin{proof}
    Vertices in $G$ are gradually assigned to clusters in recursive calls to $\textsc{ClusterOuter}$.
    To prove the lemma, we strengthen the statement and proceed by induction on the depth of the recursion tree.
    \begin{quote}
        \textbf{Claim.} Let $\textsc{Cluster}(\EMPH{$H$}, \EMPH{$\cP$})$ be a recursive call made during the execution of $\cC \gets \textsc{Cluster}(G, \cR)$.
        For every region $\EMPH{$R$} \in \cP$ that is contained in $V(H)$\footnote{in other words, all vertices of $R$ are unassigned at the start of the call to $\textsc{Cluster}(H, \cP)$} and for every $v_s, v_t \in R$, the following statement holds:
        If $R$ is an \emph{outer} region of $H$, then there is a path between $v_s$ and $v_t$ in $H$ that intersects at most $\beta$ different clusters of $\cC$.
        Otherwise, there is a path between $v_s$ and $v_t$ in $H$ that intersects at most $\gamma+(\gamma+1)\beta$ different clusters of $\cC$.
    \end{quote}
    Proving this claim is sufficient to prove the lemma, as it must hold for $H = G$ and $\cP = \cR$. We prove the claim by induction. Assume the claim is true for all recursive calls to $\textsc{Cluster}$ made by $\textsc{Cluster}(H, \cP)$.
    Recall that $\textsc{Cluster}(H, \cP)$ consists of computing $\EMPH{$\cC^*$} \gets \textsc{ClusterOuter}(H, \cP)$ and then recursing on the unassigned vertices. If no vertex of $R$ is assigned in $\cC^*$, then $R$ appears in some recursive call, and the claim holds by induction. If some vertex of $R$ is assigned in $\cC^*$, then we divide into two cases.

    First suppose that $R$ is an outer region of $H$. 
    By \Cref{lem:gridtree}(3), there is a path $\Pi$ between $v_s$ and $v_t$ in $H$
    that is contained in the union of at most $\beta$ clusters in $\cC^*$; thus $\Pi$ passes through at most $\beta$ clusters in the final clustering $\cC$ as desired.

    On the other hand, suppose that $R$ is not an outer region of $H$. By \Cref{lem:gridtree}(3), there is a path $\Pi$ between $v_s$ and $v_t$ in $H$ with one of two forms. In the first case, $\Pi$ is contained in the union of at most $\beta$ clusters in $\cC^*$, and we are done. In the second case, we can write
    $\Pi = \Gamma_1 \circ \Pi_1 \circ \Gamma_2 \circ \Pi_2 \circ \ldots \circ \Pi_{\gamma} \circ \Gamma_{\gamma+1}$
    where each subpath $\Pi_i$ is contained entirely in some cluster $C\in \cC$,
    and each $\Gamma_i$ consists of unassigned vertices in $R$.
    Denote the endpoints of the subpath $\Gamma_i$ as \EMPH{$s_i$} and \EMPH{$t_i$}.
    We will show that $\Gamma_i$ intersects at most $\beta$ clusters in $\cC$; we can therefore conclude that $\Pi$ intersects at most $\gamma + (\gamma+1)\beta$ clusters in $\cC$.
    
    $\Gamma_i$ consists of unassigned vertices in $R$. Let \EMPH{$U$} denote the set of unassigned vertices, and let $\cP^* \gets \shatter{\cP}{U}$.
    Because $\Gamma_i \subseteq U \cap R$, there is some (connected) region $\EMPH{$R_i$} \in \cP^*$, such that $R_i$ is a subregion of $R$ and $R_i$ contains both $s_i$ and $t_i$. 
    The region $R_i$ is contained in the set of regions $\cP_i \subseteq \cP^*$ passed to some recursive call $\cC_i \gets \textsc{Cluster}(H_i, \cP_i)$ on the unassigned regions. Moreover, we claim $R_i$ is an outer region of the graph $H_i$ (when $H_i$ inherits its embedding from $H$). Indeed, because $R_i$ is a subregion produced by shattering $R$ with the boundary of $U$, and at least one vertex of $R$ is in $V(\cC^*)$ (that is, not in $U$), there is some vertex $r \in R_i$ that is adjacent to a vertex in $V(\cC^*)$; now, \Cref{lem:gridtree}(1) implies that $H[V(\cC^*)]$ is connected and contains an outer vertex in $H$, and so we conclude that $r$ is an outer vertex of $H_i$.
    We apply induction hypothesis to conclude that there is a path between $s_i, t_i \in R_i$ in $H_i$ that is contained in $\beta$ clusters of $\cC_i$; this same path is a path in $H$ that passes through $\beta$ clusters of the final clustering $\cC$, and we are done.
\end{proof}
\Cref{lem:good-clustering} follows from \Cref{obs:hierarchy-diam} (the diameter bound) and \Cref{lem:hierarchy-scattering} (the scattering bound).

\subsection{\textsc{SelectPaths}: Walking along the outer face}
\label{SS:select-paths}
The next two sections are dedicated to the description of the procedure \textsc{ClusterOuter} and the proof of \Cref{lem:gridtree}. This section is dedicated to a helper procedure, \textsc{SelectPaths}. We sweep along the outer face of $G$ and select a set of \emph{supernodes}.
The sweeping procedure is roughly analogous to $\textsc{SelectPaths}$ procedure of \cite{CCLMST23} and the \textsc{Depth-Cover} procedure of \cite{BLT14}.

A key lemma for the sparse cover construction of \cite{BLT14} and the gridtree of \cite{CCLMST23} is that one can select the ``next path'' by walking along the outer face, which separates the remaining graph from the previous graph. We will use a similar lemma. We introduce some (new) terminology, and stating a rephrasing of their result in our terminology.

Let \EMPH{$H_0$} be a plane graph, fixed throughout this section. Let $H$ be a subgraph of $H_0$, and let $W \subseteq V(H_0)$ be a connected set of vertices, that is, $H_0[W]$ is connected. We say that $H$ is \EMPH{outer-bounded} by $W$ in $H_0$ if $H$ is induced by a connected component of $H_0 \setminus W$, and $H$ and $W$ both contain at least one outer vertex of $H_0$. 
(See \Cref{fig:outer-bounded} for an example.)
Observe that the definition of outer-bounded implies that all edges in $H_0$ with one end point in $V(H)$ has the other end point in either $V(H)$ or $W$. 
The \EMPH{critical vertices} of $(H,W)$ are the vertices of $H$ that are connected to a vertex of $W$ by an edge on the outer face of $H_0$. 
If $W = \varnothing$ and $H = H_0$, we abuse terminology and also say that $H$ is outer-bounded by $W$; in this case, we define the critical vertex of $(H,W)$ to be an arbitrary vertex on the outer face of $H_0$.

\begin{figure}[t]
	\centering
	\subfigure[Subgraph $H$ outer-bounded by $W$ in $H_0$, with 2 critical vertices shown in red (as described in \Cref{lem:blt}(1)).]
	{\scalebox{0.064}{\includegraphics{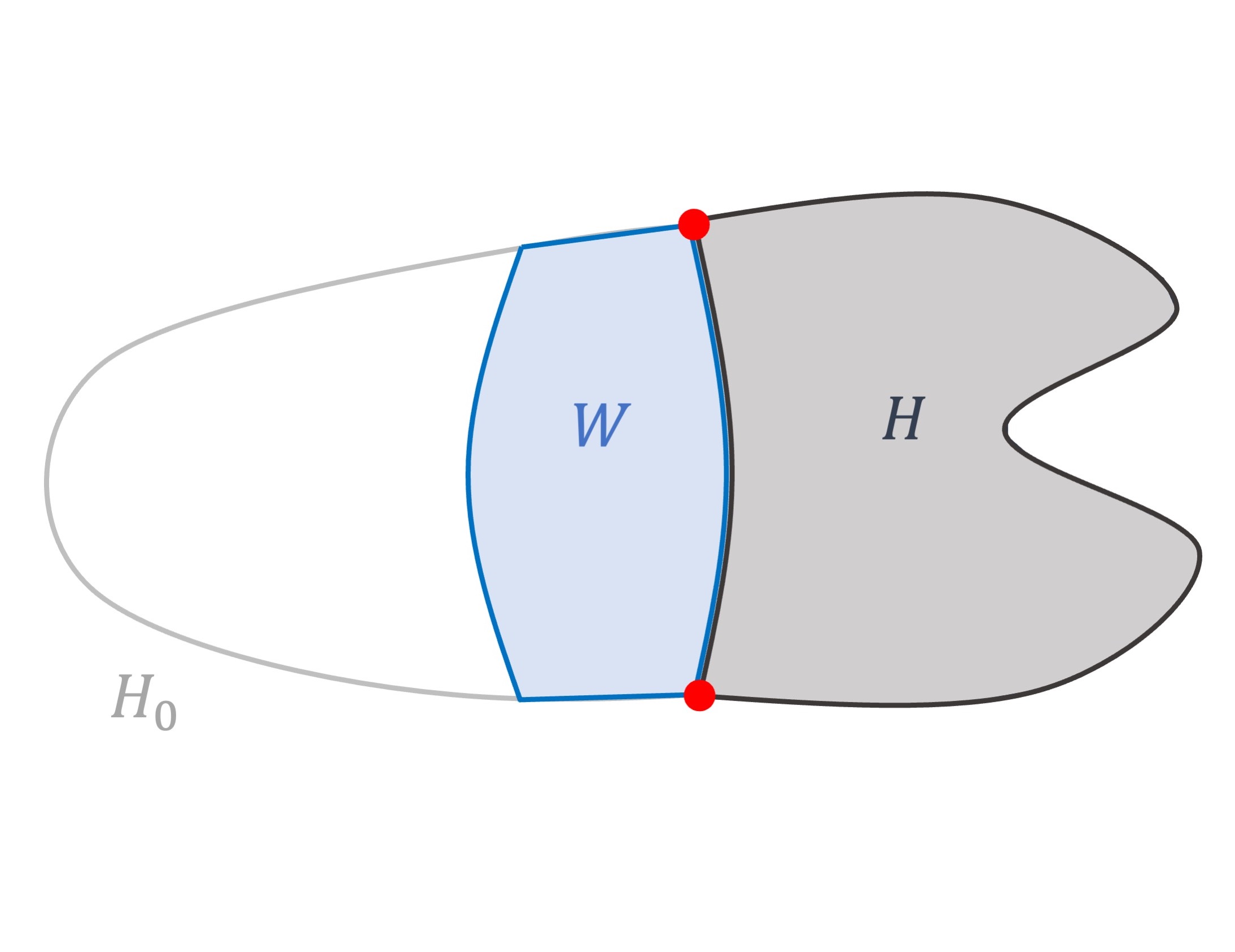}}}
	\hspace{0.2cm}
	\subfigure[The path $\pi$ separates $W$ from the outer vertices of $H_0$ in $H$ (as described in \Cref{lem:blt}(2)).]
	{\scalebox{0.064}{\includegraphics{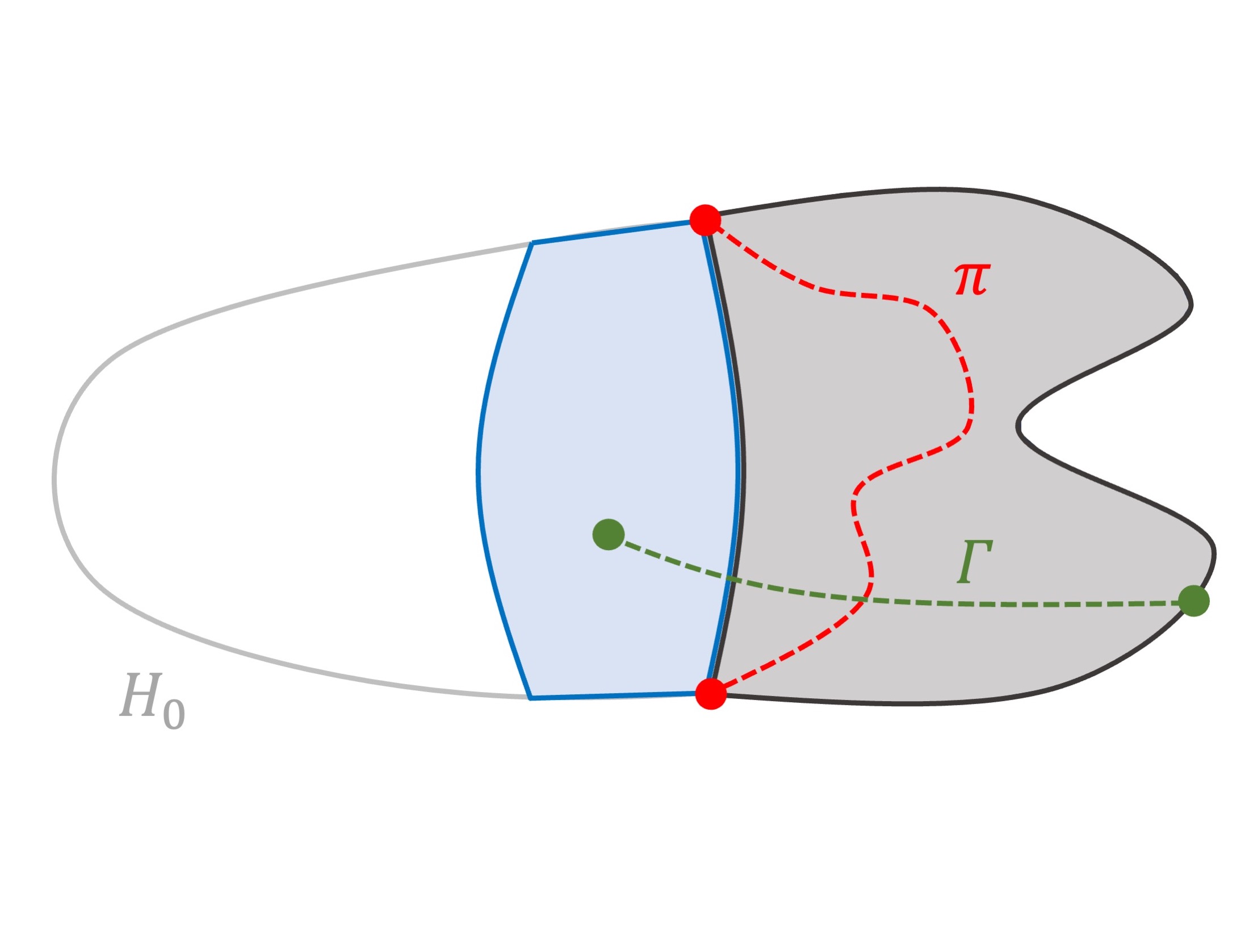}}}
	\hspace{0.2cm}
	\subfigure[A connected component $H'$ of $H \setminus W'$ is outer-bounded by $W'$ in $H_0$ (as described in \Cref{lem:blt}(3)).]
	{\scalebox{0.064}{\includegraphics{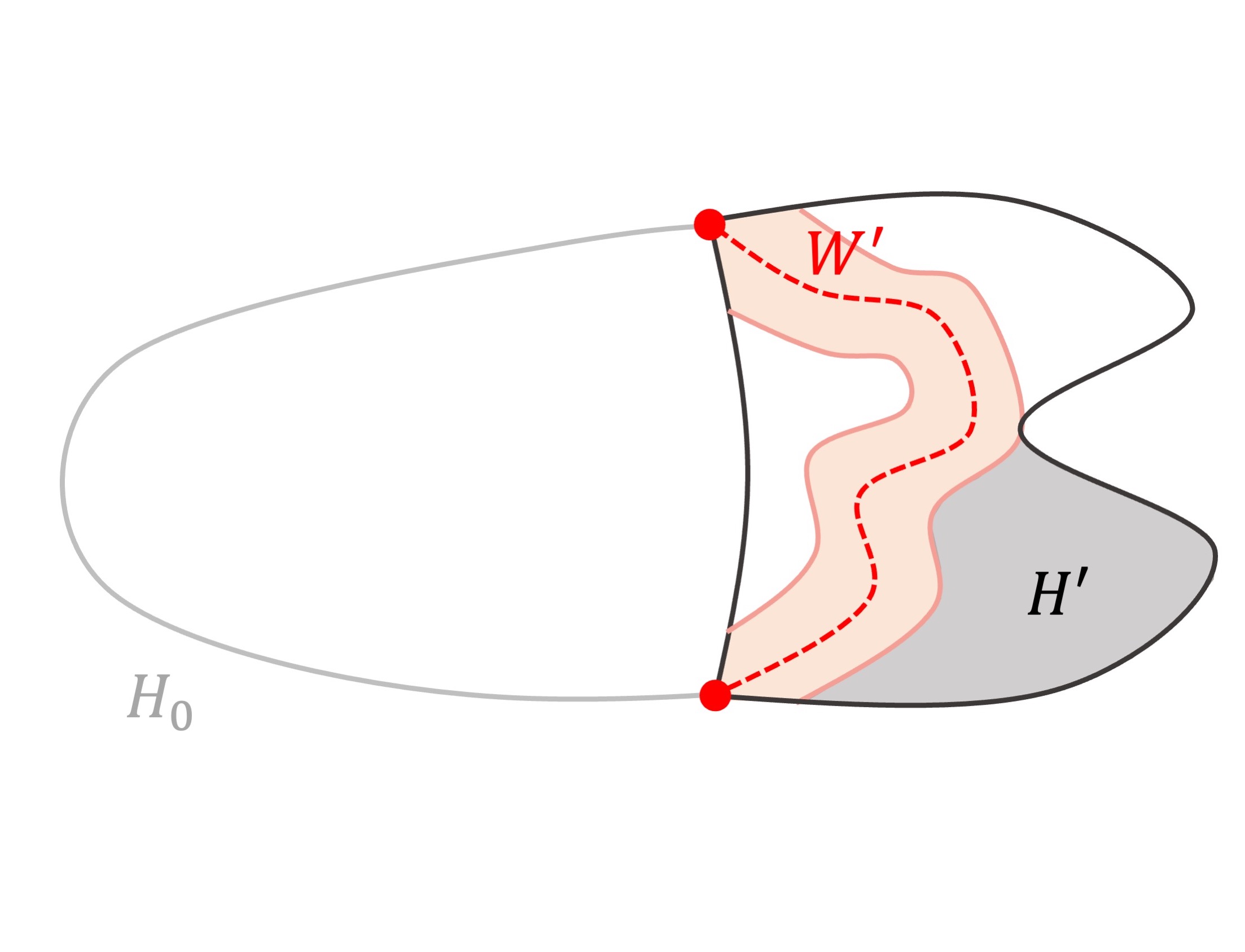}}}
	\caption{An illustration of a subgraph outer-bounded by a connected set of vertices.\label{fig:outer-bounded}}
\end{figure}

The following lemma is implicit in \cite[Sec.~5.3]{BLT14}. 
 For completeness, we give a proof in \Cref{ap:bltproof}.
We emphasize that this lemma refers to the planar (base) graph, not the region contact graph.

\begin{restatable}{lemma}{bltlemma}
\label{lem:blt}
    Let \EMPH{$H$} be a subgraph of the plane graph $H_0$, outer-bounded by some set \EMPH{$W$} in $H_0$. 
    \begin{enumerate}
        \item[(1)] $(H,W)$ has either 1 or 2 critical vertices.
        \item[(2)] Let \EMPH{$\pi$} be an arbitrary path between the critical vertices of $(H,W)$. If \EMPH{$\Gamma$} is a path in $H_0$ between a vertex in $W$ and a vertex in $V(H)$ that is an outer vertex of $H_0$, then $\Gamma$ intersects $\pi$. 
        \item[(3)] Let $\EMPH{$W'$} \subseteq V(H)$ be a connected subset of vertices that includes $\pi$. Let \EMPH{$H'$} be a subgraph induced by a connected component of $H \setminus W'$. If $H'$ contains at least one outer vertex of $H_0$, then $H'$ is outer-bounded by $W'$ in $H_0$.
    \end{enumerate}
\end{restatable}

Equipped with this lemma, we now describe the \textsc{SelectPaths} procedure for selecting a set of supernodes, a key subroutine for the \textsc{ClusterOuter} procedure in the next section.

\begin{tcolorbox}[breakable]

\ul{$\textsc{SelectPaths}_{H_0}(H, W_{\rm old}, \cR)$}:

\textbf{Input:} an implicit plane graph $H_0$, a subgraph $H$ outer-bounded by $W_{\rm old}$ in $H_0$, and a set of regions $\cR$ whose support is $V(H)$

\textbf{Output:} a shattering $\cR^*$ of $\cR$, and a partial partition of that shattering into \emph{supernodes} $\cS$

\begin{enumerate}
    \item \emph{Select spine $\pi$.}

    Let $Y \subseteq H$ be the set of critical vertices of $(H,W_{\rm old})$
    (by \Cref{lem:blt}(1), $1 \le |Y| \le 2$). For each $y \in Y$, let $R_y \in \cR$ be an arbitrary region that contains $y$. Define the \EMPH{spine} \EMPH{$\pi$} to be a shortest path in the contact graph of $\cR$ between $\set{R_y}_{y \in Y}$ (if $|Y| = 1$, then $\pi$ is the trivial path $[R_y\!]$).
    
    \item \emph{Initialize supernode $S$, and shatter the neighborhood of $\pi$ to obtain $\cR'$.}
    
    Initialize $\EMPH{$S^{\rm init}$} \subseteq \cR$ to be $\Set{\big. R \in \cR : \dist_{\cR}(R, \pi) \le 1}$.
    Let $\EMPH{$\cR'$} \gets \shatter{\cR}{(V(S^{\rm init}))}$.
    Define the \EMPH{supernode} $\EMPH{$S$} \subseteq \cR'$ to contain all regions in $\cR'$ that intersect $V(S^{\rm init})$. (Note that $V(S) = V(S^{\rm init})$. 
    The supernode $S$ contains all the regions in $S^{\rm init}$
    plus some regions created by the shattering.)
    
    \item \emph{Recurse on each connected component.}

    Iterate over the $\kappa$ connected components $H_1, H_2, \dots, H_\kappa$ of $H \setminus V(S)$.
    Consider the connected component \EMPH{$H_i$}. 
    Let $\EMPH{$\cR_i$} \subseteq \cR'$ be the set of regions that intersect $H_i$.
    If $H_i$ does not contain any outer vertex of $H_0$,
    then set $\cR^\out_i\coloneqq \cR_i$ and
    $\cS_i\coloneqq \varnothing$.
    Otherwise, recursively compute $(\cR^\out_i, \cS_i) \gets \textsc{SelectPaths}_{H_0}(H_i, V(S), \cR_i)$.

    \item \emph{Return.}

    Define $\EMPH{$\cR^\out$} \coloneqq S \cup \bigcup_{i \in [\kappa]} \cR^\out_i$ to be a shattering of $\cR$, and define $\EMPH{$\cS$} \coloneqq \set{S} \cup \bigcup_{i \in [\kappa]} \cS_i$ to be a partial partition of $\cR^\out$. Return $(\cR^\out, \cS)$.
\end{enumerate}

\end{tcolorbox}

To begin, we initialize $H \gets H_0$ and $W_{\rm old} \gets \varnothing$. For the rest of the section, we let $H_0$ and $\cR_0$ be a fixed graph and set of regions, and let $(\EMPH{$\cR^*$}, \EMPH{$\cS$})$ be the output of $\textsc{SelectPaths}(H_0, \varnothing, \cR_0)$.
The set $\cS$ returned by $\textsc{SelectPaths}$ is called the set of \EMPH{supernodes}---each supernode $S$ is a set of (possibly shattered) regions, and is associated with a spine $\pi$ which is a path in the contact graph of $S$.

\subsubsection{Supernode structure: partial order and domains}

We begin by defining some terminology for the structure of the supernodes $\cS$.
We will prove an assorted collection of basic properties about the supernodes created by $\textsc{SelectPaths}$.
The motivation for these lemmas will be clear when we prove the [diameter] and [scattering] properties of \Cref{lem:gridtree}.
These lemmas will all follow from the definition of \textsc{SelectPaths} and \Cref{lem:blt}(3).

There is a natural partial order to the supernodes $\cS$, based on the order they were constructed in the recursive call to $\textsc{SelectPaths}(H_0, \varnothing, \cR_0)$. Indeed, consider the recursion tree of $\textsc{SelectPaths}$. We say supernode $S_1$ is an \EMPH{ancestor} of $S_2$ if $S_1$ was initialized during some call to $\textsc{SelectPaths}$ that is an ancestor (in the recursion tree) of the call to $\textsc{SelectPaths}$ that created $S_2$. We define \EMPH{descendants} and the \EMPH{parent} of a supernode similarly.
For any supernode $S \in \cS$ created by some recursive call $\textsc{SelectPaths}(H, W_{\rm old}, \cR)$, 
define the \EMPH{domain} of $S$, denoted \EMPH{$\dom(S)$}, to be the set of regions $\cR^*$ in the output that intersect $H$. 
\begin{observation}
    For any two supernodes $S, S_a \in \cS$, if $S_a$ is an ancestor of $S$, then $\dom(S) \subseteq \dom(S_a)$.
\end{observation}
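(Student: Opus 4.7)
The plan is to unfold the recursion structure and observe that the input graph only shrinks as we descend. More precisely, I would proceed by induction on the length of the ancestor chain from $S_a$ down to $S$. For the base case where $S_a$ is the immediate parent of $S$: say $S_a$ was created in a recursive call $\textsc{SelectPaths}_{H_0}(H_a, W_{\rm old}^a, \cR_a)$, and $S$ was created in one of the children calls $\textsc{SelectPaths}_{H_0}(H_i, V(S_a), \cR_i)$, where $H_i$ is a connected component of $H_a \setminus V(S_a)$. In particular $V(H_i) \subseteq V(H_a)$. For the inductive step, concatenate inclusions to get $V(H_S) \subseteq V(H_{S_a})$, where $H_S$ (resp. $H_{S_a}$) denotes the input subgraph at the recursive call that created $S$ (resp. $S_a$).

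Next I would close the argument by unwinding the definition of $\dom$. Both $\dom(S)$ and $\dom(S_a)$ are defined as subsets of the same global shattering $\cR^*$ returned by the top-level call, so the comparison is well-defined. If $R \in \dom(S)$, then by definition $R \in \cR^*$ and $R$ intersects $V(H_S)$. Using $V(H_S) \subseteq V(H_{S_a})$ from the previous paragraph, $R$ also intersects $V(H_{S_a})$, so $R \in \dom(S_a)$. This proves $\dom(S) \subseteq \dom(S_a)$.

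I do not expect a serious obstacle here; the observation is almost immediate from the recursive definition. The only mildly subtle point is that we must interpret $\dom(S)$ as a subset of the \emph{final} global shattering $\cR^*$ rather than the local shattering $\cR'$ produced inside the recursive call that created $S$. This is harmless: because all shatter operations during the recursion only further subdivide existing regions, every region of $\cR^*$ that intersects $V(H_S)$ is obtained by further shattering some region of $\cR'$ that intersects $V(H_S)$, and likewise all the way up the tree, so the set-theoretic intersection check used in the definition of $\dom$ is preserved by each shatter.
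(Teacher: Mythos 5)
Your proof is correct and takes essentially the same approach as the paper's (the paper simply notes that \textsc{SelectPaths} only recurses on subgraphs $H_i$ of $H$, which is exactly what your induction on the ancestor chain establishes). Your extra note about $\dom(S)$ referring to the global $\cR^*$ is consistent with the paper's definition and is not actually an obstacle.
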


\begin{claim}
\label{clm:spine-and-radius}
    Every supernode $S \in \cS$ 
    is a subset of $\dom(S)$. Moreover, $S$ consists of a spine $\pi$ which is a shortest path in the contact graph of $\dom(S)$, together with regions of $\cR^*$ in the final output that are within distance 2 of $\pi$ in the contact graph of $S$ (and are thus within distance 2 in the contact graph of $\dom(S)$).
\end{claim}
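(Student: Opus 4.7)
The plan is to verify three sub-assertions implied by the claim: (i) $\pi \subseteq S \subseteq \dom(S)$; (ii) the spine $\pi$ is a shortest path in the contact graph of $\dom(S)$; and (iii) every region in $S$ is within distance $2$ of $\pi$ in that contact graph. Throughout, let $\textsc{SelectPaths}_{H_0}(H, W_{\rm old}, \cR)$ be the recursive call that created $S$, with local data $\cR$, $\cR'$, $S^{\rm init}$, and local output $\cR^\out$. A preliminary observation is $\dom(S) = \cR^\out$: regions of $\cR^*$ outside $V(H)$ are untouched by this call or its descendants, and regions produced by ancestor or sibling calls are supported on graphs disjoint from $V(H)$, so the regions of $\cR^*$ meeting $V(H)$ are exactly those in $\cR^\out$. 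Assertion (i) then follows: Step~4 explicitly places $S$ inside $\cR^\out$, and every region on $\pi$ lies in $V(S^{\rm init})$ and so is unchanged by the Step~2 shatter and included in $S$.

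The main technical lever for (ii) is that shattering cannot decrease contact-graph distances: for any shattering $\tilde \cR$ of $\cR$, the projection sending each subregion to its ancestor in $\cR$ converts a walk of length $\ell$ in $G_{\tilde \cR}$ into a walk of length at most $\ell$ in $G_\cR$ (after collapsing consecutive duplicates), because two distinct subregions adjacent in $G_{\tilde \cR}$ must project to regions in $\cR$ that also share or touch a vertex. Applying this with $\tilde \cR = \cR^\out$, together with the fact that the endpoint regions $R_{y_1}, R_{y_2}$ of $\pi$ are preserved verbatim in $\cR^\out$, gives (ii): any strictly shorter path in $G_{\dom(S)}$ between them would project to a walk in $G_\cR$ of length $< |\pi|$, contradicting the choice of $\pi$ as a shortest path in $G_\cR$.

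For (iii), take $R' \in S$. By construction $R'$ meets some vertex $v$ of a region $R_{\rm init} \in S^{\rm init}$; since $R_{\rm init} \subseteq V(S^{\rm init})$, it survives the Step~2 shatter unchanged and belongs to $S$. By definition of $S^{\rm init}$ there is a region $R_\pi$ on $\pi$ with $R_{\rm init}$ either equal or adjacent to $R_\pi$ in $G_\cR$; since neither region is broken by the shatter, this adjacency persists in $G_{\cR'}$, yielding a length-$2$ walk $R' \to R_{\rm init} \to R_\pi$. All three regions lie in $S \subseteq \cR^\out = \dom(S)$, so the walk also lives in $G_{\dom(S)}$, establishing $\dist_{\dom(S)}(R', \pi) \le 2$.

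The main obstacle I anticipate is the bookkeeping needed to guarantee that regions of $S$ (and in particular those on $\pi$ and in $S^{\rm init}$) are not further shattered by the descendant recursive calls. The key invariant is that each subgraph $H_i$ in the recursion is a connected component of $H \setminus V(S)$, so the descendant calls only shatter regions supported on $V(H_i) \subseteq V(H) \setminus V(S)$; the regions in $S$ are supported on $V(S)$ and therefore persist unchanged into $\cR^\out$. Once this invariant is verified, all adjacencies used above are preserved from $G_{\cR'}$ to $G_{\dom(S)}$, and the distance non-decrease principle completes the shortest-path argument.
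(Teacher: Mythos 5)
Your proposal is correct and follows essentially the same route as the paper: verify $S \subseteq \cR^\out \subseteq \cR^*$ and $S\subseteq\dom(S)$, observe that $\pi$'s regions and the regions of $S^{\rm init}$ are preserved verbatim under the Step~2 shatter (being subsets of $V(S^{\rm init})$) and never get re-shattered by descendant calls (whose supports lie in $V(H)\setminus V(S)$), then use the ``shattering cannot shorten contact-graph distances'' pull-back argument for the shortest-path claim, and a length-$2$ walk $R'\to R_{\rm init}\to R_\pi$ for the radius bound. The one place you go slightly beyond the paper's exposition is the preliminary identity $\dom(S)=\cR^\out$ (the paper just asserts $S\subseteq\dom(S)$ and that $\dom(S)$ is a shattering of $\cR$); your identity is correct and makes the ``$\pi$ is a shortest path in $\dom(S)$'' step cleaner, but it is not a different approach, just an explicit justification of a fact the paper leaves implicit.
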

\begin{proof}
Consider some supernode $S$ initialized by a call $\textsc{SelectPaths}(H, W_{\rm old}, \cR)$. Let $\cR'$ be the set of shattered regions defined during Step 2 of this call. Clearly $S \subseteq \cR'$, and $\cR'$ is a subset of the regions returned by $\textsc{SelectPaths}(H, W_{\rm old}, \cR)$. By an easy inductive argument, $S \subseteq \cR^*$. Further, every region in $S$ intersects $H$, so we conclude $S \subseteq \dom(S)$.

Let $\pi$ be the spine initialized in Step 1 of \textsc{SelectPaths}. Path $\pi$ is a shortest path in the contact graph of $\cR$. Moreover, by construction, every region of $\pi$ appears in the supernode $S$, and thus in $\dom(S)$. As $\dom(S)$ is a shattering of $\cR$, we conclude that $\pi$ is also a shortest path in the contact graph of $\dom(S)$. (Indeed, any path in $\dom(S)$ between vertices in $\pi$ can be pulled back to a path in $\cR$ of no greater length, as in the proof of \Cref{obs:shattered-diam}.)

Now, consider an arbitrary region $R \in S$; we show here that $\dist_{S}(R, \pi) \le 2$. If $R \in S^{\rm init}$ in Step 2 of \textsc{SelectPaths}, by definition of $S^{\rm init}$ and $\cR'$, $R$ was not shattered, so $R$ is in the supernode $S$ and we have $\dist_{S}(R, \pi) = \dist_{\cR}(R, \pi) \le 1$. Otherwise, $R$ contains some vertex in $V(S^{\rm init})$, and thus is adjacent to some other region $R' \in S^{\rm init}$ in the contact graph of $S$, 
so $\dist_{S}(R, \pi) \le 2$. Observe that $\dist_{\dom(S)}(R, \pi) \le \dist_S(R, \pi) \le 2$, so we also have $\dist_{\dom(S)}(R, \pi) \le 2$.
\end{proof}

\begin{observation}
\label{obs:intersect-parent}
    Let $S \in \cS$ be a supernode, let $S_{\rm a} \in \cS$ be the parent of $S$, and let $S_{\rm up} \in \cS$ be an arbitrary ancestor of $S$. Any path in $H_0$ between a vertex in $V(S)$ and a vertex in $V(S_{\rm up})$ intersects $V(S_a)$.
\end{observation}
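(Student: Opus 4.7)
The plan is to locate, in the recursion tree of \textsc{SelectPaths}, the call that creates $S_a$ and the connected component of its input graph that eventually leads to $S$. I then apply \Cref{lem:blt}(3) to conclude that this component is in fact a connected component of $H_0 \setminus V(S_a)$, after which the result follows from a basic connectivity argument.

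More precisely, let $H^{(a)}$ denote the plane graph passed as input to the recursive call that produces $S_a$. By the structure of \textsc{SelectPaths}, immediately after initializing $S_a$ (with spine $\pi_a$) in Step~2, the routine recurses on each connected component of $H^{(a)} \setminus V(S_a)$. Since $S$ is created strictly below this call in the recursion tree, there is a unique such component $H^{(a)}_i$ containing $V(S)$, and it must have been recursed upon (so in particular it contains an outer vertex of $H_0$). Because $\pi_a \subseteq V(S_a)$ and $V(S_a)$ is connected (it is the $1$-hop neighborhood in the contact graph of a path), \Cref{lem:blt}(3) applies and tells us that $H^{(a)}_i$ is \emph{outer-bounded} by $V(S_a)$ in $H_0$. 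By definition, this means $V(H^{(a)}_i)$ is a connected component of $H_0 \setminus V(S_a)$.

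Next I would argue that $V(S_{\rm up})$ lies outside $V(H^{(a)}_i)$, splitting into two cases. If $S_{\rm up} = S_a$ then $V(S_{\rm up}) = V(S_a)$, which is trivially disjoint from the component $V(H^{(a)}_i) \subseteq V(H^{(a)}) \setminus V(S_a)$. If instead $S_{\rm up}$ is a strict ancestor of $S_a$, then the chain of recursive calls from $S_{\rm up}$ down to $S_a$ descends into a component of $H^{(\mathrm{up})} \setminus V(S_{\rm up})$, so $V(H^{(a)}) \subseteq V(H^{(\mathrm{up})}) \setminus V(S_{\rm up})$. In particular, $V(S_{\rm up}) \cap V(H^{(a)}) = \varnothing$ and a fortiori $V(S_{\rm up}) \cap V(H^{(a)}_i) = \varnothing$.

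Combining these, any path $\Gamma$ in $H_0$ from a vertex $v \in V(S) \subseteq V(H^{(a)}_i)$ to a vertex $u \in V(S_{\rm up})$ must leave the set $V(H^{(a)}_i)$. Since $V(H^{(a)}_i)$ is a connected component of $H_0 \setminus V(S_a)$, every edge of $H_0$ leaving $V(H^{(a)}_i)$ has its other endpoint in $V(S_a)$; hence $\Gamma$ must traverse a vertex of $V(S_a)$, as claimed. The main obstacle is purely bookkeeping the recursive descent and correctly invoking the definition of outer-boundedness (which is what upgrades a component of $H^{(a)} \setminus V(S_a)$ to a component of the global graph $H_0 \setminus V(S_a)$ via \Cref{lem:blt}(3)); once that identification is made, the conclusion is immediate.
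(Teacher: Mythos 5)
Your proof is correct and takes essentially the same approach as the paper. The paper's version is terser: it looks at the call $\textsc{SelectPaths}(H, W_{\rm old}, \cR)$ that initialized $S$ (so $W_{\rm old} = V(S_a)$ and $H$ is your $H^{(a)}_i$), notes from the input invariant that $H$ is a connected component of $H_0 \setminus W_{\rm old}$ and that $V(S_{\rm up})$ is disjoint from $V(H)$, and concludes immediately; your write-up makes explicit the invocation of Lemma~\ref{lem:blt}(3) and the case analysis on $S_{\rm up}$ that the paper treats as implicit bookkeeping, but the underlying argument is identical.
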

\begin{proof}
    Consider the call $\textsc{SelectPaths}(H, W_{\rm old}, \cR)$ that initialized $S$. By construction, $W_{\rm old} = V(S_a)$. Further, $V(S_{\rm up})$ is disjoint from $V(H)$. As $H$ is a connected component in $H_0 \setminus W_{\rm old}$, any path in $H_0$ from a vertex in $H$ to a vertex not in $H$ intersects $W_{\rm old}$ (and thus intersects $V(S_a)$).
\end{proof}

\begin{observation}
\label{obs:adj-to-two}
    Let $R \in \cR^*$ be a region that is not in any supernode, ie.\ $R \not \in \cS$.
    \begin{enumerate}
        \item[(a)] $R$ is vertex-disjoint from regions in $\cS$, ie.\ $V(R) \cap V(\bigcup \cS) = \varnothing$.
        \item[(b)] There are two supernodes $S, S_a \in \cS$ such that any path in $H_0$ between a vertex in $V(R)$ and a vertex in $V(\bigcup \cS)$ intersects $V(S \cup S_a)$.\footnote{In particular, this means that the only supernodes that may be adjacent to $R$ (in $H_{\cR^*})$ are $S$ and $S_a$.}
        Moreover, $S_a$ is the parent of $S$, and $R \in \dom(S)$. 
    \end{enumerate}
\end{observation}
\begin{proof}
Because $R$ is not in any supernode, there is some call $\textsc{SelectPaths}(H, W_{\rm old}, \cR)$ that creates a supernode $S$, such that $V(R)$ is contained in some connected component $H_i$ of $H \setminus V(S)$ and $H_i$ does not contain any outer vertex of $H_0$.
Note that $H_i$ is a connected component in $H_0 \setminus (V(S) \cup W_{\rm old})$. There are no supernodes in $H_i$, so $R$ is vertex-disjoint from the regions in $\cS$. Moreover, any path in $H_0$ between $V(R)$ and a vertex in $V(\bigcup \cS)$ intersects intersect $V(S)$ or $W_{\rm old}$. These are precisely the regions in $S$ and its the parent supernode $S_a$. By definition of domain, $R \in \dom(S)$.
\end{proof}

We now prove two lemmas that will be helpful for proving \Cref{lem:gridtree}(1).  Observe that the supernodes created by \textsc{SelectPaths} may not cover all outer regions of $H_0$ if an outer region of $\cR_0$ is shattered at some point. Nevertheless, we show in the following lemma that all vertices in any outer regions that are not covered by supernodes are adjacent to some region in some supernode.
\begin{lemma}
\label{lem:external-neighbor}
Let $\hat \cS \gets \bigcup \cS$ denote the set of all regions in $\cR^*$ that belong to some supernode.
Let $R \in \cR_0$ be an outer region of $H_0$, and let $r \in R$. There is some region $R^* \in \cR^*$ that contains $r$, such that $\dist_{\cR^*}(R^*, \hat \cS) \le 1$.
\end{lemma}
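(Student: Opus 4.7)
The plan is to case-split based on whether $r$ gets absorbed into any supernode during the recursion of $\textsc{SelectPaths}_{H_0}$. Trace the chain of recursive calls along which $r$ descends: $H_0 = H^{(0)} \supsetneq H^{(1)} \supsetneq \cdots \supsetneq H^{(J)}$, where each level $j$ creates a supernode $S^{(j)}$, and (if $r \notin V(S^{(j)})$) $r$ descends into the component $H^{(j+1)}$ of $H^{(j)} \setminus V(S^{(j)})$ containing it. Let $R^{(j)}$ be the subregion of $R$ in $\cR^{(j)}$ that contains $r$; a straightforward induction identifies $R^{(j)}$ with the connected component of $G[R \cap V(H^{(j)})]$ containing $r$.

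\textbf{Easy case:} If $r \in V(S^{(j)})$ for some $j$, then shattering $\cR^{(j)}$ by $V(S^{(j)})$ places the component of $G[R^{(j)} \cap V(S^{(j)})]$ containing $r$ directly into $S^{(j)}$; since supernode regions are never further shattered, this component is $R^*$ and lies in $\hat\cS$, giving distance $0$.

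\textbf{Hard case:} If $r$ evades every supernode on its recursion path, it ends in a leaf component $H_r$ of $H^{(J)} \setminus V(S^{(J)})$ that the algorithm chose not to recurse on --- meaning $H_r$ contains no outer vertex of $H_0$. Since $R$ is an outer region, fix an outer vertex $v^* \in R$; necessarily $v^* \notin V(H_r)$. Pick any path $\cP = (r = p_0, p_1, \ldots, p_m = v^*)$ inside $R$, and let $p_k$ be the first vertex of $\cP$ with $p_k \notin V(H_r)$ (which exists and satisfies $k \ge 1$). The key structural step is to show that $p_k \in V(S^{(j^*)})$ for some $j^* \le J$: at each level, $V(H^{(j)})$ decomposes as $V(H^{(j+1)}) \sqcup V(S^{(j)}) \sqcup (\text{parallel components})$, and distinct components of $H^{(j)} \setminus V(S^{(j)})$ are pairwise non-adjacent in $H^{(j)}$. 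Iterating this across levels $0, 1, \ldots, J$ forces every neighbor of $p_{k-1} \in V(H_r)$ in $G$ to lie in $V(H_r) \cup \bigcup_{j \le J} V(S^{(j)})$.

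To finish, observe that the subpath $p_0, \ldots, p_{k-1}$ stays inside $V(H_r)$, which is disjoint from every $V(S^{(j)})$ with $j \le J$; by the component-of-intersection characterization of $R^*$, this forces $p_{k-1} \in R^*$. Meanwhile, $p_k \in V(S^{(j^*)}) \cap R$ means that at level $j^*$ the subregion of $R$ containing $p_k$ gets absorbed into $S^{(j^*)} \subseteq \hat\cS$, and this absorbed subregion is the final $\cR^*$-region containing $p_k$. Since $p_{k-1}$ and $p_k$ are $G$-adjacent, $R^*$ is adjacent in the contact graph to a region in $\hat\cS$, so $\dist_{\cR^*}(R^*, \hat\cS) \le 1$. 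The main obstacle I anticipate is the recursion-level bookkeeping --- in particular, carefully justifying that $p_k$ cannot slip sideways into a parallel branch at some intermediate level, which really does require invoking the non-adjacency of distinct components after removing a separator.
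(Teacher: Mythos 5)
Your argument is correct but takes a genuinely different route from the paper's.

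The paper's proof strengthens the statement into an inductive claim parametrized by recursive calls: it considers any region that either contains an outer vertex of $H_0$ \emph{or} has a vertex adjacent in $H_0$ to $\hat\cS$, and then cases on whether that region was shattered at the current level. The extra disjunct in the hypothesis is exactly what keeps the induction going once a subregion of an outer region loses its outer vertex, and the proof never needs to reason globally about which level a vertex ``exits'' at. Your proof instead fixes $r$'s recursion path $H^{(0)} \supsetneq \cdots \supsetneq H^{(J)}$, observes that if $r$ is never absorbed then it lands in a leaf component $H_r$ with no outer vertex of $H_0$, and then runs a topological argument: any path in $R$ from $r$ to an outer vertex of $H_0$ must exit $H_r$, and the non-adjacency of distinct components of $H^{(j)} \setminus V(S^{(j)})$, iterated across levels, forces the exit vertex $p_k$ into $\bigcup_j V(S^{(j)})$. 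You then identify $R^*$ as the component of $G[R \cap V(H_r)]$ containing $r$, note $p_{k-1} \in R^*$, and observe that the $\cR^*$-region containing $p_k$ lands in the supernode $S^{(j^*)}$ at the level $j^*$ where $p_k \in V(S^{(j^*)})$ (and is never re-shattered afterwards), giving $\dist_{\cR^*}(R^*, \hat\cS) \le 1$. Both arguments are sound; the paper's induction is more modular and avoids global bookkeeping about where vertices exit the recursion, while yours is more explicit about the mechanism --- the outer vertex of $R$ pulls a path out of the leaf component, and the separator structure of \textsc{SelectPaths} forces that path through a supernode. One spot worth making explicit if you write this up: $p_k$ lies in $V(H^{(j^*)})$ and in $R$, so a subregion of $R$ containing $p_k$ does reach level $j^*$, but it is generally \emph{not} $R^{(j^*)}$ (the subregion containing $r$); you are implicitly tracing a second subregion of $R$ through the recursion, and the same ``components of $G[R \cap V(H^{(j)})]$'' characterization justifies this.
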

\begin{proof}
We prove the following stronger claim.
\begin{quote}
\textbf{Claim.} Let $\textsc{SelectPaths}(H, W_{\rm old}, \cR)$ be some recursive call made during the computation of $\cR^*$ and $\cS$. Let $R$ be any region (including a shattered subregion of a region of $\cR_0$) that either contains an outer vertex of $H_0$, or has a vertex in $R$ that is adjacent in $H_0$ to some vertex in some region of $\hat \cS$.
    Let $r \in R$. There is some region $R^* \in \cR^*$ that contains $r$, such that $\dist_{\cR^*}(R^*, \hat \cS) \le 1$.
\end{quote}
The proof is by induction, on the depth of the recursive call.
Let \EMPH{$S$} be the supernode initialized during the call $\textsc{SelectPaths}(H, W_{\rm old}, \cR)$, and let \EMPH{$\cR^\out$} be the set of regions formed by shattering $\cR$ with $S$. We divide into several cases.

Suppose that $R$ is \emph{not} shattered, that is $R \in \cR^\out$.
If $R \in S$, then $R \in \hat \cS$, so we are done.
Otherwise, there is some connected component $H_i$ of $H \setminus V(S)$ that intersects $R$, and a corresponding set of regions $\cR_i$ such that $R \in \cR_i$.
If $H_i$ contains some outer vertex of $H_0$ (for example, if $R$ contains an outer vertex of $H_0$) then the \textsc{SelectPaths} procedure makes a recursive call to $\textsc{SelectPaths}(H_i, V(S), \cR_i)$, and the claim holds by induction.
Otherwise, if $H_i$ does not contain some outer vertex, there is no recursive call, and we conclude $R \in \cR^*$. In this case, recall that $R$ is adjacent to some region in $\hat \cS$ (by assumption, as $R$ does not contain an outer vertex of $H_0$), so we are done. 

Now suppose that $R$ is shattered during the creation of $\cR^\out$. Let $R' \in \cR^\out$ be a subregion of $R$ containing $r$. If $R' \in S$, we are done. Otherwise, $R'$ lies inside some connected component $H_i$ of $H \setminus V(S)$.
By the definition of shattering procedure, $R'$ is adjacent to some region in $S$ (and thus adjacent to some region in $\hat \cS$).
If $H_i$ contains no outer vertex of $H_0$, then (as above) $R' \in \cR^*$ and we are done.
Otherwise, the claim follows by induction (as above).
\end{proof}

\begin{observation}
\label{obs:outer-connected}
The induced subgraph $H_0[V(\cS)]$ is connected and every supernode contains at least one outer vertex of~$H_0$.
\end{observation}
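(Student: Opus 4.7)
The approach I would take decomposes the observation into three sub-claims and glues them together by induction on the recursion tree of $\textsc{SelectPaths}$: (a) for every individual supernode $S \in \cS$, the vertex set $V(S)$ is connected in $H_0$; (b) for every non-root supernode $S$ with parent $S_a$, some vertex of $V(S)$ is $H_0$-adjacent to some vertex of $V(S_a)$; and (c) the root supernode $S_0$ (created by the top-level call $\textsc{SelectPaths}(H_0, \varnothing, \cR_0)$) contains at least one outer vertex of $H_0$. Since the supernodes form a tree under the parent relation, (a) combined with (b) inductively forces $V(\cS)$ to induce a connected subgraph of $H_0$, while (c) supplies the required outer vertex.

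For (a), I would invoke Claim~\ref{clm:spine-and-radius}, which says that $V(S) = V(S^{\rm init})$ where $S^{\rm init}$ consists of regions within contact-distance $1$ of the spine $\pi$. Each region is a connected subset of $V(H_0)$ by definition, and contact-adjacent regions either share a vertex or contain endpoints of an edge of $H_0$, so $V(\pi)$ is connected in $H_0$ and every region of $S^{\rm init}\setminus \pi$ attaches to $V(\pi)$ through $H_0$-edges. For (b), I would unpack the definition of \emph{critical vertex}: in the call $\textsc{SelectPaths}(H, W_{\rm old}, \cR)$ that creates $S$ (with $W_{\rm old} = V(S_a)$), the critical vertex $y$ of $H$ is by definition joined by an edge of $H_0$ to a vertex of $W_{\rm old} = V(S_a)$; since $y \in R_y$, an endpoint of $\pi$, and $R_y \in S$ by Claim~\ref{clm:spine-and-radius}, we get $y \in V(S)$, yielding the required $H_0$-edge between $V(S)$ and $V(S_a)$. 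For (c), the special-case convention stated just before Lemma~\ref{lem:blt} declares the critical vertex of $H_0$ in the root call to be an arbitrary outer vertex of $H_0$, and the same argument places it in $V(S_0)$.

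The main obstacle I anticipate is essentially bookkeeping: ensuring that ``adjacency in the contact graph of $\cR$'' is faithfully translated into adjacency in $H_0$ for the argument in (a). This reduces immediately to the definition of the region contact graph, which explicitly allows $v_1 = v_2$ or $v_1 v_2 \in E(H_0)$ as witnesses, together with the fact that each region is connected in $H_0$. Beyond that, every ingredient is already in place---Claim~\ref{clm:spine-and-radius} controls the internal structure of each supernode, and the critical-vertex definition together with the recursion tree of $\textsc{SelectPaths}$ delivers the parent--child linkage and the outer-vertex base case.
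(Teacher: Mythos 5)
Your proof is correct and essentially coincides with the paper's: you verify each supernode induces a connected subgraph of $H_0$, link each child to its parent via the critical-vertex edge on the outer face of $H_0$, and use the root call's convention to place an outer vertex in $V(S_0)$. One minor attribution slip: the identity $V(S) = V(S^{\rm init})$ is stated inline in the pseudocode of $\textsc{SelectPaths}$, not in \Cref{clm:spine-and-radius}---but the structural content you rely on is the same.
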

\begin{proof}
    By construction, every supernode $S$ contains a region of at least one outer vertex of $S$, and $V(S)$ induces a connected subgraph of $H_0$. Moreover, if $S_a$ is the parent of $S$, some vertex in $V(S)$ is connected to some vertex of $V(S_a)$ by an edge of $H_0$ (in fact, an outer edge); thus $H_0[V(\cS)]$ is connected.
\end{proof}

\subsubsection{Towards a scattering clustering}
To conclude this section, we prove a crucial lemma (\Cref{lem:select-paths-scattering}) which says that the shattering of \textsc{SelectPaths} does not affect distances in the contact graph too much. We will use this lemma to prove the [scattering] property of the final clustering. We first need a technical lemma, where we use \Cref{lem:blt}(2).

\begin{lemma}
\label{lem:crossing-pi}
Suppose that at some point $\textsc{SelectPaths}(H, W_{\rm old}, \cR)$ is called and creates a supernode $S$, and makes recursive call to $\textsc{SelectPaths}(H_i, V(S), \cR_i)$ in Step 3 for connected components $H_1, \dots, H_\kappa$ and shattered regions $\cR_1, \dots, \cR_\kappa$. 
If $R \in \cR$ is a region that contains a vertex adjacent to $W_{\rm old}$, then no subregion of $R$ is in $\cR_i$ for any $i$. 
\end{lemma}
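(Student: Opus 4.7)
The plan is to argue by contradiction: assume some nonempty subregion $R'$ of $R$ lies in some $\cR_i$ passed to a recursive call, and apply Lemma~\ref{lem:blt}(2) to force $R$ itself to be within distance $1$ of the spine $\pi$, contradicting the existence of $R'$.

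First I would set up a walk in $H_0$. By hypothesis some vertex $v \in R$ is adjacent to some $w \in W_{\rm old}$; by the recursion condition $H_i$ contains an outer vertex $x$ of $H_0$; and the assumption $R' \in \cR_i$ lets me pick some $r' \in R' \cap H_i$. Since $H_i \cap V(S) = \varnothing$ and $V(S) = V(S^{\rm init})$, the subregion $R'$ must be a component of $R \setminus V(S^{\rm init})$, so in particular $V(R') \cap V(S^{\rm init}) = \varnothing$. Using connectedness of $R$ inside the induced subgraph $H$ and of $H_i$, I would build a walk $\Gamma$ in $H_0$ by concatenating the edge $wv$, a path inside $R$ from $v$ to $r'$, and a path inside $H_i$ from $r'$ to $x$: this is a walk from $W_{\rm old}$ to an outer vertex of $H_0$ inside $V(H)$.

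Next I would convert the contact-graph spine $\pi$ into a genuine vertex path in $H$. Set $V(\pi^{\rm vert}) \coloneqq \bigcup_{R'' \in \pi} V(R'')$. The induced subgraph $H[V(\pi^{\rm vert})]$ is connected---each region in $\pi$ is connected in $H$, and consecutive regions in $\pi$ either share a vertex or are joined by an $H$-edge since $H$ is induced---and contains the critical vertex/vertices of $H$. Let $\hat\pi$ be a simple path in $H[V(\pi^{\rm vert})]$ between the critical vertices (or the trivial one-vertex path when $|Y|=1$). Applying Lemma~\ref{lem:blt}(2) with $\hat\pi$ to the walk $\Gamma$ produces some vertex $p \in V(\Gamma) \cap V(\hat\pi)$.

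Finally I would localize $p$ and close the argument. The vertex $p$ cannot equal $w$ (since $w \in W_{\rm old}$ is disjoint from $V(H) \supseteq V(\hat\pi)$) and cannot lie on the $H_i$-piece of $\Gamma$ (since $V(H_i)$ is disjoint from $V(S) \supseteq V(\hat\pi)$), so $p$ lies on the middle segment of $\Gamma$, giving $p \in V(R)$; simultaneously $p$ is in some region $\tilde R \in \pi$. Then $R$ and $\tilde R$ share $p$, are adjacent in the contact graph, and so $\dist_{\cR}(R,\pi) \le 1$, yielding $R \in S^{\rm init}$; but then $V(R) \subseteq V(S^{\rm init})$, so $R \setminus V(S^{\rm init}) = \varnothing$, contradicting the nonemptiness of $R'$. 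The step I expect to be the main obstacle is the bridge from the contact-graph spine $\pi$ to a genuine vertex path $\hat\pi$ in $H$ between the critical vertices and lying inside $V(S^{\rm init})$, so that Lemma~\ref{lem:blt}(2) is applicable; once that is in place, the rest is careful bookkeeping about where the vertices of $\Gamma$ can lie relative to $V(S^{\rm init})$.
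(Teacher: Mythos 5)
Your proposal is correct and follows essentially the same route as the paper's proof: build a walk in $H_0$ from $W_{\rm old}$ through $R$ into $H_i$ out to an outer vertex, lift the contact-graph spine $\pi$ to a vertex path $\hat\pi$ in $H[V(\pi)]$, invoke Lemma~\ref{lem:blt}(2) to force an intersection, localize that intersection to the middle segment inside $R$, and conclude $R\in S^{\rm init}$, contradicting the existence of a subregion of $R$ outside $V(S)$. The bridge step you flagged as the main obstacle is indeed the key observation the paper also makes, and your justification of it (connectedness of each region plus adjacency of consecutive regions in the contact graph, using that $H$ is induced) is exactly the right argument.
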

\begin{proof}
    As defined in Step 1 of \textsc{SelectPaths}, let $Y$ be the critical vertices of $(H,W_{\rm old})$, let $\set{R_y}_{y \in Y}$ be regions that each contain the critical vertices $Y$, and let $\pi$ be the shortest path between $\set{R_y}_{y\in Y}$ in the contact graph of $\cR$. By definition of the contact graph, there is a path $\hat \pi$ in $H$ between the vertices of $Y$ which contains only vertices in $V(\pi)$.

    Suppose for contradiction that there is some subregion $R'$ of $R$ such that $R' \subseteq V(H_i)$ for some connected component $H_i$ of $H \setminus V(S)$. Let $r' \in R'$, and let $r \in R$ be a vertex adjacent to some vertex $w \in W_{\rm old}$. As $R$ is a connected region, there is a path $\Gamma_1$ in $R$ between $r$ and $r'$ in $R$. As $H_i$ is a connected component of $H \setminus V(S)$, and (by the algorithm in Step 3) $H_i$ contains an outer vertex of $H_0$, there is a path $\Gamma_2$ in $H_0 \setminus \bigcup S$ between $r'$ and an outer vertex of $H_0$. Let 
    \[\Gamma \coloneqq \set{w} \circ \Gamma_1 \circ \Gamma_2.\]
    By \Cref{lem:blt}(2), $\Gamma$ intersects $\hat \pi$. But $w$ and $\Gamma_2$ are disjoint from $V(S) \supseteq \hat \pi$, so $\Gamma_1 \subseteq R$ intersects $\hat \pi$. Thus, $R$ is adjacent to some region of $\pi$ in the contact graph of $\cR$. The definition of $S$ in Step 2 implies that $R \in S$. But this implies that $R' \subseteq R \subseteq V(S)$, contradicting our assumption that $R' \subseteq V(H) \setminus V(S)$.
\end{proof}

One simple consequence of \Cref{lem:crossing-pi} is that after a region is shattered for the first time, it can only be shattered one more time by its direct descendents.
\begin{observation}
\label{obs:shatter-two-level}
\sloppy Suppose a region $R\in \cR_0$ is shattered in the call to $\textsc{SelectPaths}(H, W_\textrm{old}, \cR)$ which created supernode $S$. 
Then subregions of\, $R$ may be further shattered in a recursive call
$\textsc{SelectPaths}(H_i, V(S), \cR_i)$ that is made immediately,
but not in any further recursive calls. 
In other words, the subregions of $R$ produced in $\textsc{SelectPaths}(H_i, V(S), \cR_i)$ appear in the final shattering $\cR^*$.
\end{observation}
\begin{proof}
   Consider the first instance of $\textsc{SelectPaths}(H, W_\textrm{old}, \cR)$ that created the supernode $S$ which shattered a region $R \in \cR_0$. A subregion $R'$ of $R$ may appear in $R_i$ in recursive calls to $\textsc{SelectPaths}(H_i, V(S), \cR_i)$ in Step 3 of $\textsc{SelectPaths}$ that created a supernode $S_i$. In these recursive calls, $R'$ has a vertex adjacent to $V(S)$ as it was shattered by $S$. Thus by \Cref{lem:crossing-pi}, subregions $R'$ will not appear in subsequent recursive calls, but may still have been shattered in $\textsc{SelectPaths}(H_i, V(S), \cR_i)$.
\end{proof}

Now we can examine the distance between two subregions of $R$ immediately after it is shattered.

\begin{lemma}
\label{lem:dist-after-shatter}
Consider some call $\textsc{SelectPaths}(H, W_\textrm{old}, \cR)$, and let $R \in \cR$ be some region shattered by the call.
Let $\cR'$ be the regions created in Step 2, and 
$R_a, R_b \in \cR'$ be two subregions of $R$.  Then,
there is a walk $Y$ in $H_{\cR'}$ between $R_a$ and $R_b$ such that $\len{Y} \le 8$ and every region in $Y$ other than $R_a$ and $R_b$ is in the supernode $S$ created by the call $\textsc{SelectPaths}(H, W_\textrm{old}, \cR)$.
\end{lemma}

\begin{proof}
Since $R_a$ is a subregion of $R$ which was shattered by $S^{\textrm{init}}$ (as defined in Step 2 of the pseudocode for \textsc{SelectPaths}), then $R_a$ must be adjacent to a region in $S^{\textrm{init}}$ and so $\dist_{\cR'}(R_a, \pi) \le 2$.
The same holds for $R_b$.
Let $P_a, P_b\in \pi$ be the closest regions to $R_a$ and $R_b$ respectively in the contact graph of $\cR'$. 
Since $R_a$ and $R_b$ are subregions of $R$, then $\dist_\cR(R, P_a)\le 2$ and $\dist_\cR(R,P_b)\le 2$. As $\pi$ is a shortest path of $\cR$ that lies completely in $S^{\rm init}$, we have that $\dist_{\cR'}(P_a, P_b) = \dist_\cR(P_a, P_b) \le 4$. 
Now we can let $Y$ be the walk from $R_a$ to $P_a$ to $P_b$ to $R_b$, and observe that the following holds:
\[ \len{Y} \le \dist_{\cR'}(R_a, P_a) + \dist_{\cR'}(P_a, P_b) + \dist_{\cR'}(P_b, R_b) \le 8.\]
It is clear that every region in $Y$ other than $R_a$ and $R_b$ is in $S$.
\end{proof}

We can now show that all the shattering we do (across all recursive calls of $\textsc{SelectPaths}$) keeps shattered regions close, that is, two subregions of a single shattered region are still within $O(1)$ distance in the resulting contact graph.%
\begin{lemma}
\label{lem:select-paths-scattering}
    For any region $R$ in $\cR_0$, for every pair of subregions $R_a$, $R_b$ of $R$ in $\cR^*$, there is a walk $Y$ between $R_a$ and $R_b$ in the contact graph $H_{\cR^*}$ with 
    \[\len Y \le 24.\] Moreover,
    $Y$ has a particular structure: it can be written as the concatenation 
    \[Y = [R_a] \circ Y_1 \circ [\tilde R_a] \circ Y_2 \circ [\tilde R_b] \circ Y_3 \circ [R_b]\]
    where $Y_1,Y_2,Y_3$ are each walks of length at most $6$ and are contained in a single supernode, and $\tilde R_a$ and $\tilde R_b$ are two subregions of $R$ that are adjacent to a supernode.
\end{lemma}

\begin{proof}
If $R$ was never shattered, then $R_a = R_b = R$ and the claim is trivial.
If $R$ was shattered, let $\textsc{SelectPaths}(H, W_\textrm{old}, \cR)$ be the earliest recursive call that shattered $R$ and created supernode $S$, 
let $\cR'$ be the set of regions from Step~2,
and let $R_a', R_b'\in \cR'$ be subregions of $R$ that contain $R_a$ and $R_b$ respectively. By \Cref{lem:dist-after-shatter}, there is a walk 
\[Y' = [R'_a = R_0, ..., R_\ell=R'_b]\]
in $H_{\cR'}$ where $\ell = \len{Y'} \le 8$ and all $R_i$ are in $S\subseteq \cR^*$ for $1\le i \le \ell-1$. We define $\EMPH{$Y_2$} = [R_1, \ldots, R_{\ell-1}]$ to be the interior of $Y'$. There must be a region $\EMPH{$\tilde{R}_a$}\in \cR^*$ (resp.\ $\EMPH{$\tilde{R}_b$}\in \cR^*$) that is a subregion of $R'_a$ (resp.\ $R'_b$) such that there is a walk 
\[\EMPH{$\tilde{Y}$} = [\tilde{R}_a, R_1, ..., R_{\ell-1}, \tilde{R}_b] = [\tilde{R}_a] \circ Y_2 \circ [\tilde{R}_b]\]
in $H_{\cR^*}$.

We now construct a walk \EMPH{$Y_a$} in $H_{\cR^*}$ between $R_a$ and $\tilde R_a$; we will define \EMPH{$Y_1$} to be the interior of $Y_a$.
By \Cref{obs:shatter-two-level}, either $R_a$ and $\tilde R_a$ are both subregions of $R$ created because of $R'_a$ being shattered in some recursive call $\textsc{SelectPaths}(H_a, V(S), \cR_a)$, or $R'_a$ was not subsequently shattered and $R_a = \tilde{R}_a = R'_a$. 
In the latter case, there is a trivial walk $Y_a = [R_a]$ between $R_a$ and $\tilde{R}_a$ in $H_{\cR^*}$. 
In the former case, \Cref{lem:dist-after-shatter}  implies that there is a walk $Y_a$ from $R_a$ to $\tilde{R}_a$ in the contact graph $H_{\cR_a}$ of length at most $8$. Moreover, $Y_a$ is also a walk  in $H_{\cR^*}$, the contact graph of the final shattering $\cR^*$: this is because the interior of $Y_a$ is in a supernode and the endpoints of $Y_a$ are $\set{R_a, \tilde R_a}$, so every region in $Y_a$ appears in $\cR^*$.

Similarly we can define \EMPH{$Y_b$} to be a walk in $H_{\cR^*}$ from $\tilde{R}_b$ to $R_b$ that has length at most $8$, and define \EMPH{$Y_3$} to be the interior of $Y_b$.
Define the walk $Y$ as 
\[Y = Y_a \circ \tilde Y \circ Y_b = [R_a] \circ Y_1 \circ [\tilde R_a] \circ Y_2 \circ [\tilde R_b] \circ Y_3 \circ [R_b].\]
For each walk $Y_1$, $Y_2$, $Y_3$,
it follows from \Cref{lem:dist-after-shatter} that 
the walk belongs to a single supernode and has length at most 6. The regions $\tilde R_a$ and $\tilde R_b$ are subregions of $R$ by construction. Overall, $Y$ has length at most $8+8+8 = 24$.
\end{proof}

\subsection{\textsc{ClusterOuter}: The expansion after \textsc{SelectPaths}}
\label{SS:expand}

The goal of this section is twofold. First, we want to \emph{expand} all the supernodes produced by \textsc{SelectPaths} by one hop; this will guarantee (by \Cref{lem:external-neighbor}) that all subregions of external regions are assigned to some expanded supernode).
Second, we want to \emph{cluster} the vertices by breaking apart every supernode into clusters of bounded $\cR$-diameter. For technical reasons, 
we will do both at the same time.

A \EMPH{cluster} is a connected subset of $V(H)$. We will build clusters by iteratively assigning vertices (of $H$) to clusters. 
Every vertex is initially \EMPH{unassigned}.  Repeatedly we will mark the vertices \EMPH{assigned} using the following procedure \EMPH{$\textsc{Assign}(v \rightsquigarrow C, X)$}.
Given some vertex $v$ and a cluster $C$, and a set of vertices $X \subseteq V(H)$ such that $H[X]$ is connected and $C \cap X \neq \varnothing$:
\begin{tcolorbox}[breakable]
    \ul{$\textsc{Assign}(v \rightsquigarrow C, X)$:}
    
    \textbf{Input:} vertex $v$ in $X$, cluster $C$, vertex set $X \subseteq V(H)$ such that $H[X]$ is connected and $C \cap X \neq \varnothing$\\
    
    Let $P$ be an arbitrary path in $H[X]$ between $v$ and an arbitrary vertex in $X \cap C$. Let $v'$ be the first assigned vertex that $P$ intersects (when viewing $P$ as starting at $v$ and ending at $C$), and let $P[v:v']$ be the set of vertices that $P$ intersects before $v'$. Let \EMPH{$C'$} be the cluster that $v'$ is assigned to. 
    Assign all the vertices in $P[v:v']$ to the cluster $C'$.
\end{tcolorbox}

With these definitions in place, we now describe the expand-and-cluster procedure \textsc{ClusterOuter}. At a high level, after we compute a set of supernodes $\cS$ from \textsc{SelectPaths}, we will first partition all regions $R$ that are within 1-hop of a supernode. These regions will be (arbitrarily) assigned to a set $S^+$ corresponding to some supernode $S\in \cS$ that neighbors $R$.
We wish to next find a clustering for all vertices in each of these regions.
Conveniently, for every pair of supernodes $S,S' \in \cS$, the set $V(S)$ is vertex disjoint from $V(S')$.
However, for a supernode $S\in \cS$ with parent supernode $S_a\in \cS$,  the vertex sets of their 1-hop neighborhood $V(S_a^+)$ and $V(S^+)$ may not be disjoint; that is, there could be a vertex $v\in V(S_a^+)\cap V(S^+)$, which can happen when a region $R_a\in S_a^+$ is adjacent to a region $R\in S^+$ (or we could even have $R_a=R)$. In this case, it is difficult to decide how to assign such a vertex to in a way that ensures that each cluster $C$ is both connected (in $H$) and also has a good diameter and scattering bound.
In order to resolve this issue, we carefully choose the order in which we call \textsc{Assign} on the vertices.

\begin{tcolorbox}[breakable]
\textul{$\textsc{ClusterOuter}(H, \cR)$}:

\textbf{Input:} plane graph $H$, set of regions $\cR$ whose support is $V(H)$. 

\textbf{Output:} a (partial) set of clusters $\cC$ of $H$

\begin{enumerate}
    \item \emph{Select paths via \textsc{SelectPaths}.}
    
    Compute $(\EMPH{$\cR^*$}, \EMPH{$\cS$}) \gets \textsc{SelectPaths}(H, \varnothing, \cR)$.

    \item \emph{Define the expansion $S^+$, and define balls in preparation for assignment.}

    For every supernode $S \in \cS$, initialize $\EMPH{$S^+$} \gets \cR^*[S]$ as the set of regions in $S$. Now iterate over the supernodes $S \in \cS$ in an arbitrary order: add to $S^+$ the set of all regions $R^* \in \cR^*$ such that $R^*$ is \emph{adjacent} to some region in $S$ (in the contact graph of $\cR^*$) and $R^*$ does not already belong to another set $S_{\rm old}^+$ for some other supernode $S_{\rm old} \in \cS$.

    Define a set of \EMPH{net points} $\EMPH{$\cP_S$} \subseteq S$ by walking along the spine $\pi$ of $S$, starting with the first region of $\pi$ and adding every 24th region along $\pi$ to $\cP_S$. (That is, two consecutive net points are separated by distance exactly 24.)

    For each net point $P \in \cP_S$, define the ball $\EMPH{$B_P$} \subseteq S$ to be the set of all regions $R$ in $S$ with $\dist_{S}(P, R)\le 11$. Define the ball $\EMPH{$B^{+}_P$} \subseteq S^+$ to be the set of all regions $R \in S^+$ with $\dist_{S^+}(P, R)\le 6$.
    
     \item \emph{Assign vertices in $S \in \cS$ to clusters.}

    Iterate over the supernodes $S \in \cS$ in an arbitrary order.
    \begin{enumerate}
        \item For every net point in $P \in \cP_S$, create a new cluster \EMPH{$C_P$}. Assign every vertex in $V(B_P^{})$ to the cluster $C_P$.
        \item While there exists some unassigned vertex $v \in V(S)$: let $P \in \cP_S$ be an arbitrary net point
        and call $\textsc{Assign}(v \rightsquigarrow C_P, V(S))$.
    \end{enumerate}

    \item \emph{Assign vertices in $S^+$ to clusters.}
    
    Iterate over the supernodes $S \in \cS$ respecting the natural partial order. For each $S$:
    \begin{enumerate}
        \item For every net point $P \in \cP_S$,
        while there exists an unassigned vertex $v \in V(B_P^{+})$, call $\textsc{Assign}(v \rightsquigarrow C_P, V(B_P^{+}))$.
        \item While there exists some unassigned vertex $v \in V(S^+)$: let $P \in \cP_S$ be an arbitrary net point
        and call $\textsc{Assign}(v \rightsquigarrow C_P, V(S^+))$.
    \end{enumerate}
    
    \item \emph{Return.}

    Return the set of clusters $\cC \gets \set{C_P}$.
\end{enumerate}
\end{tcolorbox}

\begin{figure}[th!]
    \centering
    \includegraphics[width=0.9\linewidth,page=1]{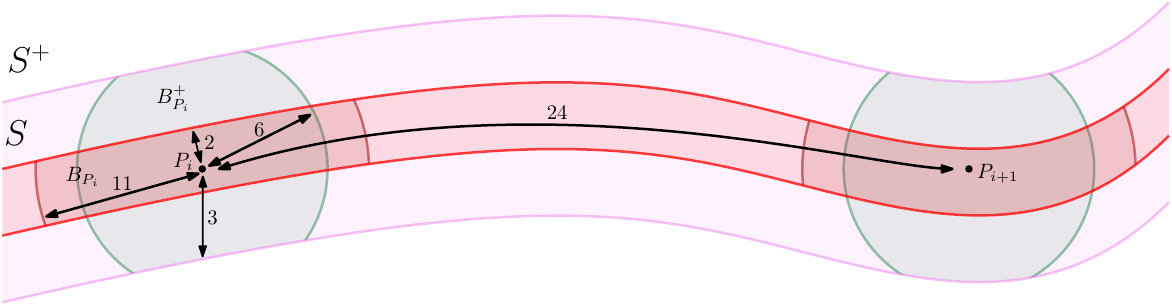}

    \caption{Schematic illustration of distances for a net point $P_i$ of a supernode $S$, its expanded region $S^+$ and the balls $B_{P_i}$ and $B_{P_i}^+$. Distances in the region intersection graph from $P_i$ is illustrated.}
    \label{fig:clustering-1}
\end{figure}

\begin{figure}[th!]
    \centering
    \includegraphics[width=0.9\linewidth,page=2]{figs/cluster_outer_diagram.pdf}
    \caption{Illustration of a possible clustering of $C_{P_i}$ around net point $P_i$, and the interaction with another net point $P'$ from a parent cluster $S_a$. Note that $C_{P'}$ does not contain any of $S$. $C_{P_i}$ contains $B_{P_i}$, but avoids $B_{P_{i+1}}$ and $B_{P_{i+1}}^+$.
    Regions of $S^+$ and $S_a^+$ can share vertices as can be seen in the zoom in the top right where the dark red region is a part of $S$, the light red region is a part of $S^+$, and the light blue regions are a part of $S^+_a$.
    It is unavoidable that we assign vertices of $V(S^+)$ to the net point $P'$ as vertices of $V(S^+)$ being disconnected by $C_{P'}$.
    }
    \label{fig:clustering-2}
\end{figure}

\bigskip
Our goal is to show that the clusters we have created satisfy the properties of \Cref{lem:gridtree}.
We begin with a sequence of simple observations about \textsc{ClusterOuter}.

\begin{observation}
\label{obs:clusterouter}
The following is true:
\begin{enumerate}
    \item[(a)] 
        Let $S \in \cS$, let $S_a \in \cS$ be the parent of $S$, and let $R \in S^+$ be an arbitrary region. Then $R \in \dom(S)$ (and thus $R\in \dom(S_a)$).
    \item[(b)]  
    For any two distinct net points $P$ and $P'$, the sets $V(B_{P}^{})$ and $V(B_{P'}^{})$ are disjoint. If $P$ and $P'$ are net points from the same supernode, then $V(B_{P}^{+})$ and $V(B_{P'}^{+})$ are disjoint.

    \item[(c)] For any net point $P$ on $S$, we have $V(S)\cap V(B^+_P) \subseteq V(B_P)$.
    
    \item[(d)] 
    After the execution of \textsc{ClusterOuter} every vertex in $V(S^+)$ for every supernode $S\in \cS$ is assigned to exactly one cluster.%
    \item[(e)]
    At all times during \textsc{ClusterOuter}, every cluster induces a connected subgraph of $H$.
\end{enumerate}
\end{observation}
\begin{proof}
\textit{Proof of (a).} If $R \in S$, we are done. Otherwise, $R$ is adjacent to $S$ and $R$ is not in any supernode of $\cS$, and the observation follows from \Cref{obs:adj-to-two}(b).

\textit{Proof of (b).}
We first prove $V(B_{P}^{}) \cap V(B_{P'}^{}) = \varnothing$. 
If $P$ and $P'$ belong to two different supernodes $S, S' \in \cS$, then the statement holds trivially because $V(S)$ and $V(S')$ are disjoint. Suppose that $P$ and $P'$ are both net points of the same supernode $S$.
Suppose for contradiction that there is some vertex $v \in R \cap R'$ where $R \in B_P^{}$ and $R' \in B_{P'}^{}$.
By triangle inequality and definition of $B^{}$, we have $\dist_S(P, P') \le \dist_S(P, R) + \dist_S(R, R') + \dist_{S}(R', P') \le 11 + 11 + 1 = 23$.
Now, let $\pi$ be the spine of $S$.
It follows from the definition of net points (and the fact that $\pi$ is a shortest path in the contact graph of $S$) that $\dist_S(P, P') \ge 24$, a contradiction.

Similarly, assume $P$ and $P'$ are net points of the same supernode $S$, we have $V(B_{P}^{+}) \cap V(B_{P'}^{+}) = \varnothing$. Triangle inequality and definition of $B^{+}$ implies $\dist_{S^+}(P, P') \le 6 + 6 + 1 = 13$. But $\pi$ is a shortest path in the contact graph of $S^+$ (because $S^+ \subseteq \dom(S)$ by (a), and $\pi$ is a shortest path in $H_{\dom(S)}$ by \Cref{clm:spine-and-radius}), so $\dist_{S^+}(P, P') \ge 24$, a contradiction.

\textit{Proof of (c).}
Let $v\in V(S) \cap V(B_P^{+})$. Then there must be some region $R\in S$ where $v\in R$. 
Let $\pi(R)$ denote an arbitrary region of $\pi$ of minimum distance (in $H_{S}$) to $R$. We have $\dist_S(R,\pi(R)) \le 2$ by definition of $S$.

Because $v \in V(B_P^+)$, there is some region $R^+\in B_P^+$ where $v\in R^+$ (where it is possible that $R=R^+$) and $\dist_{S^+}(P, R^{+}) \le 6$. 
Observe that this means that by the triangle inequality
\[
\dist_{S^+}(P, \pi(R)) \le \dist_{S^+}(P, R^+) + \dist_{S^+}(R^+, R) + \dist_{S^+}(R, \pi(R)) \le 6 + 1 + 2 = 9.
\]
We use the fact that $S\subseteq S^+$, so distances can only be shorter in $S^+$.
Note that $P$ and $\pi(R)$ are both on a shortest path $\pi$ in $\dom(S)$, and $S^+\subseteq\dom(S)$ by (a), so we must also have $\dist_{S}(P, \pi(R))\le 9$. Thus,
\[
\dist_{S}(P, R) \le \dist_{S}(P, \pi(R)) + \dist_{S}(\pi(R), R) \le 9 + 2 = 11
\]
so we conclude that $R\in B_P$ and $v\in V(B_{P})$.

\textit{Proof of (d).}
The claim holds after all clusters are initialized in step 3(a), because the sets $V(B_{P}^{})$ are disjoint, by (b). 
During the remaining steps, vertices are only assigned to clusters by the \textsc{Assign} procedure, which assigns all currently-unassigned vertices to a cluster. 

\textit{Proof of (e).}
Every cluster is initialized as a connected subset of $H$, because (by definition of contact graph) $H[V(B_{P})]$ is connected. The invariant is maintained later because vertices are only assigned via the \textsc{Assign} procedure: this procedure guarantees that whenever vertices are assigned to a cluster $C$, there is a path in $H$ between every newly-assigned vertex and a previously-assigned vertex in $C$. 
\end{proof}

We now prove the three properties needed in \Cref{lem:gridtree}.

\subsubsection{Outer regions are clustered}
\begin{lemma}
\label{lem:gridtree-outer-clustered}
    \textnormal{[Outer-clustered.]}
    For every outer region $R \in \cR$, every vertex of $R$ is assigned to exactly one cluster of $\cC$. Moreover, the subgraph $H[V(\cC)]$ is connected.
\end{lemma}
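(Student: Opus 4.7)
The plan is to unfold the two assertions in order: first that every vertex of every outer region lies in some $V(S^+)$, then that the algorithm actually assigns every vertex of every $V(S^+)$ to a cluster, and finally that the assigned vertices induce a connected subgraph of $H$.

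For the first step, I would fix an outer region $R\in\cR$ and $r\in R$ and invoke \Cref{lem:external-neighbor} (with the roles $H_0=H$, $\cR_0=\cR$) to obtain a region $R^*\in\cR^*$ containing $r$ with $\dist_{\cR^*}(R^*,\hat\cS)\le 1$, where $\hat\cS=\bigcup_{S\in\cS} S$. Either $R^*\in\hat\cS$, giving $r\in V(S)\subseteq V(S^+)$ for the supernode $S$ containing $R^*$, or $R^*$ is adjacent in $G_{\cR^*}$ to some region of some supernode, in which case the Step~2 absorption guarantees $R^*\in S^+$ for some $S$ (Step~2 only skips adding $R^*$ to $S^+$ when $R^*$ has already been absorbed into an earlier $S_{\rm old}^+$). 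Either way $r\in V(S^+)$ for some supernode $S$.

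For the second step, I need each of the while-loops in Steps~3(b), 4(a), 4(b) to terminate with its target vertex set fully assigned, which amounts to checking the preconditions of each \textsc{Assign} call. Connectivity in $H$ of the ambient sets $V(S)$, $V(B_P^{\rm big})$, and $V(S^+)$ follows from the fact that an edge of the contact graph $G_{\cR^*}$ corresponds either to shared vertices or to an $H$-edge between the two regions: any subset of $\cR^*$ whose contact subgraph is connected therefore has connected vertex support in $H$. The three region sets $S$, $B_P^{\rm big}$, $S^+$ are connected in their respective contact subgraphs by construction (from the spine $\pi$, the radius-$6$ ball definition, and the expansion rule of Step~2). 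Nonemptiness of $C_P\cap X$ for the relevant ambient $X$ follows from the seeding $V(B_P^{\rm small})\subseteq C_P$ in Step~3(a), combined with the inclusions $V(B_P^{\rm small})\subseteq V(S)\subseteq V(S^+)$ and $V(B_P^{\rm small})\subseteq V(B_P^{\rm big})$ (the latter because $S\subseteq S^+$ forces $\dist_{S^+}(P,\cdot)\le\dist_S(P,\cdot)$). Hence each while-loop exhausts its target, so every $r$ produced above is assigned.

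For the final step, observe $V(\cC)=\bigcup_{S\in\cS}V(S^+)$ by the previous paragraph; each $H[V(S^+)]$ is connected, and each contains $V(S)\subseteq V(\cS)$. Since $H[V(\cS)]$ is connected and contains an outer vertex of $H$ by \Cref{obs:outer-connected}, gluing the connected pieces $H[V(S^+)]$ along the connected spine $H[V(\cS)]$ gives connectivity of $H[V(\cC)]$. The main subtlety throughout is transferring connectivity back and forth between the contact graph $G_{\cR^*}$ and the base graph $H$, and tracking which $S^+$ absorbs each $\cR^*$-neighbor under the arbitrary iteration order of Step~2; once those bookkeeping points are in place the proof reduces to a short definition-chase.
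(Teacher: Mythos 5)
Your proof is correct and follows essentially the same route as the paper: invoke \Cref{lem:external-neighbor} to place each vertex of an outer region in some $V(S^+)$, observe that every such vertex gets assigned, and then glue along the connected set $V(\cS)$ using \Cref{obs:outer-connected}. Two small remarks: the paper glues the \emph{clusters} (each connected and touching $V(\hat\cS)$ via its seed $V(B_P^{\rm small})$) directly onto $H[V(\hat\cS)]$ rather than gluing the $V(S^+)$'s, which sidesteps the need to justify the containment $V(\cC)\subseteq\bigcup_S V(S^+)$ that your step three implicitly uses (it holds because \textsc{Assign} is always called with ambient set $X$ contained in some $V(S)$ or $V(S^+)$, but you should state it); and your more detailed verification of the \textsc{Assign} preconditions in step two is sound but is compressed in the paper to ``by construction.''
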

\begin{proof}
Let $v \in R$. Let $\cR^*$ be the set of shattered regions and let $\cS$ be the set of supernodes computed by $\textsc{SelectPaths}(H, \varnothing, \cR)$. Let $\hat \cS \coloneqq \bigcup \cS$ denote the set of regions in any supernode of $\cS$. By \Cref{lem:external-neighbor}, there is a region $R^* \in \cR^*$ that contains $v$, such that
$\dist_{\cR^*}(R^*, \hat \cS) \le 1$. We conclude that, for some supernode $S \in \cS$, region $R^*$ belongs to $S^+$. 
\Cref{obs:clusterouter}(d) shows that
all vertices in $V(S^+)$ for each $S\in \cS$ are clustered.
As every cluster in $\cC$ is connected in $H$ by \Cref{obs:clusterouter}(e), every cluster contains some vertex in $V(\hat \cS)$, and $H[V(\hat \cS)]$ is connected (by \Cref{obs:outer-connected}), we conclude that $H[V(\cC)]$ is connected.
\end{proof}

\subsubsection{Where a vertex can be assigned}
Before we analyze the diameter bound, it is helpful to examine which clusters a vertex may be assigned to.
We will focus on the spine $\pi$ of a supernode $S\in \cS$.
Recall that in Step 2 of \textsc{ClusterOuter}, we walked along $\pi$ to select net points $P_1, P_2, \dots P_k$. The walk along $\pi$ defines a total order \EMPH{$\preceq$} of all the regions in $\pi$, where $P_i \preceq P_{i+1}$ for any $i$.
For a region $R \in \cR^*$, we let $\EMPH{$\pi(R)$}\in S^+$ denote the closest region on $\pi$ to $R$ (in $H_{S^+}$), breaking ties arbitrarily.
For a vertex $v$, let $R_v$ be a region in $\cR^*$ containing $v$ that minimizes the distance (in $H_{S^+}$) from $R_v$ to $\pi(R_v)$; we define
$\EMPH{$\pi(v)$}\coloneqq \pi(R_v)$.
Observe the following:
\begin{observation}
\label{obs:dist-to-proj}
    Let $S$ be a supernode. For every $v \in V(S)$, there exists a region $R \in S$ with $\dist_S(R, \pi(v)) \le 2$. Likewise, for every $v \in V(S^+)$, there exists a region $R \in S^+$ with $\dist_{S^+}(R, \pi(v)) \le 3$.
\end{observation}
\begin{proof}
Suppose $v \in V(S)$. No region outside $S$ contains a vertex in $V(S)$, so the region $R_v$ in the definition of $\pi(v)$ must lie in $S$. By \Cref{clm:spine-and-radius}, $R_v$ is within distance $2$ of $\pi$ in $H_S$. Finally, observe that (by description of the \textsc{SelectPaths} algorithm), the spine $\pi$ is not adjacent to any region outside of $S$; thus, any path between $R_v$ and $S$ that wanders outside $S$ has length more than 2. We conclude that the shortest path between $R_v$ and $\pi$ lies within $S$ and has length at most 2, as desired.

Now suppose $v \in V(S^+)$. The observation follows from the fact that every region in $S^+$ is adjacent to a region in $S$. Thus $v$ is within distance 3 (in $H_{S^+}$) of some region in $\pi$.
\end{proof}

For notational convenience, we additionally use the notation that \EMPH{$X$}$\coloneqq V(\bigcup_{P\in \cP_S} B^{}_P)$, and let \EMPH{$X^+$}$\coloneqq V(\bigcup_{P\in \cP_S} B^{+}_P)$.
We begin with a simple lemma that shows that connected components of $V(S)\setminus X$ are sandwiched between two balls $B_P$.
\begin{lemma} \label{lem:sandwich}
Consider a supernode $S\in \cS$. Suppose two vertices $v_a$ and $v_b$ are in the same component of $H[V(S)\setminus X]$. For any $P\in \cP_S$, if $P \preceq \pi(v_a)$ then $P \preceq \pi(v_b)$. Similarly, if $\pi(v_a) \preceq P$ then $\pi(v_b) \preceq P$. The same is true of two vertices $v_a^+$ and $v_b^+$ that are in the same component of $H[V(S^+)\setminus X^+]$.
\end{lemma}
\begin{proof}
We first show that if $P \preceq \pi(v_a)$, then $P \preceq \pi(v_b)$. The claim that $\pi(v_a) \preceq P$ implies $\pi(v_b) \preceq P$ can be proven similarly.

Let
$\EMPH{$\Gamma$} = (v_a = v_0, v_1, \ldots, v_\ell = v_b)$
be a path between $v_a$ and $v_b$ in
$H[V(S) \setminus X]$.
We claim that, for all vertices $v_i \in \Gamma$, we have $P \preceq \pi(v_i)$. Suppose otherwise, and let $i \in [\ell]$ be the smallest index so that $\pi(v_i) \prec P$. As $P \preceq \pi(v_a) = \pi(v_0)$ by definition, we have $i > 0$, so in particular the vertex $v_{i-1}$ is well-defined and satisfies $P \preceq \pi(v_{i-1})$. Let $\alpha=2$ and we first claim that
\begin{equation}
\label{eq:proj-close}
    \dist_{S}(\pi(v_{i-1}), \pi(v_{i})) \le 2\alpha + 1.
\end{equation}
To this end, let $R_{v_i} \in S$ (resp. $R_{v_{i-1}} \in S$) be a region in $S$ containing $v_i$ that minimizes the distance between $R_{v_i}$ and $\pi(R_{v_i})$ (resp. between $R_{v_{i-1}}$ and $\pi(R_{v_{i-1}})$) in the contact graph $H_S$. By definition of contact graph, there is an edge between the regions $R_{v_{i-1}}$ and $R_{v_i}$. Finally, \Cref{obs:dist-to-proj} implies that
$\dist_{S}(\pi(v_{i-1}), R_{v_{i-1}}) \le \alpha$ and $\dist_{S}(\pi(v_i), R_{v_i}) \le \alpha$. \Cref{eq:proj-close} follows by triangle inequality.

Now, because $\pi(v_i) \preceq P \preceq \pi(v_{i-1})$ and the spine $\pi$ is a shortest path in the contact graph of $S$, it must be the case that $P$ lies on a shortest path between $\pi(v_j)$ and $\pi(v_{j-1})$. 
Thus, by \Cref{eq:proj-close}, we have $\dist_S(\pi(v_i), P) + \dist_S(P, \pi(v_{i-1})) = \dist_S(\pi(v_i), \pi(v_{i-1})) \le 2 \alpha + 1$. 
Suppose that $P$ is no farther from $\pi(v_i)$ than $\pi(v_{i-1})$ (the other case is symmetric if we swap $v_i$ and $v_{i-1}$), so that
\begin{equation}
\label{eq:P-close}
    \dist_{S}(P, \pi(v_i)) \le \alpha.
\end{equation}
But this leads to a contradiction. As $\dist_S(\pi(v_i), R_{v_i}) \le \alpha$, \Cref{eq:P-close} implies that $\dist_S(R_{v_i}, P) \le 2 \alpha$. As $\alpha = 2$, this distance is at most 4: thus  $R_{v_i} \in B_{P}$ and so $v_i \in X$, a contradiction.

For the last sentence of the lemma, we repeat the proof but with two vertices in $S^+$ instead of in $S$. The proof is similar, except that we take $\alpha = 3$; \Cref{obs:dist-to-proj} implies that \Cref{eq:proj-close} still holds. Carrying out the proof, \Cref{eq:P-close} implies that $\dist_{S^+}(R_{v_i}^*, P) \le 2 \alpha = 6$: thus $R_{v_i}^* \in B_{P}^{+}$ and so $v_i \in X^+$, a contradiction.
\end{proof}

Using this lemma, it is easy to show that vertices assigned in step 3 must be assigned to one of the nearby net points.
\begin{claim} \label{clm:assigned-locations-3}
Let $v\in V(S)$. Then $v$ is assigned to some net point of $S$ in step 3. In particular, if $i$ denotes the index where $P_i \preceq \pi(v)$ and $\pi(v) \preceq P_{i+1}$ (if $P_{i+1}$ exists), then $v$ is assigned to the cluster corresponding to either $P_i$ or $P_{i+1}$.
\end{claim}
\begin{proof}
It is immediate from the description of the algorithm that $v$ is assigned during step 3. If $v$ is assigned in step 3(a), then $v\in V(B_P)$ for some $P$. Clearly $P$ must be the closer of $P_i$ and $P_{i+1}$, since all other net points are farther.

If $v$ is assigned in step 3(b), then by \Cref{lem:sandwich} we have $v\in S\setminus X$, and all other vertices in the same component have $P_i \preceq \pi(v) \preceq P_{i+1}$. Thus as step 3(b) calls \textsc{Assign} restricted to $V(S)$, the path found by \textsc{Assign} must eventually hit a vertex of $B_{P_i}^{}$ or $B_{P_{i+1}}^{}$, which must have been assigned to a cluster corresponding to $P_i$ or $P_{i+1}$.
\end{proof}
On the other hand, vertices assigned in step 4 may be assigned to a parent supernode.

\begin{lemma} \label{lem:parent-only}
Let $S\in \cS$ be a supernode, and let $S_a$ be the parent supernode.
In step 4, each $v\in V(S^+)$ will either be assigned to a cluster with a net point of $S$ or be assigned to a cluster with a net point of $S_a$.
\end{lemma}
\begin{proof}
    Step 4 ensures each vertex $v \in V(S^+)$ is assigned. 
    First we argue that $v$ can only be assigned to a net point on $S$ or an ancestor supernode $S_{\rm up}$ of $S$. At a high level, this follows from the fact that step 4 processes supernodes in an order that respects the partial order defined by ancestor/descendent relationships. Formally, the proof is by induction --- suppose by induction that vertices in $V(S_{a}^+)$ are assigned to ancestors of $S$ (recall that $S_a$ is the parent supernode of $S$). We need a technical observation:
    \begin{equation}
    \label{eq:overlap}
        \text{For any $S_{\rm other} \in \cS$, if $v \in V(S^+) \cap V(S_{\rm other}^+)$, then $S_{\rm other}$ is either the parent or child of $S$.}
    \end{equation}
    To prove \eqref{eq:overlap}, first observe that $v \in V(S^+) \cap V(S_{\rm other}^+)$ implies $v$ is not in any supernode, so there exists a region $R \in \cR^*$ with $R \not \in \cS$ and $v \in V(R)$. Now observe that $v \in V(S^+)$ (resp. $v \in V(S_{\rm other}^+)$) implies that there is a path in $H$ between $v$ and $V(S)$ (resp. $v \in V(S_{\rm other})$) that only uses vertices in $V(S^+)$ (resp. $v \in V(S_{\rm other}^+)$). Thus \Cref{obs:adj-to-two}(b) implies that $S$ and $S_{\rm other}$ are in a parent-child relationship.
    
    We are now ready to show that $v$ is assigned to $S$ or an ancestor supernode $S_{\rm up}$. Consider the moment before step 4 processes supernode $S$. Let us consider what assignments vertices in $V(S^+)$ can have at this moment; we call these the \EMPH{old assignments}. At this moment, some vertices of $V(S^+)$ (namely, those in $V(S)$) are assigned to $S$ (because of step 3), and some vertices of $V(S^+)$ (namely, those in $V(S_a^+) \cap V(S^+)$ are already assigned to ancestors of $S$ (by induction hypothesis). No other vertices of $V(S^+)$ are assigned, due to \eqref{eq:overlap} and the fact that step 4 processes $S$ before the children of $S$.  From the description of the \textsc{Assign} procedure, when a vertex $v \in V(S^+)$ is assigned when step 4 processes $S$, then $v$ is assigned to some cluster with an old assignment. We conclude that every vertex in $V(S^+)$ is assigned to a net point of $S$ or an ancestor of $S$.

    To complete the proof, we show that $v$ cannot be assigned to an ancestor supernode of $S$ other than the parent $S_a$. Let $S_{\rm up} \in \cS$ with $S_{\rm up} \neq S_a$, and suppose for contradiction that $v$ was assigned to a cluster $C$ with a net point in $S_{\rm up}$. By \Cref{obs:clusterouter}(e), there is a path $\Gamma$ in $H$ from $v$ to a vertex of $V(S_{\rm up})$ that is contained in cluster $C$. By \Cref{obs:intersect-parent}, $\Gamma$ intersects $V(S_a)$. But every vertex in $V(S_a)$ was already assigned to a cluster with net point in $S_a$ during step 3. As clusters are disjoint, this is a contradiction.
\end{proof}

To summarize, a vertex $v \in V(S^+)$ is assigned to either $S$ or to the parent $S_a$. If $v$ was assigned in step 3, then \Cref{clm:assigned-locations-3} tells us specifically that $v$ is assigned to $P_i$ or $P_{i+1}$. In the next lemma, we develop a similar characterization for vertices assigned in step 4. If $v$ is assigned to $S$ in step 4, then we can also prove that $v$ is assigned to either $P_i$ or $P_{i+1}$. If $v$ is assigned to $S_a$, we don't have as nice a characterization of the net points that $v$ can be assigned to, but we nevertheless give a technical characterization.
\begin{claim} \label{clm:assigned-locations-4}
Let $v^+\in V(S^+)$ be a vertex assigned when step 4 processes $S$, with some index $i$ where $P_i \preceq \pi(v^+)$ and $\pi(v^+) \preceq P_{i+1}$ (if $P_{i+1}$ exists).
\begin{enumerate}
    \item[(a)] If $v^+$ is assigned to a net point of $S$, then it is assigned to either $P_i$ or $P_{i+1}$.
    \item[(b)] Otherwise, $v^+$ is assigned to some net point $P_a$ of $S_a$, where $S_a$ is the parent of $S$. In this case, there is a path $\Gamma$ in $H$, consisting of vertices assigned to $P_a$, from $v^+$ to some vertex $v_a^+ \in V(S^+_a)$. Moreover, if $v^+$ is assigned during step 4(a) when some net point $P$ is processed, then $v_a^+ \in B_P^+$. If $v^+$ is assigned during step 4(b), then either $v_a^+ \in B_P^+$ for some $P \in \set{P_i, P_{i+1}}$, or $v_a^+$ is in the same connected component of $V(S^+) \setminus X^+$ as $v^+$.
\end{enumerate}
\end{claim}
\begin{proof}
If $v^+ \in S^+$ is assigned in step 4(a), then $v^+\in V(B_P^{+})\setminus V(B_P)$ for some net point $P \in \{P_i, P_{i+1}\}$.
As in \Cref{lem:parent-only}, let us consider what assignments can exist in $B_P^+\setminus B_P$ before step 4(a) was run on supernode $S$. We call this the old assignments. We claim that every vertex in $B_P^+$ with an old assignment is either assigned to $P$ or to some net point in $S_a$, the parent of $S$. Indeed,
by \Cref{obs:clusterouter}(c), $B_P^+ \cap V(S) \subseteq B_P$, so $P$ is the only net point in $S$ that have old assignments in $B_P^+$.
Other than this, vertices in $B_P^+$ could have been assigned to net points of an ancestor supernode $S_a$, which \Cref{lem:parent-only} shows it must be the parent supernode of $S$. Observe that if a vertex has an old assignment to a net point of $S_a$, then that vertex must be in $V(S^+_a)$.
Thus, when \textsc{Assign} is called while constrained within $V(B_P^+)$, either $v^+$ is assigned to $P$, or there is a path in $B^+_P$ from $v^+$ to some vertex $v_a$ of $V(S^+_a) \cap V(B^+_P)$ with an old assignment to some net point $P_a$ of $S_a$ (and this path consists of vertices assigned to $P_a$). This proves the claim in the case that $v^+$ is assigned during step 4(a).

Now consider the case that $v^+ \in V(S^+)$ is assigned in step 4(b). In this case, $v^+\in V(S^+)\setminus X^+$. 
We now consider the old assignments in $V(S^+)\setminus X^+$ before step 4(b). These old assignments are to either a net point of $S$ (for vertices of $V(S)$), or to a net point of $S_a$ (for vertices of $V(S^+_a)$).
If $v^+$ is not assigned to a net point of $S_a$, then there must be a path in $H[V(S^+)]$ from $v^+$ to a net point $P$ on $S$ (where every vertex on the path is assigned to the same net point). However, by \Cref{lem:sandwich}, such a path must run into a vertex in $B^+_{P'}$ for $P'\in \{P_i, P_{i+1}\}$. Such a vertex, if it is assigned to a net point of $S$, must be assigned to $P'$ as we have just argued. So in this case, $v^+$ is assigned to some $P \in \set{P_i, P_{i+1}}$.
On the other hand, if $v^+$ is assigned to a net point of $S_a$, then it must have a path within $H[V(S^+)]$ to a vertex of $V(S_a^+)$ either entirely in $V(S^+)\setminus X^+$, or through a vertex of $B^+_{P'}$ assigned in step 4(a) that has a path to a vertex of $V(S^+_a) \cap B^+_{P'}$. In either case, there is a path $\Gamma$ in $H$ from $v^+$ to a vertex of $V(S^+_a)$ that is either in $B^+_{P'}$ or is in the same connected component of $V(S^+) \setminus X^+$ as $v^+$.
\end{proof}

\subsubsection{Diameter bound within a cluster}

We introduce some new terminology, to simplify the statements of our lemmas. Let $v \in V(H)$ be a vertex assigned to some cluster $C_P$, where $P$ is a net point in supernode $S$. We say that $v$ is \EMPH{$t$-close} to $P$ if every region $R \in \dom(S)$ that contains $v$ satisfies $\dist_{\dom(S)}(R, P)\le t$, and there is at least one region $R \in \dom(S)$ that contains $v$.

\begin{claim}
\label{clm:diam-3a}
At the end of step 3(a), every vertex $v\in V(S)$ assigned to a cluster $C_P$ is $12$-close to $P$, where $P$ is a net point of $S$.
\end{claim}
\begin{proof}
If $v$ is assigned to $C_P$ in step 3(a), then $v \in R$ for some region $R \in B_P^{} \subseteq \dom(S)$. By definition of $B_P$, we have $\dist_{\dom(S)}(R, P) = \dist_{S}(R,P) \le 11$. 
There may be other regions containing $v$, but all such regions are adjacent to $R$, so we conclude that $v$ is $12$-close to $P$.
\end{proof}

\begin{claim}\label{clm:same-supernode}
At the end of step 3(b), every vertex $v \in V(S)$ assigned to a cluster $C_P$ with $P$ a net point of $S$, and $v$ is $27$-close to $P$.
Moreover, if at any time a vertex $v^+\in V(S^+)$ is assigned to a cluster $C_P$ with $P$ a net point of $S$, then $v^+$ is $28$-close to $P$.
\end{claim}
\begin{proof}
First we consider a vertex $v\in V(S)$. This vertex is assigned during step 3.
Let $i$ be an index such that $P_i \preceq \pi(v)$ and $\pi(v) \preceq P_{i+1}$. (We assume for simplicity that $P_{i+1}$ exists. The case where $P_{i+1}$ does not exist is similar.)
By \Cref{clm:assigned-locations-3}, $v$ can only be assigned to either $P_i$ or $P_{i+1}$.
We will bound how close $v$ is to $P_i$; the same argument will also work for $P_{i+1}$. 

As $\pi$ is a shortest path in $\dom(S)$, we have that by definition of the net points
\[\dist_{\dom(S)}(\pi(v), P_i) \le \dist_{\dom(S)}(P_{i+1}, P_i) = 24.\]
Let $R^*_v$ be any region in $S$ such that $\dist_{\dom(S)}(R^*_v, \pi(v))\le \alpha$ where $\alpha=2$; such a region exists by \Cref{obs:dist-to-proj}. Any region $R_v$ containing $v$ is adjacent to the region $R^*_v\in S$, so $\dist_{\dom(S)}(R_v, \pi(v)) \le \dist_{\dom(S)}(R^*_v, \pi(v))+1 \le \alpha+1$. Thus by the triangle inequality we conclude that
\[ \dist_{\dom(S)}(R_v, P_i) \le \dist_{\dom(S)}(R_v, \pi(v)) + \dist_{\dom(S)}(\pi(v), P_i) \le 25+\alpha = 27.\]
Thus we conclude that $v$ is $27$-close to $P_i$ (and similarly $P_{i+1}$), so $v$ is $27$-close to whichever net point it is assigned to.

Next we consider a vertex $v^+\in V(S^+)$ assigned to a net point $P$ of $S$ during step 4.
By \Cref{lem:parent-only} and the fact that we process ancestor supernodes before their descendants, vertex $v^+$ must be assigned to $S$ when step 4 processes supernode $S$.
Suppose, as before, that there is some index $i$ where $P_i \preceq \pi(v^+) \preceq P_{i+1}$. (We assume for simplicity that $P_{i+1}$ exists.)
\Cref{clm:assigned-locations-4}(a) shows that $P\in \{P_i, P_{i+1}\}$.
By the same proof with $\alpha=3$, we can show that if $v^+\in V(S^+)$ is assigned to a net point $P$ of $S$, then $v$ is $28$-close to $P$.
\end{proof}

If a vertex $v^+\in V(S^+)$ is not assigned to a net point in $S$, then by \Cref{lem:parent-only}, it must be assigned to a net point $P$ of $S_a$ in step 4. We now consider the diameter bound for such vertices.

\begin{claim} \label{clm:dif-supernode}
Let $v^+\in V(S^+)$ be a vertex that is assigned to a net point $P_a$ of supernode $S_a$, the parent supernode $S$. Then $v^+$ is $63$-close to $P_a$.
\end{claim}

\begin{proof}
If $v^+ \in V(S_a^+)$, then the claim follows immediately from \Cref{clm:same-supernode}. So we consider the case where $v^+ \in V(S^+) \setminus V(S_a^+)$, meaning that $v^+$ is assigned when Step 4 processes supernode $S$.
By \Cref{clm:assigned-locations-4}(b), there is a path 
in $H$ from $v^+$ to a vertex $v^a\in V(S^+)\cap V(S^+_a)$, where every vertex in the path is assigned to $P_a$. We consider two cases based on whether $v^+$ is assigned to $P_a$ in step 4(a) or in step 4(b).

\textit{Case 1:} If $v^+$ is assigned in step 4(a), then \Cref{clm:assigned-locations-4}(b) implies there exists a net point $P$ on $S$ such that both $v^+$ and $v^a$ are in $B^+_P$.
In this case, both vertices are $7$-close to $P$ (as there exists regions containing $v$ and $v^a$ with distance $6$ to $P$). Moreover, $v^a$ is $28$-close to $P_a$, by \Cref{clm:same-supernode}. Thus we conclude that for any region $R^+\ni v^+$ and any region $R^a\ni v^a$, 
\[ \dist_{\dom(S_a)}(R^+,P_a) \le \dist_{\dom(S_a)}(R^+,P) + \dist_{\dom(S_a)}(P,R^a)  + \dist_{\dom(S_a)}(R^a,P_a) \le 7+7+28 = 42.  \]

\textit{Case 2:} If $v^+$ is assigned in step 4(b), then $v^+ \in V(S^+) \setminus X^+$ (recall that $X^+ = V(\bigcup_{P\in \cP_S} B^+_P)$).
Now consider the unique $i$ where $P_i\preceq \pi(v^+)$ and $\pi(v^+) \preceq P_{i+1}$ (if $P_{i+1}$ exists).
Consider any region $R^+\ni v^+$ and any region $R^a\ni v^a$.
By \Cref{clm:assigned-locations-4}(b), either there exists $P'\in \{P_i, P_{i+1}\}$ such that $v^a$ is in $V(B^+_{P'})$, or $v^a$ is in the same component of $V(S^+) \setminus X^+$ as $v^+$.
In the former case, observe that $v^a$ is $7$-close to $P'$ (because there is some region containing $v^a$ with distance 6 to $P$), and also $\dist_{\dom(S_a)}(R^+, \pi(v^+) \le 4$ (because there is some region in $S^+ \subseteq \dom(S_a)$ that contains $v^+$ and is within distance 3 of $\pi(v^+)$, by \Cref{obs:dist-to-proj}). Thus we have:
\[ \dist_{\dom(S_a)}(R^+,P_a) \le \dist_{S^+}(R^+,\pi(v^+)) + \dist_{\pi}(\pi(v^+), P') + \dist_{B^+_{P'}}(P',R^a)  + \dist_{\dom(S_a)}(R^a,P_a) \le 4+24+7+28 = 63.  \]
In the latter case, since $v^+$ and $v^a$ are in the same component of $V(S^+)\setminus X^+$, so $\dist_\pi(\pi(v^+), \pi(v^a))\le 24$. This means that 
\[\dist_{S^+}(R^+, R^a) \le \dist_{S^+}(R^+,\pi(v^+)) + \dist_{\pi}(\pi(v^+), \pi(v^a)) + \dist_{S^+}(\pi(v^a),R^a) \le 4 + 24 + 3 = 31.
\]
Thus we have that
\[ \dist_{\dom(S_a)}(R^+,P_a) \le  \dist_{S^+}(R^+, R^a) + \dist_{\dom(S_a)}(R^a,P_a) \le 31+28 = 59.  \]
In any case we conclude that $v^+$ is $63$-close to $P_a$.
\end{proof}

Claims~\ref{clm:same-supernode} and \ref{clm:dif-supernode} together yield the following corollary.
\begin{corollary}
\label{cor:close-bound}
    For every vertex $v \in V(H)$ that is assigned to a cluster $C_P \in \cC$ during \textsc{ClusterOuter}, $v$ is $63$-close to $P$.
\end{corollary}

Now we can bound the $\cR$-diameter of every cluster.

\begin{lemma}
\label{lem:gridtree-diam}
    \textnormal{[Diameter.]}
    For every $C_P \in \cC$, the cluster $C_P$ has $\cR$-diameter at most $126$.
\end{lemma}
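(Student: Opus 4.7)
The plan is to reduce the $\cR$-diameter bound to an $\cR^*$-diameter bound, which in turn will follow from the ``$t$-close'' bounds of \Cref{lem:diam-4} combined with the triangle inequality, and finally to transport the bound back to $\cR$ via \Cref{obs:shattered-diam}.

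Concretely, let $S_P \in \cS$ be the supernode whose spine contains $P$, and consider two arbitrary regions $R_1^*, R_2^* \in \cR^*$ intersecting $C_P$, together with witness vertices $v_1 \in R_1^* \cap C_P$ and $v_2 \in R_2^* \cap C_P$. I would first check that $v_1, v_2 \in V(H_{S_P})$, where $H_{S_P}$ denotes the subgraph passed to the call of \textsc{SelectPaths} that created $S_P$. Indeed, any vertex assigned to $C_P$ is assigned either during step~3 or step~4 while $S_P$ is being processed---landing in $V(S_P^+) \subseteq V(H_{S_P})$---or during step~4 while some child supernode $S_{\rm child}$ of $S_P$ is being processed, landing in $V(S_{\rm child}^+) \subseteq V(H_{S_{\rm child}}) \subseteq V(H_{S_P})$. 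Having $v_1, v_2 \in V(H_{S_P})$ together with the definition of $\dom$ gives $R_1^*, R_2^* \in \dom(S_P)$.

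Next, I would apply \Cref{lem:diam-4} (with $\alpha = 55$) to each of $v_1$ and $v_2$. In the worst case---when the vertex is assigned while a child of $S_P$ is being processed---each $v_i$ is $(30+\alpha) = 85$-close to $P$, which by definition gives
\[
\dist_{\dom(S_P)}(R_1^*, P) \le 85 \quad \text{and} \quad \dist_{\dom(S_P)}(R_2^*, P) \le 85.
\]
The triangle inequality yields $\dist_{\dom(S_P)}(R_1^*, R_2^*) \le 170$, and since the contact graph of $\dom(S_P)$ is the induced subgraph of the contact graph of $\cR^*$ on vertex set $\dom(S_P)$, we also obtain $\dist_{\cR^*}(R_1^*, R_2^*) \le 170$. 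Hence $C_P$ has $\cR^*$-diameter at most $170$.

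Finally, since $\cR^*$ is a shattering of $\cR$ whose support contains $V(C_P)$, \Cref{obs:shattered-diam} transports this bound to the original regions and yields $\cR$-diameter at most $170$ for $C_P$, as desired. I do not anticipate any substantive obstacle; the one step that deserves care is verifying the chain of containments placing $v_1, v_2$ in $V(H_{S_P})$, since this is precisely what allows the subregions $R_1^*, R_2^*$ to be treated as elements of $\dom(S_P)$ so that \Cref{lem:diam-4} applies as stated.
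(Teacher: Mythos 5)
Your proposal is correct and takes essentially the same approach as the paper's proof: reduce to an $\cR^*$-diameter bound, invoke \Cref{lem:diam-4} to conclude that each witness vertex is $85$-close to $P$, apply the triangle inequality to get $\cR^*$-diameter $\le 170$, and transport the bound back to $\cR$ via \Cref{obs:shattered-diam}. You are actually slightly more careful than the paper's terse two-line proof in explicitly verifying that $R_1^*, R_2^* \in \dom(S_P)$ (so that the ``$t$-close'' definition genuinely applies) and in passing from $\dist_{\dom(S_P)}$ to $\dist_{\cR^*}$ via the induced-subgraph observation, but these are details the paper implicitly assumes rather than genuine differences in strategy.
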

\begin{proof}
We prove that $C_P$ has $\cR^*$-diameter at most $126$; by \Cref{obs:shattered-diam}, this implies that $C_P$ has $\cR$-diameter at most $126$.
Let $R_a, R_b \in \cR^*$ be two regions that intersect $C_P$.
Let $r_a \in R_a \cap C_P$ and $r_b \in R_b \cap C_P$. By \Cref{cor:close-bound}, vertex $r_a$ is $63$-close to $P$: in particular, $\dist_{\cR^*}(R_a, P) \le 63$. Similarly, $\dist_{\cR^*}(R_b, P) \le 63$.
Thus $\dist_{\cR^*}(R_a, R_b) \le 63+ 63 = 126$, as desired.
\end{proof}

\subsubsection{Scattering bound between clusters}
The goal of this section is to prove the scattering bound, which we restate here.
\begin{lemma}
\label{lem:gridtree-scattering}
    \textnormal{[Scattering.]} Let $\beta = 57$ and $\gamma = 12$.
    For every region $R \in \cR$ and any two vertices $v_a, v_b \in R$,
    there is a path $\Pi$ in $G$ between $v_a$ and $v_b$ that has one of the following two forms: either $\Pi$ is the concatenation of at most $\beta$ paths
    \[\Pi = \Pi_1 \circ \Pi_2 \circ \ldots \circ \Pi_\beta\]
    where each subpath $\Pi_i$ is contained entirely in some cluster $C \in \cC$; or $\Pi$ is the concatenation of at most $2\gamma + 1$ paths
    \[
    \Pi = \Gamma_1 \circ \Pi_1 \circ \Gamma_2 \circ \Pi_2 \circ \ldots \circ \Pi_{\gamma} \circ \Gamma_{\gamma + 1}
    \]
    where each $\Pi_i$ is contained entirely in some cluster $C\in \cC$,
    and each $\Gamma_i$ consists of unassigned vertices in $R$.
    Moreover, if $R$ is an outer region, then there is always a path $\Pi$ of the first type (ie, comprising $\beta$ subpaths contained in clusters).
\end{lemma}

To begin with, we consider regions $R^* \in \cR^*$ (instead of $R \in \cR$) and prove an analogous statement for such regions.

\begin{lemma}
\label{lem:constant-candidates}
    Let $R^* \in \cR^*$. There is a set of $12$ clusters such that every vertex in $R^*$ either is unassigned or is assigned to one of these $12$ clusters.
\end{lemma}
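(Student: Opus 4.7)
The plan is to bound the number of clusters containing vertices of $R^*$ by identifying the bounded set of supernodes whose net points may serve as cluster labels for vertices of $R^*$, and then counting candidate net points in each such supernode using the $14$-spacing along the spine (which, by \Cref{clm:spine-and-radius}, is a shortest path in the appropriate $\dom(\cdot)$).

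I would split into two cases. In Case~1, $R^* \in S^*$ for some supernode $S^*$. Then $V(R^*) \subseteq V(S^*)$, and since $V(S')$ is disjoint from $V(S^*)$ for every other supernode $S'$ (by the recursive structure of \textsc{SelectPaths}), every vertex of $V(R^*)$ is unassigned at the start of Step~3 of $S^*$ and becomes assigned during that step to a cluster of $S^*$; Step~4 cannot create new cluster labels for already-assigned vertices. By \Cref{clm:diam-3b}, each $v \in V(R^*)$ is $25$-close to its assigned net point $P \in \cP_{S^*}$, so $\dist_{\dom(S^*)}(R^*, P) \le 25$; by triangle inequality together with the $14$-spacing of net points, at most $\lfloor 50/14 \rfloor + 1 = 4$ candidate net points satisfy this condition.

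In Case~2 ($R^* \notin \bigcup\cS$), \Cref{obs:adj-to-two} places $R^*$ in the dead branch $H_i$ of some supernode $S$ with parent $S_a$, with only $\{S, S_a\}$ adjacent to $R^*$ in the contact graph. The key technical step is to show that the set of supernodes $S'$ with $V((S')^+) \cap V(R^*) \ne \varnothing$ is contained in $\{S, S_a\}$. For this I would argue that any region $R^+ \in \cR^*$ containing some $v \in R^*$ must satisfy $V(R^+) \subseteq V(H_i)$: the region $R^+$ is connected, and the iterated shattering ensures that $V(R^+)$ cannot straddle $V(S) \cup V(S_a)$, so $R^+$ lies in the same dead branch as $R^*$. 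Then \Cref{obs:adj-to-two} applied to $R^+$ yields the same pair $\{S, S_a\}$ as the only supernodes it can touch, so $R^+ \in (S')^+$ only when $S' \in \{S, S_a\}$.

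With $\{S, S_a\}$ identified, \Cref{lem:diam-4} implies that each such $v$ is assigned to $C_P$ with $P \in \cP_{S'} \cup \cP_{S'_a}$, restricting the candidate cluster-supernodes to $\{S, S_a, (S_a)_a\}$. Using the uniform $85$-close bound from \Cref{lem:diam-4} (which subsumes the tighter $55$-close bound), each candidate cluster-supernode contributes at most $\lfloor 170/14 \rfloor + 1 = 13$ candidate net points within distance $85$ of $R^*$ in the corresponding domain. Combining Case~1 with a loose enumeration over the (supernode, parent) pairs arising in Case~2 yields the stated bound of $78$. The main obstacle is the structural claim in Case~2 --- that every region containing a vertex of $R^*$ lies in the same dead branch as $R^*$ --- which requires carefully following the effect of the shattering procedure on region supports through the recursion tree.
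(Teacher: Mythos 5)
Your proposal is correct and follows the same high-level skeleton as the paper's proof: split into the case $R^* \in \bigcup\cS$ and the case $R^* \notin \bigcup\cS$, identify a bounded set of candidate supernodes, then count net points within the relevant supernodes using the $14$-spacing along the spine. The interesting divergence is in how Case~2 pins down the candidate supernodes. The paper argues directly from the connectivity of $R^*$: letting $S$ be the lowest supernode in the partial order with $R^* \cap V(S^+) \ne \varnothing$, any other ancestor or incomparable $S_{\rm up}$ with $R^* \cap V(S_{\rm up}^+)\ne\varnothing$ would produce a path in $H[R^*]$ avoiding $V(S_a)$, which after extension contradicts \Cref{obs:intersect-parent}. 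You instead prove a stronger structural claim --- that every region $R^+ \in \cR^*$ containing a vertex of $R^*$ is confined to the same dead branch $H_i$ --- and then apply \Cref{obs:adj-to-two} separately to each such $R^+$. This claim, which you rightly flag as the delicate step, does hold: regions of $\cR^*$ are either contained in $V(S')$ for a single supernode $S'$ or disjoint from every $V(S')$ (this is a direct consequence of how \textsc{Shatter} splits along $V(S^{\rm init})$), and a dead branch $V(H_i)$ is a full connected component of $H_0$ minus all supernode vertices, so a connected region that touches $V(H_i)$ and avoids all supernodes lies entirely in $V(H_i)$. Your route is thus slightly longer to justify rigorously, whereas the paper's connectivity argument sidesteps any global claim about $\cR^*$. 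On the arithmetic, both of you reach candidate supernodes $\{S, S_a, S_{aa}\}$; your per-supernode count of at most $\lfloor 170/14\rfloor + 1 = 13$ net points (using the $85$-close bound from \Cref{lem:diam-4}) gives $3\cdot 13 = 39$, which is tighter than the $3\cdot 26 = 78$ the paper arrives at, so the stated bound of $78$ is comfortably met even though your final enumeration is a bit loose.
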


\begin{proof}

First suppose that $R^*$ is in some supernode, that is $R^* \in S$ for some $S \in \cS$.
In this case, every vertex of $R^*$ is assigned during step 3 to some net point of $S$, by \Cref{clm:assigned-locations-3}.
Suppose two vertices $u,v \in R^*$ are assigned to net points $P_u, P_v$ of $S$, respectively.
By \Cref{clm:same-supernode}, $v$ is $27$-close to $P_v$ and $u$ is $27$-close to $P_v$. By the triangle inequality, this means that $\dist_{S}(P_u, P_v) \le 27 + 1 + 27 = 55$. However the distance between $4$ consecutive net points is $3\cdot 24 = 72 > 55$, so vertices of $R^*$ can be assigned to at most $3$ different (consecutive) net points of $S$.

Now suppose $R^* \not \in \bigcup \cS$. 
Let $S, S_a \in \cS$ be the two supernodes given by \Cref{obs:adj-to-two}(b) that (intuitively) ``enclose'' the region $R^*$. We claim that each vertex in $R^*$ is either unassigned or is assigned to a net point of $S$ or $S_a$. Indeed, suppose for contradiction that some $v \in R^*$ is assigned to a net point $P$ in a supernode $S' \not \in \set{S, S_a}$. By \Cref{obs:clusterouter}(e), there is a path in $H$ between $v$ and a vertex in $P$ (thus, between $v$ and a vertex in $V(S')$) that consists of vertices assigned to $P$. By \Cref{obs:adj-to-two}(b), any such path intersects $V(S \cup S_a)$. But \Cref{clm:assigned-locations-3} says that every vertex of $V(S \cup S_a)$ is assigned to a net point of $S$ or $S_a$, not to a net point of $S'$, a contradiction.

This proves that every vertex of $R^*$ is assigned to a net point of $S$ or $S_a$. Moreover, one can show that all vertices of $R^*$ are assigned to a set of 6 consecutive net points on $S$ and 6 consecutive net points on $S_a$. The reasoning is similar to earlier: if vertices $u$ and $v$ in $R^*$ are assigned to net points $P_u$ and $P_v$ on supernode $S$ (or, symmetrically, on $S_a$), then \Cref{cor:close-bound} implies that $\dist_S(P_u,P_v) \le 63 + 1 + 63 \le 127$, meaning $P_u$ and $P_v$ are drawn from a set of 6 consecutive net points in $S$ (or, symmetrically, on $S_a$). In total, vertices of $R^*$ can be assigned to at most different $12$ clusters.
\end{proof}

For any region $R$, we say a path $\Gamma$ in $H$ is \EMPH{$R$-safe} if either there exists some cluster $C \in \cC$ such that $\Gamma$ contain only vertices in $C$, or $\Gamma$ contains only unassigned vertices in $R$.

\begin{lemma}
\label{lem:single-scattering}
    Let $R^* \in \cR^*$. 
    For any two vertices $v_s, v_t \in R^*$, there is a path $\Gamma$ in $H$ between $v_s$ and $v_t$ that can be written as the concatenation of at most $25$ subpaths which are all $R^*$-safe.
    Moreover, if $R^*$ is in $\bigcup \cS$, or if $R^*$ is adjacent in $H_\cR^*$ to a region in $\bigcup \cS$, then all vertices of $R^*$ are assigned, and $\Gamma$ can be written as the concatenation of at most $12$ subpaths, each of which is contained in a cluster.
\end{lemma}

\begin{proof}
Let $\Gamma'$ be an arbitrary path in $H[R^*]$ from $v_s$ to $v_t$. By \Cref{lem:constant-candidates}, vertices on $\Gamma'$ belong to at most $12$ clusters. However, we can't just set $\Gamma = \Gamma'$ because it may be the case that $\Gamma'$ wiggles back-and-forth between two clusters many times. To fix this issue, we construct $\Gamma$ as the concatenation of many subpaths $\Gamma_i$ by the following \EMPH{greedy chopping argument}.
\begin{quote}
    Initialize $v_1 \gets v_s$. For any $i \ge 1$, we maintain the invariant that $v_i$ is some vertex on $\Gamma'$. Define \EMPH{$\Gamma_i$} as follows.
    If $v_i$ is assigned to some cluster $C_{P_i}$, then let $v_i'$ be the \emph{last} vertex along $\Gamma'$ that is assigned to $C_{P_i}$, and define $\Gamma_i$ to be a path between $v_i$ and $v_i'$ in $H[C_{P_i}]$; such a path exists by \Cref{obs:clusterouter}(e). On the other hand, if $v_i$ is unassigned, then we define \EMPH{$\Gamma_i$} be the maximal subpath of $\Gamma'$ that starts at $v_i$ and contains only unassigned vertices.
    Define \EMPH{$v_{i+1}$} to be the vertex that follows the last vertex $\Gamma_i$ on $\Gamma'$; if $\Gamma_i$ ends at $v_t$, terminate the process.
\end{quote}
Clearly, each subpath $\Gamma_i$ either consists entirely of unassigned vertices in $R^*$ or consists entirely of vertices in a single cluster. The process terminates after $2 \cdot 12 +1 = 25$ subpaths are found (because the two subpaths at the ends might be both unassigned).

Moreover, if $R^*$ is adjacent to a supernode (or is in a supernode),
then Step 2 of \textsc{ClusterOuter} places $R^*$ in some expanded supernode $S^+$.  By \Cref{obs:clusterouter}(d), every vertex of $V(S^+)$ is assigned to a cluster, hence every vertex of $R^*$ is assigned.  
Thus, an arbitrary path $\Gamma'$ in $H[R^*]$ from $v_s$ to $v_t$ contains no unassigned vertices, and so the greedy chopping argument gives at most 12 subpaths.
\end{proof}

\Cref{lem:single-scattering} is essentially a scattering bound for regions $R^* \in \cR^*$. To generalize to regions $R \in \cR$, we need to cope with the fact that $R$ could get split into many subregions in $\cR^*$. We exploit the fact (\Cref{lem:select-paths-scattering}) says that any region $R \in \cR$ gets shattered into subregions of $\cR^*$ that are close to each other in $H_{\cR^*}$.

\begin{lemma}
\label{lem:supernode-scattering}
    Let $Y$ be a walk in $H_{\cR^*}$ with length at most 6, where every region in $Y$ is contained in supernode $S$. For any two vertices $v_s$ and $v_t$ in two regions of $Y$, there is a path in $H$ between $v_s$ and $v_t$ that is the concatenation of at most $3$ subpaths, each of which is contained in a cluster of $S$.
\end{lemma}

\begin{proof}

We first prove that there is a set of 3 clusters such that all vertices in regions of $Y$ are assigned to one of those three clusters. To this end,
let $v_1$ and $v_2$ be two arbitrary vertices in two regions of $Y$, say $R_1$ and $R_2$.
By \Cref{clm:same-supernode}, $v_1$ (resp. $v_2$) is assigned to some net point $P$ (resp. $P'$) of $S$, where $\dist_{\dom(S)}(R_1, P) \le 27$ (resp. $\dist_{\dom(S)}(R_2, P') \le 27$).
As $R_1$ and $R_2$ are both in $Y$, we have $\dist_S(R_1,R_2) \le 6$, and thus \[
    \dist_{\dom(S)}(P,P') \le 27+6+27=60.
\]
In other words, if any two vertices in regions of $Y$ are assigned to net points $P$ and $P'$, then $\dist_{\dom(S)}(P, P') \le 60$.
Since net points are spaced by $24$, a set of four net points would have endpoints at distance at least $3 \cdot 24 = 72$, impossible.  Hence vertices in regions of $Y$ are assigned to at most different three clusters. the greedy chopping argument from the proof of \Cref{lem:single-scattering} yields a path that comprises at most three subpaths, each of which is contained in one cluster.
\end{proof}

We are now ready to prove \Cref{lem:gridtree-scattering}.
\begin{proof}[of \Cref{lem:gridtree-scattering}]

Let $R \in \cR$ be an arbitrary region and let $v_a, v_b \in R$ be two vertices. We aim to construct a path $\Pi$ in $H$ between $v_a$ and $v_b$.
Let $R_a, R_b \in \cR^*$ be subregions of $R$ containing $v_a$ and $v_b$, respectively. 
We deal with two cases: either $R_a = R_b$, or $R_a \neq R_b$. These two cases will produce the two different forms of $\Pi$ in the statement of the lemma.

In the case when $R_a=R_b$,
by \Cref{lem:single-scattering} and the chopping argument within, 
there is a path $\Pi$ in $G$ between $v_a$ and $v_b$ that can be written as the concatenation of at most $25$ paths
\[
\Pi = \Gamma_1 \circ \Pi_1 \circ \Gamma_2 \circ \Pi_2 \circ \ldots \circ \Pi_{12} \circ \Gamma_{13}
\]
where each subpath $\Pi_i$ is contained entirely in some cluster $C\in \cC$,
and each $\Gamma_i$ consists of unassigned vertices in $R$.

In the case when $R_a \ne R_b$,
by \Cref{lem:select-paths-scattering}, 
there is a walk $Y$ between $R_a$ and $R_b$ in the contact graph of $\cR^*$ such that 
$Y = [R_a] \circ Y_1 \circ [\tilde R_a] \circ Y_2 \circ [\tilde R_b] \circ Y_3 \circ [R_b]$, where each subwalk $Y_1,Y_2,Y_3$ has length at most $6$ and belongs to a single supernode.
For each subwalk $Y_i \in \set{Y_1, Y_2, Y_3}$, \Cref{lem:supernode-scattering} implies that any two vertices in regions of $Y_i$ can be connected by a path in $H$ that uses at most 3 subpaths, each contained in a single cluster.
Further, each of the endpoint regions $R_i \in \set{R_a, \tilde R_a, \tilde R_b, R_b}$ is adjacent to a supernode, so \Cref{lem:single-scattering} implies that any two vertices in $R_i$ can be connected by a path in $H$ that requires at most 12 subpaths.  
By definition of contact graph, for every consecutive pair $[R_i] \circ Y_i$ in the walk $Y$, there exists some vertex in $R_i$ that is adjacent (in $H$) to some vertex in a region of $Y_i$.
Overall, there is a path in $H$ between $v_a$ and $v_b$ that uses at most
$4\cdot 12+3\cdot 3=57$ subpaths.

Finally, we consider the special case where $R \in \cR$ is an outer region: in this case, we need to prove there is a path $\Pi$ between $v_a$ and $v_b$ that comprises at most 57 subpaths, each of which is contained in a cluster. If $R_a \neq R_b$, then we are done. Otherwise, if $R_a = R_b$, we observe that \Cref{lem:external-neighbor} implies that $R_a$ is adjacent in $H_{\cR^*}$ to some region in a supernode. Thus, \Cref{lem:single-scattering} yields a path that can be written as the concatenation of at most $12 < 57$ subpaths, each of which is contained in a cluster. \qed

\end{proof}

\Cref{lem:gridtree} follows from \Cref{lem:gridtree-outer-clustered} (the [outer-clustered] property), \Cref{lem:gridtree-diam} (the [diameter] property), and \Cref{lem:gridtree-scattering} (the [scattering] property).

\section{\texorpdfstring{\boldmath{$(1+\e, +\poly\log n)$}-Distortion Planar Emulator}{(1+ε, +poly(log n))-Distortion Planar Emulator}}
\label{S:multToAdd}

In this section we prove the second main theorem, \Cref{thm:main2}.
In fact we prove a more general theorem, which applies to any graph with a $O(1)$-distortion planar emulator. Our \Cref{thm:main2} follows from combining \Cref{thm:multToAdd} with the planar emulator of \Cref{cor:small-emulator}.

\begin{theorem}
\label{thm:multToAdd}
Let $\mathcal{G}$ be a class of unweighted graphs that is closed under induced subgraphs, such that every graph $G \in \mathcal{G}$ has an $O(1)$-distortion planar emulator $H_{\rm init}$ that has non-negative integer edge weights and $V(H_{\rm init})=V(G)$.
For any $n$-vertex graph $G \in \mathcal{G}$ and any small constant $\e_0 > 0$,
    there is a weighted planar graph $H$ with $O(n)$ vertices and $V(G) \subseteq V(H)$, such that for every pair of vertices $(u,v)$ in $G$,
    \[
    \dist_G(u,v) \le \dist_H(u,v) \le (1+\e_0) \cdot \dist_G(u,v) + O(\e_0^{-4}\cdot \log^{17} n).
    \]
    Moreover, if one is given the $O(1)$-distortion planar emulator, then the graph $H$ can be computed in polynomial time.
\end{theorem}
We remark that $H$ is essentially a \emph{reweighting} of the $O(1)$-distortion planar emulator for $G$ (though we do add a few Steiner points to guarantee the non-contracting property.)

To prove our second main theorem, we present the following lemma that can be combined with the path-straightening technique of \cite{nguyen2025asymptotic} to achieve a $(1+\e_0, +\poly\log n)$-distortion embedding from string graphs to planar graphs.
\brokenpaths*

We prove \Cref{lem:string} in \Cref{SS:C-disjoint-paths}, and use \Cref{lem:string} to establish the existence of $(1+\e, +\poly\log n)$-distortion planar emulator and prove \Cref{thm:main2} in Section~\ref{SS:approx-reduction}.

\subsection{Shortest-path separators, portals, and canonical pairs} 
\label{SS:canonical-pairs}
\paragraph{Shortest path separators.} A powerful tool in planar graph algorithms is the existence of \EMPH{shortest path separator} \cite{lipton1979separator,thorup2004compact}: in any $n$-vertex planar graph $G$, there is a collection of 3 shortest paths whose removal breaks $G$ into connected pieces with at most $n/2$ vertices. Applying shortest path separator recursively yields a hierarchy of pieces, with height at most $O(\log n)$. It is clear that arbitrary string graphs do not have separators consisting of few shortest path: indeed, the complete graph $K_n$ is a string graph. Nevertheless, one could hope that string graphs have a separator which consists of an $O(1)$-neighborhood around a small number of shortest paths. Such a separator theorem has been obtained for unit-disk graphs \cite{YXD12,HZ24}. We use our $O(1)$-distortion planar emulator to generalize the result to arbitrary string graphs.

\begin{definition}
\label{def:separator}
Let $G$ be a graph. A \EMPH{$D$-neighborhood path separator hierarchy} of $G$ is a tree $\cT$ such that:
\begin{itemize}
    \item The nodes of $\cT$ are induced subgraphs \EMPH{$R$} of $G$ called \EMPH{pieces}%
    \footnote{Commonly, these subgraphs are called ``regions''. Unfortunately this choice of terminology conflicts with the regions discussed in the previous section, so we use the term ``piece'' instead. We commonly use the variable $R$ to denote a piece.}.
    Each piece is associated with 
    a set of vertices \EMPH{$S$}
    called an \EMPH{internal separator}.
    The set of internal separators of nodes that are proper ancestors of $R$ are called the \EMPH{external separators} of $R$. 
    \item The internal separator $S$ lies in the union of $R$ and all external separators of $R$, denoted as \EMPH{$R^+$}.
    \item The internal separator $S$ comprises a single path \EMPH{$\pi_S$} and \ul{the $D$-neighborhood} of $\pi_S$ in $R^+$. The path $\pi_S$ is a shortest path \ul{in $R^+$}.%
    \item The pieces of the children of $R$ 
    form a vertex-partition of $R \setminus S$.
    In particular, if $R \setminus S = \varnothing$, then $R$ is a leaf of $\cT$.
    \item Any path in $G$ that contains a vertex in $R$ and a vertex not in $R$ must also contain a vertex in an external separator of $R$.
\end{itemize}
The \EMPH{height} of the separator hierarchy is the height of $\cT$.
\end{definition}
We emphasize a few unusual properties of the definition, compared to standard planar separator hierarchy. First, each internal separator can contain a $D$-neighborhood of $\pi_S$, rather than just $\pi_S$.  Second, $\pi_S$ is a shortest path in $R^+$ rather than $R$, and in fact $\pi_S$ and its $D$-neighborhood might not even lie in $R$. Third, the pieces of the children of $R$ are not necessarily connected components of $R \setminus S$ (they can be disjoint union of components); this difference is nonessential but slightly simplifies the construction. 
Also note that an external separator might not contain any vertices incident to $R$.
Fortunately, these peculiarities do not affect the distortion analysis much;
we can handle the slight inconvenience in \Cref{obs:nearby-portal} and the proof of \Cref{lem:canonical-pairs}.

\begin{figure}[th]
    \centering
    \includegraphics[width=0.6\linewidth]{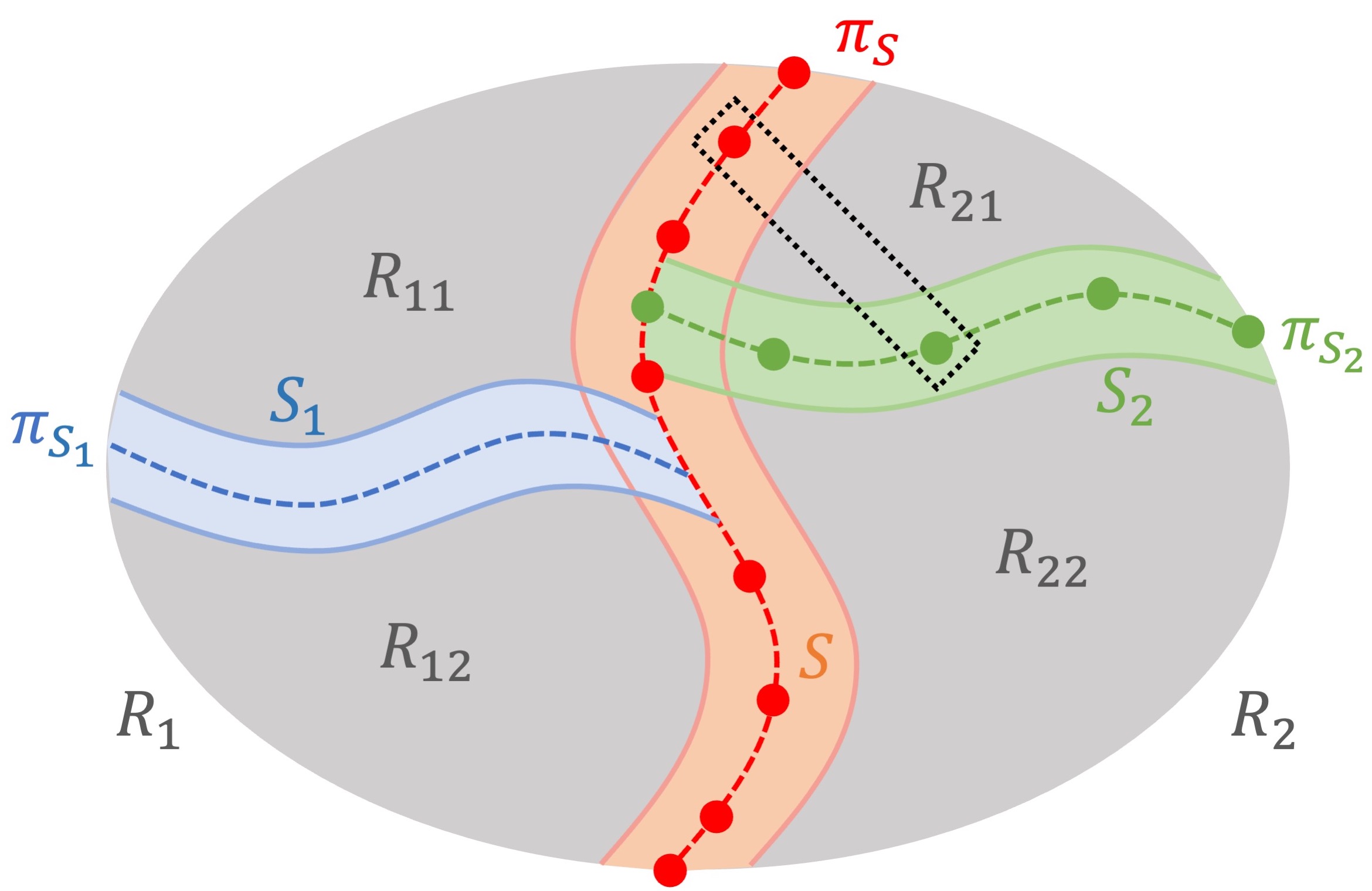}
    \caption{An illustration of a separator hierarchy: graph $G$ is separated by its internal separator $S$ into pieces $R_1$ (left) and $R_2$ (right); $R_1$ is further separated by $S_1$ into pieces $R_{11}$ and $R_{12}$; and $R_2$ is further separated by $S_2$ into pieces $R_{21}$ and $R_{22}$. A canonical pair of piece $R_{21}$ is shown in the dashed black box.}
    \label{fig: separator_hierarchy}
\end{figure}

We claim that any string graph $G$ has an $O(1)$-neighborhood path separator hierarchy; a proof can be found in Section~\ref{SSS:separator-hierarchy}.
The proof crucially relies on using \emph{path aggregation} of \cite{hathcock2025steiner} to pull back separators in the planar emulator to separators in $G$.

\begin{restatable}{lemma}{hierarchy}
\label{lem:separator-hierarchy}
Let $\mathcal{G}$ be a class of unweighted graphs that is closed under induced subgraphs, such that every graph $G \in \mathcal{G}$ has a $C$-distortion planar emulator $H$ with $V(H)=V(G)$.
    For any $n$-vertex graph $G \in \mathcal{G}$, there is a $2C$-neighborhood separator hierarchy of $G$ with height $O(\log^2 n)$.
\end{restatable}

\paragraph{Portals.}
Given a $2C$-neighborhood separator hierarchy, we define \emph{portals} along each internal separator path $\pi_S$, similar to \cite{thorup2004compact} in planar graphs.
We fix some $\EMPH{$\e$} \in (0,1)$, specified later.
Let $R$ be a piece, and let $S$ be the internal separator of $R$, and let $\pi_S$ be the path associated with $S$.  Recall that $\pi_S$ lies in $R^+$ but might not be in $R$.
For every $i \ge 0$, we now define \EMPH{scale-$i$ portals in $R$} --- for short, the \EMPH{$(i, R)$-portals}. For $i = 0$, we define the $(0, R)$-portals to be the set of all vertices on $\pi_S$. For $i > 0$, we construct the $(i, R)$-portals to be a set \EMPH{$\cP_i$} which is a subset of the $(i-1, R)$-portals $\cP_{i-1}$, using a greedy approach:
initialize $\cP_i \gets \varnothing$, then walk along the path $\pi_S$ and add each vertex $v \in \cP_{i-1}$ to $\cP_i$ if 
$v$ is at distance greater than $\frac{\e}{2} \cdot 2^i$ from the current $\cP_i$.
Observe that one endpoint of $\pi_S$ (the ``starting'' endpoint) is in $\cP_i$; for ease of notation later, add the other endpoint of $\pi_S$ (the ``final'' endpoint) to $\cP_i$.
The portals  $\cP_i$ satisfy:
\begin{enumerate}
    \item[(1)] every vertex $s$ in $\cP_{i-1}$ has $\dist_{\pi_S}(s, \cP_i) \le \frac{\e}{2} \cdot 2^i$, and
    \item[(2)] for every pair of  distinct vertices $p_1, p_2 \in \cP_i$ (excluding the case where $p_1$ or $p_2$ is the final endpoint of $\pi_S$), we have $\dist_{\pi_S}(p_1, p_2) > \frac{\e}{2} \cdot 2^i$.
\end{enumerate}
The \EMPH{scale-$i$ portals} refer to the set of all $(i,R)$-portals over all pieces $R$ in $\cT$.
We remark that portals at scale $i$ will be responsible for preserving distances around $2^i$, up to a $(1+\e)$ factor. As such, they are separated by distance roughly $\e \cdot 2^i$ (not by distance $2^i$).

\paragraph{Canonical pairs and paths.}
The notion of \emph{canonical pair} was introduced by \cite{chang2025distance} to provide a $O(n\poly\log n)$-size set of shortest paths to approximate distances in a planar graph; somewhat similar ideas also appeared in previous work \cite{abraham2012fully}. A \EMPH{canonical pair at scale $i$ in piece $R$}, also called an \EMPH{$(i, R)$-pair} for short, is a pair of vertices denoted \EMPH{$\canon a b$}
where (1) $a$ is a scale-$i$ portal on the internal separator path of $R$, and $b$ is a scale-$i$ portal on some external or internal separator path $\pi_S$ of $R$; and (2) $\dist_{R^+}(a,b) \le 2^i$.
We define $(i,R)$-pairs for every scale $i$ in the set $\set{0, 1, \ldots, \lceil \log n \rceil}$\footnote{recall that we focus on unweighted graphs, so the minimum distance in $G$ is $1 = 2^0$ and the maximum distance is at most $n-1 \le 2^{\lceil \log n \rceil}$} and every piece $R$ in the separator hierarchy.

We define a set of paths \EMPH{$\Pi$}, called the \EMPH{canonical paths}, as follows.
For every $(i,R)$-pair $\canon a b$, let \EMPH{$\pi$} be a shortest path in $R^+$ between $a$ and $b$, and add $\pi$ to $\Pi$.
Note that there may be multiple shortest paths between $a$ and $b$, and we may arbitrarily choose one such path as $\pi$, except in the following case: if $a$ and $b$ both lie on the internal separator path $\pi_S$ of $R$, then we must choose $\pi$ to be the subpath $\pi_S[a:b]$ (by definition of separator hierarchy, $\pi_S$ is indeed a shortest path between $a$ and $b$).\footnote{We mention a small subtlety. Our definition of $\pi$ depends on the piece $R$ of the canonical pair $\canon a b$. A priori, it may be the case that the pair of vertices $(a,b)$ is also an $(i',R')$-pair for some other scale $i' \neq i$ and possibly different $R' \neq R$. We treat the $(i,R)$-pair $\canon a b$ as different from the $(i', R')$-pair $\canon a b$; that is, we add one path $\pi$ to $\Pi$ when treating $(a,b)$ as an $(i,R)$-pair and another path $\pi'$ to $\Pi$ when treating $(a,b)$ as an $(i', R')$-pair.}
We call $\pi$ an \EMPH{$(i,R)$-path}.
We let \EMPH{$\Pi$} be the set of paths $\cpath a b$ across all canonical pairs $\canon a b$.

\begin{lemma}[Canonical Pairs Lemma]
\label{lem:canonical-pairs}
   Let $G$ be a graph with an $2C$-neighborhood separator hierarchy. For any two vertices $u,v \in V(G)$,
   there is a path $P$ in $G$ between $u$ and $v$ with $\len{P} \le (1+ O(\log n) \cdot \e) \cdot \dist_G(u,v) + O(C \cdot \log n)$%
   , such that $P$ can be written as the concatenation of $O(\log n)$ subpaths: 
   two of these subpaths have length at most $O(\e \cdot \dist_G(u,v) + C)$, and each of the other subpaths $P_j$ is a shortest path in $R_j^+$ between some $(i,R_j)$-pair with $2^i = \Theta(\dist_G(u,v))$.
\end{lemma}
The proof is deferred to \Cref{S:canonical-pairs}.
It is essentially a simplified version of \cite[Claim 3.2]{chang2025distance}, but with $2C$-neighborhood separator hierarchy instead of normal separator hierarchy.

\subsection{Constructing $\Theta(C)$-disjoint broken paths}
\label{S:broken-paths}

This section begins the proof of \Cref{lem:string}. Let $G$ be a graph with a $C$-distortion planar emulator, and let \EMPH{$\e_0$} be a parameter in $(0,1)$. We construct a set of paths \EMPH{$\cO$} in graph $G$, none of which are within distance $2C$ of each other.
In the next section, we prove that these paths $\cO$ satisfy the [low-hop] property of \Cref{lem:string}.

We begin by setting $\EMPH{$\e$} \coloneqq \frac{\e_0}{\Theta(\log^3 n)}$. 
By \Cref{lem:separator-hierarchy}, there is a $2C$-neighborhood separator hierarchy for $G$. Define the portals (with respect to $\e = \frac{\e_0}{\Theta(\log^3 n)}$), canonical pairs, and canonical paths, as in the previous section. Recall that \EMPH{$\Pi$} denotes the set of all canonical paths.

\begin{definition}
    Define a \EMPH{partial order $\preceq_*$} on paths in $\Pi$ as follows. 
    If $\pi \in \Pi$ is an $(i, R)$-path, and $\pi'\in\Pi$ is a $(i',R')$-path, we say $\pi \preceq_* \pi'$ if either (1) $R'$ is a proper ancestor of $R$ in the separator hierarchy, or (2) $R = R'$ and $i \le i'$.  
    In short, the partial order $\preceq_*$ is a lexicographical order based on piece containment followed by scale.
    
    Fix an arbitrary \EMPH{total order $\preceq$} on paths in $\Pi$ respecting the partial order $\preceq_*$.  We use $\prec$ to denote the strict total order.
\end{definition}

\paragraph{Construction of \boldmath{$\cO$}.} 
The construction of the path set $\cO$ is simple. Refer to \Cref{fig:erased_paths} for an illustration.

\begin{tcolorbox}%
\textul{$\textsc{ErasedPath}$}: \\
For every $(i,R)$-path $\pi \in \Pi$, define 
\EMPH{$\mathit{Expand}(\pi, 4C)$} to be the $4C$-neighborhood of $\pi$ \ul{in $R^+$}.\\
For every $\pi \in \Pi$, define the \EMPH{erased set} to be a subset of the vertices on $\pi$:
\[\textrm{Er}_\pi \coloneqq \pi \cap \bigcup_{\pi' \in \Pi \text{ with } \pi \prec \pi'} \mathit{Expand}(\pi', 4C) \]
and define the \EMPH{enlarged erased set} to be:
\[\textrm{EEr}_\pi \coloneqq \set{v \in \pi: \dist_\pi(v, \textrm{Er}_\pi) \le C}.\]
We add to \EMPH{$\cO$} the set of subpaths in $\pi \setminus \textrm{EEr}_\pi$.
(Note that $\pi$ could be shattered into arbitrarily many subpaths.)
\end{tcolorbox}

Note the subtlety that the enlarged erased set is defined using distances along the path $\pi$.
One way to visualize the construction is to draw the paths $\cpath a b$ down one by one according to the ordering $\preceq$, starting from the smallest path with respect to $\preceq$.
Before we draw the next path $\cpath a b$, we first ``erase'' any drawn section of a path intersecting the $C$-neighborhood \ul{in $R^+$} around $\cpath a b$ (where $R$ is the piece of $\cpath a b$).
Rinse and repeat until all paths in $\Pi$ are processed.  
At this point, parts of the originally-drawn shortest path 
have been erased, and the original path has turned into a collection of disjoint subpaths. 
Finally, enlarge all the ``erased'' segments of a path by expanding them a distance $C$ in either direction along that path. The resulting collection of disjoint subpaths is $\cO$.

\begin{lemma}
\label{lem:shortest-nearby}
    Let $P \in \cO$ be a path, and let $\pi \in \Pi$ be the $(i,R)$-path such that $P$ was added to $\cO$ because $P$ was a subpath of $\pi$. Then every vertex in $\cN_G(P, 2C)$ lies in $R^+$.
\end{lemma}
\begin{proof}
Suppose otherwise. Let $u$ be a vertex not in $R^+$ with $\dist_G(P,u) \le 2C$, and let $\Gamma$ be the shortest path in $G$ between $P$ and $u$.
Let $R_\Gamma$ be the highest piece
whose internal separator $S$ (but not necessarily $\pi_S$) contains a vertex in $\Gamma$.
Because $u$ is not in $R^+$, the definition of separator hierarchy implies that $\Gamma$ intersects some external separator of $R$; that is, $R_\Gamma$ is a proper ancestor of $R$ in the separator hierarchy.
Observe that the separator path $\pi_S$ is a path in $\Pi$ (indeed, by definition the endpoints of $\pi_S$ are portals at the maximum scale).
We conclude that $\pi \prec \pi_S$. We now show that $\dist_{R_\Gamma^+}(\pi_S, P) \le 4C$, which is a contradiction to the fact that $P$ is in $\cO$: some vertex of $P$ should have been erased by $\textit{Expand}(\pi_S, 4C)$ during the construction of $\cO$.
To this end, let choose an arbitrary vertex $v \in S \cap \Gamma$. Observe that $\Gamma$ is contained in $R_\Gamma^+$ (as otherwise $\Gamma$ would intersect some external separator of $R_\Gamma$, contradicting the choice of $R_\Gamma$), and so $\dist_{R_\Gamma^+}(v,P) \le \len \Gamma$. As $v$ is within distance $2C$ of $\pi_S$ in $R_\Gamma^+$ (by definition of $2C$-neighborhood separator hierarchy), we conclude:
\(
\dist_{R_\Gamma^+}(\pi_S, P) \le \dist_{R_\Gamma^+}(\pi_S, v) + \dist_{R_\Gamma^+}(v, P) \le 2C + \len \Gamma \le 4C.
\)
\end{proof}

The following corollary follows from the fact that $\pi$ (and hence $P$) is a shortest path in $R^+$.
\begin{corollary}
\label{cor:cO-shortest}
    For any path $P \in \cO$, we have that $P$ is a shortest path in the subgraph of $G$ induced by $\cN_G(P, 2C)$.
\end{corollary}

Finally, we prove $O(C)$-disjointness.
\begin{lemma}
\label{lem:C-disjoint-in-G}
For any two distinct paths $P_1, P_2 \in \cO$, we have $\dist_G(P_1,P_2) > 2C$.
\end{lemma}
\begin{proof}
By construction, $P_1$ (resp.\ $P_2$) is a subpath of some path $\pi_1 \in \Pi$ (resp.\ $\pi_2 \in \Pi$). Let $\Gamma$ be a shortest path in $G$ between $P_1$ and $P_2$, and we suppose for the sake of contradiction that $\len \Gamma \le 2C$. Suppose that $\pi_1$ (resp.\ $\pi_2$) is an $(i_1, R_1)$-path (resp.\ $(i_2, R_2)$-path).
There are two cases.

As the first case, suppose that $\pi_1 = \pi_2$. In this case, $2C$-disjointness will follow from the fact that we \emph{enlarged} the initial ``erased set''. For simplicity, let $\pi$ denote $\pi_1 = \pi_2$, and let $R$ denote $R_1 = R_2$. 
Because $P_1$ and $P_2$ are distinct, there is some vertex $v \in \mathrm{EEr}_\pi$ that lies between $P_1$ and $P_2$ along $\pi$. By definition of $\mathrm{EEr}_\pi$, there exists some $u \in \mathrm{Er}_\pi$ with $\dist_\pi(u,v) \le C$. But observe that \emph{every} vertex within distance $C$ (along $\pi$) from $u$ belongs to $\mathrm{EEr}_\pi$, and so $v$ belongs to a subpath of erased vertices with length at least $2C$. As the vertices of $P_1$ and $P_2$ are not erased, we conclude that every vertex in $P_1$ is at distance greater than $2C$ (in $\pi$) from every vertex in $P_2$. Because $\pi$ is a shortest path in $R^+$, we have $\dist_{R^+}(P_1,P_2) > 2C$, a contradiction.

As the second case, suppose that $\pi_1 \neq \pi_2$. In this case, $2C$-disjointness will follow from the initial ``erasing'' procedure.
Without loss of generality assume $\pi_1 \prec \pi_2$ in the total order on $\Pi$.
As $\Gamma$ has length at most $2C$, \Cref{lem:shortest-nearby} implies that $\Gamma$ is contained in $R_2^+$.
By construction, the subpaths of $\pi_1$ added to $\cO$ do not include any vertices in $\mathit{Expand}(\pi_2, 4C)$, ie does not include any vertices within distance $4C$ of $\pi_2$ in $R_2^+$. So $\len \Gamma > 4C$, a contradiction.
\end{proof}

The first two properties of \Cref{lem:string} follow from \Cref{cor:cO-shortest} and \Cref{lem:C-disjoint-in-G}.

\subsection{Detour analysis}
We now argue that, for any pair of vertices $(a,b)$ in $G$, there is
a ``broken'' path $P(a,b) =  Q_1 \circ P_1 \circ Q_2 \circ \ldots \circ Q_\ell$ satisfying the [low-hop] property of \Cref{lem:string}. To begin with, we focus on the case when $(a,b)$ is a canonical pair. In \Cref{SS:C-disjoint-paths}, we show how the Canonical Pairs Lemma lets us reduce the case that $(a,b)$ is arbitrary to the case that $(a,b)$ is a canonical pair.

Let \EMPH{$\canon a b$} be an arbitrary \EMPH{$(i_0,R_0)$}-pair.
We assume that $i_0$ is large enough that $2^{i_0} \ge \Omega(C \e^{-4} \log^5 n)$, for some sufficiently large hidden constant set later (determined in \Cref{lem:self-referential-threat}). 
Also recall that we defined $\e = \frac{\e_0}{\Theta(\log^3 n)}$
We show below that the procedure \EMPH{$\textsc{DetourPath}(\canon a b)$} returns a ``broken'' path $P(a,b)$ between $a$ and $b$
satisfying the properties of \Cref{lem:string}.
We emphasize that the procedure we are about to describe is \emph{not} in the algorithm we run to construct the collection of paths $\cO$; the construction of $P(a,b)$ happens only in the analysis.

\begin{definition}
\label{def:broken}
    A \EMPH{broken path} is a directed path $P$ together with a sequence of \EMPH{active subpaths} denoted \EMPH{$\act P$}. We treat each active subpath as a sequence of edges. The \EMPH{inactive subpaths} are the maximal subpaths of $P$ that are (edge-)disjoint from the active subpaths. An \EMPH{active vertex} is a vertex incident to an edge of an active subpath.
    For any subpath $P[x : y]$ of $P$, we define the active subpaths of $P[x : y]$ to be the (nonempty) active paths of $P$ restricted to $P[x:y]$, ie.,
    \[\EMPH{$\act {P[x : y]}$} \coloneqq \set{A \cap P[x : y] \mid A \in \act P \text{ and } A \cap P[x : y] \text{ contains at least 1 edge}}.\]
\end{definition}

We first describe the helper procedure \textsc{DetourAlong}, which modifies a broken path $P$ by ``detouring'' it along another path $\pi$. The procedure takes as input (1) a broken path $P$, (2) a path $\pi$ that we want to detour $P$ along, (3) a parameter $\Delta \ge 1$ called the detour threshold, and (4) a piece $R$ such that $P$ is a path in $R^+$ and $\pi$ is a shortest path in $R^+$. The procedure returns a new broken path (tagged with a new set of active subpaths). The pseudocode is in \Cref{fig:detour-along}, and an illustration is in \Cref{fig:detour-along-example}.

\begin{figure}[h!]
\small
\centering
\begin{algorithm}
\textul{$\textsc{DetourAlong}(P, \pi, \Delta, R)$}: \+
\\  \textbf{Input:} 
Broken path $P$ with endpoints $a$ and $b$ and active subpaths $\act P$;
path $\pi$ to detour $P$ along;\\
detour threshold $\Delta \ge 1$;
piece $R$ such that $P$ is a path in $R^+$ and $\pi$ is a shortest path in $R^+$
\\~
\\  $s \gets$ first vertex along $P$ such that $\dist_{R^+}(s, \pi) \le \Delta + 1$ \Comment{``first'' when orienting $P$ to start at $a$ and end at $b$}
\\  $s' \gets$ a vertex in $\pi$ that minimizes $\dist_{R^+}(s, s')$
\\  $t \gets$ last vertex along $P$ such that $\dist_{R^+}(t, \pi) \le \Delta + 1$
\\  $t' \gets$ a vertex in $\pi$ that minimizes $\dist_{R^+}(t, t')$
\\  if $s$ and $t$ exist: \Comment{If one exists so does the other; $s$ might be equal to $t$}\+
\\  $\textrm{spath}_{R^+}(s,s') \gets$ shortest path in $R^+$ between $s$ and $s'$
\\  $\textrm{spath}_{R^+}(t',t) \gets$ shortest path in $R^+$ between $t'$ and $t$
\\  $P \gets P[a:s] \circ \textrm{spath}_{R^+}(s,s') \circ \pi[s':t'] \circ \textrm{spath}_{R^+}(t',t) \circ P[t:b]$
\\  update $\act{P} \gets \act{P[a:s]} \cup \set{\pi[s':t']} \cup \act{P[t:b]}$\-
\\  return $P$ (with active subpaths $\act P$)
\end{algorithm}
\caption{The procedure \textsc{DetourAlong}.
}
\label{fig:detour-along}
\end{figure}

\begin{figure}[h!]
    \centering
\includegraphics[width=0.75\linewidth]{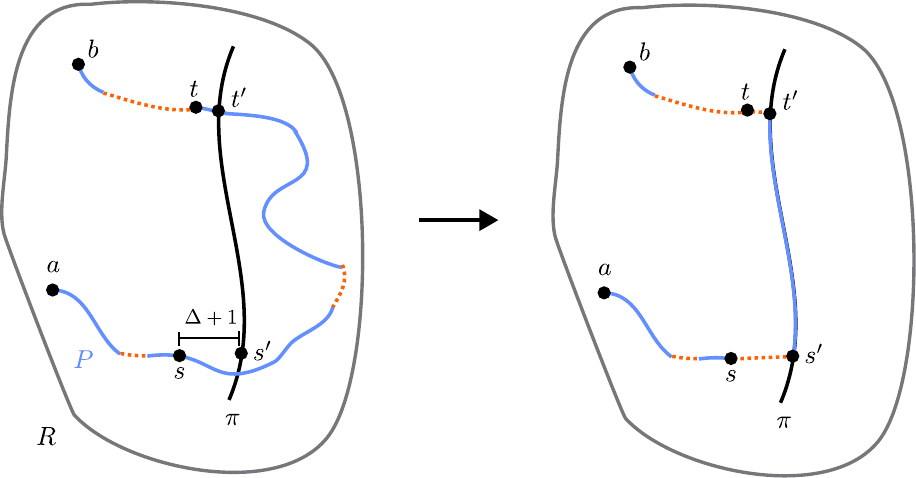}
    \caption{A stylized depiction of the result of $\textsc{DetourAlong}(P, \pi, \Delta, R)$. On the left: the piece $R$, the input broken path $P$ (with active subpaths drawn as solid blue lines, and inactive subpaths drawn as dashed orange lines) between vertices $a$ and $b$, and the path $\pi$ in $R$. The vertices $s, s',t,t'$ defined during the execution of $\textsc{DetourAlong}$ are marked. On the right: the broken path returned by \textsc{DetourAlong}.}
    \label{fig:detour-along-example}
\end{figure}
We now describe the procedure \textsc{DetourPath} in \Cref{fig:detour-try-again} with pseudocode. It takes as input the $(i_0, R_0)$-pair $\canon a b$ and returns a broken path $P(a,b)$ between $a$ and $b$.

\begin{figure}[h!]
\small
\centering
\begin{algorithm}
\textul{$\textsc{DetourPath}(\canon a b)$}: \+
\\  \textbf{Input:} $(i_0,R_0)$-pair $\canon a b$
\\~
\\ \Comment{Initialize broken path $P(a,b)$}
\\  $P(a,b) \gets$ canonical path in $\Pi$ associated with $\canon a b$, viewed as a directed path from $a$ to $b$
\\  initialize $\act {P(a,b)}$ to be $\set{P(a,b)}$
\\ \Comment{Detour $P(a,b)$}
\\  for each ancestor piece $R$ of $R_0$, from $R_0$ to root of hierarchy: \+
\\  $S \gets$ internal separator of $R$
\\  $\pi_S \gets$ internal separator path in $S$
\\  $P(a,b) \gets$ \textsc{DetourAlong}$(P(a,b), \pi_{S}, 2\e\cdot 2^{i_0}, R)$  \Comment{Detour $P(a,b)$ along $\pi_{S}$ within distance $(2\e\cdot 2^{i_0} )$} 
\\  for each path $\pi$ in $\Pi$ in the increasing $\preceq$-order, such that $\pi$ is an $(i,R)$-path with $2^i \ge \e 2^{i_0}$\+
\\      $P(a,b) \gets$ \textsc{DetourAlong}$(P(a,b), \pi, 7C, R)$  \Comment{Detour $P(a,b)$ along $\pi$ within distance $7C$} \-\-
\\ return $P(a,b)$
\end{algorithm}
\caption{The procedure \textsc{DetourPath}.
}
\label{fig:detour-try-again}
\end{figure}

We begin with an observation that the \textsc{DetourPath} algorithm is well-defined.
\begin{observation}
    Whenever we make some call $\textsc{DetourAlong}(P,\pi, \Delta, R)$ during the execution of $\textsc{DetourPath}$, we have that (i) $P$ is path in $R^+$ and (ii) $\pi$ is a shortest path in $R^+$. That is, we satisfy the preconditions on the input of \textsc{DetourPath}.
\end{observation}
\begin{proof}
    Part (i) of the observation follows from a straightforward inductive argument, using the fact that if path $P$ is contained in a piece $R$ then path $P$ is also contained in any ancestor piece $R'$ of $R$.
    To prove part (ii) of the observation, consider the calls to \textsc{DetourAlong} made while processing the piece $R$; part (ii) follows immediately from the fact that the separator path $\pi_S$ is a shortest path in $R$, and each $(\cdot, R)$-path $\pi$ is a shortest path in $R^+$, by definitions in \Cref{SS:canonical-pairs}. 
\end{proof}

The main guarantees on the output of \textsc{DetourPath} are described in the following lemma.
\begin{lemma}
\label{lem:canon-strings}
For any $(i_0,R_0)$-pair $\canon a b$ with $i_0$ large enough that $2^{i_0} \ge \Omega(C\e^{-4} \log^5 n)$ (see \Cref{rem:hidden-constant} for discussion on the hidden constant),
the broken path \EMPH{$P(a,b)$} returned by $\textsc{DetourPath}(\canon a b)$ satisfies:
\begin{enumerate}
\item 
Every active subpath of $P(a,b)$ is a subpath of some path in $\cO$. 

\item  
The total length of inactive subpaths of $P$ is at most $O(\e 2^{i_0} \cdot \log^2 n)$.

\item $P(a,b)$ consists of ${O}(\e^{-3} \log^5 n)$ active subpaths.%
\item We have $\len{P(a,b)} \le \dist_{R_0^+}(a,b) + O(\log^2 n) \cdot \e 2^{i_0}$
\end{enumerate}
\end{lemma}

Observe that the algorithm consists of two nested loops: the \EMPH{outer loop} iterates over the ancestor pieces $R$ of $R_0$, and the \EMPH{inner loop} iterates over canonical paths $\pi \in \Pi$ in piece $R$. To prove \Cref{lem:canon-strings}, we maintain the following invariants over each iteration of the outer loop; these invariants are stronger versions of the four properties of \Cref{lem:canon-strings}.
\begin{definition}
\label{def:outer-invariants}
Define $\EMPH{$\gamma$} \coloneqq \Theta(\e^{-3}\log^3 n)$, where the hidden constant will be specified later (see \Cref{rem:hidden-constant}).
Fix an $(i_0,R_0)$-pair $\canon a b$.
The \EMPH{outer-loop invariants} state that, after processing some ancestor piece $R$ of $R_0$ in the \EMPH{$\kappa$}-th iteration of the outer loop, the broken path \EMPH{$P(a,b)$} between $a$ and $b$ satisfies the following:
\begin{enumerate}
\item 
Denote
\[
\EMPH{$X_\kappa$} \coloneqq \bigcup \Set{\begin{array}{l}
\text{$7C$-neighborhood of $\pi'$ in $R'^+$ :}\\
\text{for all $(\cdot, R'$)-paths $\pi' \in \Pi$,
for all proper ancestors $R'$ of $R$} 
\end{array}
}.
\]
Then every active subpath of $P(a,b) \setminus X_\kappa$ is a subpath of some path in $\cO$. 

\item  The total length of inactive subpaths of $P$ is at most $\kappa \cdot (4\e \cdot 2^{i_0} + 16C \gamma)$.

\item $P(a,b)$ consists of at most $\kappa (2 \gamma + 2)$ active subpaths.%
\item We have $\lenR{R^+}{P(a,b)} \le \dist_{R_0^+}(a,b) + \kappa \cdot 16 \e 2^{i_0} + \kappa \gamma \cdot 56 C$.
\item During the $\kappa$-th iteration of the outer loop, there are at most $\gamma$ calls to $\textsc{DetourAlong}$ that change $P(a,b)$.
\end{enumerate}
\end{definition}

Assuming the correctness of these invariants, it is easy to prove \Cref{lem:canon-strings}.
\begin{proof}[of \Cref{lem:canon-strings}]
By \Cref{lem:separator-hierarchy}, the height of the separator hierarchy is $O(\log^2 n)$, so there are only $O(\log^2 n)$ iterations of the outer loop. In the last iteration, $\kappa = O(\log^2 n)$ and the piece  $R$ being processed is the root piece of the separator hierarchy (that is, $R = G$). To prove the 1st property of \Cref{lem:canon-strings}: By the 1st outer-loop invariant, $P(a,b) \setminus X_\kappa$ is in $\cO$ at the end of \textsc{DetourPath}$(\canon a b)$.
But the root piece has no proper ancestors, so $X_\kappa = \varnothing$; thus every subpath of $P(a,b)$ is a subpath of some path in $\cO$, as desired.
The 2nd and 3rd properties of \Cref{lem:canon-strings} follow from plugging in $\kappa = O(\log^2 n)$. The 4th property of \Cref{lem:canon-strings} follows from the fact $\kappa = O(\log^2 n)$ together with our assumption that $\e 2^{i_0} \ge \Omega(C\e^{-3} \log^5 n) \ge \gamma \cdot O(\log^2 n) \cdot 32 C$.
\end{proof}

It remains to establish that all loop invariants hold.

\begin{lemma}
\label{lem:outer-invariants}
The outer-loop invariants defined in \Cref{def:outer-invariants} hold throughout the execution of \textsc{DetourPath}. 
\end{lemma}

The proof is by induction on $\kappa$. Observe that the invariants trivially hold when $\kappa = 0$, i.e.\ immediately after $P(a,b)$ is initialized to be the canonical path associated with $\canon a b$. Hereafter we assume $\kappa \ge 1$. To prove the outer-loop invariants (1)--(4), our induction hypothesis is that all the outer-loop invariants hold at iteration $\kappa - 1$ \emph{and} that the outer-loop invariant (5) holds at iteration $\kappa$. To prove the outer-loop invariant (5), our induction hypothesis is that all the outer-loop invariants hold at iteration $\kappa -1$.

\subsubsection{Invariant 1: Detours as broken paths on $\cO$}
To show that every path in $P(a,b) \setminus X_\kappa$ is a subpath of some path in $\cO$, first we prove a few useful claims.

\begin{observation}
\label{obs:detour-separation}
    When we update $P^{\rm out} \gets \textsc{DetourAlong}(P, \pi, \Delta, R)$, the active subpaths of the new broken path $P^{\rm out}$ comprises a ``new'' subpath (namely, $\pi[s':t']$) as well as ``old'' subpaths (namely, the active subpaths of $P[a:s]$ and $P[t:b]$).
    Every vertex in an old subpath is at distance \ul{strictly greater than $\Delta$} (in $R^+$) from $\pi$.
\end{observation}
\begin{proof}
    The first sentence of the observation is immediate from the algorithm. We now elaborate on the last sentence. By choice of $s$, the only possible active vertex of $P[a:s]$ with distance $\le \Delta$ (in $\R^+$) of $\pi$ is $s$. By \Cref{def:broken}, active subpaths of $P[a:s]$ contain at least one edge. Thus, if $s$ is an active vertex of $P[a:s]$\footnote{Note that $s$ is an active vertex of the input path $P$, but a priori we don't know if $s$ is an active vertex of $P[a:s]$.}, there is an edge $(x,s)$ of $P[a:s]$ such that $x$ is an active vertex; in this case, $x$ is also an active vertex of $P$, and the assumption that $G$ is unweighted implies that $\dist_{R^+}(x,\pi) \le \Delta + 1$, contradicting the choice of $s$. A symmetric argument involving $t$ completes the proof.
\end{proof}

\begin{observation}
\label{obs:path-types}
After the $\kappa$-th iteration of the outer loop, 
the active subpaths in $P(a,b)$ can be partitioned into the following sets: 
\begin{itemize}
\item 
\EMPH{$\cP^{\rm new}$}: subpaths of $\pi_S$, and subpaths of some $(i,R)$-path $\pi$ in $\Pi$ with $2^i \ge \e\cdot 2^{i_0}$

\item
\EMPH{$\cP^{\rm old}$}: subpaths of $P^{\rm old}(a,b)$, where $P^{\rm old}(a,b)$ is the path $P(a,b)$ immediately after iteration $\kappa - 1$.
\end{itemize}
\end{observation}

\begin{lemma}
\label{lem:new-subpaths}
    The paths in $\cP^{\rm new} \setminus X_\kappa$ are subpaths of $\cO$.
\end{lemma}
\begin{proof}
    First observe that any path in $\cP^{\rm new}$ is a subpath of some $(i,R)$-path $\pi$ in $\Pi$ with $2^i \ge \e \cdot 2^{i_0}$.
    This follows from observing that $\pi_S$ is itself an $(i,R)$-path $\pi$ in $\Pi$ with $2^i \ge \e \cdot 2^{i_0}$: indeed, by definition of portals, both endpoints of $\pi_S$ are portals at the maximum scale $i = \lceil \log n \rceil$, and so $\pi_S$ is an $(i,R)$-path and $i \ge i_0$.

    Now, let $P^*$ be some path in $\cP^{\rm new} \setminus X_\kappa$. Path $P^*$ is a subpath of some $(i,R)$-path $\pi^*$ in $\Pi$ with $2^i \ge \e 2^{i_0}$.
    The description of the algorithm implies that $P^*$ is at distance greater than $7C$ (in $R^+$) from every $(\cdot, R)$-path $\pi$ in $\Pi$ with $\pi^* \prec \pi$.%
    \footnote{Indeed, consider the first time $P^*$ appears as a subpath in the detour path $P(a,b)$. Because the inner loop processes paths $\pi$ in increasing order according to $\prec$, the algorithm detours $P(a,b)$ along every $\pi$ with $\pi^* \prec \pi$ after $P^*$ first appears. After detouring against $\pi$, any vertex within distance $7C$ of $\pi$ gets deleted from $P^*$, by \Cref{obs:detour-separation}.}
    Moreover, the definition of $X_\kappa$ implies that $P^*$ is at least distance $7C$ away (in $R'^+$) from every $(\cdot, R')$-path with $R'$ a proper ancestor of $R$. We conclude that every vertex in $P^*$ is at least distance $7C$ away (in $R'^+$) from every $(\cdot, R')$-path $\pi$ with $\pi^* \prec \pi$ and $R'$ a (not necessarily proper) ancestor of $R$.

    We now argue this implies that no vertex in $P^*$ was erased when we added $\pi^*$ to $\cO$; that is, $P^*$ is a subpath of some path in $\cO$, as desired.
    Indeed, suppose that some vertex $v$ of $P^*$ was erased by some $(\cdot, R')$-path $\pi$ with $\pi^* \prec \pi$, meaning that $v \in \mathrm{EEr}_{\pi^*}$ as defined in the construction of $\cO$. 
    Then there exists some $u \in \pi^*$ with $\dist_{\pi^*}(v,u) \le C$ and $\dist_{R'^+}(u, \pi) \le 4C$. 

    From here there are two cases. First suppose that $R'$ is a (not necessarily proper) ancestor of $R$.
    Because $\pi^*$ is a path in $R^+$, and $R'^+$ contains $R^+$, we have $\dist_{R'^+}(v,u) \le \dist_{\pi^*}(v,u) \le C$.
    Triangle inequality implies that $\dist_{R'^+}(v, \pi') \le 5C < 7C$, contradicting our assumption.
    Now suppose that $R'$ is not an ancestor of $R$. Definition of $\prec$ implies that $R'$ and $R$ are not in an ancestor-descendant relationship. Let $\Gamma$ be the path in $R'^+$ between $u$ and $\pi'$ of length at most $4C$. By definition of separator hierarchy, $\Gamma$ intersects some separator $S_{\rm up}$ of an ancestor region $R_{\rm up}$ of $R$ (and WLOG, after the intersection point, $\Gamma$ remains in $R$). By definition of separator, this point of $\Gamma$ is within distance $2C$ of the separator path $\pi_{\rm up}$. Similarly to how we argued in the first paragraph of this proof, $\pi_{\rm up}$ is itself a path with $\pi^* \prec \pi_{\rm up}$. Thus we have $\dist_{R_{\rm up}^+}(u, \pi_{\rm up}) \le 4C + 2C = 6C$ and so $\dist_{R_{\rm up}^+}(v, \pi_{\rm up}) \le 7C$, a contradiction.
\end{proof}

\begin{lemma}
\label{lem:old-subpaths}
    The paths in $\cP^{\rm old} \setminus X_\kappa$ are subpaths of $\cO$.
\end{lemma}
\begin{proof}
By induction, we have that every path in
$\cP^{\rm old} \setminus X_{\kappa - 1}$ is a subpath of a path in $\cO$.
To prove the lemma, we show that the paths in $\cP^{\rm old} \setminus X_{\kappa}$ are precisely the same as the paths in $\cP^{\rm old} \setminus X_{\kappa - 1}$. To this end, let $P^*$ be a path in $\cP^{\rm old} \setminus X_{\kappa}$. We prove that $P^* \cap X_{\kappa - 1} = \varnothing$.

By definition of $X$, the set $X_{\kappa-1}$ contains all vertices in $X_\kappa$, plus the $7C$-neighborhoods (in $R^+$) of all paths $\pi \in \Pi$ with piece $R$.
Notice that $P^*$ is at distance greater than $7C$ (in $R^+$) from any $(i,R)$-path $\pi$ in $\Pi$ with $2^i \ge \e 2^{i_0}$, by the description of the algorithm. (Otherwise, part of $P^*$ would have been deleted when we detoured against $\pi_S$ or $\pi$, by \Cref{obs:detour-separation}.)
It remains to show that $P^*$ is at distance greater than $7C$ (in $R^+$) from any $(i,R)$-path $\pi$ with $2^i < \e 2^{i_0}$.
Assume for contradiction that there exists some $(i, R)$-path $\pi$ with $2^i < \e 2^{i_0}$, such that $\dist_{R^+}(P^*, \pi) \le 7C$.
As we assume $2^{i_0} \ge \Omega(C\e^{-3} \log^5 n) \ge 7C\e^{-1}$, clearly we also have $\dist_{R^+}(P^*, \pi) \le \e\cdot 2^{i_0}$.
Path $\pi$ has length less than $\e\cdot 2^{i_0}$, and one of its endpoints lies on the internal separator path $\pi_S$ of $R$, by definition of $(i,R)$-path.
Thus the distance from $P^{\rm old}(a,b)$ to $\pi_{S}$ is at most 
\[\dist_{R^+}(P^*, \pi_S) \le \dist_{R^+}(P^*, \pi) + \len \pi 
\le 2\e\cdot 2^{i_0}.\]
But this is a contradiction:
$P^*$ must be at distance strictly greater than $2\e\cdot 2^{i_0}$ from $\pi_S$ in $R^+$, because we detoured $P(a,b)$ against $\pi_S$ with $\Delta = 2\e \cdot 2^{i_0}$ (and so we can apply \Cref{obs:detour-separation}).
\end{proof}
All together, \Cref{obs:path-types} and \Cref{lem:new-subpaths,lem:old-subpaths} prove that the 1st outer-loop invariant holds.

\subsubsection{Invariants 2--4: Inactive length, number of subpaths, and length bound}
Here we prove the 2nd, 3rd, and 4th outer-loop invariant. Recall that we assume (as part of the induction hypothesis) that the outer-loop invariant (5) holds in the $\kappa$-th iteration: that is, $P(a,b)$ is modified at most $\gamma$ times by $\textsc{DetourAlong}$ during the $\kappa$-th outer iteration. The invariants 2--4 all follow from fairly straightforward applications of this fact.

\begin{claim}[Outer-loop invariant 2]
    The total length of inactive subpaths of $P$ is at most $\kappa \cdot (4\e \cdot 2^{i_0} + 16C \gamma)$.
\end{claim}
\begin{proof}
    Let $P^{\rm old}(a,b)$ denote the broken path $P(a,b)$ after iteration $\kappa -1$, and let $R_{\rm old}$ denote the piece processed in iteration $\kappa - 1$.
We assume inductively that the total length of inactive subpaths of $P^{\rm old}(a,b)$ is at most $(\kappa-1)(4\e \cdot 2^{i_0} + 16C\gamma)$.

Now observe that every call $P \gets \textsc{DetourAlong}(P^{\rm old}, \pi, \Delta, R)$ adds at most 2 inactive subpaths to the broken path $P$, each of length at most $\Delta + 1$; every other inactive edge in $P$ is also an inactive edge of $P^{\rm old}$.
In the $\kappa$-th iteration of the outer loop, $\textsc{DetourAlong}$ is called once with parameter $\Delta = 2\e \cdot 2^{i_0}$, and all other calls to $\textsc{DetourAlong}$ are made with parameter $\Delta = 7C$. By the induction hypothesis, there are at most $\gamma$ calls to $\textsc{DetourAlong}$ that affect $P(a,b)$ during the $\kappa$-th iteration. Thus the length of the inactive subpaths in $P(a,b)$ increases by at most $2\cdot(2\e \cdot 2^{i_0} + 1) +(\gamma-1)(2 \cdot (7C + 1)) < 4\e \cdot 2^{i_0} + 16 C \gamma$; that is, the total length of inactive subpaths of $P(a,b)$ is at most $\kappa(4\e \cdot 2^{i_0} + 16C \gamma)$.
\end{proof}

\begin{lemma}[Outer-loop invariant 3]
$P(a,b)$ has at most $\kappa(2\gamma + 2)$ active subpaths.
\end{lemma}
\begin{proof}
    We assume inductively that the outer-loop invariant 3 (\Cref{def:outer-invariants}) holds for iteration $\kappa-1$ of the outer loop. Thus, at the beginning of the $\kappa$-th outer iteration, we have that $P(a,b)$ has at most $(\kappa - 1) (2\gamma + 2)$ active subpaths.
    Each time $P(a,b)$ is modified by \textsc{DetourAlong}, the number of active subpaths increases by at most 2. (Specifically, $\pi[s':t']$ is added as an active subpath, and one old active subpath may be split into two new active subpaths when we chop into $P[a:s]$ and $P[t:b]$.) By induction hypothesis, $P(a,b)$ is modified at most $\gamma$ times in the $\kappa$-th iteration. Thus in the end, it has at most $\kappa (2\gamma + 2)$ active subpaths.
\end{proof}

\begin{lemma}[Outer-loop invariant 4]
\label{lem:outer-4}
At the end of the $\kappa$-th outer iteration, we have $\len{P(a,b)} \le \dist_{R_0^+}(a,b) + \kappa \cdot 16\e 2^{i_0} + \kappa \gamma \cdot 56 C$.
\end{lemma}
\begin{proof}
    We assume inductively that the 4th outer-loop invariant (\Cref{def:outer-invariants}) holds after the $(\kappa-1)$-th iteration.
    That is, at the start of iteration $\kappa$, we have
    \[\len{P(a,b)} \le \dist_{R_0^+}(a,b)  + (\kappa-1) \cdot 16 \e 2^{i_0} + (\kappa-1)\gamma \cdot 56C.\]
    We claim that each time $P(a,b)$ is modified by some call $P(a,b) \gets \textsc{DetourAlong}(P(a,b), \pi, \Delta, R)$, the length of $P(a,b)$ increases by at most $8\Delta$.
    (Actually, we prove that the length increases by at most $4 \Delta + 4$, but we upper-bound this quantity by $8\Delta$ for ease of notation.)
    This bound suffices to prove the lemma: path $P(a,b)$ is first detoured against the internal separator $\pi_S$ of $R$ (increasing its length by $+16 \e \cdot 2^{i_0}$) and then detoured at most $\gamma$ times by the inner loop (increasing its length by $+56 C \gamma$).
    
    To prove the bound on the length increase, consider some call $P \gets \textsc{DetourAlong}(P^{\rm old}, \pi, \Delta, R)$. Let $s$, $s'$, $t$, $t'$ be as defined in the \textsc{DetourAlong} procedure. As $\dist_{R^+}(s,s') \le \Delta + 1$ and $\dist_{R^+}(t,t') \le \Delta + 1$, triangle inequality implies that $\dist_{R^+}(s',t') \le \dist_{R^+}(s,t) + 2 \Delta + 2$. As $\pi$ is a shortest path in $R^+$, we conclude
    \begin{align*}
        \len{P} &\le \len{P^{\rm old}[a:s]} + \dist_{R^+}(s,s') + \dist_{R^+}(s',t') + \dist_{R^+}(t',t) + \len{P^{\rm old}[t:b]}\\
        &\le \len{P^{\rm old}[a:s]} + \dist_{R^+}(s,t) + \len{P^{\rm old}[t:b]} + 4\Delta + 4\\
        &\le \len{P^{\rm old}[a:s]} + \len{P^{\rm old}[s:t]} + \len{P^{\rm old}[t:b]} + 4\Delta + 4\\
        &= \len{P^{\rm old}} + 4\Delta + 4\\
        &\le \len{P^{\rm old}} + 8\Delta.
    \end{align*}
where the third-to-last line follows from the fact that $P^{\rm old}$ is a path in $R^+$.
\end{proof}

\subsubsection{Invariant 5: Number of modifications per outer iteration}

The goal of this section is to prove the 5th outer-loop invariant: during the $\kappa$-th iteration of the outer loop, there are at most $\gamma$ calls to $\textsc{DetourAlong}$ that change $P(a,b)$. 
Recall that $\gamma \coloneqq \Theta(\e^{-3} \log^3 n)$, as defined in \Cref{def:outer-invariants}.
For simplicity of notation, throughout the remainder of this section we write $P$ for $P(a,b)$.
Our inductive argument is interesting: we don't directly use the fact that the 5th outer-loop invariant holds in the $(\kappa-1)$-th iteration, but instead we use the fact that the length of $P$ is bounded during the $(\kappa-1)$-th iteration (that is, the 4th outer-loop invariant). Recall, in turn, that our proof of the length bound in iteration $\kappa-1$ relied on the fact that the 5th outer-loop invariant held during iteration $\kappa-1$.

Our proof of the 5th invariant uses a charging argument, for which we now introduce the notion of \EMPH{threateners}. Recall that $\canon{a}{b}$ is an $(i_0,R_0)$-pair.

\begin{definition}
\label{def:threatener}
Let $\canon{a'}{b'}$ be an $(i, R)$-pair, for some scale $i$. (Recall $R$ is an ancestor of $R_0$.)
We say that $\canon{a'}{b'}$ is a pair that \EMPH{$(\mThreat, \beta)$-threatens} $\canon a b$ if $\dist_{R^+}(a,a') \le \mThreat \cdot 2^{i} + \beta.$
\end{definition}
For the rest of the section, let $\EMPH{$\mThreat$} \coloneqq 2 + 2\e^{-1}$. 
For any $j > 0$, let $\EMPH{$\beta_j$}\coloneqq ((\kappa - 1)\gamma + j-1) \cdot 56C$.
We first need a slightly refined version of \Cref{lem:outer-4}.
\begin{claim}
\label{clm:inner-length}
    If $P$ has been modified by $j$ calls to $\textsc{DetourAlong}$ in the inner loop (during iteration $\kappa$ of the outer loop), we have $\len{P(a,b)} \le \dist_{R_0^+}(a,b) + \kappa \cdot 16\e 2^{i_0} + ((\kappa -1)\gamma + j) \cdot 56 C$.
\end{claim}
The proof follows verbatim from the first part of the proof of \Cref{lem:outer-4}. (In particular, the proof of \Cref{clm:inner-length} relies only on the fact that the 4th outer-loop invariant holds in iteration $\kappa-1$, which we assume holds by induction hypothesis). We now prove a lemma which will allow us to \EMPH{charge} each call of \textsc{DetourAlong} to a threatener.

\begin{lemma}
\label{lem:inner-threat}
    Suppose that $P$ has been modified by $j$ calls to \textsc{DetourAlong} in the inner loop (during iteration $\kappa$ of the outer loop, that processes piece $R$). 
    Let $X = \set{\canon {a_1} {b_1}, \ldots, \canon {a_j} {b_j}}$ denote the set of canonical pairs associated with the paths $\pi$ such that $P$ was modified while detouring along $\pi$.
    Every canonical pair in $X$ is a $(i,R)$-pair with $2^i \ge \e 2^{i_0}$ that $(\mThreat, \beta_j)$-threatens $\canon a b$.
\end{lemma}
\begin{proof}
    The proof is by induction on the number of calls made in the inner loop. 
    Suppose, by induction that $\set{\canon {a_1} {b_1}, \ldots, \canon{a_{j-1}} {b_{j-1}}}$ are all pairs that $(\mThreat, \beta_{j-1})$-threaten $\canon a b$. Clearly each one also $(\mThreat, \beta_{j})$-threatens $\canon a b$. 
    It remains to show that $\canon {a_j} {b_j}$ is a pair that $(\mThreat, \beta_{j})$-threatens $\canon a b$.  
    By description of the algorithm, $\canon {a_j} {b_j}$ is an \EMPH{$(i, R)$}-pair for some scale $i$ with $2^i \ge \e 2^{i_0}$. 
    Let \EMPH{$P^{\rm old}$} denote the path $P$ right before it gets detoured against the path $\EMPH{$\pi$} \coloneqq \pi(a_j, b_j)$. 
    Because $P$ is modified, there exists $s \in P_{\rm old}$ and $s' \in \pi$ such that 
    \[
    \dist_{R^+}(s, s') \le 7C+1.
    \]
    By definition, $\pi$ is a path in $R^+$ with length at most $2^{i}$ that contains both $s'$ and $a_j$, and so
    \[
    \dist_{R^+}(a_j, s') \le 2^{i}.
    \]
    Similarly, $P^{\rm old}$ is a (broken) path in $R^+$ that contains both $s$ and $a$; by \Cref{clm:inner-length}, $P^{\rm old}$ has length at most
    \begin{align*}
      \lenR{R^+}{P^{\rm old}} &\le \dist_{R_0^+}(a,b)  + \kappa \cdot 16\e 2^{i_0} + \beta_j\\
      &\le 2 \cdot 2^{i_0} + \beta_j
    \end{align*}
    where the last inequality follows from the fact that $\e = \frac{\e_0}{\Theta(\log^3 n)} < \frac{1}{16 \cdot O(\log^2 n)} \le \frac{1}{16\kappa}$ and also $\dist_{R_0^+}(a,b) \le 2^{i_0}$.
    Thus we have
    \(\dist_{R^+}(a,s) \le 2 \cdot 2^{i_0} + \beta_j.\)
    By triangle inequality, we conclude
        \[\dist_{R^+}(a_j, a) \le \dist_{R^+}(a_j, s') + \dist_{R^+}(s', s) + \dist_{R^+}(s, a) \le 2^i + (7C+1) + 2 \cdot 2^{i_0} + \beta_j.\]
    From the description of \textsc{DetourPath} we have that $2^{i} \ge \e\cdot 2^{i_0}$, and by assumption of \Cref{lem:canon-strings} we have $\e \cdot 2^{i_0} \ge 7C+1$. Thus we can write
    \[\dist_{R^+}(a_j, a) \le (2 + 2\e^{-1})\cdot 2^i + \beta_j = \mThreat \cdot 2^i + \beta_j\]
    as desired.
\end{proof}

It remains to show that there can't be too many canonical pairs that $(\mThreat, \beta_j$)-threaten $\canon a b$. 
But what value of $j$ should we choose when we try to bound the number of $(\mThreat, \beta_j)$-threatening pairs?
Below, in \Cref{lem:self-referential-threat} we show that there exists some small value $\theta$ such that there are at most $\theta$ pairs that are $(\mThreat, \beta_\theta)$-threateners; note the seemingly self-referential nature of the definition of $\theta$. To prove the lemma, we first need the following (standard) claim which bounds the number of portals that are close to any fixed vertex (see \cite[Claim 2.3]{chang2025distance}).
\begin{claim}
    \label{clm:length-threat}
    Let $R$ be a piece, let $i \in \mathbb{N}$ be a scale, and let $k \ge 1$. 
    For any vertex $v$ in $R^+$, the number of $(i, R)$-portals $p$ with $\dist_{R^+}(v, p) \le k \cdot 2^i$ is $O(k \cdot \e^{-1})$. 
\end{claim}

\begin{proof}
Let $S$ be the internal separator of $R$, and let $\pi_S$ be the separator path of $S$. Let $\xi = \set{p_0, \ldots, p_\beta}$ be the set of $(i, R)$-portals $p$ such that $\dist_{R^+}(v,p) \le k \cdot 2^i$, where $\beta \coloneqq |\xi|-1$. Notice that every $(i, R)$-portal lies on $\pi_S$.
Assume that $p_0$ is the first portal in $\xi$ when walking along $\pi_S$ (in an arbitrary direction), and that $p_\beta$ is the last portal in $\xi$ when walking along $\pi_S$.
It could be the case that $p_\beta$ is the endpoint of $\pi_S$, but observe that $p_{\beta-1}$ is not the final endpoint of $\pi_S$; thus,
property (2) from the definition of portals implies that $\dist_{\pi_S}(p_0, p_{\beta-1}) \ge \frac{\e}{2} \cdot 2^i \cdot (\beta - 1)$.
Because $\pi_S$ is a shortest path in $R^+$, we further have 
\[
\dist_{R^+}(p_0, p_{\beta-1}) \ge \frac{\e}{2} \cdot 2^i \cdot (\beta - 1).
\]
Triangle inequality implies that
\[\dist_{R^+}(p_0, p_{\beta-1}) \le \dist_{R^+}(p_0, v) + \dist_{R^+}(v, p_{\beta-1}) \le 2 k \cdot 2^i. \]
Combining these two inequalities, we find that $\beta = O(k \cdot \e^{-1}) + O(1) = O(k \cdot \e^{-1})$ as claimed.
\end{proof}

\begin{lemma}[``Self-referential'' Lemma]
\label{lem:self-referential-threat}
Suppose $\gamma = \Theta(\e^{-3} \log^{3} n)$ with a sufficiently large hidden constant.
Suppose $\canon a b$ is an $(i_0,R_0)$-pair with $2^{i_0} = \Omega(\gamma \cdot C \e^{-1} \log^2 n)$ with a sufficiently large constant. Let $R$ be an ancestor piece of $R_0$.
There exists an integer $\theta \le \gamma$ such that there are \emph{strictly fewer} than $\theta$ many $(i, R)$-pairs that $(\mThreat, \beta_\theta)$-threaten $\canon a b$ and satisfy $2^i \ge \e 2^{i_0}$.
\end{lemma}
\begin{proof}
    Define 
    \[
    \EMPH{$\theta$} \coloneqq \text{one more than the number of $(\cdot, R)$-pairs that $(\mThreat + 1, 0)$-threaten $\canon a b$.} 
    \]
    We will show below that $\theta \le \gamma = O(\e^{-3} \log^3 n)$. We now claim that (if $i_0$ is sufficiently large) every $(i,R)$-canonical pair that $(\mThreat, \beta_\theta)$-threatens $\canon a b$ and satisfies $2^i \ge \e 2^{i_0}$ \emph{also} $(\mThreat + 1, 0)$-threatens $\canon a b$. Indeed, suppose that $\canon {a'} {b'}$ is an $(i, R)$-canonical pair that $(\mThreat, \beta_\theta)$-threatens $\canon a b$ and satisfies $2^i \ge \e 2^{i_0}$. Then
    \begin{align*}
        \dist_{R^+}(a, a') &\le \mThreat \cdot 2^{i} + \beta_\theta\\
        &= \mThreat \cdot 2^{i} + ((\kappa - 1)\gamma + \theta-1) \cdot 56C\\
        &= \mThreat \cdot 2^{i} + O(\gamma \cdot C \log^2 n)\\
        &\le \mThreat \cdot 2^{i} + \e \cdot 2^{i_0}\\
        &\le (\mThreat +1) \cdot 2^{i}
    \end{align*}
    where the third-to-last line follows from the fact that $\kappa = O(\log^2 n)$ and $\theta \le \gamma$;
    the second-to-last inequality follows from the assumption that $2^{i_0} = \Omega(\gamma \cdot C \e^{-1} \cdot \log^2 n)$ is sufficiently large; and the last inequality follows from assumption that $\e 2^{i_0}\le 2^{i}$. We conclude that $\canon{a'}{b'}$ is a pair that $(t+1, 0)$-threatens $\canon a b$.

    To prove the lemma, it remains to show that $\theta \le \gamma$. There are $O(\log n)$ scales.
    For each scale $i$, \Cref{clm:length-threat} implies that the number of $(i, R)$-portals $a'$ with $\dist_{R^+}(a', a) \le (\mThreat + 1) \cdot 2^{i}$ is at most $O(\mThreat \e^{-1}) = O(\e^{-2})$.
    For each such portal $a'$, the number of $(i,R)$-pairs of the form $\canon {a'} {b'}$ is at most $O(\e^{-1} \log^2 n)$: indeed, there are $O(\log^2 n)$ ancestor pieces $R'$ of $R$, and for each $R'$ there are only $O(\e^{-1})$ scale-$i$ portals $b'$ on the internal separator of $R'$ such that $\dist_{R^+}(a',b') \le 2^{i}$ (by \Cref{clm:length-threat}). All together, there are only $O(\e^{-3}\log^3 n)$ canonical pairs that $(\mThreat+1, 0)$-threaten $\canon a b$. Provided that we choose the hidden constant in the definition $\gamma = \Theta(\e^{-3} \log^3 n)$ is large enough, we have $\theta \le \gamma$.
\end{proof}

\begin{corollary}[Outer-loop invariant 5]
\label{cor:inner-termination}
    During iteration $\kappa$ of the outer loop, the path $P$ is modified by at most $\gamma$ calls to $\textsc{DetourAlong}$.
\end{corollary}
\begin{proof}
    The path $P$ may be modified by the first call to $\textsc{DetourAlong}$ that detours $P$ along the internal separator  $\pi_R$. We claim that $P$ is modified by at most $\gamma-1$ calls to $\textsc{DetourAlong}$ in the inner loop. Indeed, otherwise there would be some instant at which $P$ had been modified by exactly $\theta$ iterations of the inner loop. At this point, \Cref{lem:inner-threat} would imply that there are $\theta$ canonical pairs that $(\mThreat, \beta_\theta)$-threaten $\canon{a}{b}$, contradicting \Cref{lem:self-referential-threat}.
\end{proof}

\begin{remark}[Choice of hidden constants for $\gamma$ and $2^{i_0}$]
\label{rem:hidden-constant}
In the statement of \Cref{def:outer-invariants}, we choose the hidden constant in $\gamma = \Theta(\e^{-3} \log^3 n)$ to be large enough to satisfy the constraints of \Cref{lem:self-referential-threat}.  In the statement of \Cref{lem:canon-strings}, we choose the hidden constant $2^{i_0} \ge \Omega(\gamma\e^{-1}C\log^2 n) = \Omega(C\e^{-4}\log^5 n)$ sufficiently large to satisfy the constraints of \Cref{lem:canon-strings,lem:self-referential-threat,lem:old-subpaths}.
We emphasize that these hidden constants are well-defined and not circular.
\begin{itemize}
    \item The choice of $\gamma$ depends only on the definition $\mThreat = 2 +  2\e^{-1}$, the number of scales $O(\log n)$, the height $O(\log^2 n)$ of the separator hierarchy defined in \Cref{lem:separator-hierarchy}, and the constant hidden in the number $O(k \cdot \e^{-1})$ of nearby portals in \Cref{clm:length-threat}.
    \item The hidden constant in the lower bound to $2^{i_0}$ depends on (i) the constraint $2^{i_0} \ge \e^{-1} \cdot O(\log^2 n) \cdot \gamma \cdot 56 C$ where $O(\log^2 n)$ is the height of the separator hierarchy defined in \Cref{lem:separator-hierarchy} (for the proofs of \Cref{lem:canon-strings,lem:self-referential-threat}), (ii) the constraint $2^{i_0} \ge 7C\e^{-1}$ (for the proof of \Cref{lem:old-subpaths}), and (iii) the constraint $2^{i_0} \ge (7C+1)\e^{-1}$ (for the proof of \Cref{lem:inner-threat}).
\end{itemize}  
All these constraints are satisfied for sufficiently large $\gamma$ and $2^{i_0}$.
\end{remark}

\subsection{Proof of \Cref{lem:string}}
\label{SS:C-disjoint-paths}

We finally prove \Cref{lem:string}, which we restate below.

\brokenpaths*

\begin{proof}[of \Cref{lem:string}]
We construct $\cO$ as described in \Cref{S:broken-paths}: define $\EMPH{$\e$} \coloneqq O(\log^{-3} n) \cdot \e_0$, construct a $2C$-neighborhood path separator hierarchy of height $O(\log^2 n)$, define canonical paths $\Pi$, and construct $\cO$ according to the $\textsc{ErasedPaths}$ procedure.
By \Cref{lem:C-disjoint-in-G}, the paths $\cO$ are pairwise at distance more than $2C$ from each other in $G$. We now prove the [low-hop] property.

\paragraph{Definition of $P$.} Let $u$ and $v$ be vertices in $G$.
If $\dist_G(u,v) \le \Theta(C\e^{-4}\log^5 n) = \Theta(C \e_0^{-4} \log^{17}n)$ for some sufficiently large hidden constant, then we take $P$ to be the shortest path between $u$ and $v$ in $G$ and we are done.
Otherwise, assuming $\dist_G(u,v) \ge \Omega(C \e^{-4} \log^5 n)$, the Canonical Pairs Lemma (\Cref{lem:canonical-pairs}) implies there is a path 
\EMPH{$P_0$} in $G$ between $u$ and $v$ that is the concatenation of $O(\log n)$ subpaths, two of which are called \EMPH{short subpaths} and have length $O(\e\cdot \dist_G(u,v) + C) = O(\e \cdot \dist_G(u,v))$, and the rest of which are between canonical pairs at scale $i$ where $2^i = \Theta(\dist_G(u,v)) \ge \Omega(C \e^{-4} \log^5 n)$; moreover, we have
\[
\len{P_0} \le (1+O(\log n)\cdot \e) \cdot \dist_G(u,v) + O(C \cdot \log n) \le (1+O(\log n)\cdot \e) \cdot \dist_G(u,v).
\]
For each canonical pair $\canon a b$ on $P_0$ we will find a broken path $P(a,b)$ according to the \textsc{DetourPath} procedure (\Cref{lem:canon-strings}); note that the Canonical Pairs Lemma (\Cref{lem:canonical-pairs}) implies that $\canon a b$ 
is at scale $i$ where $2^i = \Theta(\dist_G(u,v)) \ge \Omega(C \e^{-4} \log^5 n)$, so the assumption of \Cref{lem:canon-strings} holds.
We construct the final path \EMPH{$P$} between $u$ and $v$ by concatenating all the broken paths $P(a,b)$ together, and concatenating the two short subpaths of $P_0$. We write
\[
P = Q_1 \circ P_1 \circ Q_2 \circ P_2 \circ \dots \circ Q_\ell.
\] 
where the $P_i$'s are active subpaths, and the $Q_i$'s are maximal subpaths of inactive edges that separate the $P_i$'s. (In this definition, we treat the two short subpaths of $P_0$ as inactive subpaths.)

\paragraph{Analysis of $P$.} We now show that path $P$ satisfies all the properties claimed in the [low-hop] part of \Cref{lem:string}. Fix any canonical pair $\canon a b$. 
\Cref{lem:canon-strings}(1) implies all active subpaths in the detour path $P(a,b)$ are subpaths of paths in $\cO$; thus, all the subpaths $P_i$ of $P$ are subpaths of $\cO$, as desired.
\Cref{lem:canon-strings}(3) implies there are $O(\e^{-3}\log^5 n)$ active subpaths in $P(a,b)$, and so in total (after concatenating the $O(\log n)$ detour paths) the path $P$ contains $\ell = O(\e^{-3}\log^6 n) = O(\e_0^{-3} \cdot \log^{15} n)$ subpaths $P_i$, as desired.
To bound the length of the $Q_j$ subpaths, note that \Cref{lem:canon-strings}(2) implies that the total length of the inactive edges in $P(a,b)$ is at most 
\(
O(\log^2 n) \cdot O(\e \cdot 2^{i})
= O(\e \log^2 n) \cdot \dist_G(u,v),
\)
and the two short subpaths of $P_0$ contribute $O(\e \cdot \dist_G(u,v) + C) \le O(\e \dist_G(u,v))$ to the length of the inactive subpaths $Q_j$.
Thus the sum of lengths of all inactive subpaths $Q_j$ in $P$ (summed over all $O(\log n)$ canonical pairs) is at most 
\[
O(\e \log^3 n \cdot \dist_G(u,v)) \le \e_0 \cdot \dist_G(u,v).
\]
Finally, \Cref{lem:canon-strings}(4)
implies that the length of path $P(a,b)$ between $(i,R_j)$-pair in $R_j^+$ is at most
\[\len {P(a,b)} \le \dist_{R_j^+}(a,b) + O(\log^2 n) \cdot \e 2^{i} \le \dist_{R_j^+}(a,b) + O(\log^2 n) \cdot \e \dist_G(u,v).\]
Thus the total length of $P$, summing across all $O(\log n)$ canonical pairs, is at most
\begin{align*}
\len P &\le O(\e \cdot \dist_{G}(u,v)) + \sum_{\text{$(i,R_j)$-pair $\canon a b$ on $P_0$}} \left( \dist_{R_j^+}(a,b) + O(\log^2 n) \cdot \e \dist_G(u,v) \right)\\
&\le O(\e \cdot \dist_{G}(u,v)) + \len{P_0} + O(\e \log^3 n) \cdot \dist_G(u,v) \\
&\le (1 + O(\e \log n)) \dist_G(u,v) + O(\e \log^3 n) \cdot \dist_G(u,v) \\
&\le (1+\e_0)\cdot \dist_{G}(u,v).
\end{align*}
\aftermath

\end{proof}

\subsection{Construction of separator hierarchy}
\label{SSS:separator-hierarchy}

This subsection is dedicated to proving the following lemma.
\hierarchy*

Our key tool, which we apply recursively in a standard way, is to construct a shortest path separator that consists of $O(\log n)$ paths.
\stringSeparator*

Before proving \Cref{lem:string-path-separator}, we show that it implies \Cref{lem:separator-hierarchy}.
\begin{proof}[of \Cref{lem:separator-hierarchy}]
We aim to construct a $2C$-neighborhood separator hierarchy for $G$ with a recursive approach: roughly, we want to apply \Cref{lem:string-path-separator} to find a balanced separator $S$ for $G$ and use this separator as the root node of the hierarchy, and then recurse on each connected component of $G \setminus S$. There are two issues. First, the separator $S$ comprises $O(\log n)$ shortest paths and their neighborhoods --- but our definition of separator hierarchy only permits the internal separator of a node to be a single path and its neighborhood. To deal with this, we ``peel off'' the paths in $S$ one-by-one with a path of $O(\log n)$ nodes in the hierarchy. Second, if we recurse on a connected piece $R$ of $G \setminus S$, then the next set of separator paths we select will be shortest paths in $R$ --- but our definition of separator hierarchy requires that the internal separator of $R$ is a shortest path in $R^+$, not just in $R$. To deal with this issue, we recurse on $R \cup S$ instead of just recursing on $R$.

We now describe the construction formally. Our procedure takes as input two induced subgraphs $R$ and $R^+$, such that any path in $G$ containing a vertex in $R$ and $G \setminus R$ intersects $R^+$ (that is, $R^+ \setminus R$ represents the external separators of $R$), and it returns a tree representing a separator hierarchy for piece $R$. Calling this procedure with $R = R^+ = G$ produces a $2C$-neighborhood path separator hierarchy for $G$. Given inputs $R$ and $R^+$, initially we set the weight of every vertex in $R$ to be 1, and apply \Cref{lem:string-path-separator} on $R^+$ to find a set $\EMPH{$\cS$} = \set{S_1, \ldots, S_{k}}$ of $k = O(\log n)$ shortest paths in $R^+$ whose $2C$-neighborhood acts as a separator for $R^+$. For each $S_i \in \cS$, define $S_i^+ \subseteq V(G)$ to be the $2C$-neighborhood of $S_i$ in $R^+$.
We initialize a separator hierarchy $\cT$ as a path of $k = O(\log n)$ nodes: the first node of the path (ie, the root) is associated with the piece $R$ and internal separator $S_1^+$, and the $i$th node on the path is associated with the piece $R_i \coloneqq R \setminus \bigcup_{j < i} S_j^+$ and internal separator $S_i^+$. Observe that the union of $R_i$ and its external separators is $R_i^+ \coloneqq R_i \cup \left( \bigcup_{j < i} S_j^+\right) \cup (R^+ \setminus R) = R^+$. Clearly, for every piece $R_i$ the corresponding internal separator is a $2C$-neighborhood of one shortest path in $R_i^+ = R^+$.
Moreover, any path in $G$ that contains a vertex in $R_i$ and a vertex not in $R_i$ intersects a vertex in an external separator, ie $R^+_i \setminus R_i$.
To finish building $\cT$, we consider every connected components $C$ of $R_k \setminus S_k^+$; define $C^+ \coloneqq C \cup (R^+ \setminus R) \cup \left( \bigcup_{j \le k} S_j^+ \right)$ to be the union of $C$ with all its external separators; recursively compute a tree $\cT_C$ with inputs $C$ and $C^+$; and attach $\cT_C$ as a child subtree of the node $R_k$ in $\cT$.

A inductive argument implies that calling this procedure with $R = R^+ = G$ produces a valid $2C$-neighborhood path separator hierarchy for $G$ as in \Cref{def:separator}.
    Moreover, the height of the final separator hierarchy is $O(\log^2 n)$: the depth of the recursion is $O(\log n)$ (because the weight of the piece $R$ decreases by half in each iteration) and in each iteration we attach a path of $O(\log n)$ nodes to $\cT$.

\end{proof}

It remains to prove \Cref{lem:string-path-separator}. To do this, we combine two tools: the \emph{path aggregation} theorem
\cite{hathcock2025steiner}, and the standard shortest-path separator theorem in planar graphs \cite{lipton1979separator,thorup2004compact}.
\begin{lemma}[Path Aggregation {\cite[Theorem 1.1]{hathcock2025steiner}}]
\label{lem:path-aggregation}
    Let $H$ be an $n$-vertex graph, let $r \in V(H)$ be a designated root vertex, and let $X$ be an arbitrary subset of vertices. 
    For every vertex $x \in V(H)$, let $P_x$ be a simple $x$-to-$r$ path in $H$, and let $\cP = \set{P_x : x \in X}$.
    Then there is a tree $T$ of $H$, such for every $x \in V(H)$: the $x$-to-$r$ path in $T$ can be written as the concatenation of $O(\log n)$ subpaths, each of which is a subpath of a path in $\cP$.%
    \footnote{In fact, \cite{hathcock2025steiner} prove something a bit stronger---they allow $G$ to be a directed multigraph, and they bound the \emph{switching cost} of paths in $T$.}

\end{lemma}
We will apply this lemma when $X = V(H)$, in which case the tree $T$ produced is a spanning tree of $H$.

\begin{lemma}[Planar Shortest-Path Separators {\cite[Lemma 2.3]{thorup2004compact}}]
\label{lem:planar-separator}
    Let $H$ be a planar graph, and let $T$ be a spanning tree of $H$. For any set of non-negative vertex weight of $G$, there is a set of 3 vertex-to-root paths in $T$ whose removal shatters $H$ into connected components with at most half the weight of~$H$.
\end{lemma}

We can now prove \Cref{lem:string-path-separator}.
\begin{proof}[of \Cref{lem:string-path-separator}]
    Let $H$ be a $C$-distortion planar emulator for $G$, with $V(G) = V(H)$. Fix an arbitrary set of non-negative weights on the vertices.
    We begin by defining a set of paths $\cP$ on $G$ and a corresponding set of paths $\cP^H$ on $H$. Compute a single-source shortest path tree $T$ in $G$, rooted at some arbitrary vertex $v \in V(G)$. Define $\cP$ to be the set of $r$-to-$v$ paths in $T$, over all $v \in V(G)$. 
    For every path $P = (r=p_1, \ldots, p_\ell)$ in $\cP$, we define the \EMPH{image} of $P$, \EMPH{$P^H$}, to be the simple path obtained from shortcutting corresponding walk in $H$: that is, $P^H$ is obtained by (1) taking the the union of, for each $i \in [\ell-1]$, the shortest path in $H$ between $p_i$ and $p_{i+1}$, and then (2) removing all cycles. We say that $P$ is the \EMPH{preimage} of $P^H$ in $G$. Let $\EMPH{$\cP^H$} = \set{P^H : P \in \cP}$.

    Having constructed the set of paths $\cP^H$, we now apply path aggregation (\Cref{lem:path-aggregation}) on $\cP^H$: we conclude that there is a spanning tree \EMPH{$T^H$} of $H$, such that every path in $T^H$ lies in the union of $O(\log n)$ paths in $\cP^H$.
    By the planar shortest-path separator theorem (\Cref{lem:planar-separator}), there are 3 paths $\EMPH{$\Pi$} = \set{\pi_1,\pi_2,\pi_3}$ in $T^H$ such that every connected component of $H \setminus \Pi$ has at half the total weight on the vertices of $H$. For each path $\pi_i \in \Pi$, let $\EMPH{$\cP^H[\pi_i]$} \subseteq \cP^H$ denote a set of $O(\log n)$ paths in $\cP^H$ whose union contains $\pi_i$; note that $\cP^H[\pi_i]$ exists by construction of the tree $T^H$. Similarly, let $\EMPH{$\cP^H[\Pi]$} = \cP^H[\pi_1] \cup \cP^H[\pi_2] \cup \cP^H[\pi_3]$. Finally, let $S \subseteq V(G)$ be the set of vertices which are in the $2C$-neighborhood (in $G$) of the preimage of paths in $\cP^H[\Pi]$ --- that is,
    \[\EMPH{$S$} \coloneqq \bigcup_{P^H \in \cP^H[\Pi]} \mathcal{N}_G(P, 2C).\]

    Clearly $S$ is the $2C$-neighborhood of $O(\log n)$ shortest paths in $G$. It remains to show that deleting $S$ from $G$ shatters $G$ into connected components with at most half the weight of $G$. We first pause to prove a helpful claim:
    \begin{equation}
    \label{eq:c-neighborhood-sep}
        \text{If vertex $v$ is in the $C$-neighborhood (in $G$) of a vertex of some $\pi_i \in \Pi$, then $v \in S$.}
    \end{equation}
    Indeed, let $v'$ be a vertex on $\pi_i \in \Pi$; we show that the $C$-neighborhood of $v'$ is in $S$. By definition, the set of paths $\cP^H[\pi_i]$ contains some path $P^H$ that contains $v'$. Let $P$ be the preimage of $P^H$. The preimage $P$ need not contain $v'$, but we have that $\dist_G(v', P) \le \dist_H(v', P) \le C$ --- indeed, the path $P^H$ is obtained from a walk composed of subpaths of length most $C$ whose endpoints are in $P$ (and even after we delete vertices in $P^H$ to make it a simple path, it remains true every vertex in $P^H$ is within distance $C$ of a vertex of $P$ in $H$). By construction, the set $S$ contains the $2C$-neighborhood (in $G$) of every vertex in $P \in \cP^H[\pi_i]$, and so by triangle inequality $S$ contains the $C$-neighborhood (in $G$) of $v'$. This proves \eqref{eq:c-neighborhood-sep}.
    
    We can now show that every connected component of $G \setminus S$ has at most half the weight of $G$. Let $X \subset V(G)$ be a connected component in $G \setminus S$. Let $V(\Pi) \subseteq V(H)$ denote the set of vertices on some path in $\Pi$. We claim $X$ is a subset of some connected component in $H \setminus V(\Pi)$; by construction of $\Pi$, this implies that $X$ has at most half the total weight of $G$ (which is the same as the total weight of $H$, as $V(H) = V(G)$). Indeed, suppose for contradiction that $X$ contained two vertices $x$ and $y$ in different connected components of $H \setminus V(\Pi)$. Let $\pi$ be a path between $x$ and $y$ in $G[X]$. Let $\pi^H$ be the image of $\pi$ in $H$.
    By assumption, $x$ and $y$ are in different connected components, so $\pi^H$ contains some vertex $v'$ in $V(\Pi)$. Every vertex in $\pi^H$ is within distance $C$ (in $H$) of some vertex in $\pi$ (by definition of image, as we argued above), so we conclude that there exists a vertex $v$ on $\pi$ with $\dist_G(v,v') \le \dist_H(v,v') \le C$. But \eqref{eq:c-neighborhood-sep} implies that $v \in S$, and so $\pi \cap S \neq \varnothing$. This contradicts our assumption that $\pi$ is a path in $G \setminus S$.
\end{proof}

\subsection{Proof of the Canonical Pairs Lemma}
\label{S:canonical-pairs}

\begin{observation}
\label{obs:nearby-portal}
    Let $R$ be a piece with internal separator $S$, and let $i$ be a scale. Let $\Pi_i$ denote the scale-$i$ portals of $R$. Every vertex $v \in S$ satisfies $\dist_{S}(v, \Pi_i) \le \e 2^i + O(C)$.
\end{observation}
\begin{proof}
    By \Cref{def:separator}, the separator $S$ comprises the $2C$-neighborhood of $\pi_S$ in $R^+$. In particular, the vertex $v \in S$ satisfies $\dist_{R^+}(v,\pi_S) \le O(C)$; moreover, every vertex on the shortest path in $R^+$ between $v$ and $\pi_S$ is in $S$ by definition, so $\dist_S(v,\pi_S) \le O(C)$.
    A geometric series implies that every vertex in $\pi_S$ is within distance $\e 2^i$ of $\Pi_i$.
\end{proof}

\begin{proof}[of \Cref{lem:canonical-pairs}]
    Initialize \EMPH{$\tilde P_{u,v}$} to be a shortest path in $G$ between $u$ and $v$. 
    Let \EMPH{$R$} be the
    highest piece in the separator hierarchy such that $\tilde P_{u,v}$ intersects the internal separator \EMPH{$S$} of $R$, and let \EMPH{$p$} be a point in $S \cap \tilde P_{u,v}$.
    By definition of separator hierarchy, $R$ contains every vertex in $\tilde P_{u,v}$ (otherwise $\tilde P_{u,v}$ would contain some vertex in an external separator of $R$, contradicting the choice of $R$).
    In particular, this means $\dist_{R}(u,p) + \dist_{R}(p,v)  = \dist_{G}(u,v)$.
    Let \EMPH{$s$} be the scale such that $\dist_G(u,v) \in [2^{s-2},2^{s-1})$.
    Observe that $2^s = \Theta(\dist_G(u,v))$.
    We assume without loss of generality that $C \le 2^s / 100$: otherwise, we simply take $P \coloneqq \tilde P_{u,v}$, and $P$ consists of a single subpath of length $O(C)$.
    
    We now prove that there is a path $P_{u}$ in $G$ between $u$ and $p$ with length $\dist_{R}(u,p) + O(\e 2^s \cdot \log n)$, such that $P_{u}$ can be written as the concatenation of $O(\log n)$ subpaths: 2 of which have length at most $O(\e 2^s)$, and the rest of which is a shortest path (in $R_j^+$) between some $(\cdot, R_j)$-pair. 
    By symmetry, such a path $P_{v}$ also exists between $p$ and $v$. The lemma follows by concatenating $P \coloneqq P_{u} \circ P_{v}$;
    indeed, path $P$ has length at most $(1+O(\e \log n))\cdot\dist_G(u,v)$ because of the definition of $s$, and $P$ can be decomposed into $O(\log n)$ subpaths as desired.

    \medskip \noindent \textbf{Construction of \boldmath{$P_{u}$}.}
    Let \EMPH{$\tilde P_u$} denote the subpath of $\tilde P_{u,v}$ that runs between $u$ and $p$; throughout this proof we view $P_u$ to be constructed as starting at $u$ and ending at $p$. 
    Let \EMPH{$R_1$} be the lowest piece such that $u$ is in $R_1$; in particular, this means that $u$ is in the internal separator \EMPH{$S_1$} of $R_1$. Define $\EMPH{$x'_1$}\coloneqq u$. 
    By \Cref{obs:nearby-portal}, there exists an $(s,R_1)$-portal \EMPH{$x_1$} on $S_1$ with $\dist_{S_1}(u,x_1) \le \e 2^s + O(C) \le O(\e 2^s)$. 
    Now, for $j > 1$, we inductively define: let \EMPH{$x'_j$} be the first vertex on $\tilde P_u$ that is not in piece $R_{j-1}$; let \EMPH{$R_j$} be the lowest piece that contains $x'_j$ and let \EMPH{$S_j$} be the internal separator of $R_j$; and let \EMPH{$x_j$} be the $(s,R_j)$-portal on $S_j$ provided by \Cref{obs:nearby-portal} with $\dist_{S_j}(x_j, x_j') \le O(\e 2^s)$.
    We remark that the definition of separator hierarchy implies that piece $R_j$ is a proper ancestor of $R_{j-1}$; thus,
    after $\EMPH{$k$} = O(\log n)$ iterations, we have $R_k \gets R$ and the process terminates (as $\tilde P_u$ is contained in $R$). 
    We slightly abuse notation and define $\EMPH{$x_{k+1}'$} \coloneqq p$, $\EMPH{$R_{k+1}$} \coloneqq R$ and $\EMPH{$S_{k+1}$} \coloneqq S$, and $\EMPH{$x_{k+1}$}$ to be an $(s,R)$-portal on $S$ with $\dist_S(x_{k+1}, p) \le \e 2^s + O(C)$.
    
    Let \EMPH{$Q$} be the shortest path in $R_1^+$ between $u$ and $x_1$.
    For $j \in [k]$, define \EMPH{$P_j$} to be the shortest path in $R_j^+$ between $x_j$ and $x_{j+1}$.
    Let \EMPH{$Q'$} be the shortest path in $R^+$ between $x_{k+1}$ and $p$.
    We define \EMPH{$P_{u}$} to be the concatenation \(P_{u} = Q \circ P_1 \circ P_2 \ldots \circ P_k \circ Q'.\)

    \medskip \noindent \textbf{Stretch.}
    For every $j \in [k]$, let \EMPH{$\tilde P[x_j':x_{j+1}']$} denote the subpath of $\tilde P_u$ that runs from $x_{j}'$ to $x_{j+1}'$. The subpaths $\tilde P[x_{j}', x_{j+1}']$ are disjoint, and their concatenation is precisely the path $\tilde P_u$.
    We conclude that
    \[\sum_{j \in [k]} \len {\tilde P[x_j' : x_{j+1}']} \le \dist_{R}(u,p).\]
    Each subpath $\tilde P[x_{j}', x_{j+1}']$ is a path in $R_j^+$ (specifically, it lies entirely in $R_j$ except for the endpoint $x_{j+1}'$ which is in some external separator of $R_j$). Moreover, because $R_{j+1}$ is an ancestor%
    \footnote{note that if $j = k$, then $R_j = R_{j+1}$: even in this case, $R_j$ is a (non-proper) ancestor of $R_{j-1}$} 
    of $R_j$, the separators $S_j$ and $S_{j+1}$ are both subsets of $R_j^+$, and we have $\dist_{R_j^+}(x_{j}', x_{j}) \le O(\e 2^s)$ and $\dist_{R_j^+}(x_{j+1}', x_{j+1}) \le \dist_{S_{j+1}}(x_{j+1}', x_{j+1}) \le O(\e 2^s)$.
    Triangle inequality implies
    \[
    \len {P_j} = \dist_{R_j^+}(x_j,x_{j+1})
    \le O(\e 2^s) + \dist_{R_j^+}(x_{j}',x_{j+1}')  
    =  O(\e 2^s)+ \len{\tilde P[x_{j}', x_{j+1}']}.
    \]
    We conclude that 
    \begin{align*}
        \len {P_{u,p}} &= \len Q + \sum_{j \in [k]} \len {P_j} + \len {Q'}\\
        &\le  O(\e 2^s \cdot \log n) + \sum_{j\in[k]} \len {\tilde P[x_j',x'_{j+1}]}\\
        &\le O(\e 2^s \cdot \log n) + \dist_{R}(u,p)
    \end{align*}
    as desired.
    
    \medskip \noindent \textbf{Decomposition into subpaths.}
    Clearly $\len Q$ and $\len {Q'}$ are upper-bounded by $O(\e 2^s)$. Now consider a subpath $P_j$ for $j \in [k]$, which runs between $x_{j}$ and $x_{j+1}$. 
    We claim $\dist_{R_j^+}(x_{j}, x_{j+1}) \le 2^s$: indeed,
    \begin{align*}
    \dist_{R_j^+}(x_{j}, x_{j+1})
    &\le O(\e 2^s) + \len {\tilde P [x_{j}', x_{j+1}']}\\
    &\le O(\e 2^s) + \len {\tilde P}\\
    &\le (1+O(\e)) \cdot 2^{s-1}
    \le 2^s   
    \end{align*}
    where the second-to-last inequality follows from the fact that $\len {\tilde P} \le 2^{s-1}$, and the last inequality holds
    for sufficiently small $\e$.
    By definition, $x_{j}$ is an $(s,R_{j})$-portal, $x_{j+1}$ is an $(s,R_{j+1})$-portal, and $R_{j+1}$ is an ancestor piece of $R_{j}$.
    Thus $\canon {x_{j}} {x_{j+1}}$ is an $(s,R_j)$-pair. Thus, $P_j$ is a shortest path between some canonical pair, as desired.
\end{proof}

\subsection{From \Cref{lem:string} to $(1+\e, +\poly\log n)$-distortion}
\label{SS:approx-reduction}

We use the path-straightening technique introduced by Nguyen, Scott, and Seymour~\cite{nguyen2025asymptotic}.
The lemma below is a slight adaptation of statement of 2.2 in \cite{nguyen2025asymptotic} for finite graphs.
We first need to formally introduce the notion of \emph{quasi-isometry}, which is defined as follows.
Let $G$ and $H$ be edge-weighted graphs. A mapping $\phi: V(G)\to V(H)$ is an \EMPH{$(L,C)$-quasi-isometry}, if and only if:
\begin{itemize}
\item for all $u,v\in V(G)$, $\dist_H(\phi(u),\phi(v))\le L\cdot \dist_G(u,v)+C$;
\item for all $u,v\in V(G)$, $\dist_G(u,v)\le L\cdot \dist_H(\phi(u),\phi(v))+C$; and
\item for every $y\in V(H)$, there exists $u\in V(G)$, such that $\dist_H(y,\phi(u))\le C$.
\end{itemize}

\begin{theorem}[{\cite[2.2]{nguyen2025asymptotic}}]
\label{thm:NSS25-2.2}
Let $C\ge 4$, and let $\phi$ be a $(C-1,C)$-quasi-isometry from a graph $G$ to an unweighted graph $H$. 
Let $\pi$ be a shortest path in $G$, with vertices $p_1,\ldots,p_\ell$ in order. 
Then there exists
(i) an edge weight function $w: E(H)\to \mathbb{N}$ with $\max_{e\in E(H)} w(e)\le 32C^4$; 
(ii) a path $\pi'$ of $H$; and 
(iii) a set of vertices $\set{q_j}_{j\in J}$ in the order of $\pi'$ called \EMPH{lampposts}, indexed by $J$ which is a subset of $[\ell]$, such that:
\begin{enumerate}
    \item In the reweighted graph $(H,w)$, $\pi'$ is a shortest path between its endpoints, and for each pair $(j,j')\in J^2$, $\dist_{(H,w)}(q_j,q_{j'})=|j-j'|$;
    \item In the unweighted graph $H$,%
    \begin{itemize}\cramped
        \item for each $q_j$, $\dist_H(\phi(p_j),q_j) \le 2C$;
        \item for each $p_i$, there exists $q_j$ with $|j-i|\le C^2$ and $\dist_{H}(\phi(p_i),q_{j})<C^3$; and
        \item each vertex of $\pi'$ is within distance $3C$ from some $q_j$.
    \end{itemize}
\end{enumerate}
\end{theorem}

\begin{figure}[t]
\centering
\includegraphics[width=0.5\linewidth]{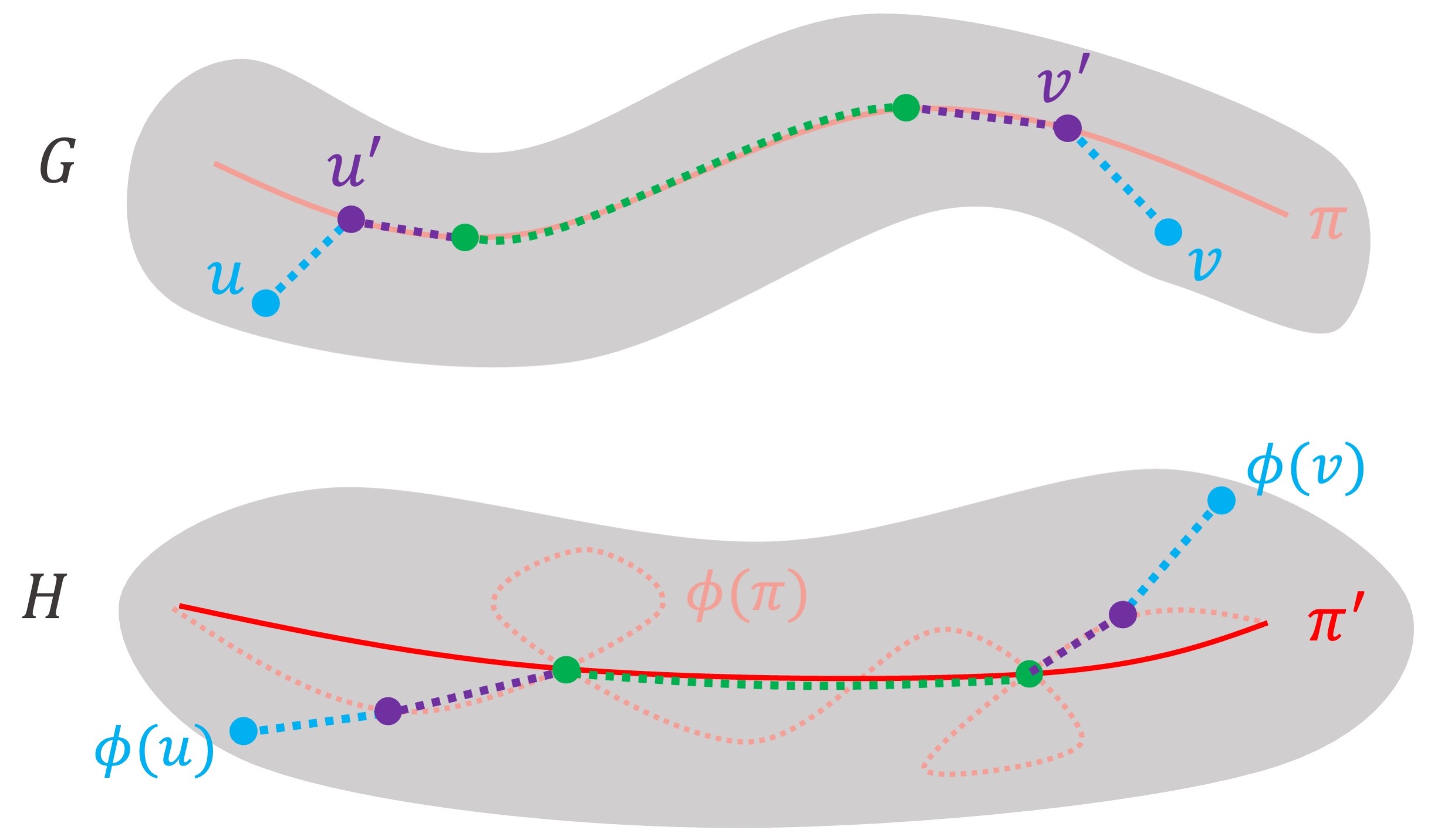}
\caption{A illustration of how $+\poly C$ distortion can be achieved for graph $G$ consisting of a $C$-neighborhood around a shortest path, by applying the \cite{nguyen2025asymptotic} theorem. Vertices $u$ and $v$ are two endpoints of a shortest path $\pi$ in $G$ (the light red path on top), where $G$ lies within the $C$-neighborhood of $\pi$; the image $\phi(\pi)$ in the $O(1)$-distortion emulator $H$ is then straightened to $\pi'$ (dark red). Lampposts are shown in green. All blue and purple segments are short (of length $\poly C$), and the distances between lampposts in $G$ and in $H$ are preserved.}
    \label{fig:straightening}
\end{figure}

\Cref{thm:NSS25-2.2} effectively ``straightens'' a shortest path in $G$. As described in the technical overview, it can be used to covert an $O(1)$-multiplicative distortion emulator of $G$ into a $+O(1)$-additive distortion emulator of $G$, for the toy example where $G$ is a $O(1)$-neighborhood around a shortest path. See \Cref{fig:straightening} for an illustration.

Let $G$ be an unweighted $n$-vertex graph with a distortion-$C_{\ref{thm:multToAdd}}$ embedding into a planar emulator \EMPH{$H_{\rm init}$} that has non-negative integer edge weights and $V(H_{\rm init}) = V(G)$.  
\begin{observation}
\label{obs:bounded-edges}
Without loss of generality, we may assume that every edge of $H_{\rm init}$ has weight at most $C_{\ref{thm:multToAdd}}$.    
\end{observation}
\begin{proof}
One can delete all edges from $H_{\rm init}$ with weight larger than $C$, without affecting the distortion guarantee. Indeed, by assumption every edge $(u,v)$ of $G$ can be mapped to a path in $H_{\rm init}$ between $u$ and $v$ of length at most $C$ (meaning that this path uses edges in $H_{\rm init}$ with weight at most $C$). Moreover, every path in $G$ can be viewed as a sequence of edges, so every path in $G$ of length $\ell$ can be mapped to a path in $H_{\rm init}$ that has uses only edges of length $\le C$ and has total length $\le C \cdot \ell$.
\end{proof}
Define \EMPH{$H$} to be the unweighted graph obtained from $H_{\rm init}$ by subdividing every weighted edge: specifically, we replace each edge $e$ of weight $w_{\rm init}(e)$ in $H_{\rm init}$ with a path of $w_{\rm init}(e)$ unweighted edges in $H$ (recall that $w_{\rm init}(e)$ is a non-negative integer by assumption).
The resulting graph $H$ is a $C_{\ref{thm:multToAdd}}$-distortion \emph{unweighted} planar emulator for $G$, and (by \Cref{obs:bounded-edges}) it has at most $O(C_{\ref{thm:multToAdd}} \cdot n) = O(n)$ vertices.
For the sake of clarity, let $\EMPH{$\phi$}:V(G) \to V(H)$ be the mapping from vertices in $G$ to vertices in $H$ which is a $C_{\ref{thm:multToAdd}}$-distortion embedding. In the rest of this section, we work with $H$ instead of $H_{\rm init}$.
 
Throughout this section, define $\EMPH{$C$} \coloneqq C_{\ref{thm:multToAdd}}+2$, and assume WLOG that $C \ge 4$.
Let $\cO$ be the set of paths in $G$ given by \Cref{lem:string}. 
Consider now a path $\pi\in \cO$ with $\pi=(p_1,\ldots,p_\ell)$. 
We define graph \EMPH{$H_\pi$} as the union of shortest paths between vertices $\phi(p_i)$ and $\phi(p_{i+1})$ in $H$ for every $1 \le i < \ell$. (Clearly, each such path has length at most $C-2$ since $\phi$ is a distortion-$(C-2)$ embedding.)

\begin{claim}
\label{clm:quasi-isometry}
$\phi$ is a $(C-1,C/2 - 1)$-quasi-isometry (and so also a $(C-1,C)$-quasi-isometry)
from $\pi$ to $H_\pi$.
\end{claim}
\begin{proof}
We check the three conditions for quasi-isometry.

First, we show that for every $x \in V(H_\pi)$ there exists $y \in V(\pi)$ such that $\dist_{H_\pi}(x, \phi(y)) \le C/2 - 1$. Since every vertex $x$ of $H_\pi$ lies on the shortest path between some $(\phi(p_i),\phi(p_{i+1}))$ pair, and since such a shortest path has length at most $C-2$, in graph $H_\pi$ (and therefore in $H$), $x$ must lie within distance $(C-2)/2$ from either $\phi(p_i)$ or $\phi(p_{i+1})$. 

Second, we fix vertices $u$ and $v$ in $\pi$ and show that $\dist_{H_\pi}(\phi(u),\phi(v)) \le (C-2) \dist_\pi(u,v)$.
Since $\phi$ is a distortion-$(C-2)$ embedding from $G$ to $H$, for each $i\ge 1$, $\dist_H(\phi(p_i),\phi(p_{i+1}))\le (C-2)\cdot \dist_G(p_i,p_{i+1})$, and so
\[
\begin{split}
\dist_{H_\pi}(\phi(p_i),\phi(p_j))& \le \sum_{t=i}^{j-1}\dist_{H_\pi}(\phi(p_i),\phi(p_{i+1}))=\sum_{t=i}^{j-1}\dist_{H}(\phi(p_i),\phi(p_{i+1}))\\
&\le (C-2)\cdot \sum_{t=i}^{j-1}\dist_G(p_i,p_{i+1})
\le (C-2)\cdot \sum_{t=i}^{j-1}\dist_{\pi}(p_i,p_{i+1})
\le (C-2)\cdot \dist_{\pi}(p_i,p_j)
\end{split}
\]
as desired.

Finally, we show $\dist_{\pi}(u,v) \le (C-1) \cdot \dist_{H_\pi}(\phi(u),\phi(v))$. Consider a shortest path $[x_1, \ldots, x_{\ell+1}]$ between $\phi(u)$ and $\phi(v)$ in $H_\pi$. As $H_\pi$ is unweighted, this path has length $\ell$; that is, $\dist_{H_\pi}(\phi(u),\phi(v)) = \ell$. We aim to show that $\dist_{\pi}(u,v) \le (C-1) \cdot \ell$. As we argued above, for every vertex $x_i$ there exists some $\EMPH{$y_i$} \in V(\pi)$ such that $\dist_{H_\pi}(x_i, \phi(y_i)) \le C/2 -1$; we take $y_0 \coloneqq u$ and $y_{\ell+1} \coloneqq v$.  Observe that for every $i \in [\ell]$, we have that \[\dist_{G}(y_i, y_{i+1}) \le \dist_{H_\pi}(\phi(y_i), \phi(y_{i+1}))
\le \dist_{H_\pi}(\phi(y_i), x_i) + 1 + \dist_{H_\pi}(x_{i+1}, \phi(y_{i+1}))
\le C-1.\]
Now consider the walk \EMPH{$P$} in $G$ between $u$ and $v$, formed by concatenating the shortest paths (in $G$) from $y_i$ to $y_{i+1}$ for all $i \in [\ell]$. We have that $\len P \le (C-1) \cdot \ell$. Moreover, because $\dist_G(y_i, y_{i+1}) \le C-1$, every vertex along the walk $P$ is within distance $(C-1)/2$ in $G$ of some vertex of $\pi$; that is, the walk $P$ clearly lies within the subgraph $G[\cN_G(\pi, 2(C-2))]$ induced by vertices in the $2C$-neighborhood of $\pi$ in $G$. As \Cref{lem:string} asserts that $\pi$ is a shortest path in $G[\cN_G(\pi, 2(C-2))]$, we have $\dist_\pi(u,v) \le \len P \le (C-1)\cdot \ell$.
\end{proof}

\begin{claim}
\label{clm:disjoint}
The collection of graphs $H_\pi$ over every $\pi$ in $\cO$, as subgraphs of $H$, are vertex-disjoint.
\end{claim}
\begin{proof}
Assume for contradiction that some vertex $x$ in $H$ belongs to both $H_{\pi}$ and $H_{\pi'}$, for distinct paths $\pi$ and $\pi'$ in $\cO$. 
From \Cref{clm:quasi-isometry} and the definition of a $(C-1,C/2 - 1)$-quasi-isometry, in graph $H$, $x$ lies within distance $C/2 - 1$ from $\phi(p)$ for some $p\in \pi$, 
and $x$ also lies within distance $C/2 - 1$ from $\phi(p')$ for some $p'\in \pi'$, together implying that $\dist_{H}(\phi(p),\phi(p')) < C$. 
However, from \Cref{lem:string} [$O(C)$-disjointness] property, $\dist_G(\pi,\pi')\ge 2C_{\ref{lem:string}}= 2(C-2) \ge C$, 
and since $\phi$ is a non-contracting embedding from $G$ to $H$ by \Cref{thm:main}, $\dist_H(\phi(\pi),\phi(\pi')) \ge C$, a contradiction.
\end{proof}

From \Cref{clm:quasi-isometry}, $\phi$ is a $(C-1,C)$-quasi-isometry from $\pi$ to $H_\pi$.
For each path $\pi\in \cO$, we
apply \Cref{thm:NSS25-2.2} to $\phi$ and $\pi$, and obtain (i) an edge weighting $w_{H_\pi}$ on edges of $H_\pi$ with maximum weight $32C^4$; and (ii) a path $\pi'$ satisfying the conditions in \Cref{thm:NSS25-2.2}. 
Collecting all edge weightings $\set{w_{H_\pi}}_{\pi\in \cO}$, we define an edge weight function \EMPH{$w$} on $H$ as follows:
\begin{itemize}
\item Define $\EMPH{$C'$} \coloneqq 150\cdot C^7$.
\item For each edge $e$ that lies in some (unique) graph $H_\pi$, set $w(e) \coloneqq w_{H_\pi}(e)$. 
\item For each edge $e$ that does not belong to any $H_\pi$ but has an endpoint  in some $H_\pi$, set $w(e) \coloneqq C'$.
\item For every other edge $e$, set $w(e) \coloneqq 1$.
\end{itemize}
Notice that \Cref{clm:disjoint} is needed for the edge weighting $w$ is well-defined; this is to ensure that every edge $e$ belongs to at most one graph $H_\pi$.

\begin{observation}
The maximum weight $w(e)$ over all edges $e$ in $H$ is $C'$.
\end{observation}

\subsubsection{Proof of \Cref{thm:multToAdd}}
Now we are ready to prove the second main theorem.
We now show that the mapping $\phi$ from $G$ to $H$, now viewed as a mapping from $G$ to the reweighted graph $(H,w)$ with edge weighting $w$ defined above, 
is almost\footnote{Distances may contract by an additive $O(1)$ amount. At the end of the proof, we add new Steiner vertices to get rid of any contraction} a $(1+\e_0, +\poly\log n)$-embedding.

\paragraph{The upper bound proof.}
We first show that for each pair $u,v\in V(G)$,
\[
\dist_{(H,w)} (\phi(u),\phi(v))
\le \Paren{1+O(C^8 \cdot \e_0) } \cdot \dist_{G}(u,v) + O\Paren{ C^9\cdot\e_0^{-4}\cdot \log^{17} n}.
\]
From \Cref{lem:string} [low-hop] property, there is a path $P$ between vertices $u$ and $v$ in $G$,
such that either $\norm{P} < O( C \cdot \e_0^{-4}\cdot \log^{17} n)$,
or $\norm{P} \le (1+\e_0)\cdot\dist_G(u,v)$ and
$P = Q_1 \circ P_1 \circ Q_2 \circ P_2 \circ \dots \circ Q_\ell$,
where $\ell \le O(\e_0^{-3} \log^{15} n)$, and each $P_i$ is a subpath of some path in $\cO$, and $\sum_{i=1}^{\ell} \norm{Q_i} \le \e_0\cdot \norm{P}$.

We first deal with the case where $\len P \ge \Omega(C \cdot \e_0^{-4} \cdot \log^{17}n)$. We have
$\norm{P}=\sum_{i=1}^{\ell} \norm{Q_i} + \sum_{i=1}^{\ell-1} \norm{P_i}$. 
Denote by $x_i$ and $y_i$ the starting and ending points of $Q_i$, respectively. Denote $\EMPH{$\tilde x_i$} \coloneqq \phi(x_i)$ and $\EMPH{$\tilde y_i$} \coloneqq \phi(y_i)$. 
Denote $\EMPH{$\tilde u$} \coloneqq \phi(u)$ and $\EMPH{$\tilde v$} \coloneqq \phi(v)$, so
\[
\dist_{(H,w)}(\tilde u,\tilde v)\le \sum_{i=1}^{\ell}\dist_{(H,w)}(\tilde x_i,\tilde y_i) + \sum_{i=1}^{\ell-1}\dist_{(H,w)}(\tilde y_i,\tilde x_{i+1}).
\]
Each edge $e$ of $H$ is given a weight $w(e)$ at most $C'$. 
Since $\phi$, when viewed as a mapping from $G$ to the unweighted $H$, is a distortion-$(C-1)$ embedding, we have for each $i$, $\dist_{(H,w)}(\tilde x_i,\tilde y_i)\le C'\cdot \dist_{H}(\tilde x_i,\tilde y_i)\le C'\cdot (C-1)\cdot \dist_{G}(x_i, y_i)$, and so
\[
\sum_{i=1}^{\ell}\dist_{(H,w)}(\tilde x_i,\tilde y_i)\le 
C'\cdot (C-1)\cdot \sum_{i=1}^{\ell}\dist_{G}(x_i,y_i)
\le 
C'\cdot (C-1)\cdot \sum_{i=1}^{\ell} \,\norm{Q_i}
= O(C'\cdot C\cdot \e_0)\cdot \norm{P}.
\]

Consider a path $P_i$, which is a subpath of $\pi$ in $\cO$, with endpoints $y \coloneqq y_i$ and $x \coloneqq x_{i+1}$. 
When we applied \Cref{thm:NSS25-2.2} to $\phi$ and $\pi$, we obtained a path $\pi'$ in $H_{\pi}$ (and thus in $H$).
Denote $\pi = (p_1,\ldots,p_\ell)$; there exists $i$ and $i'$ such that $y = p_i$ and $x = p_{i'}$ holds, so $\dist_{\pi}(x,y) = i'-i$. 
By definition of the lampposts $\set{q_j}_{j\in J}$, from \Cref{thm:NSS25-2.2} there exists 
\begin{itemize}
    \item $q_j$ with $|j-i|\le C^2$ and $\dist_{H}(\phi(p_i),q_{j})<C^3$, and
    \item $q_{j'}$ with $|j'-i'|\le C^2$ and $\dist_{H}(\phi(p_{i'}),q_{j'})<C^3$,
\end{itemize}
such that in the reweighted graph $(H,w)$, $\dist_{(H,w)}(q_j,q_{j'})=|j-j'|$.
Note that 
\[
|j-j'|\le |j-i|+|i-i'|+|i'-j'|\le |i-i'|+2C^2.
\]
Therefore,
\[
\begin{split}
\dist_{(H,w)}(\phi(p_i),\phi(p_{i'}))
& \le 
\dist_{(H,w)}(\phi(p_i),q_{j})
+\dist_{(H,w)}(q_j,q_{j'})
+\dist_{(H,w)}(q_{j'},\phi(p_{i'}))\\
& \le C^3\cdot 32C^4  + |j-j'|+ C^3\cdot 32C^4\\
& \le 64C^7 + |i-i'|+2C^2\\
& \le \dist_\pi(p_i,p_{i'})+ 66\cdot C^7.
\end{split}
\]
In other words, for each path $P_i$, we lose an additive distortion of $O(C^7)$ on its length after embedding $P_i$ into the reweighted graph $(H,w)$. Therefore, because $\ell \le O(\e_0^{-3} \log^{15}n)$,
\[
\sum_{i=1}^{\ell-1}\dist_{(H,w)}(\tilde y_i,\tilde x_{i+1})
\le 
\sum_{i=1}^{\ell-1}\dist_{P_i}(y_i,x_{i+1}) + O\Paren{C^7\cdot\e_0^{-3} \log^{15} n}.
\]
Altogether,
\[
\begin{split}
\dist_{(H,w)}(\tilde u,\tilde v) &\le \sum_{i=1}^{\ell}\dist_{(H,w)}(\tilde x_i,\tilde y_i) + \sum_{i=1}^{\ell-1}\dist_{(H,w)}(\tilde y_i,\tilde x_{i+1})\\
&\le O(C'\cdot C\cdot \e_0)\cdot \norm{P} + \sum_{i=1}^{\ell-1} \dist_{P_i}(y_i,x_{i+1}) + O\Paren{C^7\cdot\e_0^{-3} \log^{15} n}\\
&\le \Paren{1+O(C^8\cdot \e_0) } \cdot\dist_{G}(u,v) + O\Paren{C^7\cdot\e_0^{-3} \log^{15} n}
\end{split}
\]
in the case that $\norm{P} \ge \Omega(C\cdot\e_0^{-4}\log^{17} n)$.
On the other hand, if $\norm{P} < O(C\cdot\e_0^{-4} \cdot \log^{17} n)$, then
\[\dist_{(H,w)}(\tilde u, \tilde v) \le
O(C' \cdot C) \cdot \len{P} = O(C^{9} \cdot \e_0^{-4} \cdot \log^{17} n).
\]
because each edge in $P$ can be mapped to a path of at most $C$ edges in the unweighted $H$, and the maximum edge weight in the reweighted graph $(H,w)$ is $C'$. This proves the desired upper bound.

\paragraph{The lower bound proof.}
Consider a pair of vertices $(u,v)$ in $G$, and denote $\tilde u \coloneqq \phi(u)$ and $\tilde v \coloneqq \phi(v)$. 
We will show that
\[
\dist_{(H,w)}(\tilde u,\tilde v)\ge \dist_{G}(u,v)-O(C^7).
\]

Recall that we have reweighted the edges in disjoint subgraphs $\set{H_\pi}_{\pi\in \cO}$. 
Consider the shortest path $P^*$ in $(H,w)$ between $\tilde u$ and $\tilde v$. 
If $P^*$ does not intersect any such subgraph $H_\pi$, then according to our edge weighting $w$, all edges of $P^*$ are given weight at least $1$, and so $\dist_{(H,w)}(\tilde u,\tilde v)\ge \dist_{H}(\tilde u,\tilde v)$. 
Since $\phi$ is a distortion-$C$ embedding from $G$ to $H$, $\dist_{H}(\tilde u,\tilde v)\ge \dist_{G}(u,v)$. 
Altogether, in this case $\dist_{(H,w)}(\tilde u,\tilde v)\ge \dist_{G}(u,v)$.
Therefore, the remaining case is when $P^*$ intersects some subgraph $H_\pi$.

We view $P^*$ as being directed from $\tilde u$ to $\tilde v$. 
Assume that $P^*$ intersects subgraphs $H_{\pi_1},\ldots,H_{\pi_k}$ sequentially. (Note that it may be the case that $\pi_i = \pi_j$ for some $i \neq j$; if $P^*$ wanders out of $H_{\pi_i}$ and then back in, we count the second time that $P^*$ enters $H_{\pi_i}$ as a second subgraph in the list.) For each $1\le i\le k$, denote by $x_i$ the first vertex of $H_{\pi_i}\cap P^*$, and denote by $y_i$ the last vertex of $H_{\pi_i}\cap P^*$ such that the subpath of $P^*$ between $x_i$ and $y_i$ is fully contained in $H_{\pi_i}$. We have
\[
\dist_{(H,w)}(\tilde u,\tilde v)=\dist_{(H,w)}(\tilde u,x_1)+\Paren{\sum_{i=1}^k\dist_{(H,w)}(x_i,y_i)}+\Paren{\sum_{i=1}^{k-1}\dist_{(H,w)}(y_i,x_{i+1})}+\dist_{(H,w)}(y_k,\tilde v).
\]
From our edge weighting $w$, each edge of $P^*[y_i,x_{i+1}]$ is given weight at least $1$, and at least one edge is given weight $C'$, so for each $i$, $\dist_{(H,w)}(y_i,x_{i+1})\ge \dist_{H}(y_i,x_{i+1})+(C'-1)$.

For each $i$, recall that from \Cref{clm:quasi-isometry},
$\phi$ is a $(C-1,C)$-quasi-isometry from $\pi_i$ to $H_{\pi_i}$, so there is some $a_i\in V(\pi_i)$ with $\dist_H(x_i,\phi(a_i))\le C$, and there is some $b_i\in V(\pi_i)$ with $\dist_H(y_i,\phi(b_i))\le C$.
Because the edge weighting $w_{H_{\pi_i}}$ on $H_{\pi_i}$ has maximum edge weight $32 C^4$, we have $\dist_{H,w}(x_i,\phi(a_i))\le 32 C^5$ and $\dist_{H,w}(y_i,\phi(b_i))\le 32 C^5$.
Therefore, 
\[
\dist_{(H,w)}(y_i,x_{i+1})\ge \dist_{H}(y_i,x_{i+1})+(C'-1)\ge \dist_{H}(\phi(b_i),\phi(a_{i+1}))+(C'-1)-64C^5\ge \dist_{G}(b_i,a_{i+1})+C'/2.
\]
On the other hand, from \Cref{thm:NSS25-2.2} and similar analysis in the upper bound proof,
\[
\dist_{(H,w)}(\phi(a_i),\phi(b_i))\ge \dist_{\pi_i}(a_i,b_i) - 66\cdot C^7\ge \dist_{G}(a_i,b_i) - 66\cdot C^7,
\]
and so (using $C \ge 4$), we have
\[\dist_{(H,w)}(x_i,y_i) \ge \dist_{(H,w)}(\phi(a_i),\phi(b_i)) - 64C^5 \ge \dist_{G}(a_i,b_i) - 70\cdot C^7.\]
With similar approach for the pairs $(\tilde u, x_1)$ and $(y_k,\tilde v)$, we get that
\[
\begin{split}
\dist_{(H,w)}(\tilde u,\tilde v) & =\dist_{(H,w)}(\tilde u,x_1) + \Paren{\sum_{i=1}^k\dist_{(H,w)}(x_i,y_i)} + \Paren{\sum_{i=1}^{k-1} \dist_{(H,w)}(y_i,x_{i+1})} + \dist_{(H,w)}(y_k,\tilde v)\\
& \ge 
\dist_{(H,w)}(\tilde u,x_1) + \Paren{\sum_{i=1}^k (\dist_{G}(a_i,b_i)-70\cdot C^7)} + \Paren{\sum_{i=1}^{k-1} (\dist_{G}(b_i,a_{i+1})+C'/2)} + \dist_{(H,w)}(y_k,\tilde v)\\
& \ge 
\dist_{G}(u,a_1)+\Paren{\sum_{i=1}^k \dist_{G}(a_i,b_i)} + \Paren{\sum_{i=1}^{k-1}\dist_{G}(b_i,a_{i+1})} + \dist_{G}(b_k, v) - k\cdot 70\cdot C^7 - O(C^7) + (k-1)\cdot \frac{C'}{2}\\
&\ge \dist_G(u,v)-O(C^7),
\end{split}
\]
where the last inequality is due to $C'=150\cdot C^7$.

To sum up, we have constructed a planar graph $(H,w)$ and an embedding $\phi: V(G)\to V(H)$, such that for each pair $u,v\in V(G)$,
\[
\dist_{G} (u,v)-O(C^7)\le \dist_{(H,w)} (\phi(u),\phi(v))
\le \Paren{1+O(C^8 \cdot \e_0) } \cdot \dist_{G}(u,v) + O\Paren{ C^9\cdot\e_0^{-4}\cdot \log^{17} n}.
\]
To complete the proof of \Cref{thm:multToAdd}, we modify the graph $H$ and the mapping $\phi$ as follows.
For each vertex $u$, denote its image in $H$ by $\tilde u=\phi(u)$. We create a new vertex $\tilde u'$ and connect it to $\tilde u$ by a single edge of length $\Theta(C^7)$. The resulting graph is denoted by $H'$. Clearly, $H'$ is also planar (as the new edge $(\tilde u,\tilde u')$ can be drawn within a tiny neighborhood of $\tilde u$). 
We then change the mapping $\phi$ into a mapping $\phi'$ that maps each vertex $u$ to the new vertex $\tilde u'$ (instead of $\tilde u$).
It is then easy to verify that the distance between every pair $\phi'(u),\phi'(v)$ in $H'$ is $\Theta(C^7)$ greater than the distance between $\phi(u),\phi(v)$ in $H$. 
Consequently, we ensure that for every pair $u,v\in V(G)$,
\[
\begin{split}
\dist_{G} (u,v)\le \dist_{H'} (\phi'(u),\phi'(v))
& \le \Paren{1+O(C^8 \cdot \e_0) } \cdot \dist_{G}(u,v) + O\Paren{ C^9\cdot\e_0^{-4}\cdot \log^{17} n} + \Theta(C^7)\\
& \le \Paren{1+O(C^8 \cdot \e_0) } \cdot \dist_{G}(u,v) + O\Paren{ C^9\cdot\e_0^{-4}\cdot \log^{17} n}
\end{split}
\]
as claimed in \Cref{thm:multToAdd}, after rescaling $\e_0 \gets \e_0 / \Theta(C^8)$.

\section{Applications with \boldmath{$(1+\e, +O(1))$}-Distortion}
\label{S:applications}

\subsection{Tree cover and distance oracles}

In this section we prove \Cref{thm:tree-cover}. Note that the assumption of \Cref{thm:tree-cover} is that the input graph $G$ has an $O(1)$-distortion planar emulator; by applying Steiner point removal for planar graphs (as in \Cref{cor:small-emulator}), we may assume that $G$ has an $O(1)$-distortion planar emulator $H$ with $V(H) = V(G)$.

\subsubsection{Sparse cover}
Recall that a cluster of $G$ is a connected subset of vertices in $G$. A \EMPH{$(\beta, s, \Delta)$-sparse partition cover} of a graph $G$ is a collection $\cC$ of clusters, such that (1) every cluster has (weak) diameter at most $\Delta$; 
(2) for every $v \in V(G)$ there exists a cluster $C \in \cC$ such that all vertices within distance $\Delta/\beta$ to $v$ are in $C$; 
(3) the clusters in $\cC$ can be partitioned into $s$ partitions $\cP_1, \ldots, \cP_s$ (where each $\cP_i$ is a partition of $V(G)$ into clusters).
For any $\Delta > 0$, planar graphs admit $(O(1), O(1), \Delta)$-sparse partition cover. String graphs are also known to admit $(O(1), O(1), \Delta)$-sparse partition cover. See \cite[Theorem 1.3]{abrishami2025burling} for an alternative proof of this fact using \emph{asymptotic dimension}.

For the sake of completeness, we remark that the existence of an $O(1)$-distortion planar emulator for $G$ can be used to easily construct $(O(1), O(1), \Delta)$-sparse partition cover for $G$, through the following folklore observation.
\begin{observation}[Folklore]
Let $H$ be a $C$-distortion emulator for graph $G$ with $V(H) = V(G)$, and let $\cC_H$ be any $(\beta, s, \Delta)$-sparse partition cover of $H$.
Then there is a $(C \cdot \beta, s, \Delta)$-sparse partition cover for $G$. 
\end{observation}
\begin{proof}
For a graph $G$, a vertex $v\in V(G)$, and a non-negative number $r$, let \EMPH{$B_G(v, r)$} be defined as the set of vertices within distance $r$ from $v$. 
We construct a set of clusters $\cC$ in $G$ by iterating over $\cC_H$: for each $C_H \in \cC_H$, add every connected component of $G[C_H]$ as a cluster in $\cC$.
Observe that every cluster in $\cC$ is connected and has (weak) diameter at most $\Delta$, and $\cC$ can be partitioned into $s$ partitions of $V(G)$. Moreover, observe that for any $v \in V(G)$, by definition of emulator, every vertex in the ball $\EMPH{$B_G^{\rm small}$} \coloneqq B_G(v, \Delta/(C \cdot \beta))$ %
is contained in the ball $\EMPH{$B_H^{\rm big}$} \coloneqq B_H(v, \Delta/\beta)$. %
Because there is some cluster $C_H \in \cC_H$ that contains the ball $B_H^{\rm big}$, we have that $C_H$ contains $B_G^{\rm small}$; in particular, every vertex of $B_G^{\rm small}$ is in the same connected component of the induced graph $G[C_H]$, so some cluster of $\cC$ contains $B_G^{\rm small}$.
\end{proof}

\subsubsection{Reduction to additive tree cover}
We say a tree cover $\cT$ is \EMPH{$\ell$-bounded} if every tree in $\cT$ has diameter at most $\ell$.
\cite{CCLMST23} show (in their Lemma 1.7) that if one can construct an $O(\diam(G))$-bounded $(1, +\e\cdot \diam(G))$-tree cover of size $\tau(\e)$ for any planar graph $G$, then one can construct a $(1+\e, +0)$-tree cover of size $O(\tau(\e) \log (1/\e))$ for any planar graph $G$.
Their proof relied only on the fact that the class of planar graphs is closed under induced subgraphs, and the fact that planar graphs have \emph{hierarchical padded partition}---which itself follows from the existence of sparse partition cover \cite{KLMN05}. As such, the same reduction applies for string graphs (and more generally for any graph class $\mathcal{G}$ that admits $O(1)$-distortion planar emulators and is closed under induced subgraphs).
It is immediate\footnote{Specifically, in Claim 8.7 of \cite{CCLMST23} the distance bound becomes $O(r_i + 1)$ instead of $O(r_i)$, and thus the distance bound in Equation 5 becomes $(1+O(r^2 \e)) \cdot \dist_X(x, y) + O(1)$ instead of $(1+O(r^2 \e)) \cdot \dist_X(x, y)$; everything else remains the same. We omit the full details.} from the reduction of \cite{CCLMST23} that, if one can only construct $(1, +\e \diam(G) + O(1))$-tree covers for any $G \in \mathcal{G}$, then one can construct $(1+\e, +O(1))$ tree covers.

\subsubsection{Constructing the additive tree cover}
In this section, we construct additive-distortion tree cover for graphs with $O(1)$-distortion planar emulators; by the discussion in the previous section, this is enough to prove \Cref{thm:tree-cover}.
\begin{lemma}
\label{lem:additive-tree-cover}
    Let $\e > 0$. Let $G$ be a graph with a $C$-distortion planar emulator $H$ with $V(G) = V(H)$. Let $\Delta$ be the diameter of $G$. There is an $O(\Delta)$-bounded $(1, +\e \Delta + O(C))$-tree cover for $G$ with size $O(C^3 \cdot \e^{-3})$.
\end{lemma}
We use, as a black box, the result of \cite{CCLMST23} that any planar graph $H$ admits a small-size $(1, +\e \diam(H))$-tree cover. The tree cover constructed by \cite{CCLMST23} is stronger than just guaranteeing $(1, +\e \diam(H))$-distortion; it has a certain \EMPH{root preservation} property, as summarized in the following lemma.
\begin{lemma}[Root preservation property of {\cite[Claim 7.1]{CCLMST23}}]
\label{lem:planar-tree-additive}
    Let $\e > 0$, and let $H$ be a planar graph with diameter at most $\Delta_H$. There is a set $\mathcal{F}_H$ of $O(\e^{-3})$ forests in $H$, where each tree in each forest has a designated root vertex such that:
    \begin{itemize}
        \item \textnormal{[Root preservation.]} For any pair of vertices $(u,v)$ in $H$ and any walk $P$ between $u$ and $v$ in $H$ with $\norm{P} \le \Delta_H$,
    there is some tree $T$ in a forest of $\mathcal{F}$, with root $r$, such that $V(T)$ contains $u$ and $v$, and
    $\dist_H(r, P) \le \e \Delta_H.$
    \end{itemize}
\end{lemma}
Our \Cref{lem:planar-tree-additive} is a bit different from what is explicitly stated in \cite[Claim 7.1]{CCLMST23}. Namely, only a \emph{consequence} of the [root preservation] property is stated in \cite{CCLMST23}:
they state that, if we let $P$ to be the shortest path between $u$ and $v$, then $d_T(u,r) + d_T(r, v) \le d_H(u,v) + O(\e \Delta_H)$. Additionally, \cite{CCLMST23} take $\Delta_H$ to be exactly equal to the diameter of $H$, rather than allowing $\Delta_H$ to be an upper bound on the diameter of $H$. Nevertheless, it is immediate from the proof of \cite[Claim 7.1]{CCLMST23} that our strengthened version of the claim (\Cref{lem:planar-tree-additive}) also holds.
Using  this lemma, we now construct the additive tree cover for planar graphs.

\begin{proof}[of \Cref{lem:additive-tree-cover}]
    Let $H$ be a $C$-distortion planar emulator for $G$ with $V(H) = V(G)$.
    Observe that the diameter of $H$ is at most $\Delta_H \coloneqq C \cdot \Delta$.
    
    Let \EMPH{$\mathcal{F}_H$} be the set of forests on $H$ provided by \Cref{lem:planar-tree-additive}.
    We construct a corresponding set of forests \EMPH{$\mathcal{F}$} on $G$. For every forest $F_H \in \mathcal{F}_H$, for every tree $T_H$ in $F_H$, we define a corresponding tree $T$ as follows: 
    $T$ is simply a (edge-weighted) star with vertex set $V(T_H)$ and root $r$, where the weight of each edge $(r, v)$ in $T$ is set to $\dist_G(v, r)$.
    Define the forest $F$ to be the disjoint union of the trees $T$ corresponding to the trees $T_H$ in $F_H$.
    The set $\mathcal{F}$ is the set of all forests constructed this way.
    We now construct a set of trees \EMPH{$\cT$} from $\cF$: for every forest $F \in \cF$, choose an arbitrary vertex $v$ in $F$, and for every tree $T$ in $F$ (other than the tree containing $v$), add a weight-$\Delta$ edge between $v$ and an arbitrary vertex in $T$.
    
    We claim that $\cT$ is an $O(\Delta)$-bounded $(1, +O(C \e \Delta + C))$-tree cover for $G$ with size $O(\e^{-3})$. This is sufficient to prove the lemma, after rescaling $\e \gets \e / O(C)$.
    First observe that every tree $T \in \cT$ is a dominating tree of $G$. Indeed, we have $V(T) = V(G)$; and every edge $(u,v)$ in $T$ has weight at least $\dist_G(u,v)$, because by construction the weight of $(u,v)$ is either $\Delta$ or $\dist_G(u,v)$. Additionally, the size of $\cT$ is the same as the size of $\cF$, which is $O(\e^{-3})$.
    It remains to show that, for every pair of vertices $(u,v)$ in $G$, there is a tree $T \in \cT$ with small distortion bound.
    To this end, let $\EMPH{$P$} = [p_1, p_2, \ldots, p_\ell]$ be the shortest path between $u$ and $v$ in $G$. Let \EMPH{$P_H$} be the corresponding walk in $H$, which is the shortest walk in $H$ that walks from $p_1$ to $p_2$ to $p_3$ and so on; we denote this as $P_H = p_1 \rightsquigarrow p_2 \rightsquigarrow \ldots \rightsquigarrow p_\ell$.
    Because $G$ is unweighted, every path $p_i \rightsquigarrow p_{i+1}$ in $H$ has length at most $C$; thus, every vertex $v$ in $P_H$ satisfies $\dist_G(v, P) \le \dist_H(v, P) \le C$.
    Moreover, $P_H$ has length at most $C \cdot \norm{P} \le C \cdot \Delta = \Delta_H$.
    
    Here is the key idea: by the [root preservation] property $\cF_H$ applied on vertices $u$, $v$ and path $P_H$, there is some tree $T_H$ with root $r$ in some forest $F_H \in \cF_H$, such that $u$ and $v$ are in $T_H$, and $\dist_H(r, P_H) \le \e \Delta_H$. (Note that we can apply the [root preservation] property because $P_H$ has length at most $\Delta_H$.) In the corresponding tree $T \in \cT$, we have $\dist_T(u,v) = \dist_G(u,r) + \dist_G(r, v)$. We claim that
    \[\dist_G(r, u) + \dist_G(r, v) \le  \dist_G(u,v) + O(C\e \Delta + C) = \dist_G(u,v) + O(C \e \Delta + C).\]
    Indeed, let $p_H \in P_H$ be the vertex on $P_H$ that minimizes $\dist_G(r,P_H)$; we have $\dist_G(r, p_H) \le \dist_H(r,p_H) \le \e \Delta_H$.
    Let $p \in P$ be the vertex in $P$ that minimizes $\dist_G(p, p_H)$; we have $\dist_G(p, p_H) \le C$.
    By triangle inequality,
    \[\dist_G(r, p) \le \e \Delta_H + C.\]
    By triangle inequality, $\dist_G(u, r) \le \dist_G(u, p) + \e \Delta_H + C$ and $\dist_G(r, v) \le \dist_G(p, v) + \e \Delta_H + C$.
    Finally, observe that $\dist_G(u,p) + \dist_G(p, v) = \dist_G(u,v)$ because $p$ lies is on a shortest path $P$ between $u$ and $v$ in $G$.
    We conclude that $\dist_G(r,u) + \dist_G(r,v) \le  \dist_G(u,v) + O(\e \Delta_H + C) = \dist_G(u,v) + O(C\e \Delta + C)$ as desired.
\end{proof}

\subsection{Bounded-treewidth embedding}

In this section we prove \Cref{thm:tw-embedding}.
Following \cite{CCLMST23}, we use a tree cover with additional properties --- this includes the [root preservation] property discussed in \Cref{lem:planar-tree-additive} in the previous section (recall that this [root preservation] property is stronger than explicitly stated in \cite{CCLMST23}, but nonetheless is proven in \cite{CCLMST23}), as well as low-hop and cluster disjointness properties.
\begin{lemma}[{\cite[Claim 7.1]{CCLMST23}}]
\label{lem:planar-additive}
    Let $\e > 0$, and let $H$ be a planar graph with diameter at most $\Delta$. There is a set $\mathcal{F}_H$ of $O(\e^{-3})$ forests of rooted trees in $H$, and a partition $\cC$ of $H$ into connected clusters of (strong) diameter $\e \Delta$, such that:
    \begin{itemize}
    \item \textnormal{[Low-hop.]} For every pair of vertices $u$ and $v$ in $H$, there is a path in $H$ between $u$ and $v$ that intersects at most $O(\e^{-1})$ clusters.
    \item \textnormal{[Root preservation.]} %
    For any pair of vertices $(u,v)$ in $H$ and any walk $P$ between $u$ and $v$ in $H$ with $\norm{P} \le \Delta$, 
    there is some tree $T$ in a forest of $\mathcal{F}$, with root $r$, such that $V(T)$ contains $u$ and $v$, and
    $\dist_H(r, P) \le \e \Delta.$
    \item \textnormal{[Cluster disjointness.]} Every forest $\cF$ is a spanning forest (i.e., it is a subgraph of $H$). Further, no two trees in any forest of $\cF$ intersect the same cluster in $\cC$.
    \end{itemize}
\end{lemma}
\begin{proof}[of \Cref{thm:tw-embedding}]

Let $H$ be a $C$-distortion planar emulator of $G$ with $V(H) = V(G)$ and $C = O(1)$. Let $\Delta$ be the diameter of $G$, and observe that the diameter of $H$ is at most $C \cdot \Delta$. Let $\cC$ be the clusters and $\cF$ be the set of forests obtained by applying \Cref{lem:planar-additive} on $H$.
For every cluster $L$ in $\cC$, choose an arbitrary vertex $v_L$ in $L$ to be the \EMPH{center vertex} of $L$.
We define a graph \EMPH{$\hat G$} as follows.
\begin{enumerate}
    \item Initialize $\hat G$ as the \emph{cluster graph} with vertex set $\set{v_L : L \in \cC}$, obtained from $H$ by contracting every cluster of $\cC$ into its center vertex.
    \item    For each cluster $L$ in $\cC$, for every non-center vertex $v$ in $L$, add $v$ to $\hat G$ and add an edge between $v$ and the center vertex $v_L$; in other words, replace each vertex $v_L$ in $\hat G$ with a star centered at $v_L$. Observe that $V(\hat G) = V(H) = V(G)$.
    \item  Now augment $\hat G$ with additional edges. For every tree $T_H$ in every forest in $\cF_H$, and for every vertex $v$ in $T_H$, add an edge in $\hat G$ between the root of $T_H$ and $v$.
    \item Set the weight of every edge $(u,v)$ in $\hat G$ to be $\dist_G(u,v)$.
\end{enumerate}
The only difference between this graph and the one in \cite{CCLMST23} is that in the graph constructed in \cite{CCLMST23} sets the weight according to distances in $H$.
They show that the treewidth of $\hat G$ is at most $O(\e^{-4})$ \cite[Lemma 7.7]{CCLMST23}.
It remains to show the distortion bound for $\hat G$: that is, for every pair of vertices $u,v$ in $G$, 
\[\dist_G(u,v) \le \dist_{\hat G}(u,v) \le \dist_G(u,v) + O(C\e \Delta + C).\]
Rescaling $\e \gets \e/\Theta(C)$ then proves the lemma. 
The lower bound $\dist_G(u,v) \le \dist_{\hat G}(u,v)$ is immediate from the setting of edge weights.
The upper bound $\dist_{\hat G}(u,v) \le \dist_G(u,v) + O(C\e \Delta+C)$ follows from an identical argument as in the proof of \Cref{lem:additive-tree-cover}: the [root preservation] property implies
that there is some tree $T_H$ with root $r$ in some forest of $\cF_H$ such that $u$ and $v$ are in $T_H$, and $\dist_G(u,r) + \dist_G(r, v) \le \dist_G(u,v) + O(C\e \Delta+C)$, and (by construction of $\hat G$) we have $\dist_{\hat G}(u,v) \le \dist_G(u,r) + \dist_G(r,v)$.
\end{proof}

{
\small
\bibliographystyle{alphaurl}
\bibliography{main}
}

\newpage
\appendix
\section{Proof of \Cref{lem:blt}}
\label{ap:bltproof}

We first state a simple well-known fact relating the outer face of a planar graph $H_0$ to crossing paths within $H_0$. This fact is frequently used for proofs involving paths in planar graphs, e.g. it is used to prove the Monge property of distances that lie on a face of a planar graph. For completeness we include a short proof.

\begin{figure}
    \centering
    \includegraphics[width=0.5\linewidth]{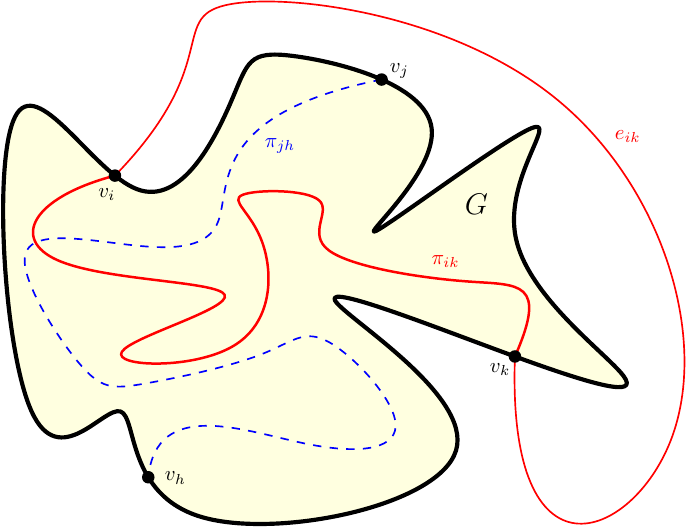}
    \caption{Figure for \Cref{lem:crossing_outer} depicting $v_j$ on the inside and $v_h$ on the outside of the curve $C$ formed with the path $\pi_{ik}$ and an arbitrary curve $e_{ik}$ in the outer face. Any walk from $v_j$ to $v_h$ must intersect the path $\pi_{ik}$.}
    \label{fig:crossing-outer}
\end{figure}

\begin{lemma}
\label{lem:crossing_outer}
Let $H_0$ be a planar graph with at least $2$ vertices. Fix a plane embedding of $H_0$ and let $Q = [v_1, v_2, \dots, v_\ell=v_1]$ be the closed walk around the outer face of $H_0$ starting at any outer vertex $v_1$.
For any $1\le i \le j \le k \le h \le \ell-1$, let $\pi_{ik}$ be a path from $v_i$ to $v_k$, and $\pi_{jh}$ be an arbitrary walk from $v_j$ to $v_h$.
The path $\pi_{ik}$ and walk $\pi_{jh}$ must intersect, i.e. share at least one vertex.
\end{lemma}
\begin{proof}
We assume without loss of generality that the path $\pi_{ik}$ has no vertices of the outer face in its interior, since we may consider the largest index $i' \le j$ such that $v_{i'}$ is on $\pi_{ik}$ and the smallest index $j \le k'$ such that $v_{k'}$ is on $\pi_{ik}$ and run the following argument on the subpath $\pi_{i'k'} \subset \pi_{ik}$.
We may also assume that $v_i$, $v_j$, $v_k$, and $v_h$ are all distinct, otherwise $\pi_{ik}$ and $\pi_{jh}$ trivially intersect at one of the end points.

Consider the simple closed curve \EMPH{$C$} formed by the the path $\pi_{ik}$ from $v_i$ to $v_k$ followed by any curve $e_{ik}$ in the outer face between $v_i$ and $v_k$. By the Jordan curve theorem, $C$ divides the plane into two regions; call the one region containing $v_j$ the \EMPH{inside} of $C$, and the other region the \EMPH{outside} of $C$.
It suffices to show that $v_h$ lies outside of $C$ as in \Cref{fig:crossing-outer}, as then any walk $\pi_{jh}$ from $v_j$ to $v_h$ must intersect $C$, and thus intersect $\pi_{ik}$.
Since $C$ does not intersect any vertices of the outer face except $v_i$ and $v_k$, the subwalk of $Q$ of the outer face between $v_i$ and $v_k$ containing $v_j$ must be on the inside of the $C$, while the subwalk of $Q$ between $v_k$ and $v_i$ containing $v_h$ must be outside of $C$.
\end{proof}

For convenience we begin by restating \Cref{lem:blt} here. As mentioned, the proof is implicit in \cite{BLT14}.
\bltlemma*

\begin{proof}
See \Cref{fig:outer-bounded} for an illustration of the three facts we wish to prove.

We begin by proving (1), that $(H,W)$ has either $1$ or $2$ critical vertices.
If $W = \varnothing$, then clearly $(H,W)$ has $1$ (arbitrary) critical vertex.
For simplicity, we will first assume that the outer face of $H_0$ is a simple cycle with $\ell$ vertices.
Since $H$ and $W$ both contain at least one outer vertex, there must be some adjacent vertex $p_0$ and $p_1$ on the outer face of $H_0$ such that $p_0\in W$ and $p_1 \in H$.
Observe that by definition $p_1$ is a critical vertex. Let us index the outer vertices $p_1, p_2, \dots, p_\ell = p_0$ of by the order they appear on the outer face starting with $p_1$.
To prove that $(H,W)$ has at most $2$ critical vertices, we claim that the outer vertices contained in $H$ must form a contiguous interval $p_1, p_2, \dots, p_a$ with $1\le a\le \ell-1$, and 
thus $p_a$ is the unique other critical vertex.
Suppose not, and there are vertices $p_i, p_{j+1}\in H$ and $p_{i+1}, p_j \in W$ with $1\le i < j\le \ell$. Since $H$ is connected, there is a path in $H$ from $p_i$ to $p_{j+1}$. Similarly, there is a path in $G[W]$ from $p_i+1$ to $p_j$. This is a contradiction, as any such paths must intersect by \Cref{lem:crossing_outer}. 
If the outer face of $H_0$ is not a simple cycle, and instead has cut vertices, the only difference is that the outer face defines a closed walk $Q = [p_1, p_2, \dots, p_\ell = p_0, p_1]$, where the contiguous subwalk $Q' = [p_1, p_2, \dots, p_a]$ is contained in $H$.
Since $Q$ is a walk, we may have $p_1 = p_a$ so $p_1$ is the unique critical vertex.

Next we prove (2). Let $\pi$ be an arbitrary path between $p_1$ and $p_a$. Let $\Gamma$ be a path between $s\in W$ and a vertex $p_i\in H$ that is an outer vertex of $H_0$, so $1\le i\le a$. 
Since $W$ is connected, there is a path $\pi_s$ from $p_0$ to $s$ that is completely contained in $W$. 
Then by \Cref{lem:crossing_outer}, the concatenated walk $\pi_s \circ \Gamma$ must intersect $\pi$. Furthermore, since $\pi_s$ consists solely of vertices in $W$, the intersection occurs at a vertex of $\Gamma$.

Finally we prove (3), that $H'$ is outer bounded by $W'$ in $H_0$.
The connected set $W'$ contains at least one outer vertex since it contains all of $\pi$, and $\pi$ is a (possibly trivial) path between outer vertices of $H_0$. Additionally, $H'$ has an outer vertex by assumption. 
Finally we need to verify that $H'$ is a connected component of $H_0\setminus W'$.
By definition, $H'$ is connected and vertex-disjoint from $W'$, so
to show that $H'$ is induced by a connected component of $H_0 \setminus W'$, it suffices to show the following: all edges in $H_0$ with exactly one end point in $H'$ have the other endpoint in  $W'$. 
The edges leaving $V(H')$ cannot connect to other vertices of $V(H\setminus W')$ since $H'$ is induced by a connected component of $H\setminus W'$. Nor can the edges leaving $H'$ be incident to vertices in $W$, by (2). Since all edges in $H_0$ with one end point in $V(H)$ have an end point in either $V(H)$ or $W$, we conclude that in $H_0$, all edges with exactly one end point in $V(H')$ are in $W'$.
Thus we conclude that $H'$ is outer bounded by $W'$.
\end{proof}

\end{document}